\documentclass[11pt]{amsart}
\usepackage[margin=1in]{geometry}
\usepackage{amsfonts, float, mathtools}
\usepackage{amssymb}
\usepackage[dvips]{graphics}
\usepackage{epsfig}
\usepackage{listings}
\usepackage{euscript}
\usepackage{color}
\usepackage{cite}
\usepackage{subcaption}

\usepackage[all]{xy}
\usepackage{multirow}
\usepackage{tikz}
\usetikzlibrary{matrix,arrows,decorations.pathmorphing,positioning,calc}
\definecolor{Green}{rgb}{0,.8,.4}
\definecolor{Plum}{rgb}{.5,0,1}
\definecolor{cyan}{rgb}{0.50,.9,0.9}
\definecolor{lightblue}{rgb}{0.3, 0.6, .7}
\definecolor{ocre}{RGB}{243,102,25}
\definecolor{mygray}{RGB}{243,243,244}
\definecolor{burgundy}{rgb}{0.5, 0.0, 0.13}

\usepackage{fancyvrb}
\usepackage{bbm}

\newtheorem{thm}{Theorem}[section]
\newtheorem{cor}[thm]{Corollary}
\newtheorem{lemma}[thm]{Lemma}

\theoremstyle{definition}

\theoremstyle{remark}

\numberwithin{equation}{section}

\def\idty{{\mathchoice {\mathrm{1\mskip-4mu l}} {\mathrm{1\mskip-4mu l}} %
		{\mathrm{1\mskip-4.5mu l}} {\mathrm{1\mskip-5mu l}}}}
\newcommand{\vgg}{\Vec{\gamma}}

\newcommand{\bR}{{\mathbb R}}

\newcommand{\Ir}{{\mathbb Z}}
\newcommand{\bC}{{\mathbb C}}

\newcommand{\bN}{{\mathbb N}}
\newcommand{\bZ}{{\mathbb Z}}
\newcommand{\cA}{{\mathcal A}}

\newcommand{\cG}{{\mathcal G}}
\newcommand{\cH}{{\mathcal H}}

\newcommand{\supp}{\operatorname{supp}}

\newcommand{\cL}{{\mathcal L}}
\newcommand{\cP}{{\mathcal P}}

\newcommand{\cE}{{\mathcal E}}
\newcommand{\cW}{{\mathcal W}}

\newcommand{\caA}{{\mathcal A}}
\newcommand{\caB}{{\mathcal B}}
\newcommand{\caC}{{\mathcal C}}
\newcommand{\caD}{{\mathcal D}}
\newcommand{\caE}{{\mathcal E}}

\newcommand{\caG}{{\mathcal G}}
\newcommand{\caH}{{\mathcal H}}

\newcommand{\caK}{{\mathcal K}}
\newcommand{\caL}{{\mathcal L}}

\newcommand{\caN}{{\mathcal N}}

\newcommand{\caP}{{\mathcal P}}

\newcommand{\caS}{{\mathcal S}}
\newcommand{\caT}{{\mathcal T}}

\newcommand{\caV}{{\mathcal V}}
\newcommand{\caW}{{\mathcal W}}

\newcommand{\bbH}{{\mathbb H}}

\newcommand{\bbZ}{{\mathbb Z}}

\newcommand{\braket}[2]{\left\langle #1 , #2\right\rangle}

\newcommand{\id}{{\rm id}}

\newcommand{\N}{\mathbb{N}}

\newcommand{\spec}{\mathop{\rm spec}}

\newcommand{\eq}[1]{(\ref{#1})}

\newcommand{\be}{\begin{equation}}
	\newcommand{\ee}{\end{equation}}
\newcommand{\bea}{\begin{eqnarray}}
	\newcommand{\eea}{\end{eqnarray}}
\newcommand{\beann}{\begin{eqnarray*}}
	\newcommand{\eeann}{\end{eqnarray*}}

\newcommand{\Rl}{\bR}

\title[LTQO for AKLT models on hexagonal and Lieb lattices]{Local Topological Quantum Order\\ and Spectral Gap Stability for the AKLT Models\\ on the Hexagonal and Lieb Lattices}

\author[T.~Jackson]{Thomas Jackson} 
\address{\textnormal{(Thomas Jackson)} Department of Mathematical Sciences, United Arab Emirates University, AL Ain, UAE}

\author[B.~Nachtergaele]{Bruno Nachtergaele} 
\address{\textnormal{(Bruno Nachtergaele)} Department of Mathematics and Center for Quantum Mathematics and Physics, University of California,  Davis, CA  95616-8633, USA}
\thanks{Corresponding author: Bruno Nachtergaele, email: bnachtergaele@ucdavis.edu}

\author[A.~Young]{Amanda Young}
\address{\textnormal{(Amanda Young)} Department of Mathematics, University of Illinois Urbana-Champaign, Urbana, IL 61801, USA}
\thanks{\copyright 2026 by the authors. This article may be reproduced, in its entirety, for non-commercial purposes.}

\date{May 4, 2026}

\begin{document}

	\begin{abstract}
We prove that the ground state of the AKLT models on the hexagonal lattice and the Lieb lattice satisfy the local topological quantum order (LTQO) condition. This will be a consequence of proving that the finite volume ground states are indistinguishable from a unique infinite volume ground state. Concretely, we identify a sequence of increasing and absorbing finite volumes for which any finite volume ground state expectation is well approximated by the infinite volume state with error decaying at a uniform exponential rate in the distance between the support of the observable and boundary of the finite volume. As a corollary to the LTQO property, we obtain that the spectral gap above the ground state in these models is stable under general small perturbations of sufficient decay. We prove these results by a detailed analysis of the polymer representation of the ground states state derived  by Kennedy, Lieb and Tasaki (1988) with the necessary modifications required for proving the strong form of ground state indistinguishability needed for LTQO.
	\end{abstract}
\maketitle

	\begin{center}
		\emph{Dedicated to the memory of Huzihiro Araki}
	\end{center}	

	\section{Introduction and Main Results}\label{sec:Introduction}
	
	\subsection{Introduction}
	
	The mathematical study of gapped ground states of quantum spin systems has yielded a broad variety of fascinating results over the past two decades.  We speak about a gapped ground state phase when there is a positive gap in the spectrum of the Hamiltonian of the infinite system that is stable under a broad class of gently perturbed Hamiltonians. These gapped ground state phases, which in their simplest form are described by a unique ground state without ordinary long-range order, may exhibit so-called  topological orders. Characterizing and mathematically classifying the possible toplogical orders has been a very active and fruitful area of research. For a selection of mathematical results of this type see \cite{naaijkens:2013,cha:2020,ogata:2021a,ogata:2021e,bourne:2021,kapustin:2021,kapustin:2022,jones:2025,bols:2025}. 
	
	While the spectral gap question is typically rather trivial for concrete commuting Hamiltonians, even for small perturbations of classical models the question whether a model is gapped or gapless is, in general, undecidable \cite{cubitt:2022}. For non-commuting interactions, even frustration-free ones, the question has been addressed in only a few examples in two or more dimensions: PVBS models \cite{bachmann:2015,bishop:2016a,bishop:2019}, AKLT models on decorated lattices \cite{abdul-rahman:2020,pomata:2020,lucia:2023}, and the AKLT model on the hexagonal lattice \cite{lemm:2020a,pomata:2020}.
	
	Given the existence of a positive gap, the next question to address is its stability under gentle perturbations of the Hamiltonian. The goal is to show that the gap remains open after the perturbation of the model by a broad class of interactions as long they are sufficiently small.  This typically involves proving that the perturbation is relatively (form) bounded with respect to the unperturbed one after a suitable spectrum-preserving transformation. Strategies of varying generality to achieve such bounds have been developed. For example, the first spectral gap stability result for an AKLT model relied on a cluster expansion  \cite{yarotsky:2006}, as do some later results \cite{de-roeck:2019,bjornberg:2021}. In \cite{frohlich:2020,del-vecchio:2021,del-vecchio:2022} the authors introduce and apply a transformation they call Lie-Schwinger block diagonalization to prove relative boundedness. The most general results have been obtained using the Bravyi-Hastings-Michalakis (BHM) strategy \cite{bravyi:2010,bravyi:2011,michalakis:2013,nachtergaele:2022}. The BHM strategy is sufficiently general to apply to AKLT models including the two-dimensional AKLT models we consider in this work. The crucial property one needs to apply this approach  is called Local Topological Quantum Order (LTQO) of the ground states which can alternatively be stated as an indistinguishability property of finite-volume ground states (see below). In this paper we prove this property for the AKLT models on the hexagonal lattice and on the Lieb lattice. 
	
	Apart from being an essential ingredient in proofs of spectral gap stability, LTQO is a structural property relevant for characterizing the quantum code provided by  the ground state of Kitaev's quantum double and the Levin-Wen string-net models \cite{kitaev:2006, levin:2005, cui:2020} and a closely related condition is employed in the construction of the boundary algebra describing topological order \cite{jones:2025}.
	
	AKLT models, which were introduced by Affleck, Kennedy, Lieb, and Tasaki in \cite{affleck:1987a,affleck:1988}, are frustration-free models with $SU(2)$ invariant nearest neighbor interactions. For their general definition see Section \ref{sec:main_results}.
	For one-dimensional AKLT models (defined on $\Ir$ or a more general quasi-one-dimensional lattice), generally speaking, the ground states, the spectral gap, and spectral gap stability, can be analyzed systematically. For example, if such a model has a finite number of infinite-volume gapped ground states, these states satisfy LTQO and the gap above those ground states is stable \cite[Appendix B]{nachtergaele:2022}.
	
	The main objective of this paper is to prove the LTQO property for the hexagonal and Lieb lattices. The AKLT model on the hexagonal lattice has a spin 3/2 degree of freedom at each site and a nearest neighbor interaction given by the orthogonal projection on the total spin 3 states for each pair of nearest neighbor spin. The Lieb lattice is a decorated version of the regular square lattice. It can be viewed as a square lattice with one vertex added at each of the edges (see Figure \ref{fig:lambda2}). The corresponding AKLT model has a spin 2 at each degree 4 site and a spin 1 at each site of degree 2. The interaction is nearest neighbor and given by the spin 3 projection for each nearest neighbor pair of the Lieb lattice.

	In \cite{affleck:1988} it was conjectured that the AKLT models in regular higher-dimensional lattices would have a unique ground state in the thermodynamic limit and a spectral gap if the degree of the vertices is at most 4. This includes the hexagonal and square lattices. For higher degree lattices one expects gapless ground states and N\'eel order. This dependence on the degree is also supported by results for the AKLT model on trees \cite{fannes:1992b,jackson:2025}. AKLT models on so-called decorated lattices are also of interest. For example, their ground state can serve as a resource for universal quantum computation \cite{wei:2011,wei:2014}. In this paper we study the models on the hexagonal lattice and the square lattice with regular decoration consisting of $m$ vertices inserted at each edge. We prove LTQO for all these models except for the square lattice, for which we have to assume $m\geq 1$. The undecorated square lattice is expected to have a unique gapped ground state as suggested by the exponential decay of the spin-spin correlation function proved in \cite{kennedy:1988}.  We note that there exist several families
	of $SU(N)$-invariant  generalizations of the AKLT models. A particular family introduced in  \cite{affleck:1988} involves alternating conjugate symmetric representations (with the spin dimension determined by the degree of the lattice site). This is well-defined for bipartite graphs, and has a similar cluster expansion as the ones used in this paper. LTQO results of the present paper can be generalized to those models \cite{jackson:2026}.
	
	We now briefly summarize prior results about the ground state gap and the LTQO property of these and some related models.
	
	LTQO is an essential property for all existing spectral gap stability results that are applicable to the AKLT models, including the models on the hexagonal and Lieb lattices. The existence of a uniform gap for the hexagonal lattice was originally conjectured in \cite{affleck:1988}, and there is strong evidence to support this claim. Kennedy, Lieb, and Tasaki proved that the frustration-free ground state $\omega$ has exponential decay of correlations \cite{kennedy:1988}, a property that generically occurs together with a non-vanishing gap in the spectrum above the ground state. AKLT models on decorated hexagonal lattices have also been shown to be uniformly gapped \cite{abdul-rahman:2020,lucia:2023}, and convincing corroboration of a uniform gap for the (undecorated) hexagonal lattice model was independently obtained by combining analytical methods with sophisticated numerical techniques in two almost simultaneously appearing papers, \cite{lemm:2020a, pomata:2020}. The results in \cite{pomata:2019} also indicate a gap for the AKLT model on the Lieb lattice.
	
	LTQO and, as a consequence, spectral gap stability has previously been shown in \cite{lucia:2023} for the AKLT on the decorated hexagonal lattice with decoration number five or more. Here we show LTQO for the AKLT model on the hexagonal lattice with any number of decoration, including the standard, undecorated lattice. We also show LTQO for the AKLT model on the decorated square lattice with decoration number $m\geq 1$, which includes the Lieb lattice. Our proofs rely very much on the approach taken in \cite{lucia:2023} and the polymer representation of \cite{kennedy:1988}.  The main difference consists of the way we use the convergence criteria of Koteck\'y and Preiss \cite{kotecky:1986} and Ueltschi \cite{ueltschi:2004}. Since the arguments in \cite{pomata:2019} show the AKLT model on the Lieb lattice to be gapped, we also obtain that the gap is stable. Furthermore, we use the cluster expansion techniques to show exponential decay of the two-point correlation for the same class of models.

	\subsection{Main Results}\label{sec:main_results} 
We consider AKLT models defined on either the hexagonal or square lattice, denoted $\bbH$ and $\bZ^2$, respectively, as well as $m$-decorated versions of the lattice. Letting $\Gamma\in\{\bbH,\bZ^2\}$ and $\Lambda=(\caV_\Lambda, \caE_\Lambda)$ any subgraph of $\Gamma$, the $m$-decorated volume $\Lambda^{(m)}$, $m\in \bN$, is the graph obtained from adding  $m$-vertices to every edge of $\Lambda$. In particular, we call $\Gamma^{(m)}$ the $m$-decorated lattice. We will also write $\Lambda^{(0)}$ to denote the original undecorated subgraph when convenient.

To define the AKLT model of interest, for each vertex $v\in\caV_{\Gamma^{(m)}}$, we take the onsite state space to be a $2s_v+1$-dimensional complex Hilbert space, denoted $\caH_v$, where
\[
s_v = \deg_{\Gamma^{(m)}}(v)/2.
\]
As such, vertices belonging to the undecorated lattice have spin $3/2$ or $2,$ depending on if $\Gamma = \bbH$ or $\Gamma=\bZ^2$, and all vertices decorating $\Gamma$ are spin $1$. For any edge $e=(v,w)\in\caE_{\Gamma^{(m)}}$ the nearest neighbor interaction, denoted $P_e$, is the orthogonal projection onto the subspace of total spin $s_v+s_w$ contained in $\caH_v\otimes\caH_w$. Thus, the local Hamiltonian for any finite  subgraph $\Lambda = (\caV_{\Lambda}, \caE_{\Lambda})\subseteq \Gamma^{(m)}$ is
\begin{equation}\label{Hamiltonian}
H_{\Lambda}=\sum_{e\in \caE_{\Lambda}} P_e \in \caA_{\Lambda}
\end{equation}
where
\[\caA_{\Lambda}=\caB(\caH_{\Lambda}), \quad \caH_{\Lambda}=\bigotimes_{x\in\caV_{\Lambda}}\caH_x\] 
are the algebra of local observables and Hilbert space of states associated with $\Lambda$, respectively. Here, we use that $\caA_X\subseteq \caA_{\Lambda}$ via the map $A\mapsto A\otimes\idty_{\Lambda\setminus X}$ for any $X\subseteq \Lambda$. For completeness, recall that the support of an observable $A\in \caA_{\Lambda}$, denoted $\supp(A)$, is the smallest $X\subseteq \Lambda$ such that $A\in\caA_X$. We note that while \eqref{Hamiltonian} holds for any finite subgraph $\Lambda\subseteq \Gamma^{(m)}$, the volumes considered in this work are always obtained by decorating subgraphs of the original lattice $\Gamma$.

The aim of this work is to establish that the expectations of finite volume ground states for these AKLT models are indistinguishable when the observable is supported sufficiently far from the boundary of the finite volume, see Theorem~\ref{thm:gs_indistinguishability} below. AKLT models are well-known to be frustration-free, meaning that for any finite graph $\Lambda\subseteq \Gamma^{(m)}$, $H_\Lambda\geq 0$ and its ground state space is given by
\[
\caG_\Lambda = \ker(H_\Lambda).
\]
We will denote the orthogonal projection onto the ground state space by $G_{\Lambda}$.

The indistinguishability result will follow from showing that any finite volume ground state is well approximated by the same infinite volume ground state,
\[\omega^{(m)}:\caA_{\Gamma^{(m)}}\to \bC, \quad \caA_{\Gamma^{(m)}} =\overline{\caA_{\Gamma^{(m)}}^{\rm loc}}^{\|\cdot\|}, \quad \caA_{\Gamma^{(m)}}^{\rm loc} = \bigcup_{\substack{\Lambda\subset \Gamma^{(m)} \\ |\caV_\Lambda|<\infty}}\caA_\Lambda\,.
\]
The state $\omega^{(m)}$ of interest is frustration-free, which in the infinite volume context means
\[
\omega^{(m)}(P_e) = 0 \quad \forall e\in \caE_{\Gamma^{(m)}}\,.
\]
Such states are known to exist as they are the weak-$*$ limits of any sequence of normalized finite volume ground states, $\psi_{N}\in \caG_{\Lambda_N}$, $\Lambda_N\uparrow \Gamma^{(m)}$.

For gapped AKLT models, one expects that there is a unique frustration-free ground state, $\omega^{(m)}$. The uniqueness of this state was proved in \cite{kennedy:1988} for the case of the hexagonal lattice (i.e. $m=0$), which was subsequently generalized to decorated hexagonal models with $m\geq 5$ in \cite{lucia:2024}. It was moreover shown in \cite{kennedy:1988} that $\omega^{(0)}$ has exponential decay of correlations. The uniqueness of the frustration-free ground state in the case of $m$-decorated square lattices, $m\geq 1$, and $m$-decorated hexagonal lattices with $1\leq m \leq 4$ will actually follow from the proof of our indistinguishability result, i.e. Theorem~\ref{thm:gs_indistinguishability}. The uniqueness of the frustration-free state for the square lattice is still an open question. If this state is shown or known to be unique, then it immediately follows that
\begin{equation}\label{fv_gs_limit}
	\lim_{N\to\infty}\braket{\psi_N}{A\psi_{N}} = \omega^{(m)}(A)\qquad \forall A\in\caA_{\Gamma_N^{(m)}}^{\rm loc},
\end{equation}
for any sequence of normalized ground states $\psi_{N}\in \caG_{\Lambda_N}$ such that $\Lambda_N\uparrow \Gamma^{(m)}$. The novelty of Theorem~\ref{thm:gs_indistinguishability} is establishing that the finite volume ground states converge exponentially fast to $\omega^{(m)}$ in the distance between the support of the observable to the boundary of $\Lambda_N$, see \eqref{eq:indistinguishability}.

The main result is stated with respect to a specific sequence of increasing and absorbing finite volumes $\Lambda_N\subseteq \Gamma$. In the case that $\Gamma$ is the hexagonal lattice, these volumes are defined as follows. For any vertex $\tilde{x}\in\tilde{\Gamma}$, the dual lattice of $\Gamma$, let $h_{\tilde{x}}\subseteq \Gamma$ denote the finite subgraph obtained from the union of the hexagon centered at $\tilde{x}$ with all edges $e\in \caE_{\Gamma}$ connected to this hexagon, see Figure~\ref{fig:lambda2}. Then, for a fixed vertex $\tilde{0}\in\tilde{\Gamma}$ define
\begin{equation}\label{Lambda_N}
	\Lambda_N = \bigcup_{\tilde{x}\in b_{\tilde{0}}(N-1)} h_{\tilde{x}} \subseteq\bbH 
\end{equation}
where $b_{\tilde{0}}(N)\subseteq \tilde{\Gamma}$ is the closed ball of radius $N\in\bN$ centered at $\tilde{0}$ with respect to the graph distance in $\tilde{\Gamma},$ see {Figure \ref{fig:lambda2}}. 
When $\Gamma$ is the square lattice, we define 
\begin{equation}
	\Lambda_N = \bigcup_{x\in b_0^\infty(N-1)}b_x^1(1)\subseteq \bZ^2 \label{Lambda_N_Square}
\end{equation} 
where $b_x^p(N)$ is the closed ball of radius $N\in\bN$ at $x\in \bZ^2$ with respect to the standard $p$-norm, see again Figure~\ref{fig:lambda2}. In the case of either lattice $\Gamma$, we let $\Lambda_N^{(m)}\subseteq \Gamma^{(m)}$, $m\in \bN_0$, be the volume given by decorating each edge of $\Lambda_N\subseteq \Gamma$ with $m$ additional particles.

\begin{figure}
	\begin{subfigure}{0.3\textwidth}
		\begin{center}
			\begin{tikzpicture}
				\def\n{.35};
				\newcommand{\hexcoord}[2]
				{[shift=(60:#1),shift=(120:#1),shift=(0:#2),shift=(-60:#2),shift=(0:#2),shift=(60:#2)]}
				\foreach \x in {3,...,7}
				\foreach \y in {1,...,3}{
					\draw[color=black!25]\hexcoord{\x*\n}{\y*\n}
					(0:\n)--(60:\n)--(120:\n)--(180:\n)--(-120:\n)--(-60:\n)--cycle;
					\draw[color=black!25,shift=(-60:\n), shift=(0:\n)]\hexcoord{\x*\n}{\y*\n}
					(0:\n)--(60:\n)--(120:\n)--(180:\n)--(-120:\n)--(-60:\n)--cycle;
				}
				\foreach \x in {3,...,8}{
					\draw[color=black!25,shift=(-60:\n), shift=(0:\n)]\hexcoord{\x*\n}{0}
					(0:\n)--(60:\n)--(120:\n)--(180:\n)--(-120:\n)--(-60:\n)--cycle;}
				\foreach \y in {1,...,3}{\draw[color=black!25,shift=(-60:\n), shift=(0:\n)]\hexcoord{8*\n}{\y*\n}
					(0:\n)--(60:\n)--(120:\n)--(180:\n)--(-120:\n)--(-60:\n)--cycle;
				}
				
				\foreach \k in {3}
				\foreach \x in {0,60,120,180,240,300}
				\foreach \y in {240,300}
				{\draw[shift=(\x+120:\n), shift=(\x+60:\n), very thick, color=black]\hexcoord{5*\n}{5*\n-\k*\n}
					(\x+\y:\n)--(\x+\y+60:\n);
					\draw[shift=(\x+180:\n), shift=(\x:\n), fill=black]\hexcoord{5*\n}{2*\n} (\x:2*\n) node[black]{$\bullet$};
					\draw[shift=(\x:\n), shift=(\x:\n), fill=black]\hexcoord{5*\n}{2*\n} (\x:-1*\n) node[black]{$\bullet$};
				}
				
				\foreach \k in {3}
				\foreach \x in {0,60,120,180,240,300}
				\foreach \y in {240}
				{\draw[shift=(\x+120:\n), shift=(\x+60:\n), very thick, color=black]\hexcoord{5*\n}{5*\n-\k*\n}
					(\x+\y:\n)--(\x+\y+60:\n);
					\draw[shift=(\x+120:\n), shift=(\x+60:\n), fill=red]\hexcoord{5*\n}{5*\n-\k*\n} ({\n*(cos(\x+\y)},{\n*(sin(\x+\y)});
				}
				
				\draw\hexcoord{5*\n}{2*\n} (60:\n)--(120:\n) node[midway, below, yshift=0pt] {$\tilde{0}$};
			\end{tikzpicture}
			
		\end{center}
	\end{subfigure}\hspace{.5cm}
	\begin{subfigure}{0.3\textwidth}
		\begin{center}
			\begin{tikzpicture}
				\def\n{.35};
				\newcommand{\hexcoord}[2]
				{[shift=(60:#1),shift=(120:#1),shift=(0:#2),shift=(-60:#2),shift=(0:#2),shift=(60:#2)]}
				\foreach \x in {3,...,7}
				\foreach \y in {4,...,6}{
					\draw[color=black!25]\hexcoord{\x*\n}{\y*\n}
					(0:\n)--(60:\n)--(120:\n)--(180:\n)--(-120:\n)--(-60:\n)--cycle;
					\draw[color=black!25,shift=(-60:\n), shift=(0:\n)]\hexcoord{\x*\n}{\y*\n}
					(0:\n)--(60:\n)--(120:\n)--(180:\n)--(-120:\n)--(-60:\n)--cycle;
				}
				\foreach \x in {3,...,7}{
					\draw[color=black!25,shift=(-60:\n), shift=(0:\n)]\hexcoord{\x*\n}{3*\n}
					(0:\n)--(60:\n)--(120:\n)--(180:\n)--(-120:\n)--(-60:\n)--cycle;}
				\foreach \y in {3,...,6}{\draw[color=black!25,shift=(-60:\n), shift=(0:\n)]\hexcoord{8*\n}{\y*\n}
					(0:\n)--(60:\n)--(120:\n)--(180:\n)--(-120:\n)--(-60:\n)--cycle;
				}
				
				\foreach \k in {0}
				\foreach \x in {0,60,120,180,240,300}
				\foreach \y in {0,60,120,180,300}
				{\draw[shift=(\x+120:\n), shift=(\x+60:\n), very thick, color=blue]\hexcoord{5*\n}{5*\n-\k*\n}
					(\x+\y:\n)--(\x+\y+60:\n);
					\draw[shift=(\x+120:\n), shift=(\x+60:\n), fill=black]\hexcoord{5*\n}{5*\n-\k*\n} ({\n*(cos(\x+\y)},{\n*(sin(\x+\y)}) circle (1.6pt);
				}
				\foreach \k in {0}
				\foreach \x in {0,60,120,180,240,300}
				\foreach \y in {240,300}
				{\draw[shift=(\x+120:\n), shift=(\x+60:\n), very thick, color=black]\hexcoord{5*\n}{5*\n-\k*\n}
					(\x+\y:\n)--(\x+\y+60:\n);
				}
				
				\foreach \x in {0,60,120,180,240,300}
				\foreach \y in {60,120}{
					\draw[shift=(\x+120:\n), shift=(\x+60:\n), very thick]\hexcoord{5*\n}{5*\n} (\x+\y:\n)--(\x+\y:2*\n);
					\draw[shift=(\x+120:\n), shift=(\x+60:\n), fill=black]\hexcoord{5*\n}{5*\n} (\x+\y:2*\n) node[red]{$\bullet$};
					\draw[shift=(\x+120:\n), shift=(\x+60:\n), fill=black]\hexcoord{5*\n}{5*\n} (\x+\y:\n) node[blue]{$\bullet$};
					\draw[shift=(\x+180:\n), shift=(\x:\n), fill=black]\hexcoord{5*\n}{5*\n} (\x:2*\n) node[blue]{$\bullet$};}
			\end{tikzpicture}
			
		\end{center}
		
	\end{subfigure}\hspace{.5cm}
	\begin{subfigure}{0.3\textwidth}
		\begin{center}
			\begin{tikzpicture}
				\def\n{0.85}
				\foreach \i in {2,3} {
					\foreach \j in {2,...,4} {
						\filldraw[black] (\i*\n, \j*\n+1*\n) circle (1 pt);
						\filldraw[black] (\i*\n+1*\n, \j*\n) circle (1pt);
						\filldraw[black] (\i*\n-1*\n, \j*\n) circle (1pt);
						\filldraw[black] (\i*\n, \j*\n-1*\n) circle (1pt);
						
						\ifnum\i<5
						\filldraw[black] (\i*\n+0.5*\n, \j*\n) circle (1pt);
						\filldraw[black] (\i*\n-.5*\n, \j*\n) circle (1pt);
						\fi
						
						\ifnum\j<5
						\filldraw[black] (\i*\n, \j*\n+0.5*\n) circle (1pt);
						\filldraw[black] (\i*\n, \j*\n-0.5*\n) circle (1pt);
						\fi
					}						
				}
				\foreach \i in {2,3} {
					\foreach \j in {1,...,4} {
						\draw[thick] (\i*\n,\j*\n) -- (\i*\n,\j*\n+1*\n);
					}
				}
				\foreach \i in {1,2,3} {
					\foreach \j in {2,...,4} {
						\draw[thick] (\i*\n,\j*\n) -- (\i*\n+1*\n,\j*\n);
					}
				}
				\foreach \i in {3,4} {
					\foreach \j in {2,...,4} {
						\filldraw[black] (\i*\n, \j*\n+1*\n) circle (1pt);
						\filldraw[black] (\i*\n+1*\n, \j*\n) circle (1pt);
						\filldraw[black] (\i*\n-1*\n, \j*\n) circle (1pt);
						\filldraw[black] (\i*\n, \j*\n-1*\n) circle (1pt);
						
						\ifnum\i<5
						\filldraw[black] (\i*\n+0.5*\n, \j*\n) circle (1pt);
						\filldraw[black] (\i*\n-.5*\n, \j*\n) circle (1pt);
						\fi
						
						\ifnum\j<5
						\filldraw[black] (\i*\n, \j*\n+0.5*\n) circle (1pt);
						\filldraw[black] (\i*\n, \j*\n-0.5*\n) circle (1pt);
						\fi
					}
				}
				
				\foreach \i in {3,4} {
					\foreach \j in {1,...,4} {
						\draw[thick] (\i*\n,\j*\n) -- (\i*\n,\j*\n+1*\n);
					}
				}
				\foreach \i in {1,...,4} {
					\foreach \j in {2,...,4} {
						\draw[thick] (\i*\n,\j*\n) -- (\i*\n+1*\n,\j*\n);
					}
				}
				\foreach \i in {2,...,4} {
					\foreach \j in {2,3,4} {
						\filldraw[black] (\i*\n, \j*\n+1*\n) circle (2pt);
						\filldraw[black] (\i*\n+1*\n, \j*\n) circle (2pt);
						\filldraw[black] (\i*\n-1*\n, \j*\n) circle (2pt);
						\filldraw[black] (\i*\n, \j*\n-1*\n) circle (2pt);
						
						\ifnum\i<5
						\filldraw[black] (\i*\n+0.5*\n, \j*\n) circle (2pt);
						\filldraw[black] (\i*\n-.5*\n, \j*\n) circle (2pt);
						\fi
						
						\ifnum\j<5
						\filldraw[black] (\i*\n, \j*\n+0.5*\n) circle (2pt);
						\filldraw[black] (\i*\n, \j*\n-0.5*\n) circle (2pt);
						\fi
					}
				}
				\foreach \i in {2,3,4} {
					\foreach \j in {0,4} {
						\filldraw[red] (\i*\n, \j*\n+1*\n) circle (2pt);
					}
				}
				\foreach \i in {1,5} {
					\foreach \j in {1,2,3} {
						\filldraw[red] (\i*\n, \j*\n+1*\n) circle (2pt);
					}
				}
				\foreach \i in {2,3} {
					\foreach \j in {2,4} {
						\filldraw[blue,thick] (\i*\n,\j*\n) -- (\i*\n+1*\n,\j*\n);
						\filldraw[blue,thick] (2*\n,2*\n) -- (2*\n,4*\n);
						\filldraw[blue,thick] (4*\n,2*\n) -- (4*\n,4*\n);
						\filldraw[blue] (\i*\n, \j*\n) node[blue]{$\bullet$};
						\filldraw[blue] (2*\n, 2*\n) node[blue]{$\bullet$};
						\filldraw[blue] (4*\n, 4*\n) node[blue]{$\bullet$};
						\filldraw[blue] (\i*\n+\n, \j*\n) node[blue]{$\bullet$};
						\filldraw[blue] (2*\n, 4*\n) node[blue]{$\bullet$};
						\filldraw[blue] (4*\n, 2*\n) node[blue]{$\bullet$};
						\filldraw[blue] (2*\n, 3*\n) node[blue]{$\bullet$};
						\filldraw[blue] (4*\n, 3*\n) node[blue]{$\bullet$};
						\filldraw[blue] (2*\n, 3.5*\n) node[blue]{$\bullet$};
						\filldraw[blue] (4*\n, 3.5*\n) node[blue]{$\bullet$};
						\filldraw[blue] (2*\n, 2.5*\n) node[blue]{$\bullet$};
						\filldraw[blue] (4*\n, 2.5*\n) node[blue]{$\bullet$};\filldraw[blue] (2.5*\n, 4*\n) node[blue]{$\bullet$};																																			\filldraw[blue] (3.5*\n, 4*\n) node[blue]{$\bullet$};
						\filldraw[blue] (2.5*\n, 2*\n) node[blue]{$\bullet$};																																			\filldraw[blue] (3.5*\n, 2*\n) node[blue]{$\bullet$};
						
					}
				}
			\end{tikzpicture}
			
		\end{center}
		\label{fig:lb}
	\end{subfigure}\hspace{.5cm}
	\caption{Illustration of $h_{\tilde{0}}\subset\bbH$ (left), $\Lambda_{2}\subset\bbH$ (middle), and $\Lambda^{(1)}_2\subset\Gamma^{(1)}$ with $\Gamma=\bbZ^2$ (right). The red vertices in the middle and right figures indicate the boundary $\partial\Lambda_2$, while the blue edges and vertices in the middle and right figures denote $\gamma^{(2,m)}$ is the smallest polymer containing $\mathring{\Lambda}_2^{(m)}$.}\label{fig:lambda2}
\end{figure}

We now state our main result, which holds for all decoration parameters $m\geq m_\Gamma$, where
\[
m_\Gamma = \begin{cases}
	0, & \Gamma = \bbH\\
	1, & \Gamma = \bZ^2
\end{cases}
\]

\begin{thm}[Ground State Indistinguishability]\label{thm:gs_indistinguishability} Let $\Gamma\in\{\bbH,\bZ^2\}$, and $\Lambda_N^{(m)}$ the corresponding $m$-decorated sequence from \eqref{Lambda_N}-\eqref{Lambda_N_Square}. For all $m\geq m_\Gamma$, there is a unique frustration-free AKLT ground state $\omega^{(m)}:\caA_{\Gamma^{(m)}}\to \bC$. Moreover, there are $N_\Gamma, K_\Gamma\in\bN$ so that if $N>K+N_\Gamma>K_\Gamma+N_\Gamma$, then
	\begin{equation} \label{eq:indistinguishability}
		\left|\braket{\Psi}{A \Psi} - \omega^{(m)}(A)\right| \leq 2\|A\|F_m(N,K)e^{F_m(N,K)}
	\end{equation}
	for any $A\in \caA_{\Lambda_{K-1}^{(m)}}$ and normalized $\Psi \in \caG_{\Lambda_N^{(m)}}$, where 
	\[F_{m}(N,K) =(m+1)L_\Gamma(K)e^{-\eta_\Gamma(m+1)(N-K)}\]
	where $L_\Gamma:\bR\to\bR$, and the constant $\eta_\Gamma>0$ can be chosen as shown in \eq{linearL} and \eq{eta}.
\end{thm}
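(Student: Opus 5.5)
Our approach follows the Kennedy--Lieb--Tasaki (KLT) polymer representation. The first step is to describe the ground state space $\caG_{\Lambda_N^{(m)}}$ explicitly. Every ground state comes from the valence-bond construction on the bulk, with arbitrary boundary data placed on the boundary vertices $\partial\Lambda_N$ (the red vertices in Figure~\ref{fig:lambda2}). In spin-coherent-state (Schwinger boson) variables this reads
\[
\Psi = \int \prod_{e} (\text{singlet on } e)\,\phi(\Omega_{\partial})\,d\Omega ,
\]
where $\phi$ is an arbitrary boundary function. For $A\in\caA_{\Lambda_{K-1}^{(m)}}$ we then write $\langle\Psi,A\Psi\rangle$, after integrating out the bulk, as a ratio of two partition functions $Z_N(A;\phi)/Z_N(\idty;\phi)$. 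Following KLT, we expand $\prod_e(1+f_e)$, where $f_e$ is the normalized coherent-state overlap minus its mean. This gives a gas of polymers (connected edge sets), and the boundary data $\phi$ couples only to polymers that touch $\partial\Lambda_N$.

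The second step is a cluster expansion of $\log Z$ via the Koteck\'y--Preiss/Ueltschi criterion. We must verify $\sum_{\gamma\ni e}|w(\gamma)|e^{a(\gamma)}\le a(e)$ with $a(\gamma)=\eta(m+1)|\gamma|$, say. Decoration makes each original edge carry $m+1$ factors, each bounded by a ratio smaller than $1$. The polymer weight then decays like $\epsilon^{(m+1)|\gamma|}$, and this beats the lattice-animal entropy. This is where $m\ge m_\Gamma$ enters, and also where the choice of $\eta_\Gamma$ is made. The key point, and the modification relative to KLT, is that the activity bound must be uniform in the boundary function $\phi$. To get this, we absorb $\phi$ into the weights of boundary-touching polymers. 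Because all spin-coherent weights are positive once we fix $\phi\ge0$ in an appropriate representation, or because we bound $|\phi|$ ratios uniformly, these weights are still controlled by $\epsilon^{(m+1)|\gamma|}$ up to a constant.

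The third step compares with the infinite volume. We write
\[
\log Z_N(A;\phi)-\log Z_N(\idty;\phi)=\sum_{\text{clusters } X} \bigl(\Phi^A(X)-\Phi(X)\bigr).
\]
Only clusters that intersect the support of $A$ contribute. Clusters that do not reach $\partial\Lambda_N$ are independent of $N$ and $\phi$, and they coincide with the terms defining $\omega^{(m)}(A)$, which is obtained as the $N\to\infty$ limit. Any cluster touching both $\Lambda_{K-1}^{(m)}$ and $\partial\Lambda_N$ has size at least of order $(m+1)(N-K)$. Summing the KP tail bound over the roughly $L_\Gamma(K)$ starting edges (linear in $K$) adjacent to $\supp A$ therefore gives a difference in the logarithm of at most $F_m(N,K)$. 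Exponentiating, with $|e^x-1|\le |x|e^{|x|}$ and the factor $\|A\|$ coming from normalising the $A$-insertion, yields the constant $2\|A\|F_m e^{F_m}$. Uniqueness of the frustration-free state then follows: any frustration-free $\omega$ restricted to $\Lambda_N$ is a mixture of finite-volume ground states, so its expectations agree with $\omega^{(m)}$ up to $F_m\to0$.

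The main obstacle is the second step. The convergence estimate must hold uniformly over all boundary conditions, including the non-positive, entangled combinations in $\caG_{\Lambda_N}$. One also has to handle the observable insertion, since $A$ is not a product of coherent-state functions. We would first try writing $\langle\Psi,A\Psi\rangle$ with the boundary state as a density matrix on the boundary spins and applying the KP criterion to polymers with operator-valued boundary activities, bounded in norm. The specific shape of $\Lambda_N$ is chosen so that every boundary vertex has exactly one bulk edge, which keeps these boundary activities comparable to bulk ones.
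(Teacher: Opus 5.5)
Your overall architecture (KLT polymers, a KPU criterion, clusters linking $\partial\Lambda_N$ to the region around $\supp(A)$, uniqueness by convexity over finite-volume ground states) is the right one. There is, however, a genuine gap at exactly the two points you flag as obstacles.

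First, nothing in your plan makes the estimate uniform over arbitrary ground states. The missing idea is to condition on \emph{frozen classical boundary angles}. For $\Psi(f)\in\caG_{\Lambda_N^{(m)}}$ one has $\braket{\Psi(f)}{A\Psi(f)}=\int d\rho_{\Lambda}|f|^2A({\bf\Omega})$. Since the symbol of $A$ does not involve the variables on $\partial\Lambda_N$, integrating out the interior first gives $\braket{\Psi(f)}{A\Psi(f)}=\int d\rho_{\Lambda}|f|^2\,\omega_N^{(m)}(A;\Omega^{\partial\Lambda_N})$, where $\omega_N^{(m)}(A;\Omega^{\partial\Lambda_N})=Z_N^{(m)}(A;\Omega^{\partial\Lambda_N})/Z_N^{(m)}(\Omega^{\partial\Lambda_N})$ has the boundary unit vectors held fixed (Lemma~\ref{lem:bulk_bdy_map}). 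Because $|f|^2d\rho_\Lambda$ is a probability measure, it suffices to bound $\sup_{\Omega^{\partial\Lambda_N}}|\omega_N^{(m)}(A;\Omega^{\partial\Lambda_N})-\omega_N^{{\rm bulk},(m)}(A)|$. With fixed boundary vectors the walk weights $(-\tfrac13)^{(m+1)|\gamma|-1}\partial\gamma(\Omega)$ obey $|\partial\gamma(\Omega)|\le 1$, so uniformity is automatic. Your alternatives do not achieve this. $|f|^2$ is a general many-body polynomial in all boundary variables, so it cannot be absorbed into activities of individual boundary-touching polymers. It may vanish, so there is no uniform bound on ratios of $\phi$. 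An entangled boundary density matrix does not factor over polymers either.

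Second, $\log Z_N(A;\phi)$ is not defined for general $A$: the symbol is complex and sign-changing, and $Z_N(A;\phi)$ can vanish. So comparing $\Phi^A(X)$ with $\Phi(X)$ cluster by cluster has no meaning. The paper never puts $A$ into the expansion. It expands only the annulus $\Lambda_{N,K}^{(m)}$, obtaining strictly positive densities $f_N:=\Phi_{N,K}^{{\rm bulk},(m)}/Z_N^{{\rm bulk},(m)}$ and $\Phi_{N,K}^{(m)}/Z_N^{(m)}$ (and the analogue $f_M$ for $M>N$) in the variables on $\partial\mathring{\Lambda}_K$. Their logarithms differ by sums over clusters meeting both $\partial\Lambda_N$ and the loop $\gamma^{(K+1)}$. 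Corollary~\ref{lem:kplemma} bounds these sums by $2a_m(\gamma^{(K+1)})e^{-2\epsilon(m+1)(N-K)}$, and this is where the factor linear in $K$ comes from.

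Even with such a density comparison, ``normalising the $A$-insertion'' only yields $\|A({\bf\Omega})\|_{L^\infty}$. That can exceed $\|A\|$ by a factor exponential in $|\supp(A)|$ (the $(5/4)^n$ example), which destroys the $L_\Gamma(K)$ prefactor. Getting $\|A\|$ requires Lemma~\ref{lem:op_norm}. Since $A\in\caA_{\Lambda_{K-1}^{(m)}}$ does not act on the remaining sites of $\mathring{\Lambda}_K^{(m)}$, one may replace $\caH_x$ there by $L^2(S^2,d\Omega_x)$. One then writes $\omega_N^{{\rm bulk},(m)}(A)-\omega_M^{{\rm bulk},(m)}(A)=\braket{\psi_1}{A\psi_2}$ with $\psi_{1,2}$ built from $|f_N-f_M|^{1/2}$ times the singlet product. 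This is bounded by $\|A\|\,\|f_N-f_M\|_{L^1}\le\|A\|D_\infty e^{D_\infty}$, where $D_\infty=\|\log f_N-\log f_M\|_{L^\infty}$.

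Finally, your step~2 is too optimistic for $\Gamma=\bbH$, where $m_\Gamma=0$ and decoration buys nothing. After the parity argument $\Omega_x\mapsto-\Omega_x$, the polymers are loops and boundary-to-boundary walks, not general connected edge sets. Even then the KP check in the annulus is tight and needs the computer-assisted small-polymer counts of Section~\ref{sec:Cluster_convergence}, including walks ending on the inner boundary.
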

The explicit values one can take in the above result are
\begin{equation}\label{eta}
K_\Gamma = \begin{cases}
	25, & \Gamma = \bbH \\
	2, & \Gamma = \bZ^2
\end{cases},\qquad
N_\Gamma = \begin{cases}
	52, & \Gamma = \bbH \\
	4, & \Gamma = \bZ^2
\end{cases},\qquad
\eta_\Gamma = \begin{cases}
	.0172, & \Gamma = \bbH \\
	.046, & \Gamma = \bZ^2
\end{cases}
\end{equation}
and the polynomials are
\begin{equation}\label{linearL}
L_\Gamma(x) = 
\begin{cases}
	1.8(2x+1), & \Gamma = \bbH \\
	1.36x, & \Gamma = \bZ^2
\end{cases}\,.
\end{equation}

	To show this bound implies LTQO, let $\gamma^{(K)}\subseteq\Gamma$ denote the smallest loop containing the interior vertices of $\Lambda_K$, and $\gamma^{(K,m)}\subseteq \Gamma^{(m)}$ its $m$-decorated counterpart, see Figure~\ref{fig:lambda2}. The notion of interior used here will be made precise in Section~\ref{sec:notation}. For any $m\in\bN_0$, the number of edges in $\gamma^{(K,m)}$ satisfies $|\gamma^{(K,m)}| = (m+1)|\gamma^{(K)}|$, and the number of edges in the undecorated loop is
	\[
	|\gamma^{(K)}| = \begin{cases}
		6(2K-1), & \Gamma=\bbH \\
		8(K-1), & \Gamma = \bbZ^2
	\end{cases}
	\]
	 We note that $\gamma^{(K,m)}$ also encapsulates $\Lambda_{K-1}^{(m)}$ and its length can be considered as the surface area for this volume. If the value of $N$ in Theorem~\ref{thm:gs_indistinguishability} additionally satisfies $N\geq K + \eta_\Gamma^{-1}\ln(K)$, the bound in \eqref{eq:indistinguishability} implies 
	\begin{equation}\label{for_LTQO}
	\left|\braket{\Psi}{A \Psi} - \omega(A)\right| \leq C_\Gamma|\gamma^{(K,m)}|\|A\|e^{-\eta_\Gamma(N-K)} \qquad \forall A\in \caA_{\Lambda_{K-1}^{(m)}}
	\end{equation} 
	where 
	\begin{equation}\label{LTQO_constant}
	C_\Gamma = \frac{2L_\Gamma(K_\Gamma)}{|\gamma^{(K_\Gamma)}|}\exp\left(\frac{(m_\Gamma+1)L_\Gamma(K_\Gamma)}{K_\Gamma^{m_\Gamma+1}}\right)\,.
	\end{equation}
	Evaluating this expressions gives $C_\bbH\approx 24.5615$ and $C_{\bbZ^2}\approx2.4951$.
	
	\begin{cor}[LTQO]\label{cor:LTQO} Let $\Lambda_N^{(m)}$ denote the sequence from either \eqref{Lambda_N} or \eqref{Lambda_N_Square} depending on $\Gamma\in\{\bbH,\bZ^2\}$. Suppose that $m\geq m_\Gamma$, $K\geq K_\Gamma$ and $N \geq K+\max\{N_\Gamma, \, \eta_\Gamma^{-1}\ln(K)\}$. Then the projection of any $A\in \caA_{\Lambda_{K-1}^{(m)}}$ into the ground state space $\caG_{\Lambda_N^{(m)}}$ satisfies
		\begin{equation}\label{LTQO}
		\|G_{\Lambda_N^{(m)}}AG_{\Lambda_N^{(m)}}-\omega^{(m)}(A)G_{\Lambda_N^{(m)}}\| \leq 2C_\Gamma|\gamma^{(K,m)}|\|A\|e^{-\eta_\Gamma(m+1)(N-K)}
		\end{equation}
	where $\eta_\Gamma$ and $C_\Gamma$ are as in \eqref{eta} and \eqref{LTQO_constant}, respectively.
	\end{cor}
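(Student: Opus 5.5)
The corollary should follow from Theorem~\ref{thm:gs_indistinguishability}, via the bound \eqref{for_LTQO}, by a standard polarization argument. The plan is to pass from diagonal matrix elements $\braket{\Psi}{A\Psi}$ to the operator norm of the compression $G A G$, with $G=G_{\Lambda_N^{(m)}}$.

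First, I would verify that the hypotheses $K\geq K_\Gamma$ and $N\geq K+\max\{N_\Gamma,\eta_\Gamma^{-1}\ln K\}$ put us in the regime of the theorem. The strict inequality $N>K+N_\Gamma>K_\Gamma+N_\Gamma$ may need a harmless adjustment, or the theorem's proof may already give it with non-strict inequalities. In that regime I would derive \eqref{for_LTQO} with the improved exponent $\eta_\Gamma(m+1)(N-K)$. The steps are as follows.

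\begin{itemize}
\item Use $e^{-\eta_\Gamma(N-K)}\le 1/K$ to bound
\[
F_m(N,K)\le (m+1)L_\Gamma(K)\,K^{-(m+1)}.
\]
\item Since $L_\Gamma$ is linear, $L_\Gamma(K)/K^{m+1}$ is decreasing in $K$ for $K\ge K_\Gamma$ and $m+1\ge m_\Gamma+1$. It is also decreasing in $m$, because $(m+1)K^{-(m+1)}$ decreases when $K\ge2$. Hence
\[
e^{F_m}\le \exp\!\big((m_\Gamma+1)L_\Gamma(K_\Gamma)/K_\Gamma^{m_\Gamma+1}\big).
\]
\item Bound $L_\Gamma(K)/|\gamma^{(K)}|\le L_\Gamma(K_\Gamma)/|\gamma^{(K_\Gamma)}|$ by checking monotonicity of the ratio from the explicit formulas. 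This is elementary, e.g. $1.8(2K+1)/(6(2K-1))$ is decreasing.
\item Write $(m+1)|\gamma^{(K)}|=|\gamma^{(K,m)}|$.
\end{itemize}

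Together these give
\[
|\braket{\Psi}{A\Psi}-\omega^{(m)}(A)|\le C_\Gamma|\gamma^{(K,m)}|\|A\|e^{-\eta_\Gamma(m+1)(N-K)}
\]
for every normalized $\Psi\in\caG_{\Lambda_N^{(m)}}$. Note that this keeps the factor $(m+1)$ in the exponent, as in \eqref{LTQO}.

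Next, set $B=G(A-\omega^{(m)}(A)\idty)G$, which is an operator on $\caG_{\Lambda_N^{(m)}}$. Write $A=A_1+iA_2$ with $A_1,A_2$ self-adjoint, both in $\caA_{\Lambda_{K-1}^{(m)}}$, and $\|A_j\|\le\|A\|$. Since $\omega^{(m)}$ is a state, $\omega^{(m)}(A_j)$ is real, so each $B_j=G(A_j-\omega^{(m)}(A_j))G$ is self-adjoint. The norm of a self-adjoint operator equals the supremum of $|\braket{\Psi}{B_j\Psi}|$ over unit vectors. Restricted to the range of $G$, this supremum is controlled by the diagonal bound above. So $\|B_j\|\le C_\Gamma|\gamma^{(K,m)}|\|A\|e^{-\eta_\Gamma(m+1)(N-K)}$, and the triangle inequality $\|B\|\le\|B_1\|+\|B_2\|$ produces the factor $2$ in \eqref{LTQO}.

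I expect the only real obstacle to be the bookkeeping in the first step: confirming the monotonicity claims in $K$ and $m$ that allow replacing $K$ and $m$ by $K_\Gamma$ and $m_\Gamma$ in the constant. If monotonicity of $L_\Gamma(K)/|\gamma^{(K)}|$ failed, I would instead take the supremum over $K\ge K_\Gamma$ directly. The polarization step is routine.
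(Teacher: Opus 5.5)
Your proposal is correct and follows the paper's argument. You derive the diagonal bound \eqref{for_LTQO} from Theorem~\ref{thm:gs_indistinguishability} using $N-K\geq\eta_\Gamma^{-1}\ln K$ together with monotonicity in $K$ and $m$. You then split $A$ into its hermitian and anti-hermitian parts, note that the norm of each compression equals its numerical radius on $\caG_{\Lambda_N^{(m)}}$, and obtain the factor $2$ from the triangle inequality.

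Your version of the first step keeps the factor $(m+1)$ in the exponent. This is what the paper's proof actually needs, even though \eqref{for_LTQO} is displayed there with $e^{-\eta_\Gamma(N-K)}$. Also, for $K\geq K_\Gamma$ the condition $N\geq K+\eta_\Gamma^{-1}\ln K$ already forces $N>K+N_\Gamma$, so the strict inequality in $N$ in the theorem causes no trouble.
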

	Since the AKLT model on these graphs is translation-invariant, the above also holds for any translation of $\Lambda_N^{(m)}$. We give the proof of Corollary~\ref{cor:LTQO} assuming that Theorem~\ref{thm:gs_indistinguishability} holds.
	\begin{proof}
		If $A\in \caA_{\Lambda_{K-1}}$ is (anti-)hermitian, then by Theorem~\ref{thm:gs_indistinguishability} and \eqref{for_LTQO}, 
		\begin{align}
	\|G_{\Lambda_N^{(m)}}AG_{\Lambda_N^{(m)}}-\omega^{(m)}(A)G_{\Lambda_N^{(m)}}\| & = \sup_{\substack{\psi\in \caG_{\Lambda_N^{(m)}}\\ \|\psi\|=1}}	\left|\braket{\psi}{A \psi} - \omega^{(m)}(A)\right| \nonumber \\ 
	&\leq C_\Gamma|\gamma^{(K,m)}|\|A\|e^{-\eta_\Gamma(m+1)(N-K)}\label{sa_case}\,.
		\end{align}
	The result for any $A\in \caA_{\Lambda_{K-1}^{(m)}}$ follows from writing $A = \frac{A+A^*}{2}+ \frac{A-A^*}{2}$ and applying \eqref{sa_case} twice.
	\end{proof}

For completeness we now state a spectral gap stability result which, due to the LTQO property we prove, follows from known results. Gap stability results utilizing LTQO fall into two different categories depending on the type of spectral gap the model exhibits. A model is said to be uniformly gapped if there exists an increasing and absorbing sequence of finite volumes for which the spectral gaps of the associated local Hamiltonians are all uniformly bounded from below by the same nonzero constant. Alternatively, one can consider the spectral gap question for a ground state of the infinite volume system \cite{nachtergaele:2024}. Frustration-free models (which includes AKLT models) that exhibit LTQO necessarily have a unique infinite volume frustration-free ground state. 
The state is then called a gapped ground state if the GNS Hamiltonian has a gap in its spectrum above the ground state. When the lattice for the model has no boundary, the GNS gap is also referred to as the bulk gap. As such, the existence of the bulk gap immediately follows if an associated sequence of finite volume models has a uniform gap. However, in certain cases, the bulk gap can be positive even if the finite-volume gaps of the limiting sequence tends to zero. The LTQO condition one verifies for stability is the same regardless of the type of gap stability one wishes to prove.

For the AKLT model on the hexagonal and Lieb lattices the stability of the uniform gap follows from Corollary~\ref{cor:LTQO} given that the model is uniformly gapped across the sequence $\Lambda_N$. A similar statement for the bulk gap can be obtained by making small adjustments to the statement of \cite[Theorem~2.2]{lucia:2023}. 

Let $H_{\Lambda_N^{(m)}}$ be the AKLT Hamiltonian for the finite volume $\Lambda_N^{(m)}$ as defined in the sentence following \eq{Lambda_N_Square}. Suppose $V_{n,x}\in \caA_{b^1_x(n)}$, $n\geq 1, x\in \Gamma^{(m)}$ for the respective model, self-adjoint, and define 
\be\label{perturbed}
H^V_{\Lambda_N^{(m)}}(s)  = H_{\Lambda_N^{(m)}} + s \sum_{\substack{n\geq 1, x\in \Gamma\\ b^1_x(n)\subset \Lambda_N}} V_{n,x},\quad s\in\Rl.
\ee
	
\begin{cor}[Spectral Gap Stability\cite{nachtergaele:2022}]\label{cor:stability}
Consider the AKLT Hamiltonians $H_{\Lambda_N^{(m)}}$ with $m\geq 0$ for the decorated or undecorated  hexagonal lattice and $m\geq 1$ for the decorated square lattice. Assume there exists $\gamma>0$, and $N_0\geq 1$ such that $\spec H_{\Lambda_N^{(m)}}\cap (0,\gamma)=\emptyset$, for all $N\geq N_0$, and that there are constants $a>0$, and $\theta \in (0,1]$ such that
$$
\Vert V_{n,x}\Vert \leq e^{-an^\theta}, \mbox{ for all } n\geq 1, \; x\in\Gamma^{(m)}\,.
$$
Then, for all $\gamma_0\in (0,\gamma)$, there exists $s(\gamma_0)>0$ and a positive, decreasing sequence $\epsilon_N\to 0$, such that 
$$
\spec H^V_{\Lambda_N^{(m)}}(s) \subseteq [-\epsilon_N,\epsilon_N]\cup [\gamma_0+\epsilon_N,\infty) =\emptyset ,\mbox{ for all } N \mbox{ large enough, and  } |s| < s(\gamma_0).
$$
\end{cor}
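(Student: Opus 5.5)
The plan is to deduce Corollary~\ref{cor:stability} from the general stability theorem of \cite{nachtergaele:2022} (the Bravyi--Hastings--Michalakis framework as formulated there), so the proof consists in checking that theorem's hypotheses for the sequence $\Lambda_N^{(m)}$. I will list the hypotheses and then verify them one at a time:
\begin{itemize}
\item[(i)] $H_{\Lambda_N^{(m)}}$ is a sum of uniformly bounded, finite-range, frustration-free terms;
\item[(ii)] there is a uniform spectral gap above the ground state energy;
\item[(iii)] the perturbation has stretched-exponential decay in the size of its support;
\item[(iv)] the ground states satisfy LTQO, with a decay function that is summable in the sense the theorem requires.
\end{itemize}

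Items (i)--(iii) are immediate. Each $P_e$ is a nearest-neighbor projection, so $\|P_e\|=1$ and the interaction has range one. Frustration-freeness holds by construction, with $H_{\Lambda}\ge 0$ and $\caG_\Lambda=\ker H_\Lambda$. The gap, $\spec H_{\Lambda_N^{(m)}}\cap(0,\gamma)=\emptyset$ for $N\ge N_0$, is assumed. The bound $\|V_{n,x}\|\le e^{-an^\theta}$ is exactly the decay class allowed in \cite{nachtergaele:2022}.

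One technical point concerns the \emph{local gap} condition in \cite{nachtergaele:2022}. It asks for a lower bound on the gaps of the restrictions to the balls $b_x^1(n)$ (or suitable regions) that decays at most polynomially in $n$. I would check whether the version invoked needs this separately. If it does, I would either add it as part of the gap assumption, since it is standard for frustration-free models, or use the frustration-free variant of the theorem in which the local gap follows from the global one. I would also confirm that the sum over regions $b_x^1(n)\subset\Lambda_N$ in \eqref{perturbed} matches the ``bulk'' perturbation convention of that theorem.

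The substantive input is (iv), which comes from Corollary~\ref{cor:LTQO}. Translation invariance lets the bound hold for every translate of $\Lambda_{K-1}^{(m)}\subset\Lambda_N^{(m)}$. The LTQO formulation in \cite{nachtergaele:2022} is phrased in terms of an observable supported in a ball $b_x(k)$ and a slightly larger region $b_x(k+\ell)$, with the ground state projection of that larger region. I would cover $b^1_x(n)$ by a translate of $\Lambda_{K-1}^{(m)}$ with $K$ comparable to $n$, and choose the outer region as a translate of $\Lambda_{K+\ell}^{(m)}$ intersected appropriately.

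Here is the main obstacle. Corollary~\ref{cor:LTQO} is stated for the specific shapes $\Lambda_N^{(m)}$ rather than for arbitrary regions, so the LTQO estimate must be transferred to the regions the BHM argument actually uses. The first approach I would try is to observe that the proof of \cite{nachtergaele:2022} only uses the projections $G_{\Lambda_L^{(m)}}$ for an increasing family of shapes, and that this family can be taken to be translates of the sets $\Lambda_L^{(m)}$ in our sequence. Any ball $b^1_x(n)$ is sandwiched between two such translates with comparable radii. The required LTQO bound then reads
\[
\|G A G-\omega^{(m)}(A)G\|\le 2C_\Gamma|\gamma^{(K,m)}|\,\|A\|\,e^{-\eta_\Gamma(m+1)\ell},
\]
where $G$ is the ground state projection of a translate of $\Lambda_{K+\ell}^{(m)}$. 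This holds once $\ell\ge\max\{N_\Gamma,\eta_\Gamma^{-1}\ln K\}$; for smaller $\ell$ the trivial bound $2\|A\|$ suffices.

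The resulting decay function is of the form $\Omega(\ell)\sim K\,e^{-c\ell}$, which is polynomial in $K$ times exponential in $\ell$. This is well within the summability requirement of \cite{nachtergaele:2022}. With all hypotheses verified, that theorem yields $s(\gamma_0)>0$ and $\epsilon_N\to0$ such that the spectrum splits into $[-\epsilon_N,\epsilon_N]$ and $[\gamma_0+\epsilon_N,\infty)$ for $|s|<s(\gamma_0)$, which is the stated conclusion.
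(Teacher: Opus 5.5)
Your proposal follows the paper's own route: the paper also obtains the corollary by feeding the assumed uniform gap and the LTQO bound of Corollary~\ref{cor:LTQO} (extended to translates of $\Lambda_n^{(m)}$ by translation invariance) into the stability theorem of \cite{nachtergaele:2022}, referring to \cite{lucia:2024} for the details in the case $m\geq 5$. The local-gap condition you worry about needs no extra hypothesis: once the local regions are taken to be translates of $\Lambda_n^{(m)}$, as you propose, it follows from the assumed uniform gap and translation invariance for $n\geq N_0$, and for $n<N_0$ from the fact that each of the finitely many remaining Hamiltonians is a finite-dimensional positive semidefinite matrix, so its smallest nonzero eigenvalue is positive.
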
	

The precise details of the stability argument were explicitly worked out for the $m$-decorated hexagonal model in the case that $m\geq 5$ in \cite{lucia:2024}, and apply up to mild modifications for the models considered here given the LTQO result.		

\subsection{Summary of Sections}
The rest of this work is organized as follows. We conclude Section~\ref{sec:Introduction} by introducing some notation and definitions that will be used throughout the work. 

In Section~\ref{sec:AKLT_GSS} we review the Weyl representation for the AKLT model on any graph $G$ and how it is used to describe the ground state space for any local Hamiltonian associated to a finite subgraph $\Lambda\subseteq G$. We use the specific form of the ground state expectations in this representation to define two maps, the bulk-boundary map and bulk state, and outline how these states will be used to prove Theorem~\ref{thm:gs_indistinguishability} for the two cases of interest. The main objective will be to write a portion of the integral defining the bulk-boundary map or bulk state in terms of a polymer representation that has a convergent cluster expansion. 

Section~\ref{sec:hex_polymers} is focused on establishing the necessary polymer representation for the (un)decorated hexagonal lattice. We then turn to reviewing a key cluster expansion convergence criterion due to Koteck\'y and Preiss \cite{kotecky:1986} which was later generalized by Ueltschi \cite{ueltschi:2004} in Section~\ref{sec:KP}, and show that this criterion is satisfied by the polymer representation for the hexagonal model in Section~\ref{sec:Cluster_convergence}. The proof of the cluster expansion convergence criterion for the hexagonal model follows closely the approach used in \cite{kennedy:1988}, with differences stemming from choices that result in having a larger set of polymers in our representation than what was used in the previous work. Using the cluster expansion, we then prove Theorem~\ref{thm:gs_indistinguishability} in the case that $\Gamma=\bbH$ in Section~\ref{sec:proof}. 

The case that $\Gamma=\bbZ^2$ is the content of Section~\ref{sec:lieb}, where we establish the polymer representation and its convergent cluster expansion. We then show that a slight modification of the proof from Section~\ref{sec:proof} also then holds in this context, which completes the proof of Theorem~\ref{thm:gs_indistinguishability}.

Finally, in Appendix~\ref{sec:correlation_decay} we show that the convergence cluster expansion implies that the infinite volume ground state exhibits exponential decay of two-point correlations, and in Appendix~\ref{sec:computer_code} give the computer code we used for a number of counting arguments in the proof that the cluster expansion convergence criterion holds for both the hexagonal and square models.

\subsection{Notation}\label{sec:notation}
	\begin{figure}
	\begin{subfigure}{0.4\textwidth}
		\begin{tikzpicture}
			\def\n{.3};
			\newcommand{\hexcoord}[2]
			{[shift=(60:#1),shift=(120:#1),shift=(0:#2),shift=(-60:#2),shift=(0:#2),shift=(60:#2)]}
			\foreach \x in {-6,...,6}
			\foreach \y in {-3,...,2}{
				\draw[color=black!25]\hexcoord{\x*\n}{\y*\n}
				(0:\n)--(60:\n)--(120:\n)--(180:\n)--(-120:\n)--(-60:\n)--cycle;
				\draw[color=black!25,shift=(-60:\n), shift=(0:\n)]\hexcoord{\x*\n}{\y*\n}
				(0:\n)--(60:\n)--(120:\n)--(180:\n)--(-120:\n)--(-60:\n)--cycle;
			}
			\draw[color=black!25,shift=(-60:\n), shift=(0:\n)]\hexcoord{6.5*\n}{2.5*\n}
			(0:\n)--(60:\n)--(120:\n)--(180:\n)--(-120:\n)--(-60:\n)--cycle;
			\foreach \x in {-5.5,-4.5,-3.5,-2.5,-1.5,-0.5,5.5,4.5,3.5,2.5,1.5,0.5}
			\foreach \y in {2.5}{
				\draw[color=black!25]\hexcoord{\x*\n}{\y*\n}
				(0:\n)--(60:\n)--(120:\n)--(180:\n)--(-120:\n)--(-60:\n)--cycle;
				\draw[color=black!25,shift=(-60:\n), shift=(0:\n)]\hexcoord{\x*\n}{\y*\n}
				(0:\n)--(60:\n)--(120:\n)--(180:\n)--(-120:\n)--(-60:\n)--cycle;
			}
			\foreach \j in {-2,-1,0,1,2}
			\foreach \k in {-2}
			\foreach \x in {0,60,120,180,240,300}
			\foreach \y in {240,300}
			\draw[shift=(\x+120:\n), shift=(\x+60:\n), very thick, color=black]\hexcoord{\j*\n}{\k*\n}
			(\x+\y:\n)--(\x+\y+60:\n);
			\foreach \j in {-3,-2,2,3}
			\foreach \k in {-1}
			\foreach \x in {0,60,120,180,240,300}
			\foreach \y in {240,300}
			\draw[shift=(\x+120:\n), shift=(\x+60:\n), very thick, color=black]\hexcoord{\j*\n}{\k*\n}
			(\x+\y:\n)--(\x+\y+60:\n);
			\foreach \j in {-4,-3,3,4}
			\foreach \k in {0}
			\foreach \x in {0,60,120,180,240,300}
			\foreach \y in {240,300}
			\draw[shift=(\x+120:\n), shift=(\x+60:\n), very thick, color=black]\hexcoord{\j*\n}{\k*\n}
			(\x+\y:\n)--(\x+\y+60:\n);
			\foreach \j in {-2,-1,0,1,2}
			\foreach \k in {2}
			\foreach \x in {0,60,120,180,240,300}
			\foreach \y in {240,300}
			\draw[shift=(\x+120:\n), shift=(\x+60:\n), very thick, color=black]\hexcoord{\j*\n}{\k*\n}
			(\x+\y:\n)--(\x+\y+60:\n);
			\foreach \j in {-3,-2,2,3}
			\foreach \k in {1}
			\foreach \x in {0,60,120,180,240,300}
			\foreach \y in {240,300}
			\draw[shift=(\x+120:\n), shift=(\x+60:\n), very thick, color=black]\hexcoord{\j*\n}{\k*\n}
			(\x+\y:\n)--(\x+\y+60:\n);
			\foreach \j in {-3.5,-2.5,2.5,3.5}
			\foreach \k in {-.5}
			\foreach \x in {0,60,120,180,240,300}
			\foreach \y in {240,300}
			\draw[shift=(\x+120:\n), shift=(\x+60:\n), very thick, color=black]\hexcoord{\j*\n}{\k*\n}
			(\x+\y:\n)--(\x+\y+60:\n);
			\foreach \j in {-2.5,-1.5,-.5,0.5,1.5,2.5}
			\foreach \k in {-1.5}
			\foreach \x in {0,60,120,180,240,300}
			\foreach \y in {240,300}
			\draw[shift=(\x+120:\n), shift=(\x+60:\n), very thick, color=black]\hexcoord{\j*\n}{\k*\n}
			(\x+\y:\n)--(\x+\y+60:\n);
			\foreach \j in {-3.5,3.5}
			\foreach \k in {0.5}
			\foreach \x in {0,60,120,180,240,300}
			\foreach \y in {240,300}
			\draw[shift=(\x+120:\n), shift=(\x+60:\n), very thick, color=black]\hexcoord{\j*\n}{\k*\n}
			(\x+\y:\n)--(\x+\y+60:\n);
			\foreach \j in {-2.5,-1.5,-.5,0.5,1.5,2.5}
			\foreach \k in {1.5}
			\foreach \x in {0,60,120,180,240,300}
			\foreach \y in {240,300}
			\draw[shift=(\x+120:\n), shift=(\x+60:\n), very thick, color=black]\hexcoord{\j*\n}{\k*\n}
			(\x+\y:\n)--(\x+\y+60:\n);
			\foreach \j in {-3.5,-2.5,2.5,3.5}
			\foreach \k in {.5}
			\foreach \x in {0,60,120,180,240,300}
			\foreach \y in {240,300}
			\draw[shift=(\x+120:\n), shift=(\x+60:\n), very thick, color=black]\hexcoord{\j*\n}{\k*\n}
			(\x+\y:\n)--(\x+\y+60:\n);
			\foreach \x in {0,60,120,180,240,300}
			\foreach \y in {60,120}{
				\draw[shift=(\x+120:\n), shift=(\x+60:\n), fill=black]\hexcoord{0*\n}{0*\n} (\x+\y:1*\n) node[red]{$\bullet$};}
			\foreach \x in {-3,-2,-1,0,1}
			\foreach \y in {0}{\draw[shift=(120:\n), shift=(60:\n), fill=black]\hexcoord{\x*\n}{\y*\n} (0:8*\n) node[red]{$\bullet$};}
			\foreach \x in {-3,-2,-1,0,1}
			\foreach \y in {0}{\draw[shift=(120:\n), shift=(60:\n), fill=black]\hexcoord{\x*\n}{\y*\n} (180:8*\n) node[red]{$\bullet$};}
			\foreach \x in {-8,-2}
			\foreach \y in {-1}{\draw[shift=(120:\n), shift=(60:\n), fill=black]\hexcoord{\x*\n}{\y*\n} (120:8*\n) node[red]{$\bullet$};}
			\foreach \x in {-9,-1}
			\foreach \y in {0}{\draw[shift=(120:\n), shift=(60:\n), fill=black]\hexcoord{\x*\n}{\y*\n} (120:8*\n) node[red]{$\bullet$};}
			\foreach \x in {-10,0}
			\foreach \y in {1}{\draw[shift=(120:\n), shift=(60:\n), fill=black]\hexcoord{\x*\n}{\y*\n} (120:8*\n) node[red]{$\bullet$};}
			\foreach \x in {0,6}
			\foreach \y in {1}{\draw[shift=(120:\n), shift=(60:\n), fill=black]\hexcoord{\x*\n}{\y*\n} (300:8*\n) node[red]{$\bullet$};}
			\foreach \x in {-1,7}
			\foreach \y in {0}{\draw[shift=(120:\n), shift=(60:\n), fill=black]\hexcoord{\x*\n}{\y*\n} (300:8*\n) node[red]{$\bullet$};}
			\foreach \x in {-2,8}
			\foreach \y in {-1}{\draw[shift=(120:\n), shift=(60:\n), fill=black]\hexcoord{\x*\n}{\y*\n} (300:8*\n) node[red]{$\bullet$};}

			\foreach \x in {-8.5,-1.5}
			\foreach \y in {-.5}{\draw[shift=(120:\n), shift=(60:\n), fill=black]\hexcoord{\x*\n}{\y*\n} (120:8*\n) node[red]{$\bullet$};}
			\foreach \x in {-9.5,-.5}
			\foreach \y in {0.5}{\draw[shift=(120:\n), shift=(60:\n), fill=black]\hexcoord{\x*\n}{\y*\n} (120:8*\n) node[red]{$\bullet$};}
			\foreach \x in {-0.5,6.5}
			\foreach \y in {0.5}{\draw[shift=(120:\n), shift=(60:\n), fill=black]\hexcoord{\x*\n}{\y*\n} (300:8*\n) node[red]{$\bullet$};}
			\foreach \x in {-1.5,7.5}
			\foreach \y in {-0.5}{\draw[shift=(120:\n), shift=(60:\n), fill=black]\hexcoord{\x*\n}{\y*\n} (300:8*\n) node[red]{$\bullet$};}
		\end{tikzpicture}
		\caption{$\Lambda_{5,2}\subset\bbH$ with boundary vertices $\partial\Lambda_{5,2}$ in red.}\label{fig:lambdank}
	\end{subfigure}\hspace{30pt}
	\begin{subfigure}{0.4\textwidth}
		\begin{tikzpicture}[scale=0.6]
			\def\N{5}
			
			\foreach \i in {2,3} {
				\foreach \j in {2,...,10} {
					\filldraw[black] (\i, \j+1) circle (2pt);
					\filldraw[black] (\i+1, \j) circle (2pt);
					\filldraw[black] (\i-1, \j) circle (2pt);
					\filldraw[black] (\i, \j-1) circle (2pt);
					
					\ifnum\i<21
					\filldraw[black] (\i+0.5, \j) circle (2pt);
					\filldraw[black] (\i-.5, \j) circle (2pt);
					\fi
					
					\ifnum\j<21
					\filldraw[black] (\i, \j+0.5) circle (2pt);
					\filldraw[black] (\i, \j-0.5) circle (2pt);
					\fi
				}
			}
			
			\foreach \i in {2,3} {
				\foreach \j in {1,...,10} {
					\draw[thick] (\i,\j) -- (\i,\j+1);
				}
			}
			\foreach \i in {1,2,3} {
				\foreach \j in {2,...,10} {
					\draw[thick] (\i,\j) -- (\i+1,\j);
				}
			}
			\foreach \i in {9,10} {
				\foreach \j in {2,...,10} {
					\filldraw[black] (\i, \j+1) circle (2pt);
					\filldraw[black] (\i+1, \j) circle (2pt);
					\filldraw[black] (\i-1, \j) circle (2pt);
					\filldraw[black] (\i, \j-1) circle (2pt);
					
					\ifnum\i<21
					\filldraw[black] (\i+0.5, \j) circle (2pt);
					\filldraw[black] (\i-.5, \j) circle (2pt);
					\fi
					
					\ifnum\j<21
					\filldraw[black] (\i, \j+0.5) circle (2pt);
					\filldraw[black] (\i, \j-0.5) circle (2pt);
					\fi
				}
			}

			\foreach \i in {9,10} {
				\foreach \j in {1,...,10} {
					\draw[thick] (\i,\j) -- (\i,\j+1);
				}
			}
			\foreach \i in {8,9,10} {
				\foreach \j in {2,...,10} {
					\draw[thick] (\i,\j) -- (\i+1,\j);
				}
			}
			\foreach \i in {2,...,10} {
				\foreach \j in {2,3,9,10} {
					\filldraw[black] (\i, \j+1) circle (2pt);
					\filldraw[black] (\i+1, \j) circle (2pt);
					\filldraw[black] (\i-1, \j) circle (2pt);
					\filldraw[black] (\i, \j-1) circle (2pt);
					
					\ifnum\i<21
					\filldraw[black] (\i+0.5, \j) circle (2pt);
					\filldraw[black] (\i-.5, \j) circle (2pt);
					\fi
					
					\ifnum\j<21
					\filldraw[black] (\i, \j+0.5) circle (2pt);
					\filldraw[black] (\i, \j-0.5) circle (2pt);
					\fi
				}
			}
			
			\foreach \i in {2,...,10} {
				\foreach \j in {1,2,3,8,9,10} {
					\draw[thick] (\i,\j) -- (\i,\j+1);
				}
			}
			\foreach \i in {2,...,9} {
				\foreach \j in {2,3,9,10} {
					\draw[thick] (\i,\j) -- (\i+1,\j);
				}
			}
			\foreach \i in{1,11}{
				\foreach \j in {2,...,10}{\filldraw[red] (\i, \j) circle (3pt);
					\filldraw[red] (\j, \i) circle (3pt);}}
			\foreach \i in{4,8}{
				\foreach \j in {4,...,8}{\filldraw[red] (\i, \j) circle (3pt);
					\filldraw[red] (\j, \i) circle (3pt);}}
		\end{tikzpicture}
		\caption{$\Lambda^{(1)}_{5,3}$ for the Lieb lattice with boundary vertices $\partial\Lambda^{(1)}_{5,3}$ in red.}\label{fig:lambda_nk_lieb}
	\end{subfigure}
	\caption{}				
\end{figure}

For ease of reference, we finish this section by stating some basic set notation that will be frequently used throughout this work. First, for either lattice $\Gamma\in\{\bbH,\bZ^2\}$ and any subgraph $\Lambda=(\caV_\Lambda,\caE_\Lambda)\subseteq \Gamma^{(m)}$, we define the degree of a vertex $x\in\caV_\Lambda$ in the graph $\Lambda$ by
\[
\deg_\Lambda(x) = |\{y\in \caV_\Lambda : (x,y)\in \caE_\Lambda\}|.
\] 
and define the boundary $\partial \Lambda$ as the set of all vertices in $\Lambda$ that share an edge with a vertex outside of $\Lambda$, i.e.
\begin{equation}\label{boundary}
	\partial \Lambda = \left\{x\in \caV_{\Lambda}: \deg_\Lambda(x) \neq \deg_{\Gamma^{(m)}}(x)\right\}\,.
\end{equation}
We note that the sequences defined in \eqref{Lambda_N} and \eqref{Lambda_N_Square} are such that $\partial\Lambda_N^{(m)}=\partial\Lambda_N$ for all $m\geq 0$, which will be used throughout this work.

For any $\Lambda\subseteq \Gamma$ we denote by $\mathring{\Lambda}=(\caV_{\mathring{\Lambda}},\caE_{\mathring{\Lambda}})$ the interior of $\Lambda$, which is the subgraph obtained from removing all boundary vertices from $\Lambda$, i.e.
\begin{equation}\label{interior}
\caV_{\mathring{\Lambda}}=\caV_\Lambda\setminus\partial\Lambda,\qquad \caE_{\mathring{\Lambda}}=\{(x,y)\in\caE_\Lambda: x,y\in \caV_{\mathring{\Lambda}}\}\,.
\end{equation}
Given that $\Lambda\subseteq \Gamma$, the undecorated lattice, we note that the operations of taking the interior and $m$-decorating the volume do not commute. For the specific sequence of volumes defined in  \eqref{Lambda_N} or \eqref{Lambda_N_Square}, we will always want to consider the sequence of volumes defined decorating the interior $\mathring{\Lambda}_N$  with $m$-vertices additional vertices, $m\in\bN_0$, and denote these volumes by $\mathring{\Lambda}_N^{(m)}$. One can easily check that these volumes are different from, but contained in, the interior of $\Lambda_N^{(m)}$, see Figure~\ref{fig:lambda2}. We note that these volumes satisfy
\[
|\caE_{\mathring{\Lambda}_N^{(m)}}|=(m+1)|\caE_{\mathring{\Lambda}_N}|
\]
 and, recalling the notation from Corollary~\ref{cor:LTQO}, that $\gamma^{(N,m)}$ is the unique, shortest loop encompassing $\mathring{\Lambda}_N^{(m)}$.

For the proof of Theorem~\ref{thm:gs_indistinguishability} we will need to consider a subgraph of $\Lambda_N^{(m)}$ that is disjoint from $\Lambda_{K-1}^{(m)}$. In the case of $\Gamma=\bbH$, this volume is defined for any choice of parameters $1 \leq K \leq N-2$ by
\begin{equation}\label{LambdaNK}
\Lambda_{N,K} = \bigcup_{\substack{\tilde{x}\in\tilde{\Gamma}:\\K+1\leq\tilde{d}(\tilde{x},\tilde{0})\leq N-1}}h_{\tilde{x}}\subseteq \Lambda_N
\end{equation}
where $\tilde{d}$ is the graph distance in the dual lattice $\tilde{\Gamma}$, see Figure~\ref{fig:lambdank}. We additionally set $\Lambda_{N,0}=\Lambda_N$ for all $N\geq 1$. The edge and vertex sets of these volumes satisfy 
\begin{equation}\label{Lambda_NK_properties}
	\caE_{\Lambda_{N,K}}=\caE_{\Lambda_N}\setminus \caE_{\mathring{\Lambda}_K},\quad\quad \caV_{\Lambda_{N,K}}\cap\caV_{\mathring{\Lambda}_K} = \partial \mathring{\Lambda}_K.
\end{equation}
The boundary of $\Lambda_{N,K}$ consists of all vertices with degree one and partition into two sets, which we will refer to as the \emph{inner boundary}, $\partial\mathring{\Lambda}_K$, and \emph{outer boundary}, $\partial\Lambda_N$:
\begin{equation}\label{LambdaNK_bdy}
	\partial\Lambda_{N,K} = \{x\in \Lambda_{N,K}: \deg_{\Lambda_{N,K}}(x) = 1\}=\partial\Lambda_N \cup \partial\mathring{\Lambda}_K.
\end{equation}

When $\Gamma=\bbZ^2$, we define
	\begin{equation}\label{liebvols}
	\Lambda_{N,K}=\bigcup_{\substack{x\in\Gamma \\ K\leq d_\infty(x,0)\leq N-1}} b_x^1(1)
	\end{equation}
	where again the boundary can be divided into its inner and outer components,  $\partial\mathring{\Lambda}_K$ and $\partial\Lambda_N$, respectively, but this time the boundary has four vertices of degree two, see \ref{fig:lambda_nk_lieb}. 
	
	For either choice of $\Gamma$, $\Lambda_{N,K}^{(m)}\subseteq \Gamma^{(m)}$ will denote the $m$ decorated version of $\Lambda_{N,K}$.

\section{Description of the Ground States}\label{sec:AKLT_GSS}

	\subsection{Ground States and Expectations of AKLT Models}\label{sec:gs}
As in \cite{kennedy:1988}, we use the Weyl representation of the Lie algebra $\mathfrak{su}(2)$ acting on homogeneous polynomials of two variables to give a description of the ground state space of the AKLT Hamiltonian. As we will consider several values of spin, we introduce the Hilbert space representing a particle of general spin $s\in \bN/2$. Namely,
\[
\caH^{(s)} = \left\{\sum_{k=0}^{2s}\lambda_k u^kv^{2s-k}: \lambda_k\in \bC\right\}
\]
where $u=\cos(\theta/2)e^{i\phi/2}$ and $v=\sin(\theta/2)e^{-i\phi/2}$ are functions of $0\leq\phi<2\pi,$ $0\leq\theta<\pi$, and the inner product for all $\Psi,\,\Phi\in\caH^{(s)}$ is
\begin{align}
	\braket{\Psi}{\Phi}=\int d\Omega \overline{\Psi(\theta,\phi)}\Phi(\theta,\phi), \qquad d\Omega=\frac{1}{4\pi}\sin(\theta)d\phi d\theta\,.
\end{align}
Here, we note that $d\Omega$ is the measure associated with the spherical variable 
\begin{equation}\label{spherical_coord}
	\Omega=(\sin(\theta)\cos(\phi),\sin(\theta)\sin(\phi),\cos(\theta))
\end{equation}
and $\caH^{(s)}\subseteq L^2(d\Omega)$ for all $s\in \bN/2$, the latter of which is independent of $s$. The spin-$s$ matrices in $\caB(\caH^{(s)})$ given in terms of the Weyl representation are
\begin{equation}\label{weyl}
	S^3 = \frac{1}{2}\left(v\partial_{v}-u\partial_{u}\right),\qquad S^-=u\partial_{v},\qquad S^+=v\partial_{u}\,.
\end{equation}

To described the class AKLT models considered in this work in the context of the Weyl representation, let $\Gamma=(\caV_\Gamma,\caE_\Gamma)$ be any simple connected graph. The local Hilbert space for any $x\in \caV_\Gamma$ is given by $\caH_x:=\caH^{(s_x)}$ where $s_x=\deg_\Gamma(x)/2$. Let $u_x, \, v_x, \, \theta_x,\, \phi_x,$ $\Omega_x$ and $d\Omega_x$ denote the variables associated to $x$. Then, for any finite subgraph $\Lambda\subseteq \Gamma$, the inner product on $\caH_\Lambda=\bigotimes_{x\in\caV_\Lambda}\caH_x$ is 
\begin{equation}\label{hs_ip}
	\langle\Phi,\Psi\rangle=\int d\Omega^{\Lambda} \overline{\Phi(\theta,\phi)}\Psi(\theta,\phi)\qquad \forall \Phi,\Psi\in\caH_{\Lambda}
\end{equation}
where $d\Omega^{\Lambda}$ is the product measure associated with $\{d\Omega_x:x\in\caV_\Lambda\}$, and $\Phi(\theta,\phi)$ and $\Psi(\theta,\phi)$ are functions of the variables $\theta=(\theta_x)_{x\in \caV_\Lambda}$ and $\phi=(\phi_x)_{x\in\caV_\Lambda}$. As described in Section~\ref{sec:main_results}, the AKLT interaction $P_e$ any edge $e=(x,y)\in \caE_\Gamma$,  is the orthogonal projection onto the subspace of maximal total spin $s_x+s_y$. Considering \eqref{weyl}, this is the orthogonal projection onto the subspace 
\begin{equation}\label{total_spin}
	{\rm span}\left\{(u_x\partial_{v_x}+u_y\partial_{v_y})^kv_x^{2s_x}v_y^{2s_y}: 0 \leq k \leq s_x+s_y\right\}\cong \cH^{(s_x+s_y)}\,.
\end{equation}
The ground state space of the AKLT model for any finite subgraph $\Lambda \subseteq \Gamma$ is then given by the following result.

\begin{thm}[Ground State Space\cite{kennedy:1988}] \label{thm:AKLT_GSS}
	Let $\Gamma=(\caV_\Gamma,\,\caE_\Gamma)$ be any simple connected graph. For any finite subgraph $\Lambda=(\caV_\Lambda,\caE_\Lambda)\subseteq \Gamma$, the ground state space of $H_\Lambda=\sum_{e\in\caE_\Lambda}P_e$ is
	\begin{equation}\label{gss}
		\caG_\Lambda = \left\{\Psi(f) = f \cdot\prod_{(x,y)\in \caE_\Lambda}(u_xv_y-v_yu_x): f\in \caH_{\partial \Lambda}^{\rm gss}\right\}
	\end{equation}
	where
	\begin{equation}
		\caH_{\partial\Lambda}^{\mathrm{gss}}=\bigotimes_{x\in\partial\Lambda}\caH^{(d_x'/2)} , \qquad d_x' =\deg_\Gamma(x)-\deg_\Lambda(x)
	\end{equation}
In particular, this holds for any finite subgraph of the $m$-decorated square or hexagonal lattice, with $m\in \bZ_{\geq 0}$.
\end{thm}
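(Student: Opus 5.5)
We prove the two inclusions separately, using the $SU(2)$ structure of the Weyl representation. (The factor in \eqref{gss} is intended to be the singlet polynomial $u_xv_y-v_xu_y$.)

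\textbf{Step 1: the ground states are exactly the common kernel of the $P_e$.} Since $H_\Lambda$ is a sum of orthogonal projections, $H_\Lambda\geq 0$. The state $\prod_{(x,y)}(u_xv_y-v_xu_y)$ is nonzero, so $\ker H_\Lambda\neq\{0\}$ once the first inclusion below holds. Hence $\ker H_\Lambda=\bigcap_{e\in\caE_\Lambda}\ker P_e$.

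\textbf{Step 2: $\Psi(f)\in\caH_\Lambda$ and $\Psi(f)\in\ker P_e$ for every $e$.} First check membership in $\caH_\Lambda$ by degree counting. For $x\in\caV_\Lambda$, the product contributes total degree $\deg_\Lambda(x)$ in $(u_x,v_x)$, and $f$ contributes $d_x'$ if $x\in\partial\Lambda$ (and $0$ otherwise). The total is $\deg_\Gamma(x)=2s_x$, so $\Psi(f)$ is homogeneous of the correct degree in each pair of variables.

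Next, for an edge $e=(x,y)$, the singlet factor $u_xv_y-v_xu_y$ is annihilated by $S^\pm_x+S^\pm_y$ and $S^3_x+S^3_y$ in the representation \eqref{weyl}. Thus multiplication by it is an $SU(2)$-equivariant map. Write $\Psi(f)=(u_xv_y-v_xu_y)\,\Phi$, where $\Phi$ is homogeneous of degree $2s_x-1$ in $(u_x,v_x)$ and $2s_y-1$ in $(u_y,v_y)$. The vectors in $\caH_x\otimes\caH_y$ with maximal total spin $s_x+s_y$ are the symmetric polynomials spanned in \eqref{total_spin}. These are the polynomials of the form $(\alpha\cdot\xi_x)^{2s_x}(\alpha\cdot\xi_y)^{2s_y}$, and they have no singlet factor.

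To show $\Psi(f)\perp$ this subspace, compute the pairing with a highest-weight vector, e.g.\ $v_x^{2s_x}v_y^{2s_y}$, using $SU(2)$ rotations. Equivalently, observe that the total spin $s_x+s_y$ component of $\caH_x\otimes\caH_y$ is the image of symmetrization, i.e.\ the kernel of the two-site antisymmetrizing operator $\partial_{u_x}\partial_{v_y}-\partial_{v_x}\partial_{u_y}$. Multiples of the singlet span precisely the orthocomplement: they form the range of the adjoint, by Clebsch--Gordan dimension count $(2s_x+1)(2s_y+1)-(2s_x+2s_y+1)=\dim$ of degree-$(2s_x-1,2s_y-1)$ polynomials. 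The fact that the adjoint of $\partial_{u_x}\partial_{v_y}-\partial_{v_x}\partial_{u_y}$ is multiplication by the singlet, up to constants, requires the standard Fischer/Bargmann inner product. Since the $L^2(d\Omega)$ inner product on each $\caH^{(s)}$ is $SU(2)$-invariant, it agrees with that inner product up to a scalar on each irreducible component, which suffices.

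\textbf{Step 3: every element of $\bigcap_e\ker P_e$ is of the form $\Psi(f)$.} By Step 2, $\ker P_e=(u_xv_y-v_xu_y)\cdot(\text{polynomials})$. Take $\Psi\in\bigcap_e\ker P_e$. Then $\Psi$ is divisible by each singlet $u_xv_y-v_xu_y$, $e\in\caE_\Lambda$. These are distinct irreducible polynomials in the polynomial ring $\bC[u_z,v_z]$, a UFD. Hence $\Psi$ is divisible by their product, and $\Psi=f\prod_e(u_xv_y-v_xu_y)$. Degree counting as in Step 2 shows that $f$ has degree $2s_x-\deg_\Lambda(x)=d'_x$ in $(u_x,v_x)$. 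So $f$ is constant in $x$ for interior vertices, and $f\in\caH^{\rm gss}_{\partial\Lambda}$.

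The main obstacle is the identification in Step 2 that $\ker P_e$ equals the singlet multiples. One needs to be careful that the $L^2(d\Omega)$ inner product makes the decomposition orthogonal; $SU(2)$-invariance plus Schur's lemma handles this. The UFD argument in Step 3 is routine. The last statement of the theorem is immediate, since decorated lattices are simple connected graphs.
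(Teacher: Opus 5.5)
Your proof is correct and follows the same route as the paper, which only sketches the argument and defers to Kennedy--Lieb--Tasaki: identify $\ker P_e$ with the multiples of the singlet $u_xv_y-v_xu_y$, then use unique factorization in the polynomial ring and degree counting to force $f\in\caH^{\rm gss}_{\partial\Lambda}$; your equivariance/Fischer-adjoint argument supplies the details of the first step that the paper omits. One small slip: when $\partial\Lambda\neq\emptyset$ the bare product $\prod_{e}(u_xv_y-v_xu_y)$ does not lie in $\caH_\Lambda$, because its degree at boundary sites is too low, so to get $\ker H_\Lambda\neq\{0\}$ use $\Psi(f)$ with any nonzero $f$ instead.
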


This result follows from first proving that for any edge $e=(x,y)\in \Gamma$, 
\[
\mathrm{ker}P_{e}=\left\{(v_xu_y-u_xv_y)g\in\caH_x\otimes\caH_y:g\in \cH^{(s_x-1/2)}\otimes\caH^{(s_y-1/2)}\right\}.    
\]
Since $\caH_x=\cH^{(s_x)}$ is the set of all homogeneous polynomials of degree $s_x=\deg_\Gamma(x)/2$ in two variables, it follows that every interaction $P_e$, $e=(x,y)\in\caE_\Lambda$, is simultaneously minimized by any vector $\psi\in \caH_\Lambda$ that has a singlet $u_xv_y-v_xu_y$ across each edge. Since the polynomials over $\mathbb{C}$ form a unique factorization domain, this produces \eqref{gss}. For complete details, see \cite{kennedy:1988}.

It was shown \cite{arovas:1988} that for any finite subgraph, $\Lambda$, of a simple connected graph, $\Gamma$, and any $A\in \caA_{\Lambda}$, there exists a polynomial $A({\bf\Omega})$ in the variables $(u_x,v_x,\overline{u_x}, \overline{v_x})_{x\in \supp(A)}$, called the \emph{symbol of $A$}, such that the matrix element of $A$ for any vectors $\Psi, \Phi\in \caH_\Lambda$ is
\begin{equation}\label{matrix_elt}
	\braket{\Psi}{A\Phi}=\int d\Omega^{\Lambda}\overline{\Psi(\theta,\phi)}\Phi(\theta,\phi)A({\bf\Omega})\,.
\end{equation}
This is a consequence of using that $\{\partial_{u}^k\partial_v^lu^{k+j}v^{l-j}: k,l\in\bN_0, \, -k\leq j\leq l\}$ is a spanning set of $\caB(\caH^{(s)})$ for any $s$, and then applying the identity
\[
\braket{ \Phi} {\partial_{u}^k\partial_v^lu^{k+j}v^{l-j}\Psi}=C_{k,l}\braket{ u^kv^l\Phi}{u^{k+j}v^{l-j}\Psi}, \quad C_{k,l}=\frac{(2s+l+k+1)!}{(2s+1)!} \,.
\]
See \cite{kennedy:1988,lucia:2023} for more details. We note that the symbol $A({\bf\Omega})$ is not necessarily unique, but that our calculations will not depend on the particular choice. 

An immediate consequence of \eqref{matrix_elt} and Theorem~\ref{thm:AKLT_GSS}, is that the expectation of an observable $A\in \caA_{\Lambda}$ in any ground state $\Psi(f)\in \caG_{\Lambda}$ is given by
\begin{equation}\label{gs_exp_f}
	\langle \Psi(f),A\Psi(f)\rangle=\int d\Omega^{\Lambda}\prod_{(x,y)\in\caE_\Lambda}|u_xv_y-v_xu_y|^2|f|^2A({\bf\Omega}) = \int d\rho_{\Lambda}|f|^2 A({\bf\Omega})
\end{equation} 
where one replaces $|u_xv_y-v_xu_y|^2=\frac{1}{2}(1-\Omega_x\cdot\Omega_y)$ and defines
\begin{equation}\label{rho}
d\rho_{\Lambda}=\rho_{\Lambda}d\Omega^{\Lambda},\qquad \rho_{\Lambda}=2^{-|\caE_\Lambda|}\prod_{(x,y)\in\caE_\Lambda}(1-\Omega_x\cdot\Omega_y)\,.
\end{equation}

Theorem~\ref{thm:AKLT_GSS} establishes that any two AKLT ground states on a finite volume $\Lambda$ differ only by the choice of boundary polynomial. In the case that there is a unique limiting infinite volume ground state one understands intuitively that, for any observable $A$ supported sufficiently far from the boundary of $\Lambda$, the expectation in any finite volume state on $\Lambda$ should be roughly that of the infinite volume state. Theorem~\ref{thm:gs_indistinguishability} makes explicit upper bounds on how different these expectations can be from one another, and is a consequence of analyzing two different maps, called the \emph{bulk-boundary map} and \emph{bulk state}.
\subsection{The Bulk-Boundary Map and Bulk State for an Increasing and Absorbing Sequence of Volumes}  The goal of this section is to introduce these maps in a relatively general setting, which will in particular, be applicable both the hexagonal and square lattices as well as any decorated version of these lattices. 

For the remainder of this section, we assume that $\Gamma=(\caV_\Gamma,\caE_\Gamma)$ is any infinite, simple connected graph, and that $\Lambda=(\caV_\Lambda,\caE_\Lambda)$ and $\Lambda'=(\caV_{\Lambda'},\caE_{\Lambda'})$ are two finite subgraphs of $\Gamma$ for which
\begin{enumerate}
	\item[(i)] $\Lambda'\subseteq \mathring{\Lambda}$.
	\item[(ii)] $|\partial\Lambda\cap \{x,y\}|\leq 1$  for any edge $(x,y)\in \caE_\Lambda$. 
\end{enumerate}
Above, $\mathring{\Lambda}$ denotes the interior of $\Lambda$. The first condition guarantees that the boundaries of $\Lambda'$ and $\Lambda$ are disjoint, and the second condition guarantees that only one vertex across any edge can belong to the boundary, $\partial\Lambda$. We note that, in particular, these properties are satisfied if one takes $\Lambda=\Lambda_N^{(m)}$ and $\Lambda'=\mathring{\Lambda}_K^{(m)}$ for any $K<N$, where these are the $m$-decorated sequences defined in \eqref{Lambda_N}-\eqref{Lambda_N_Square}, see also \eqref{interior} and the text that follows.

Lemma~\ref{lem:bulk_bdy_map} below, which is a generalization of \cite[Lemma 4.2]{lucia:2023}, shows that for any $\Lambda$ and $\Lambda'$ as above, the ground state expectation for any $A\in \cA_{\Lambda'}$ in any state $\Psi(f)\in \cG_{\Lambda}$  can be written in terms of $|f|^2$ and \emph{the bulk-boundary map} $\omega_\Lambda$ defined by
\begin{align}
	\omega_\Lambda(A;\Omega^{\partial\Lambda}) & :=Z_\Lambda(A;\Omega^{\partial\Lambda})/Z_\Lambda(\Omega^{\partial\Lambda})\label{bb_map}\\  Z_\Lambda(A;\Omega^{\partial\Lambda})&:=\int d\Omega^{\mathring{\Lambda}}\rho_{\Lambda}A(\Omega),
	\quad Z_\Lambda(\Omega^{\partial\Lambda}) = Z_\Lambda(\idty;\Omega^{\partial\Lambda}) \nonumber
\end{align}
We note that condition (i) guarantees that $\omega_\Lambda(A,;\Omega^{\partial\Lambda})$ is a polynomial in the boundary variables $\Omega^{\partial\Lambda}=(\Omega_x: x\in\partial\Lambda)$. In particular, this is a continuous function of these variables.
\begin{lemma}\label{lem:bulk_bdy_map} Suppose that $\Gamma$ is an infinite, simple connected graph and $\Lambda,\Lambda'\subseteq \Gamma$ are any two finite volumes satisfying properties (i)-(ii) above. For any $A\in \caA_{\Lambda'}$ and $\Psi(f)\in\caG_{\Lambda}$, one has
	\begin{equation}
		\langle \Psi(f),A\Psi(f)\rangle=\int d\rho_{\Lambda}|f|^2\omega_\Lambda(A;\Omega^{\partial\Lambda})\,.
	\end{equation}
\end{lemma}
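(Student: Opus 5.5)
The plan is a Fubini argument that splits the integration variables into the interior $\mathring{\Lambda}$ and the boundary $\partial\Lambda$. The goal is to show that the factor $|f|^2$ does not depend on the interior variables. We start from \eqref{gs_exp_f}:
\[
\langle \Psi(f),A\Psi(f)\rangle=\int d\Omega^{\partial\Lambda}\int d\Omega^{\mathring{\Lambda}}\,\rho_\Lambda\,|f|^2A(\mathbf{\Omega}).
\]
Here we use $d\Omega^{\Lambda}=d\Omega^{\partial\Lambda}\,d\Omega^{\mathring{\Lambda}}$, which holds because $\caV_\Lambda=\partial\Lambda\sqcup\caV_{\mathring{\Lambda}}$. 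By Theorem~\ref{thm:AKLT_GSS}, $f\in\caH^{\rm gss}_{\partial\Lambda}$ is a polynomial in the variables $(u_x,v_x)_{x\in\partial\Lambda}$ only. Hence $|f|^2$ is a function of $\Omega^{\partial\Lambda}$ alone and can be pulled out of the inner integral. This gives
\[
\langle \Psi(f),A\Psi(f)\rangle=\int d\Omega^{\partial\Lambda}\,|f|^2\, Z_\Lambda(A;\Omega^{\partial\Lambda}).
\]
The same reasoning uses condition (i): $\supp(A)\subseteq\Lambda'\subseteq\mathring{\Lambda}$, so the symbol $A(\mathbf{\Omega})$ involves only interior variables and $Z_\Lambda(A;\Omega^{\partial\Lambda})$ is a genuine function of the boundary variables.

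Next we write $Z_\Lambda(A;\cdot)=\omega_\Lambda(A;\cdot)\,Z_\Lambda(\cdot)$ and reverse the steps. Since $\omega_\Lambda(A;\Omega^{\partial\Lambda})$ depends only on boundary variables,
\[
\int d\Omega^{\partial\Lambda}|f|^2\,\omega_\Lambda(A;\Omega^{\partial\Lambda})\int d\Omega^{\mathring{\Lambda}}\rho_\Lambda=\int d\Omega^{\Lambda}\rho_\Lambda|f|^2\,\omega_\Lambda(A;\Omega^{\partial\Lambda}),
\]
and the right-hand side is $\int d\rho_\Lambda|f|^2\omega_\Lambda(A;\Omega^{\partial\Lambda})$, as claimed. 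All functions involved are bounded and continuous on a product of spheres, so Fubini applies without difficulty.

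The only real obstacle is that $\omega_\Lambda(A;\cdot)$ must be well defined, i.e.\ $Z_\Lambda(\Omega^{\partial\Lambda})>0$ for every (or at least almost every) boundary configuration. This is where condition (ii) enters. Because no edge joins two boundary vertices, every factor $1-\Omega_x\cdot\Omega_y$ of $\rho_\Lambda$ involves at least one interior variable. For fixed $\Omega^{\partial\Lambda}$, the integrand $\rho_\Lambda$ is then a nonnegative continuous function of $\Omega^{\mathring{\Lambda}}$. It is strictly positive at some point: choose each interior $\Omega_x$ outside the finitely many points $\Omega_y$ of its boundary neighbours, and with pairwise distinct values on interior–interior edges. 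By continuity it is positive on an open set, so $Z_\Lambda(\Omega^{\partial\Lambda})>0$ everywhere.

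Had an edge joined two boundary vertices, the factor $1-\Omega_x\cdot\Omega_y$ could vanish identically in $\Omega^{\mathring{\Lambda}}$ (when $\Omega_x=\Omega_y$). The ratio would then be undefined on that set, which is harmless only because it has measure zero. Under (ii) this issue does not arise, and the ratio is a continuous function of $\Omega^{\partial\Lambda}$, consistent with the remark after \eqref{bb_map}.
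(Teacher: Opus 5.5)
Your proof is correct. The Fubini step is the same computation as in the paper: you split $d\Omega^{\Lambda}=d\Omega^{\partial\Lambda}d\Omega^{\mathring{\Lambda}}$, pull $|f|^2$ out because $f$ involves only boundary variables, and use (i) so that the symbol of $A$ involves only interior variables. Where you genuinely differ is in showing that $Z_\Lambda(\Omega^{\partial\Lambda})>0$.

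The paper freezes the boundary variables and identifies $Z_\Lambda(\Omega^{\partial\Lambda})=\|\Psi(g_{\Omega^{\partial\Lambda}})\|^2$. Here $g_{\Omega^{\partial\Lambda}}=\prod_{(x,y)\in\caE_\Lambda,\,x\in\partial\Lambda}(u_xv_y-u_yv_x)$ is a boundary polynomial on $\partial\mathring{\Lambda}$. Condition (ii) makes every factor a nonzero linear form in interior variables, rather than a constant that could vanish. Hence $g\neq 0$, and Theorem~\ref{thm:AKLT_GSS} gives $\Psi(g)\neq 0$.

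You argue directly instead. The integrand $\rho_\Lambda$ is nonnegative and continuous, and it is strictly positive at one point (all interior spins distinct from each other and from the boundary spins). So it is positive on an open set, which has positive mass under the product of normalized surface measures.

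Your route is more elementary and self-contained. It needs only that $1-\Omega_x\cdot\Omega_y=0$ exactly when $\Omega_x=\Omega_y$, and not the ground-state-space theorem or its factorization argument. The paper's route gives more structure. It exhibits $\omega_\Lambda(A;\Omega^{\partial\Lambda})$ as the normalized expectation of $A$ in a specific ground state $\Psi(g_{\Omega^{\partial\Lambda}})$ of $H_{\mathring{\Lambda}}$, which is the natural interpretation of the bulk-boundary map.

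In both arguments, (ii) enters at the same place. It rules out factors of $\rho_\Lambda$ (respectively of $g$) that depend only on boundary variables and could vanish identically.
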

\begin{proof} 
This is a consequence of recognizing that since $\Lambda'\subseteq \mathring{\Lambda}$,
\[\langle \Psi(f),A\Psi(f)\rangle
=\int d\Omega^{\Lambda}|f|^2\rho_{\Lambda}A(\Omega)
=\int d\Omega^{\partial\Lambda}|f|^2\int d\Omega^{\mathring{\Lambda}}\rho_{\Lambda}\frac{\int d\Omega^{\mathring{\Lambda}}\rho_{\Lambda}A(\Omega)}{\int d\Omega^{\mathring{\Lambda}}\rho_{\Lambda}}\]
and verifying that $Z_\Lambda(\Omega^{\partial\Lambda})=\int d\Omega^{\mathring{\Lambda}}\rho_{\Lambda}>0$ for all values of the boundary variables $(\Omega_x:x\in\partial\Lambda)$. To prove the latter, $x\in \partial\mathring{\Lambda}$ if and only if there exists $y\in\partial\Lambda$ such that $(x,y)\in\caE_{\Lambda}$. As a consequence, for any fixed choice of the boundary variables $\Omega^{\partial\Lambda}=(\Omega_x: x\in\partial)$,
\[
Z_N(\Omega^{\partial\Lambda})= \|\Psi(g_{\Omega^{\partial\Lambda}})\|^2\quad \text{where}\quad g_{\Omega^{\partial\Lambda}}=\prod_{\substack{(x,y)\in\caE_{\Lambda}:\\ x\in\partial\Lambda}}(u_xv_y-u_yv_x)\in\cH_{\partial\mathring{\Lambda}}^{\mathrm{gss}}\,.
\]
By condition (ii), $g_{\Omega^{\partial\Lambda}}$ is a nonzero polynomial for any choice of $\Omega^{\partial\Lambda}$. Therefore, by Theorem~\ref{thm:AKLT_GSS}, $\Psi(g_{\Omega^{\partial\Lambda_N}})\neq 0$ as desired.
\end{proof}

The definition for the \emph{bulk state} $\omega_\Lambda^{\rm bulk}:\caA_{\Lambda}\to \bC$ comes from averaging $Z_\Lambda(A;\Omega^{\partial\Lambda})$ over the boundary variables. Since all ground states $\Psi(f)\in \caG_{\Lambda}$ only differ by the choice of boundary polynomial, in the case that one expects there to be a unique infinite volume ground state, $\omega_\Lambda^{\rm bulk}(A)$ should produce a good approximation of $\braket{\Psi(f)}{A\Psi(f)}$ for any normalized $\Psi(f)\in \caG_{\Lambda}$ and observable $A\in \caA_{\Lambda'}$ when $d(\Lambda',\partial\Lambda)>>1$. Concretely, for any $A\in\caA_{\Lambda}$,
\begin{align}
	\omega_\Lambda^{\rm bulk}(A) &:= Z_\Lambda^{\rm bulk}(A)/Z_\Lambda^{\rm bulk} \label{bulk_state} \\
	Z^{\mathrm{bulk}}_\Lambda(A)&:=\int d\rho_{\Lambda}A(\Omega)=\int d\Omega^{\partial\Lambda}Z_\Lambda(A;\Omega^{\partial\Lambda}), \quad Z^{\mathrm{bulk}}_\Lambda=Z^{\mathrm{bulk}}_\Lambda(\mathbbm{1}) \nonumber\,.
\end{align}

While it is not a priori clear whether $\omega_\Lambda^{\rm bulk}$ is a state on $\caA_{\Lambda}$, it is a ground state of $H_{\Lambda'}$. This can be seen from applying Theorem~\ref{thm:AKLT_GSS} where one replaces the infinite graph $\Gamma$ with $\tilde{\Gamma}=\Lambda$. In this case $\omega_\Lambda^{\rm bulk}$ is the unique ground state on $\tilde{\Gamma}$. By frustration-freeness, this is also a ground state of $H_{\Lambda'}$, and by property (i), this Hamiltonian is unchanged from replacing $\Gamma$ with $\tilde{\Gamma}$. 

In the case that $A\in \caA_{\Lambda'}$ with $\Lambda'\subset \Lambda$ satisfying properties (i)-(ii) we also define $\Lambda\setminus\Lambda'\subseteq \Lambda$ to be the minimal subgraph whose edge set is $\caE_\Lambda\setminus\caE_{\Lambda'}$. This is motivated by partitioning the product in \eqref{rho} based on which edges are contained in $\Lambda'$. Since $A(\Omega)$ only depends on the variables $x\in \Lambda'$, we can write the integrals in \eqref{bb_map} and \eqref{bulk_state} as
\[
	Z_\Lambda(A;\Omega^{\partial\Lambda}) = \int d\rho_{\Lambda'} A({\bf \Omega})\Phi_{\Lambda,\Lambda'}({\bf \Omega}), \qquad Z_\Lambda^{{\rm bulk}}(A) = \int d\rho_{\Lambda'} A({\bf \Omega})\Phi_{\Lambda,\Lambda'}^{{\rm bulk}}({\bf \Omega})
\]
where
\begin{align}
	\Phi_{\Lambda,\Lambda'}({\bf \Omega}) & = 2^{-|\caE_{\Lambda\setminus\Lambda'}|}\int d\Omega^{(\Lambda\setminus\Lambda')\setminus\partial(\Lambda\setminus\Lambda')}\prod_{(x,y)\in \caE_{\Lambda\setminus\Lambda'}}(1-\Omega_x\cdot\Omega_y) \label{Phi_bbmap_general}\\
	\Phi_{\Lambda,\Lambda'}^{{\rm bulk}}({\bf \Omega}) & = 2^{-|\caE_{\Lambda\setminus\Lambda'}|}\int d\Omega^{(\Lambda\setminus\Lambda')\setminus\partial\Lambda'}\prod_{(x,y)\in \caE_{\Lambda\setminus\Lambda'}}(1-\Omega_x\cdot\Omega_y)\,. \label{Phi_bulk_general}
\end{align}
Here, we note that property (i) guarantees that $\partial(\Lambda\setminus\Lambda')=\partial\Lambda\cup\partial\Lambda'$. Above, the integral in \eqref{Phi_bbmap_general} is taken with respect to the product measure $\Omega_x$ associated to the interior sites $x$ of $\Lambda\setminus\Lambda'$, and \eqref{Phi_bulk_general} is taken with respect to the product measure associated to the sites of $\Lambda\setminus\Lambda'$ that are not boundary sites of $\Lambda'$. 

\subsection{Outline for Proof of Theorem~\ref{thm:gs_indistinguishability}}
We are now in a position to outline the proof of Theorem~\ref{thm:gs_indistinguishability}. To this end, given the $m$-decorated sequence $\Lambda_N^{(m)}$ as in \eqref{Lambda_N}, in the case that $\Gamma=\bbH$, or \eqref{Lambda_N_Square}, in the case that $\Gamma=\bZ^2$, we define for any $A\in \caA_{\Lambda_N^{(m)}}$,
\begin{equation}\label{mdec_bbmap_bulk}
\omega_N^{(m)}(A;\Omega^{\partial\Lambda_N}) = \omega_{\Lambda_N^{(m)}} (A;\Omega^{\partial\Lambda_N}), \qquad \omega_N^{\mathrm{bulk},(m)}(A) = \omega_{\Lambda_N^{(m)}}^{\mathrm{bulk}} (A)
\end{equation}
and analogously label $Z_N^{(m)}(A;\Omega^{\partial\Lambda_N})$, $Z_N^{(m)}(\Omega^{\partial\Lambda_N})$, $Z^{\mathrm{bulk},(m)}_N(A)$ and $Z^{\mathrm{bulk},(m)}_N$ the pieces defining $\omega_N^{(m)}$ and $\omega_N^{\mathrm{bulk},(m)}$, respectively. Given the previous observations, any weak-$*$ limit of $\omega_N^{{\rm bulk},(m)}$ is a frustration-free ground state of the AKLT model on $\Gamma^{(m)}$. If $\omega_N^{{\rm bulk},(m)}(A)$ is Cauchy for all $A\in \caA_{\Gamma}^{\rm loc}$, then there is only one such weak-$*$ limit and
\[
\lim_{N\to\infty}\omega_N^{{\rm bulk},(m)}(A) = \omega^{(m)}(A)\qquad \forall\, A\in \caA_{\Gamma^{(m)}}^{\rm loc}\,.
\]
We will prove that this is the case for the specific AKLT models covered in Theorem~\ref{thm:gs_indistinguishability}, and it is with respect to this infinite volume ground state that Theorem~\ref{thm:gs_indistinguishability} is stated. It follows immediately from the bound in \eqref{eq:indistinguishability} that this must be the only weak-* limit of any sequence of IAS finite volume ground states, and hence the unique infinite-volume, frustration-free ground state.

Then the strategy for proving Theorem~\ref{thm:gs_indistinguishability} is as follows. For any normalized $\Psi(f)\in \caG_{\Lambda_N^{(m)}}$ and fixed observable $A\in \caA_{\mathring{\Lambda}_K^{(m)}}$, we apply Lemma~\ref{lem:bulk_bdy_map} and bound
\begin{align}
	\left|\braket{\Psi(f)}{A\Psi(f)} - \omega^{(m)}(A)\right| \leq & \left|\braket{\Psi(f)}{A\Psi(f)} - \omega_N^{{\rm bulk},(m)}(A)\right| + \left|\omega_N^{{\rm bulk},(m)}(A) - \omega^{(m)}(A)\right| \nonumber\\
	= & \left|\int d\rho_{\Lambda_N^{(m)}}|f|^2[\omega_N^{(m)}(A;\Omega^{\partial\Lambda_N})-\omega^{\mathrm{bulk},(m)}_N(A)]\right| \nonumber\\
	& \;+  \left|\omega_N^{{\rm bulk},(m)}(A) - \omega^{(m)}(A)\right| \nonumber\\
	 \leq & \sup_{\Omega^{\partial\Lambda_N}}\left|\omega_N^{(m)}(A;\Omega^{\partial\Lambda_N})-\omega^{\mathrm{bulk},(m)}_N(A)\right|  \label{two_bounds} \\
	 & \;+  \lim_{M\to\infty}\left|\omega_N^{{\rm bulk},(m)}(A) - \omega_M^{{\rm bulk},(m)}(A)\right| \nonumber
\end{align}
where we used that $d\rho_{\Lambda_N^{(m)}}|f|^2$ is a probability measure and the supremum is taken over all possible values of the boundary variables. Thus, to prove Theorem~\ref{thm:gs_indistinguishability}, one needs to determine appropriate upper bounds on the two quantities from \eqref{two_bounds}. 

To this end, since the symbol $A({\bf \Omega})$ in \eqref{bb_map} and \eqref{bulk_state} only depend on the variables $x\in \caV_{\mathring{\Lambda}_K^{(m)}}$, one can write for any $A\in \caA_{\mathring{\Lambda}_K^{(m)}}$, $K<N$,
\begin{equation}\label{Z_to_Phi}
	Z_N^{(m)}(A;\Omega^{\partial\Lambda_N}) = \int d\rho_{\mathring{\Lambda}_K^{(m)}} A({\bf \Omega})\Phi_{N,K}^{(m)}({\bf \Omega}),\qquad
	Z_N^{{\rm bulk},(m)}(A) = \int d\rho_{\mathring{\Lambda}_K^{(m)}} A({\bf \Omega})\Phi_{N,K}^{{\rm bulk},(m)}({\bf \Omega})
\end{equation}
where, given \eqref{Phi_bbmap_general}-\eqref{Phi_bulk_general}, 
\[\Phi_{N,K}^{(m)}({\bf \Omega}) = \Phi_{\Lambda_N^{(m)},\mathring{\Lambda}_K^{(m)}}({\bf \Omega})\quad \text{and} \quad\Phi_{N,K}^{{\rm bulk},(m)}({\bf \Omega}) = \Phi_{\Lambda_N^{(m)},\mathring{\Lambda}_K^{(m)}}^{\rm bulk}({\bf \Omega}).\] 
Recalling that $\Lambda_{N,K}^{(m)}$ is the $m$-decorated version of $\Lambda_{N,K}$ defined as in \eqref{LambdaNK} if $\Gamma=\bbH$ and \eqref{liebvols} if $\Gamma=\bbZ^2$. Moreover, since in either case, $\partial\Lambda_{N,K} = \partial\Lambda_N\cup\partial\mathring{\Lambda}_K$, and this is also the boundary of the $m$-decorated volumes $\Lambda_{N,K}^{(m)}$, these can be written as
\begin{align}
	\Phi_{N,K}^{(m)}({\bf \Omega}) & = 2^{-|\caE_{\Lambda_{N,K}^{(m)}}|}\int d\Omega^{\Lambda_{N,K}^{(m)}\setminus\partial\Lambda_{N,K}}\prod_{(x,y)\in \caE_{\Lambda_{N,K}^{(m)}}}(1-\Omega_x\cdot\Omega_y) \label{Phi_bbmap}\\
	\Phi_{N,K}^{{\rm bulk},(m)}({\bf \Omega}) & = 2^{-|\caE_{\Lambda_{N,K}^{(m)}}|}\int d\Omega^{\Lambda_{N,K}^{(m)}\setminus\partial\mathring{\Lambda}_K}\prod_{(x,y)\in \caE_{\Lambda_{N,K}^{(m)}}}(1-\Omega_x\cdot\Omega_y)\,. \label{Phi_bulk}
\end{align}
The desired bounds on \eqref{two_bounds} will be a consequence of proving that $\Phi_{N,K}^{(m)}$ and $\Phi_{N,K}^{{\rm bulk},(m)}$ can be written in terms of a polymer representation that has a converge cluster expansion. 

We make two comments about this approach. First, any sequence of ground states could have been used in place of $\omega_N^{{\rm bulk},(m)}$ in \eqref{two_bounds}. The benefit of using this particular choice comes from the set of polymers associated with this state. In the case of the hexagonal lattice, this will be introduced in the next section and commented on further there, and for the square lattice this will be discussed in Section~\ref{sec:lieb}. Second, when applying the cluster expansion in Section~\ref{sec:proof}, it is computationally convenient that the bulk state and bulk-boundary map are defined on the same volume in the first term of \eqref{two_bounds}. This is why we insert and remove $\omega_N^{{\rm bulk},(m)}(A)$ and apply the triangle inequality rather than simply insert $\omega^{(m)}(A)=\lim_M\omega_M^{{\rm bulk},(m)}(A)$.

We conclude this section by proving a result that produces bounds on \eqref{two_bounds} in terms of the operator norm of the observable and provides further clarity as to how the cluster expansion arises.

	\subsection{Obtaining Bounds in Terms of $\|A\|$}
One of the main motivations for proving Theorem~\ref{thm:gs_indistinguishability} is that it implies that the ground states of various AKLT models satisfy LTQO. In the proof of gap stability from \cite{nachtergaele:2022}, it is important that the LTQO bound makes explicit its dependence on (i) the operator norm of the observable and (ii) the support of the observable. Given \eqref{bb_map} and \eqref{bulk_state}, it is clear that \eqref{two_bounds} can be easily bounded in terms of $\|A({\bf \Omega})\|_{L^\infty}$ and that, by \eqref{matrix_elt},
\[
\|A\| \leq \|A({\bf \Omega})\|_{L^\infty}\,.
\]

However, these norms are not equivalent on $\caA_{\Gamma}^{\rm loc}$ the infinite sets $\Gamma$ of interest. As an example, there is a canonical choice of the symbol $A({\bf \Omega})$ so that for any choice of distinct vertices in the hexagonal lattice, $x_i\in \bbH$, $i\in\bN$, the operator $A_n =\bigotimes_{i=1}^n(\partial_{u_{x_i}}\cdot u_{x_i}/4)\in \caA_{\Gamma}^{\rm loc}$ satisfies
\[
\|A_n\| =1, \quad \|A_n({\bf \Omega})\|_{L^\infty} = \left(\frac{5}{4}\right)^n \quad \forall\, n\,.
\]
For details on how to choose the symbol, see \cite[Equation 4.14]{lucia:2023} and the surrounding text. 

The next result, first proved in \cite{lucia:2023}, produces a bound on the two terms from \eqref{two_bounds} that depends on $\|A\|$ and positions us to use cluster expansion techniques to prove how quickly these two quantities converge and how this convergence depends on $\supp(A)$. To simplify notation, we suppress the dependence of the functions $\Phi_{N,K}^{(m)}$, $\Phi_{N,K}^{{\rm bulk},{(m)}}$, $Z_N^{(m)}$ and $Z_N^{{\rm bulk},(m)}$ on ${\bf \Omega}$ in the statement and proof.

\begin{lemma}\label{lem:op_norm}
	Let $\Gamma\in\{\bbH,\bZ^2\}$ and $m\geq 0$. Given $N>K\geq 1$, if $\Phi^{{\rm bulk},(m)}_{N,K}>0$ for any value of $(\Omega_x: x\in \partial\mathring{\Lambda}_K)$, then for any $M\geq N$ and $A\in\caA_{\Lambda_{K-1}^{(m)}}$, 
	\begin{align}
		\left|\omega^{\mathrm{bulk},(m)}_N(A)-\omega^{\mathrm{bulk},(m)}_M(A)\right|& \leq ||A|| H\left(\frac{\Phi^{\mathrm{bulk},(m)}_{N,K}}{Z^{\mathrm{bulk},(m)}_N}, \frac{\Phi^{\mathrm{bulk},(m)}_{M,K}}{Z^{\mathrm{bulk},(m)}_M}\right) \label{bulk_cauchy_bound} \\
		\sup_{\Omega^{\partial\Lambda_N}}\left|\omega_N^{(m)}(A;\Omega^{\partial\Lambda_N})-\omega^{\mathrm{bulk},(m)}_N(A)\right| & \leq ||A||\sup_{\Omega^{\partial\Lambda_N}}H\left(\frac{\Phi^{\mathrm{bulk},(m)}_{N,K}}{Z^{\mathrm{bulk},(m)}_N}, \frac{\Phi_{N,K}^{(m)}}{Z_N^{(m)}}\right) \label{bb_bulk_bound}
	\end{align}
	where $H(f,g)=D_{\infty}(f||g)e^{D_{\infty}(f||g)}$, $f,g\in L^1(d\rho_{\mathring{\Lambda}_K})$, is defined in terms of the classical $\infty$-Renyi divergence
	\begin{equation}\label{Renyi_divergence}
		D_{\infty}(f||g):=\left\Vert\log\left(f\right)-\log\left(g\right)\right\Vert_{L^{\infty}(d\rho_{\mathring{\Lambda}_K})}\,.
	\end{equation}
\end{lemma}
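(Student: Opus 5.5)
The plan is to identify both quantities in Lemma~\ref{lem:op_norm} as expectations $A\mapsto\int d\rho_{\mathring{\Lambda}_K^{(m)}}A({\bf \Omega})\,g({\bf \Omega})$, where $g$ is a probability density depending only on the variables $(\Omega_x:x\in\partial\mathring{\Lambda}_K)$. The point is that such a functional is \emph{positive} on $\caA_{\Lambda_{K-1}^{(m)}}$ for every nonnegative $g$. Once that is known, the comparison reduces to pointwise bounds on $g_1/g_2$.

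\emph{Step 1 (densities).} By \eqref{Z_to_Phi}, $\omega_N^{{\rm bulk},(m)}(A)=\int d\rho_{\mathring{\Lambda}_K^{(m)}}A\,g_1$ with $g_1=\Phi^{{\rm bulk},(m)}_{N,K}/Z^{{\rm bulk},(m)}_N$. The other three expectations have the same form, with $g$ equal to $\Phi^{{\rm bulk},(m)}_{M,K}/Z^{{\rm bulk},(m)}_M$ or $\Phi^{(m)}_{N,K}/Z^{(m)}_N$; in the last case $\Omega^{\partial\Lambda_N}$ is held fixed. By \eqref{Phi_bbmap}--\eqref{Phi_bulk}, each $g$ is a nonnegative function of the boundary variables of $\mathring{\Lambda}_K^{(m)}$ only. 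Taking $A=\idty$ gives $\int d\rho_{\mathring{\Lambda}_K^{(m)}}g=1$.

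Strict positivity is needed so that $D_\infty$ is finite. For the bulk densities it is the hypothesis of the lemma. For $\Phi^{(m)}_{N,K}$ I would argue as in the proof of Lemma~\ref{lem:bulk_bdy_map}: for fixed outer and inner boundary variables, $\Phi^{(m)}_{N,K}$ is the squared norm of a nonzero product of singlets.

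\emph{Step 2 (positivity).} This is the key step. Since $\Lambda_{K-1}^{(m)}$ lies in the interior of $\mathring{\Lambda}_K^{(m)}$, the pair $\Lambda=\mathring{\Lambda}_K^{(m)}$, $\Lambda'=\Lambda_{K-1}^{(m)}$ satisfies (i)--(ii). Fubini and the computation in the proof of Lemma~\ref{lem:bulk_bdy_map} then give
\[
\int d\rho_{\mathring{\Lambda}_K^{(m)}}A\,g=\int d\Omega^{\partial\mathring{\Lambda}_K}\,g\,Z_{\mathring{\Lambda}_K^{(m)}}(\Omega^{\partial\mathring{\Lambda}_K})\,\frac{\langle\Psi(g_{\Omega}),A\Psi(g_{\Omega})\rangle}{\|\Psi(g_{\Omega})\|^2}.
\]
Here $g_\Omega$ is the boundary polynomial from that proof. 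Hence, for any real $h$ depending only on boundary variables, the functional $L_h(A)=\int d\rho_{\mathring{\Lambda}_K^{(m)}}A\,g_2 h$ splits as $L_{h_+}-L_{h_-}$ with both parts positive. This yields
\[
|L_h(A)|\le\|A\|\big(L_{h_+}(\idty)+L_{h_-}(\idty)\big)=\|A\|\int d\rho_{\mathring{\Lambda}_K^{(m)}}g_2|h|.
\]
The main obstacle is verifying carefully that the symbol $A({\bf \Omega})$ of an $A$ supported in $\Lambda_{K-1}^{(m)}$ involves no variables of $\partial\mathring{\Lambda}_K$. This is what allows the boundary integration to factor out.

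\emph{Step 3 (divergence bound).} Write $g_1=g_2e^{\varphi}$ with $\|\varphi\|_\infty=D:=D_\infty(g_1\|g_2)$. Then the difference of the two expectations equals $L_h(A)$ with $h=e^{\varphi}-1$, and $|e^{\varphi}-1|\le|\varphi|e^{|\varphi|}\le De^{D}$. By Step 2 and the normalization $\int d\rho\, g_2=1$,
\[
|\omega_1(A)-\omega_2(A)|\le\|A\|\,De^{D}=\|A\|H(g_1,g_2).
\]
Applying this with $(g_1,g_2)$ equal to the two bulk densities gives \eqref{bulk_cauchy_bound}. Applying it with the bulk density and the bulk-boundary density, for each fixed $\Omega^{\partial\Lambda_N}$, and then taking the supremum gives \eqref{bb_bulk_bound}. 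Note that $H$ is symmetric in its two arguments, since $D_\infty$ is.
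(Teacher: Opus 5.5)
Your proof is correct for $\Gamma=\bbH$; the last paragraph explains what goes wrong for $\bZ^2$. It follows the paper's outline. Both write the expectations as $\int d\rho_{\mathring{\Lambda}_K^{(m)}}A({\bf\Omega})\,g$, with densities $g$ depending only on $(\Omega_x:x\in\partial\mathring{\Lambda}_K)$. Both then reduce to the estimate $\bigl|\int d\rho_{\mathring{\Lambda}_K^{(m)}}A({\bf\Omega})(g_1-g_2)\bigr|\le\|A\|\,\|g_1-g_2\|_{L^1(d\rho_{\mathring{\Lambda}_K^{(m)}})}$, and finish with $|e^{\varphi}-1|\le|\varphi|e^{|\varphi|}$ and $\int d\rho_{\mathring{\Lambda}_K^{(m)}}g_2=1$.

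You differ from the paper only in how that $L^1$ estimate is proved.
\begin{itemize}
\item The paper treats the sites of $\mathring{\Lambda}_K^{(m)}\setminus\Lambda_{K-1}^{(m)}$ as $L^2(S^2,d\Omega_x)$ degrees of freedom. Then $\mathrm{sign}(\sigma)|\sigma|^{1/2}\xi$ and $|\sigma|^{1/2}\xi$, with $\sigma=g_1-g_2$, are vectors in $\caK_{\mathring{\Lambda}_K^{(m)}}$, and a single Cauchy--Schwarz step handles the signed density.
\item You instead freeze the boundary variables. For fixed $\Omega^{\partial\mathring{\Lambda}_K}$, the inner integral is $\langle\xi,A\xi\rangle$ in the genuine spin space of the interior of $\mathring{\Lambda}_K^{(m)}$. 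Hence $A\mapsto\int d\rho\,A\,g$ is positive for $g\ge0$; it is an unnormalized mixture of ground-state expectations, as in Lemma~\ref{lem:bulk_bdy_map}. The sign is then handled by $h=h_+-h_-$.
\end{itemize}
Your route needs neither square roots of densities nor an enlarged Hilbert space, and it makes the positivity explicit. The paper's route needs no conditioning and no splitting of the sign. You can shorten Step~2 by using $|\langle\xi,A\xi\rangle|\le\|A\|\,\|\xi\|^2$ pointwise and integrating against $g_2|h|$.

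One claim is inaccurate: for $m=0$ the pair $(\mathring{\Lambda}_K^{(m)},\Lambda_{K-1}^{(m)})$ does not satisfy (ii). On $\bbH$, each corner hexagon of $\mathring{\Lambda}_K$ has two adjacent vertices in $\partial\mathring{\Lambda}_K$. On $\bZ^2$, every perimeter edge joins two such vertices. This does no harm to the argument. Such an edge only contributes the scalar factor $\tfrac12(1-\Omega_x\cdot\Omega_y)\ge0$ to the weight of the boundary configuration. Work with the unnormalized $\langle\xi,A\xi\rangle$, though, rather than dividing by $\|\Psi(g_\Omega)\|^2$, which can then vanish on a null set.

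The point you flag as the main obstacle is exactly what the paper's proof assumes without comment, namely that $\Lambda_{K-1}^{(m)}$ contains no vertex of $\partial\mathring{\Lambda}_K$.
\begin{itemize}
\item For $\bbH$ this holds. Every vertex of $\Lambda_{K-1}$, including the free ends of its dangling edges, lies on at least two hexagons of $b_{\tilde 0}(K-1)$, so it has full degree in $\mathring{\Lambda}_K$.
\item For $\bZ^2$ with the definition \eqref{Lambda_N_Square}, it fails. $\Lambda_{K-1}$ contains the sites $(\pm(K-1),j)$ and $(j,\pm(K-1))$ with $|j|\le K-2$, and these lie on $\partial\mathring{\Lambda}_K=\{x:\|x\|_\infty=K-1\}$.
\end{itemize}
For $A$ acting on those sites, the densities depend on variables in $\supp(A)$. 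Neither your positivity argument nor the paper's $\caK$-construction then applies. In the square case both proofs therefore need $A$ supported away from $\partial\mathring{\Lambda}_K$, for example $A\in\caA_{\mathring{\Lambda}_{K-1}^{(m)}}$. This only shifts $K$ by one in the final bounds.
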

To be explicit, the Renyi-divergences are given by
\begin{align}
	D_{\infty}\left(\frac{\Phi^{\mathrm{bulk},(m)}_{N,K}}{Z^{\mathrm{bulk},(m)}_N}\left\Vert\frac{\Phi^{\mathrm{bulk},(m)}_{M,K}}{Z^{\mathrm{bulk},(m)}_M}\right.\right)&=\sup_{\Omega_x:x\in\partial\mathring{\Lambda}_K}\left|\log\left(\frac{\Phi^{\mathrm{bulk},(m)}_{N,K}}{Z^{\mathrm{bulk},(m)}_N}\right)-\log\left(\frac{\Phi^{\mathrm{bulk},(m)}_{M,K}}{Z^{\mathrm{bulk},(m)}_M}\right)\right| \label{divergence_bulk}\\
	\sup_{\Omega^{\partial\Lambda_N}}D_{\infty}\left(\frac{\Phi^{\mathrm{bulk},(m)}_{N,K}}{Z^{\mathrm{bulk},(m)}_N}\left\Vert\frac{\Phi_{N,K}^{(m)}}{Z_N}\right.\right)& =\sup_{\Omega_x:x\in\partial\Lambda_{N,K}}\left|\log\left(\frac{\Phi^{\mathrm{bulk},(m)}_{N,K}}{Z^{\mathrm{bulk},(m)}_N}\right)-\log\left(\frac{\Phi_{N,K}^{(m)}}{Z_N^{(m)}}\right)\right|\label{divergence_boundary}
\end{align} 

\begin{proof} Consider the embedding of the Hilbert space
	\[
	\caH_{\mathring{\Lambda}_K^{(m)}}\subseteq \caK_{\mathring{\Lambda}_K^{(m)}} :=  \bigotimes_{x\in \mathring{\Lambda}_K^{(m)}\setminus\Lambda_{K-1}^{(m)}} L^2(S^2,d\Omega_x) \otimes \caH_{\Lambda_{K-1}^{(m)}}\,.
	\]
	which follows since $\caH_x \subseteq L^2(S^2,d\Omega_x) $ for any vertex $x$. The advantage of this larger space is that one can take square roots of certain states without leaving the Hilbert space and, moreover, the operator norm is unchanged for any $A\in\caA_{\Lambda_{K-1}^{(m)}}$, i.e.
	\[	\|A\|_{\cH_{\Lambda_{K-1}^{(m)}}} = \|A\|_{\caK_{\mathring{\Lambda}_K^{(m)}}}
	\]
	where the subscript denotes the Hilbert space in which the norm is taken. 
	
	Consider first the left-hand-side of \eqref{bulk_cauchy_bound}. To simplify notation, let $f_N = \Phi^{\mathrm{bulk},(m)}_{N,K}/Z^{\mathrm{bulk},(m)}_N$, and note that $Z^{\mathrm{bulk},(m)}_N\neq 0$ as $\omega^{\mathrm{bulk},(m)}_N$ is a ground state on $\caA_{\mathring{\Lambda}_K^{(m)}}$. From \eqref{Z_to_Phi} one sees that $f_N$ is a bounded, real-valued continuous function of the angles $(\theta_x,\phi_x:x\in\partial\mathring{\Lambda}_K)$. Thus, $\psi_1,\psi_2\in \caK_{\mathring{\Lambda}_K^{(m)}}$ where
	\[
	\psi_1=\mathrm{sign}(\sigma)|\sigma|^{1/2}\xi, \quad \psi_2=|\sigma|^{1/2}\xi, \quad \sigma=f_N-f_M, \quad  \xi=\prod_{(x,y)\in\mathring{\Lambda}_K}(u_xv_y-u_yv_x)\,.
	\]
	As $\rho_{\mathring{\Lambda}_K^{(m)}}=|\xi|^2$ and $||\psi_1||_{\caK_{\mathring{\Lambda}_K^{(m)}}}=||\psi_2||_{\caK_{\mathring{\Lambda}_K^{(m)}}}$, by \eqref{bulk_state} one finds
	\begin{align*}
		\left|\omega^{\mathrm{bulk},(m)}_N(A)-\omega^{\mathrm{bulk},(m)}_M(A)\right|&=\left|\langle\psi_1,(\mathbbm{1}\otimes A)\psi_2\rangle_{\caK_{\mathring{\Lambda}_K^{(m)}}}\right|\\&\leq ||A||_{\cH_{\mathring{\Lambda}_{K-1}^{(m)}}} ||\psi_2||_{\caK_{\mathring{\Lambda}_K^{(m)}}}^2 =||A||_{\cH_{\Lambda_{K-1}^{(m)}}} \left\|f_N-f_M\right\|_{L^1(d\rho_{\mathring{\Lambda}_K^{(m)}})}
	\end{align*}
	where one recalls $d\rho_{\mathring{\Lambda}_K^{(m)}}= d\Omega^{\mathring{\Lambda}_K^{(m)}}\rho_{\mathring{\Lambda}_K^{(m)}}$. Moreover, since  $\Phi^{\mathrm{bulk},(m)}_{N,K}>0$, by \eqref{bulk_state} and \eqref{Z_to_Phi}
	\[
	\|f_N\|_{L^1(d\rho_{\mathring{\Lambda}_K^{(m)}})}=\frac{1}{Z_N^{{\rm bulk},(m)}}\int d\rho_{\mathring{\Lambda}_K^{(m)}}\Phi^{\mathrm{bulk},(m)}_{N,K} = 1\,.
	\]
	The bound in \eqref{bulk_cauchy_bound} then follows from applying H\"older's inequality and $|e^x-1|\leq |x|e^{|x|}$ to obtain
	\[||f_N-f_M||_{L^1(d\rho_{\mathring{\Lambda}_K^{(m)}})}\leq\left\Vert\frac{f_N}{f_M}-1\right\Vert_{L^{\infty}(d\rho_{\mathring{\Lambda}_K})}\leq D_{\infty}(f_N||f_M)e^{D_{\infty}(f_N||f_M)}.\]

	For \eqref{bb_bulk_bound}, recall that $Z_N^{(m)}(\Omega^{\partial\Lambda_N})\neq 0$ for any choice of boundary variables $(\Omega_x:x\in\partial\Lambda_N)$, see the proof of Theorem~\ref{lem:bulk_bdy_map}. Thus, for any such choice, \eqref{bb_map}, \eqref{bulk_state} give
	\[\left|\omega_N^{(m)}(A;\Omega^{\partial\Lambda_N})-\omega^{\mathrm{bulk},(m)}_N(A)\right|=\left|\int d\Omega^{\mathring{\Lambda}_K^{(m)}}\rho_{\mathring{\Lambda}_K^{(m)}}\left[\frac{\Phi^{\mathrm{bulk},(m)}_{N,K}}{Z^{\mathrm{bulk},(m)}_N}-\frac{\Phi_{N,K}^{(m)}}{Z_N^{(m)}}\right]A({\bf\Omega})\right|\,.\]
	The proof then runs analogously to that of \eqref{bulk_cauchy_bound} after recognizing that $\Phi_{N,K}^{(m)}$ is also a bounded, real-valued continuous function of the angles $(\theta_x,\phi_x:x\in\partial\mathring{\Lambda}_K)$.
\end{proof}

The key observation for this proof is that since $\Lambda_{K-1}^{(m)}$ does not contain any of the boundary vertices of $\mathring{\Lambda}_K$, then one can enlarge the Hilbert space on these boundary sites without changing the operator norm, and take square roots of functions without leaving the larger Hilbert space. We note that while Lemma~\ref{lem:op_norm} is stated for the specific case of interest, namely, the maps in \eqref{mdec_bbmap_bulk}, it and the proof can be generalized to other contexts using the definitions in \eqref{Phi_bbmap_general}-\eqref{Phi_bulk_general}. Namely, given three volumes $\Lambda'\subseteq \Lambda_1\subseteq \Lambda_2$,  such that properties (i) and (ii) hold for both pairs $(\Lambda',\Lambda_1)$ and $(\Lambda',\Lambda_2)$, and an observable $A\in \caA_{\mathring{\Lambda}'}$ supported on the interior of $\Lambda'$, the same proof follows from replacing $\Lambda_N^{(m)}$ with $\Lambda_1$, $\Lambda_M^{(m)}$ with $\Lambda_2$, $\mathring{\Lambda}_K^{(m)}$ with $\Lambda'$, and $\Lambda_{K-1}^{(m)}$ with $\mathring{\Lambda}'$.

	\section{The Polymer Representation for the Hexagonal Lattice Model}\label{sec:hex_polymers}

	We now turn to discussing the polymer representations for the (un)decorated hexagonal lattice models, which were first given  in \cite{lucia:2023} and follow from a slight modification from the argument in \cite{kennedy:1988}. We first give the polymers and weight function for the undecorated model, and then describe how to modify these to include decorated versions of the lattice.
	
	\subsection{The Polymer Sets and Weight Functions}
	
Let $\Gamma=\bbH$ be the hexagonal lattice. The polymer representations arise from using \eqref{spherical_coord} to integrate evaluate \eqref{Phi_bbmap}-\eqref{Phi_bulk} in the case that $\Lambda_N$ is the sequence of volumes from \eqref{Lambda_N}. In this case, the minimal graph $\Lambda_{N,K}$ from the definition of $\Phi_{N,K}^{\rm bulk}$ and $\Phi_{N,K}$ is the volume defined in \eqref{LambdaNK}, see \eqref{Lambda_NK_properties}.

The desired polymer sets for the representations of \eqref{Phi_bbmap}-\eqref{Phi_bulk} will consist of a collection of finite loops and self-avoiding walks in the hexagonal lattice. That is, they will be subsets of
\begin{equation}\label{polymersP}
\caP_\Gamma = \left\{\gamma=(\caV_\gamma,\caE_\gamma)\subseteq \Gamma: |\caV_\gamma|<\infty\;\wedge\;\gamma \;\text{connected}\;\wedge\; \deg_\gamma(x)\leq 2 \; \forall x\in \caV_\gamma\right\}\,.
\end{equation}
The global sets of loops and self-avoiding walks are then defined, respectively, by
\[\caL_\Gamma = \{\gamma\in \caP_\Gamma: \deg_\gamma(x) = 2\; \forall\, x\in \caV_\gamma\}, \quad \caW_\Gamma = \caP_\Gamma \setminus \caL_\Gamma\,.\]
As each $\gamma\in \caW_\Gamma$ is connected, it has precisely two \emph{endpoints}, ${\rm ep}(\gamma):=\{x,y\}\in \caV_\gamma$, which are the unique pair of vertices such that $\deg_\gamma(x) =\deg_\gamma(y)=1$.

For any $0 \leq K \leq N-2$ let
\begin{align*}
\caL_{N,K} & = \{\gamma\in \caL_\Gamma: \gamma \subseteq \Lambda_{N,K}\}\\
\caW_{N,K} & =\{\gamma\in \caW_\Gamma: \gamma \subseteq \Lambda_{N,K}\wedge {\rm ep}(\gamma)\subseteq \partial\Lambda_{N,K}\}\\
\caW_{N,K}^{\rm bulk} & =\{\gamma\in \caW_\Gamma: \gamma \subseteq \Lambda_{N,K}\wedge {\rm ep}(\gamma)\subseteq \partial\mathring{\Lambda}_{K}\}
\end{align*}
where we use the convention that $\partial\mathring{\Lambda}_K = \emptyset$ if $K=0$, implying that
\[
\caW_{N,0}^{\rm bulk}=\emptyset\quad \text{and}\quad\partial\Lambda_{N,0} = \partial\Lambda_N  \,. 
\] 
Then, given any $A\in\caA_{\mathring{\Lambda}_K}$, the polymer sets for $\Phi_{N,K}$ and $\Phi_{N,K}^{\rm bulk}$ are, respectively,
\begin{equation}\label{polymer_sets}
	\caP_{N,K} = \caL_{N,K}\cup \caW_{N,K}, \quad \caP_{N,K}^{\rm bulk} = \caL_{N,K}\cup \caW_{N,K}^{\rm bulk}\,.
\end{equation}
The main benefit of using the bulk state in \eqref{two_bounds} is that the endpoints of any walk $\gamma\in \cW_{N,K}^{\rm bulk}$ only lie on the \emph{inner boundary}, $\partial \mathring{\Lambda}_K$, of $\Lambda_{N,K}$ and do not involve vertices on the \emph{outer boundary}, $\partial \Lambda_N$, see Figure \ref{fig:walks_and_loops}.

\begin{figure}
	\scalebox{.8}{\begin{tikzpicture}
		\def\n{.5};
		\newcommand{\hexcoord}[2]
		{[shift=(60:#1),shift=(120:#1),shift=(0:#2),shift=(-60:#2),shift=(0:#2),shift=(60:#2)]}

		\draw\hexcoord{0}{0} (60:\n)--(120:\n) node[midway, below, yshift=05*\n pt,xshift=120*\n pt] {$\Lambda_{N,K}$};
		\draw [line width=1.5pt, color=red] (-.95*\n,.1*\n)  to[out=120,in=180] coordinate[pos=.35] (A) (-.7*\n,2.7*\n) ;
		\draw [line width=1.5pt, color=red] (.67*\n,.5*\n)  to[out=60,in=0] coordinate[pos=.35] (A) (-.7*\n,2.7*\n) ;
		\draw [line width=1.5pt, color=red] (.95*\n,-.1*\n)  to[out=-60,in=90] coordinate[pos=.35] (A) (1.7*\n,-6.9*\n) ;

		\draw [line width=1.5pt, color=red] (6.2*\n,3*\n)  to[out=-150,in=-90] coordinate[pos=.35] (A) (-1*\n,6.9*\n) ;
		
		\draw [line width=1.5pt, color=red] (-4.2*\n,1*\n)  to[out=-150,in=-180] coordinate[pos=.35] (A) (-2.5*\n,-5.9*\n) ;
		\draw [line width=1.5pt, color=red] (-2*\n,-1*\n)  to[out=-90,in=0] coordinate[pos=.35] (A) (-2.5*\n,-5.9*\n) ;
		\draw [line width=1.5pt, color=red] (-2*\n,-1*\n)  to[out=90,in=30] coordinate[pos=.35] (A) (-4.2*\n,1*\n) ;
		\draw\hexcoord{0}{0} (60:\n)--(120:\n) node[midway, below, yshift=-105*\n pt,xshift=-65*\n pt,color=red] {$\gamma_1$};
		
		\draw\hexcoord{0}{0} (60:\n)--(120:\n) node[midway, below,
		yshift=75*\n pt,xshift=-45*\n pt,color=red] {$\gamma_2$};
		
		\draw\hexcoord{0}{0} (60:\n)--(120:\n) node[midway, below,
		yshift=105*\n pt,xshift=95*\n pt,color=red] {${\gamma}_3$};
		\draw\hexcoord{0}{0} (60:\n)--(120:\n) node[midway, below, yshift=-105*\n pt,xshift=75*\n pt,color=red] {$\gamma_4$};
		\foreach \k in {1,8}
				\foreach \x in {0,60,120,180,240,300}
				\foreach \y in {240}
				{\draw[shift=(\x+120:\k*\n), shift=(\x+60:\k*\n), very thick, color=black]\hexcoord{0}{0}
					(\x+\y:\k*\n)--(\x+\y+60:\k*\n);
				}
	\end{tikzpicture}}
	\caption{
		An illustration of the polymer types comprising $\caP_{N,K}$ and $\caP_{N,K}^{\rm bulk}$. Namely, $\gamma_1,\gamma_2\in\caP^{\mathrm{bulk}}_{N,K}$, and  $\gamma_1,\gamma_2,\gamma_3,\gamma_4\in\caP_{N,K}$}.\label{fig:walks_and_loops}
\end{figure}
To define the weights associated to each type of polymer, a direct computation shows that for any $x\in\caV_\Gamma$ and neighboring edges $(x,y), (x,z)\in \caE_\Gamma$,
\begin{equation}\label{basic_ints}
 \int d\Omega_x (\Omega_{x}\cdot\Omega_{x})=1,\quad \int d\Omega_x (\Omega_{y}\cdot\Omega_{x})(\Omega_{x}\cdot\Omega_{z}) = \frac{1}{3}\Omega_y\cdot\Omega_z\,.
\end{equation}
Let $|\gamma| := |\caE_\gamma|$ denote the length of any $\gamma\in\caP_\Gamma$. The weight function $W(\gamma)$ for any $\gamma\in\caP_\Gamma$ is then defined as follows. For $\gamma \in \caL_\Gamma$, applying \eqref{basic_ints},
\begin{equation}\label{loop_weight}
 W(\gamma) := \int d\Omega^{\caV_\gamma}\prod_{(x,y)\in \caE_\gamma}(-\Omega_x\cdot\Omega_y)=\left(-\frac{1}{3}\right)^{|\gamma|-1}\partial\gamma(\Omega), \quad \partial\gamma(\Omega) = -1.
\end{equation}
For any $\gamma\in \caW_\Gamma$, $|\caV_\gamma| = |\gamma|+1$. If ${\rm ep}(\gamma) = \{v,w\}$, then again using \eqref{basic_ints} we get
\begin{equation}\label{walk_weight}
W(\gamma) := \int d\Omega^{\caV_\gamma\setminus{\rm ep}(\gamma)}\prod_{(x,y)\in \caE_\gamma}(-\Omega_x\cdot\Omega_y)=\left(-\frac{1}{3}\right)^{|\gamma|-1}\partial\gamma(\Omega), \quad \partial\gamma(\Omega) = -\Omega_v\cdot\Omega_w\,.
\end{equation}

In the case of the $m$-decorated hexagonal lattice, $\Gamma^{(m)}$, $m\in \bN$, the global set of polymers is 
\[
\caP^{(m)}=\{\gamma^{(m)}\subseteq \Gamma^{(m)}: \gamma \in \caP_\Gamma\}
\]
where $\gamma^{(m)}$ is the subgraph obtained from decorating the edges of $\gamma$ with $m$ additional vertices. Given that the integration formulas in \eqref{basic_ints} still hold regardless of the infinite graph under consideration, we can generalize the weight function $W$ to any $\gamma^{(m)}\in\caP_{\Gamma^{(m)}}$, $m\geq 0$, by replacing $\gamma$ with $\gamma^{(m)}$ everywhere in \eqref{loop_weight}-\eqref{walk_weight}. Since  for all $\gamma\in\caP_\Gamma$, 
\[\partial\gamma^{(m)}(\Omega)=\partial\gamma(\Omega),\qquad \text{and}\qquad |\gamma^{(m)}|=(m+1)|\gamma|\]
this implies
\begin{equation}\label{mweight}
W(\gamma^{(m)})=\left(-\frac{1}{3}\right)^{(m+1)|\gamma|-1}\partial\gamma(\Omega):=W_m(\gamma) \qquad\forall \gamma^{(m)}\in \caP_{\Gamma^{(m)}\,.}
\end{equation}
As a result, we will be able to use the undecorated polymers to for the polymer representation of the bulk-boundary map and bulk state for either the decorated and undecorated hexagonal models given that we use the appropriate modified weight function $W_m$.

\subsection{Polymer Representations for the Hexagonal Model}\label{sec:hex_poly_reps} 

We now turn to explicit polymer models of interest. For any $m\in \bN_0$, we consider the bulk-boundary map $\omega_N^{(m)}$ and bulk state $\omega_N^{\mathrm{bulk},(m)}$ as in \eqref{bb_map}-\eqref{bulk_state} defined with respect to IAS $\Lambda_N^{(m)}$ from Section~\ref{sec:main_results}. We recall that in this case, given $0\leq K < N$ and observable $A\in\cA_{\mathring{\Lambda}_K^{(m)}}$, we have
\begin{equation}\label{eq:Z_to_Phi_m}
	Z_N^{(m)}(A;\Omega^{\partial\Lambda_N}) =\int d\rho_{\mathring{\Lambda}^{(m)}_K}A({\bf\Omega})\Phi_{N,K}^{(m)}({\bf\Omega}), \qquad
Z^{\mathrm{bulk},(m)}_N(A) =\int d\rho_{\mathring{\Lambda}^{(m)}_K}A({\bf\Omega})\Phi^{\mathrm{bulk}, (m)}_{N,K}({\bf\Omega}) 
\end{equation}
where $\Phi_{N,K}^{(m)}({\bf \Omega}) $ and $	\Phi_{N,K}^{{\rm bulk},(m)}({\bf \Omega}) $ are as in \eqref{Phi_bbmap}-\eqref{Phi_bulk}. As will be shown in Lemma~\ref{lem:poly_rep}, integrating these expressions yields a sum defined over subsets of polymers from $\caP_{N,K}$ and $\caP_{N,K}^{\rm bulk}$ with summands given in terms of the weight function $W_m$. The polymer subsets of particular interest are so-called hard core sets. Two polymers $\gamma,\gamma'\in \caP_{\Gamma}$ are \emph{disjoint} if $\caV_\gamma\cap\caV_{\gamma'} =\emptyset$ and we write
\begin{equation}\label{disjoint_polymers}
	\gamma | \gamma' \quad\text{if}\quad \caV_\gamma \cap \caV_{\gamma'} = \emptyset, \quad \text{and} \quad \gamma \nmid \gamma' \quad \text{otherwise.}
\end{equation}
A subset of polymers $\{\gamma_1,\ldots,\gamma_n\}\subseteq \caP_\Gamma$ is \emph{hard core} if $\gamma_i|\gamma_j$ for all $i\neq j$. The hard core condition can be encoded in the polymer interaction $\delta(\gamma,\gamma')=1$ if $\gamma | \gamma'$ and $\delta(\gamma,\gamma')=0$ if $\gamma \nmid \gamma'$. Hence
\begin{equation}\label{hc_interaction}
\{\gamma_1, \ldots, \gamma_n\} \; \,\text{hard core } \iff \prod_{1\leq i<j\leq n} \delta(\gamma_i,\gamma_j) = 1.
\end{equation}

\begin{lemma}\label{lem:poly_rep}
	Let $0\leq K<N$, and $W_m$ be the weight function from \eqref{mweight} for some $m\in\bZ_{\geq 0}$. Then
	\begin{align}
		\Phi_{N,K}^{(m)}({\bf\Omega}) & :=2^{-(m+1)|\caE_{\Lambda_{N,K}}|}\sum_{\{\gamma_1,...,\gamma_n\}\subseteq\cP_{N,K}}W_m(\gamma_1)\cdots W_m(\gamma_n)\prod_{1\leq i<j\leq n}\delta(\gamma_i,\gamma_j) \label{bb_poly}\\
	\Phi^{\mathrm{bulk},(m)}_{N,K}({\bf\Omega}) & :=2^{-(m+1)|\caE_{\Lambda_{N,K}}|}\sum_{\{\gamma_1,...,\gamma_n\}\subseteq\cP^{\mathrm{bulk}}_{N,K}}W_m(\gamma_1) W_m(\gamma_n) \cdot\prod_{1\leq i<j\leq n}\delta(\gamma_i,\gamma_j)\label{bulk_poly}
	\end{align}
\end{lemma}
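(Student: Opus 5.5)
We expand the product in \eqref{Phi_bbmap}--\eqref{Phi_bulk} over subsets of edges and then integrate out the non-fixed variables. Write $1-\Omega_x\cdot\Omega_y = 1+(-\Omega_x\cdot\Omega_y)$ for every $(x,y)\in\caE_{\Lambda_{N,K}^{(m)}}$. Expanding the product gives a sum over all edge subsets $E\subseteq \caE_{\Lambda^{(m)}_{N,K}}$ of $\prod_{(x,y)\in E}(-\Omega_x\cdot\Omega_y)$, and the prefactor $2^{-|\caE_{\Lambda_{N,K}^{(m)}}|}=2^{-(m+1)|\caE_{\Lambda_{N,K}}|}$ is unchanged. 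For each $E$ we then integrate over the free variables: the interior vertices of $\Lambda^{(m)}_{N,K}$ for $\Phi^{(m)}_{N,K}$, and all vertices except those of $\partial\mathring{\Lambda}_K$ for $\Phi^{\mathrm{bulk},(m)}_{N,K}$.

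The key step is a vanishing lemma: the integral of a term is zero unless every integrated vertex has even degree in $E$, and that degree is $0$ or $2$. This uses that $d\Omega_x$ is invariant under $\Omega_x\mapsto-\Omega_x$, so odd monomials vanish. Degree $>2$ is impossible because every integrated vertex has degree at most $3$ in $\Lambda_{N,K}^{(m)}$: undecorated vertices have degree $3$ in $\bbH$, and decorated vertices have degree $2$. Thus integrated vertices have $E$-degree $0$ or $2$. Boundary (non-integrated) vertices have degree $1$ in $\Lambda_{N,K}^{(m)}$, by \eqref{LambdaNK_bdy}, so their $E$-degree is $0$ or $1$.

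For the bulk case, the vertices of $\partial\Lambda_N$ are integrated, so they must have $E$-degree $0$. The only degree-$1$ vertices are then in $\partial\mathring{\Lambda}_K$. Hence every surviving $E$ decomposes uniquely into connected components, each of which is either a loop or a self-avoiding walk whose endpoints lie in the relevant boundary set; the components are vertex-disjoint. Since walk endpoints have degree $1$ in $\Lambda^{(m)}_{N,K}$, every component is the decoration $\gamma^{(m)}$ of a polymer $\gamma\in\caP_{N,K}$ (respectively $\caP^{\mathrm{bulk}}_{N,K}$). One should also check that a component never ends at a decoration vertex: decoration vertices are always integrated, so they need degree $0$ or $2$, which means a component covers whole decorated edges. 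This gives a bijection between surviving $E$ and hard-core families $\{\gamma_1,\dots,\gamma_n\}$, and the hard-core condition is exactly what makes the components distinct.

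Because the components are vertex-disjoint, the integral factorizes by Fubini into $\prod_i\int d\Omega^{\caV_{\gamma_i^{(m)}}\setminus \mathrm{ep}}\prod_{e\in\gamma_i^{(m)}}(-\Omega_x\cdot\Omega_y)$. Integrated vertices with $E$-degree $0$ contribute a factor $1$, since $d\Omega$ is normalized. Each factor equals $W_m(\gamma_i)$ by \eqref{loop_weight}, \eqref{walk_weight} and \eqref{mweight}, computed by iterating \eqref{basic_ints} along the path. The empty family contributes $1$.

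The main obstacle is the vanishing step. Parity alone does not remove all unwanted terms, so the argument relies on the fact that the degree is at most $3$ in the graph. It must also check carefully that boundary vertices, which have graph degree $1$, are the only places where walks can terminate.
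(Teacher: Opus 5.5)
Your proposal is correct and follows the paper's proof: expand the product over edge subsets, use the symmetry $\Omega_x\mapsto-\Omega_x$ to kill every term in which an integrated vertex has degree $1$ or $3$, identify the surviving edge sets with hard-core families of loops and boundary-to-boundary walks, and evaluate each factor by iterating \eqref{basic_ints}. The only difference is cosmetic: you run the argument directly on $\Lambda^{(m)}_{N,K}$, noting that degree-two decoration vertices force each component to cover whole decorated edges, whereas the paper writes out the case $m=0$ and treats $m>0$ as a modification via \eqref{mweight}.
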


The proof of this result was first given in \cite{lucia:2023} and follows the approach used in \cite{kennedy:1988}. We provide a shorter proof below.  

\begin{proof}
	We give the proof for the case $m=0$. The case $m>0$ follows from appropriately modifying the argument below using the observations from \eqref{mweight} and	\[
	|\caE_{\Lambda_{N,K}^{(m)}}| = (m+1)|\caE_{\Lambda_{N,K}}|\,.
	\] 
	
Now, expanding the product in the integral from \eqref{Phi_bbmap} in the case $m=0$, it follows that
	\begin{align}
 \int d\Omega^{\Lambda_{N,K}\setminus\partial\Lambda_{N,K}}\prod_{(x,y)\in \caE_{\Lambda_{N,K}}}(1-\Omega_x\cdot\Omega_y) 
		& =  \sum_{E\subseteq \caE_{\Lambda_{N,K}}}\int d\Omega^{\Lambda_{N,K}\setminus\partial\Lambda_{N,K}}\prod_{(x,y)\in E}(-\Omega_x\cdot\Omega_y)\,.\label{subgraph_sum}
	\end{align}
As the measure $d\Omega_x$ is invariant under the map $\Omega_x\mapsto -\Omega_x,$ any Lebesgue integrable function $f:S^2 \to \bR$ satisfies
	\begin{equation}\label{invariant_consequence}
		\int d\Omega_x f(-\Omega_x) = \int d\Omega_x f(\Omega_x)\qquad \forall x\in \Gamma\,,
	\end{equation}
and, in particular, this integral vanishes if $f$ is an odd function. Thus, if $G_E=(\caV_E,E)\subseteq\Lambda_{N,K}$ denote the minimal graph with edge set $E\subseteq \caE_{\Lambda_{N,K}}$, and if
	\[\deg_{G_E}(v) \in\{1,3\}\quad \text{for some}\quad v\in \caV_E\cap\mathring{\Lambda}_{N,K}.\] 
Then by \eqref{invariant_consequence} integrating over $d\Omega_v$ yields
	\begin{align}\label{zero_graphs}
		\int d\Omega_v\prod_{(x,y)\in E}(-\Omega_x\cdot\Omega_y)= 0\,.
	\end{align}
	As can be seen in Figure~\ref{fig:lambdank}, the definition of $\Lambda_{N,K}$ is such that $\deg_{\Lambda_{N,K}}(x)=1$ for any $x\in\partial\Lambda_{N,K}$. Given this and \eqref{zero_graphs}, the sum in \eqref{subgraph_sum} can be reduced to all subsets $E$ such that
	\begin{equation}\label{edge_set_conditions}
		\deg_{G_E}(x) =2 \quad \text{if}\quad x\in\caV_E\setminus\partial\Lambda_{N,K} \quad \text{and} \quad \deg_{G_E}(x) = 1 \quad \text{if}\quad x\in V_E\cap\partial\Lambda_{N,K} \,.
	\end{equation}
	Thus, the connected components of $G_E$ comprise a hard core set of polymers $\{\gamma_1, \ldots, \gamma_n\}\subseteq \caP_{N,K}$. Moreover, given any hard core set of polymers $\{\gamma_1, \ldots, \gamma_n\}\subseteq \caP_{N,K}$, the union of their respective edge sets $E=\cup_i\caE_{\gamma_i}$ produces a set which satisfies \eqref{edge_set_conditions}. Therefore,
	\begin{equation}\label{hc_sum}
		\Phi_{N,K}^{(0)}({\bf \Omega})=2^{-|\caE_{\Lambda_{N,K}}|}\sum_{\substack{\text{hard core}\\\{\gamma_1,\ldots, \gamma_n\}\subseteq \caP_{N,K}}}\int  d\Omega^{\Lambda_{N,K}\setminus\partial\Lambda_{N,K}}\prod_{i=1}^n\prod_{(x,y)\in \caE_{\gamma_i}}(-\Omega_x\cdot\Omega_y)\,.
	\end{equation}
The result follows from using \eqref{hc_interaction} to rewrite the hard core condition in the above summation, and then applying \eqref{loop_weight}-\eqref{walk_weight} and $\int d\Omega_x = 1$ to integrate \eqref{hc_sum}.
	
	The proof for  $Z_N^{\rm bulk}(A)$ runs analogously, with the only difference being that one also integrates \eqref{subgraph_sum} over $x\in\partial\Lambda_N$. Since $\deg_{\Lambda_{N,K}}(x) = 1$ for all $x\in \partial\Lambda_N$, \eqref{zero_graphs} implies that \eqref{subgraph_sum} can be reduced to all edge sets $E$ whose minimal graph satisfies the conditions of \eqref{edge_set_conditions} and are subgraphs of $\Lambda_{N,K}\setminus\partial\Lambda_{N}$. The set of all such graphs is the precisely the set of graphs whose connected components are a hard core sets of polymers $\{\gamma_1,\ldots,\, \gamma_n\}\subseteq\caP_{N,K}^{\rm bulk}$.
\end{proof}

 Thus, in order to prove the desired bounds on \eqref{two_bounds} reduces to analyzing the logarithms of the right hand side of \eqref{bb_poly}-\eqref{bulk_poly}. After verifying that a convergence criterion holds, the cluster expansion gives an explicit formula for these logarithms, which will be inserted into \eqref{Renyi_divergence} and manipulated in Section~\ref{sec:proof} to prove Theorem~\ref{thm:gs_indistinguishability}. That $\Phi_{N,K}^{\rm bulk}>0$ will be an immediate consequence of the cluster expansion, so Lemma~\ref{lem:op_norm} will apply to our polymer representations. We now review the cluster expansion convergence criterion from \cite{kotecky:1986}, and then verify this for specific polymer representations of $\Phi^{{\rm bulk},(m)}_{N,K}$ and $\Phi_{N,K}^{(m)}$ in Section~\ref{sec:Cluster_convergence}.

\section{The Koteck\'{y}-Preiss-Ueltschi Condition for Cluster Expansion Convergence}\label{sec:KP}
	
Theorem~\ref{thm:gs_indistinguishability} will be a consequence of estimating the logarithm differences from Lemma~\ref{lem:op_norm} that arise from \eqref{eq:Z_to_Phi_m} using the expansion in Lemma~\ref{lem:poly_rep} as well as their decorated square lattice model counterparts. These logarithms can be written explicitly as cluster expansions so long as a summability criterion on the polymer sets is satisfied. In this section, we review the convergence criterion due to Ueltschi \cite{ueltschi:2004} which generalizes the convergence criterion for polymers with a hard core interaction from Koteck\'{y} and Preiss \cite{kotecky:1986} to polymer sets with a soft core interaction, and we refer to this more general condition as the Koteck\'{y}-Preiss-Ueltschi (KPU) criterion. As will be shown in Section~\ref{sec:lieb}, the ground states for the AKLT model on the decorated square lattice have a soft core polymer representation, so this more general setting is needed. In addition, we also state a consequence of the convergence criterion, Corollary~\ref{lem:kplemma} below, that will be vital in our proof of Theorem~\ref{thm:gs_indistinguishability}. 

Let $\caP$ be an nonempty, finite set of polymers, and $W:\caP\to\bR$ be a weight function on $\caP$. Assume that $\caP$ is endowed with a polymer interaction $\delta:\caP\times\caP \to [0,1]$, meaning that 
\[\delta(\gamma,\gamma')=\delta(\gamma',\gamma), \qquad \delta(\gamma,\gamma)=0\qquad \forall\gamma,\gamma'\in \caP.\] 
The interaction is \emph{hard core} if $\delta$ only takes values in $\{0,1\}$, and \emph{soft core} otherwise. The \emph{partition function} associated to $\caP,W$ and $\delta$ is defined by
\begin{equation}
	\Xi(\caP, W,\delta)=\sum_{\vgg=\{\gamma_1,\ldots,\gamma_n\}\subseteq \caP}W(\vgg)\cdot\prod_{1\leq i<j\leq n}\delta(\gamma_i,\gamma_j), \qquad W(\vgg) = \prod_{i=1}^nW(\gamma_i)\,.
\end{equation} 

To connect this to the polymer representation for the AKLT model on the hexagonal lattice from Section~\ref{sec:hex_poly_reps}, fix $N>K\geq 1$ and recall the weight function $W_m$ from \eqref{mweight} and polymer interaction $\delta$ from \eqref{hc_interaction}. Then, $\Phi_{N,K}^{(m)}$ and $Z_N^{(m)}$ can be written as
\[
\Phi_{N,K}^{(m)} = 2^{-(m+1)|\caE_{\Lambda_{N,K}}|}\Xi(\caP_{N,K},W_m,\delta), \qquad Z_{N}^{(m)} = 2^{-(m+1)|\caE_{\Lambda_{N}}|}\Xi(\caP_{N,0}, W_m,\delta)
\]
and analogously for $\Phi_{N,K}^{\rm bulk,(m)}$ and $Z_{N}^{\rm bulk,(m)}$ using the bulk polymer sets. Here, we note that $W_m(\gamma)$ is actually a continuous function of the boundary variables $(v_x,u_x:x\in\partial\Lambda_{N,K})$. However, in Section~\ref{sec:proof} we will apply the cluster expansion pointwise, and in Section~\ref{sec:convergence_proof} we use the uniform bound $|W_m(\gamma)|\leq 3^{-(m+1)|\gamma|+1}$ to verify that the convergence criterion from Theorem~\ref{thm:kp} below holds for all choices of the boundary variables.

For any set of polymers, $\caP$, with polymer interaction, $\delta$, a nonempty subset $C=\{\gamma_1,\ldots,\gamma_n\} \subset \caP$ is called a \emph{cluster} if it cannot be partitioned into two nonempty sets $C_1, \, C_2$ such that
\begin{equation}\label{cluster_def}
\zeta(\gamma_1,\gamma_2)=0 \quad \forall \gamma_1\in C_1,\, \gamma_2\in C_2 \qquad \text{where} \qquad \zeta(\gamma,\gamma'):=\delta(\gamma,\gamma')-1\,.
\end{equation}
Said differently, $C$ is a cluster if the graph $G=(\{1,\ldots,n\}, E)$ with $E=\{(i,j):\zeta(\gamma_i,\gamma_j)\neq 0\}$ is connected. The set of all clusters is denoted $\caC(\caP)$ and, for any $\gamma\in \caP$ and $C\in\caC(\caP)$, we write $\gamma\nmid C$ if $C\cup\{\gamma\}$ is a cluster. 

For the AKLT models considered here, every polymer can be naturally mapped to a connected subgraph of the lattice. In the hexagonal model case, $\zeta(\gamma,\gamma')=0$ if and only if the subgraphs associated to the polymers are disjoint.  In the square model case, $\zeta(\gamma,\gamma')=0$ if and only if the subgraphs associated to the polymers are disjoint except for, possibly, on $\partial\Lambda_{N,K}$. In either case, a set $C=\{\gamma_1,\ldots,\gamma_n\}$ being a cluster requires that the graph $G=(\cup_i\caV_{\gamma_i},\cup_i\caE_{\gamma_i})$ is connected, and this condition is an equivalence in the hexagonal model case. We note that the notation for intersecting polymers introduced in \eqref{disjoint_polymers} for the hexagonal AKLT model is consistent with the cluster notation above since $\gamma\nmid C$ if and only if there exists $\gamma'\in C$ such that $\gamma\nmid \gamma'$.

When satisfied, a cluster expansion criterion guarantees that the formal sum for $\log \Xi(\caP,W)$ given below, called the \emph{cluster expansion}, is absolutely convergent and, thus, a well-defined equality of functions. Specifically,
\begin{equation} \label{cluster_expansion}
	\log \Xi(\caP,W) = \sum_{C\subseteq \caC(\caP)} W^T(C), \qquad W^T(C) = \sum_{n\geq 1}\sum_{\substack{(\gamma_1,\ldots, \gamma_n)\in\caP^{n}:\\ C=\{\gamma_1,\ldots,\gamma_n\}} }\varphi(\gamma_1,\ldots, \gamma_n) \prod_{i=1}^nW(\gamma_i)\,
\end{equation}
where, $\varphi$ is the Ursell function defined by 
\[
\varphi(\gamma_1) = 1, \qquad \varphi(\gamma_1,\ldots,\gamma_n) = \frac{1}{n!}\sum_{\substack{G=(V_n,E) :\\ G\; \text{connected}}}\prod_{(i,j)\in E}\zeta(\gamma_i,\gamma_j),\quad n \geq 2\]
and $V_n=\{1,\ldots, n\}$. We state the Kotechk\'y-Preiss-Ueltschi cluster expansion convergence criterion, and point the interested reader to \cite[Chapter 5]{friedli:2017} for a derivation of \eqref{cluster_expansion} and a simplified proof of its absolute convergence when the KPU criterion holds for a discrete polymer model.

	\begin{thm}[KPU Criterion for Cluster Expansion Convergence \cite{kotecky:1986,ueltschi:2004}]\label{thm:kp}
		Let $\caP$ be a non\-empty finite set of polymers with weight function $W$ and polymer interaction $\delta:\caP\to[0,1]$. Suppose there exists functions $a,b:\caP\to [0,\infty)$ such that 
		\begin{equation}\label{KP_criterion}
			\sum_{\gamma'\in\caP}e^{a(\gamma')+b(\gamma')}|W(\gamma')|\cdot|\zeta(\gamma,\gamma')|\leq a(\gamma), \qquad \forall \gamma\in \caP
		\end{equation}
	where $\zeta=\delta-1$. Then the triple sum in \eqref{cluster_expansion} is absolutely convergent. Moreover, for any $\gamma\in\caP$,
		\begin{equation}
			\sum_{C\in \caC(\caP), C\nmid\gamma}|W^T(C)|e^{b(C)}\leq a(\gamma) \quad \text{where}\quad b(C) =\sum_{\gamma\in C}b(\gamma)\,.
		\end{equation}

	\end{thm}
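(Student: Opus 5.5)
This is the standard Kotecký--Preiss--Ueltschi theorem, and I would follow the inductive approach of Fernández--Procacci and \cite[Chapter 5]{friedli:2017}. The idea is to work with finite subsets of $\caP$ and induct on their cardinality. For $\caP'\subseteq\caP$, set $\Xi(\caP')=\Xi(\caP',W,\delta)$. I would allow complex weights $z(\gamma)$ with $|z(\gamma)|\le |W(\gamma)|$, so that $\Xi$ becomes a polynomial in finitely many variables. This gives a well-defined formal logarithm near $0$, whose Taylor coefficients are exactly the sums $\sum W^T(C)$ in \eqref{cluster_expansion}; this last point is the usual Mayer/Ursell combinatorial identity, which I take as known. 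For soft core $\delta\in[0,1]$ the combinatorial identity holds verbatim with $\zeta=\delta-1\in[-1,0]$.

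\textbf{Step 1 (ratio bound).} By induction on $|\caP'|$ I would show that $\Xi(\caP')\neq 0$ on the polydisc $|z|\le |W|$, and that for $\gamma\notin\caP'$,
\[
\Big|\log\frac{\Xi(\caP'\cup\{\gamma\})}{\Xi(\caP')}\Big|\le a(\gamma)+\log(1+e^{a(\gamma)}|W(\gamma)|)
\]
(a variant suffices). To do this, write
\[
\Xi(\caP'\cup\{\gamma\})=\Xi(\caP')+z(\gamma)\,\widetilde\Xi_\gamma ,
\]
where $\widetilde\Xi_\gamma$ is the partition function over $\caP'$ with each polymer $\gamma'$ reweighted by the factor $\delta(\gamma,\gamma')=1+\zeta(\gamma,\gamma')$. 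One then bounds $\widetilde\Xi_\gamma/\Xi(\caP')$ by removing the polymers that interact with $\gamma$ one at a time. Each removal costs a factor $e^{a(\gamma')}|\zeta(\gamma,\gamma')|$, by the induction hypothesis applied to smaller sets, and multiplying these gives at most $\exp\big(\sum_{\gamma'}e^{a(\gamma')}|W(\gamma')||\zeta(\gamma,\gamma')|\big)\le e^{a(\gamma)}$ by \eqref{KP_criterion}. In the soft core case the factor $1+\zeta$ is not zero, so the removal step must be done via an interpolation; I would follow Ueltschi's trick of treating $\delta$ as multiplying the activity.

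\textbf{Step 2 (absolute convergence with the $e^{b}$ weight).} Because $\log\Xi$ is analytic on the polydisc, its Taylor series converges absolutely there. To get the $b$-weighted tree bound I would instead use the stronger route: the Penrose tree-graph inequality $\big|\sum_{G\text{ conn}}\prod\zeta\big|\le\sum_{T\text{ tree}}\prod_{(i,j)\in T}|\zeta(\gamma_i,\gamma_j)|$, which is valid precisely because $\zeta\in[-1,0]$. This reduces the claim to bounding
\[
\sum_{n}\frac{1}{(n-1)!}\sum_{\gamma_2,\dots,\gamma_n}\sum_{T}\prod_{T}|\zeta|\prod_i|W(\gamma_i)|e^{b(\gamma_i)}
\]
for rooted clusters. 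I would control this by induction on tree depth, showing that the total weight of trees of depth at most $k$ hanging off $\gamma$ is bounded by $e^{a(\gamma)}$. The recursion is $\exp\big(\sum_{\gamma'}|\zeta(\gamma,\gamma')||W(\gamma')|e^{b(\gamma')}\cdot(\text{bound for }\gamma')\big)\le\exp\big(\sum e^{a+b}|W||\zeta|\big)\le e^{a(\gamma)}$ by \eqref{KP_criterion}. Summing over the first vertex $\gamma_1$ with $|\zeta(\gamma,\gamma_1)|$ then gives $\sum_{C\nmid\gamma}|W^T(C)|e^{b(C)}\le a(\gamma)$, and also the absolute convergence of the triple sum. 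Here one uses that a term with $C\nmid\gamma$ contains some $\gamma_1$ with $\zeta(\gamma,\gamma_1)\ne0$, together with $|\zeta|\le1$.

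\textbf{Main obstacle.} The main obstacle is the tree-graph inequality in the soft core setting, that is, the Penrose/partition-scheme bound, and keeping careful track of the $1/n!$ versus rooted-tree symmetry factors. I would cite \cite{ueltschi:2004} and \cite[Chapter 5]{friedli:2017} for this, since they treat exactly this case.
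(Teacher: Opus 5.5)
Your Step 2 is the standard proof: the Penrose tree-graph bound (valid because $\zeta\in[-1,0]$), followed by induction on tree depth. This is the route of the references the paper cites; the paper gives no proof of its own. It correctly yields absolute convergence of the triple sum, together with the rooted bound
\[
\sum_{n\ge1}\frac{1}{(n-1)!}\sum_{\gamma_2,\dots,\gamma_n}\Bigl|\sum_{G\ \mathrm{conn}}\prod_{(i,j)\in G}\zeta(\gamma_i,\gamma_j)\Bigr|\prod_{i\ge2}|W(\gamma_i)|e^{b(\gamma_i)}\le e^{a(\gamma_1)} .
\]
The gap is the last step. Weighting the root by $|\zeta(\gamma,\gamma_1)|\,|W(\gamma_1)|e^{b(\gamma_1)}$, summing over $\gamma_1$, and using \eqref{KP_criterion} and the symmetry of $\varphi$ gives
\[
\sum_{C\in\caC(\caP)}|W^T(C)|\,e^{b(C)}\max_{\gamma'\in C}|\zeta(\gamma,\gamma')|\le a(\gamma).
\]
The statement instead has the indicator $\mathbbm{1}[C\nmid\gamma]=\mathbbm{1}[\exists\,\gamma'\in C:\ \zeta(\gamma,\gamma')\neq0]$ in place of $\max_{\gamma'\in C}|\zeta(\gamma,\gamma')|$. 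Passing from one to the other requires $\mathbbm{1}[\zeta\neq0]\le|\zeta|$, but the fact $|\zeta|\le1$ that you invoke is the opposite inequality. The two forms coincide for hard core. For a genuinely soft interaction the step cannot be repaired, because the displayed conclusion is then false. Take $\caP=\{\gamma,\gamma'\}$ with:
\begin{itemize}
\item $\delta(\gamma,\gamma')=0.99$ and $b\equiv0$;
\item $W(\gamma)=10^{-3}$ and $W(\gamma')=10^{-1}$;
\item $a(\gamma)=0.01$ and $a(\gamma')=0.2$.
\end{itemize}
Criterion \eqref{KP_criterion} holds, since its left-hand sides are about $0.0022\le 0.01$ and $0.1222\le0.2$. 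Also $\{\gamma'\}\nmid\gamma$, because $\zeta(\gamma,\gamma')=-0.01\neq0$. Yet $W^T(\{\gamma'\})=\sum_{n\ge1}\frac{(-1)^{n-1}}{n}W(\gamma')^n=\log(1.1)\approx0.095>a(\gamma)$.

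So your argument proves the $|\zeta|$-weighted bound above, which is the correct soft-core form. This is enough for the paper's purposes:
\begin{itemize}
\item The hexagonal interaction is hard core, so the two forms agree.
\item For the decorated square lattice, every nonzero $\zeta$ satisfies $|\zeta|\ge 1-\frac35=\frac25$. The stated bound therefore holds with $\frac52a(\gamma)$.
\item Better, the verification in Theorem~\ref{thm:KP_condition_lieb} bounds $|\zeta|\le1$ and counts all intersecting trails. It thus establishes \eqref{KP_criterion} with $\mathbbm{1}[\zeta(\gamma,\gamma')\neq0]$ in place of $|\zeta(\gamma,\gamma')|$. Running your tree argument with that indicator throughout gives exactly the stated inequality; the Penrose bound survives since $|\zeta|\le\mathbbm{1}[\zeta\neq0]$.
\end{itemize}

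Separately, your Step 1 is unnecessary: absolute convergence of the cluster series on the polydisc already forces $\Xi\neq0$ there. As written, its induction hypothesis also does not close. It only gives removal factors of size about $e^{a(\gamma')}$, whose product \eqref{KP_criterion} does not control. The hypothesis that does close is $|\log\Xi(\caP'\cup\{\gamma\})-\log\Xi(\caP')|\le e^{a(\gamma)}|W(\gamma)|$.
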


The convergence of the cluster expansion would also hold if $b(\gamma)=0$ for all $\gamma \in \caP$. However, the following corollary will be particularly useful for proving Theorem~\ref{thm:gs_indistinguishability} given that a nonzero $b:\caP\to[0,\infty)$ can be established. This is a slight modification of the bound proved in \cite{kotecky:1986}.
	\begin{cor}[\cite{kotecky:1986}]\label{lem:kplemma} Let $\caP$ be any nonempty, finite set of polymers with weight function $W$ and polymer interaction $\delta:\caP\to[0,1]$, and suppose that $a,b:\caP\to[0,\infty)$ are such that Theorem~\ref{thm:kp} holds. Then, for any nonempty $A\subseteq \caP$ and $\gamma\in\caP$,
		\begin{equation}
\sum_{\substack{C\in\caC(\caP):\\C\cap A\neq\emptyset,\, C\nmid\gamma}}|W^T(C)|\leq e^{-b(\gamma, A)}a(\gamma)
		\end{equation}
	where $b(\gamma, A) =\inf\{b(C): C\in\caC(\caP),\,C\nmid\gamma,\,C\cap A\neq\emptyset\} $.
	\end{cor}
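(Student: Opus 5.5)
The corollary follows directly from Theorem~\ref{thm:kp}: insert the weight $e^{b(C)}$ and use that $b(C)$ is uniformly large on the clusters being summed. Fix a nonempty $A\subseteq\caP$ and $\gamma\in\caP$. Every cluster $C$ appearing on the left-hand side satisfies $C\in\caC(\caP)$, $C\nmid\gamma$ and $C\cap A\neq\emptyset$, so by definition of the infimum,
\[
b(C)\geq b(\gamma,A), \qquad\text{hence}\qquad 1\leq e^{b(C)-b(\gamma,A)}
\]
for each such $C$. Note that $b(\gamma,A)$ is finite whenever the index set is nonempty (it is a minimum over a finite set, since $\caP$ is finite and so is $\caC(\caP)$). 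If the index set is empty, the left-hand side is zero and there is nothing to prove, with the convention $\inf\emptyset=+\infty$.

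We then multiply each term by this factor and drop the restriction $C\cap A\neq\emptyset$, which only enlarges a sum of nonnegative terms:
\[
\sum_{\substack{C\in\caC(\caP):\\ C\cap A\neq\emptyset,\,C\nmid\gamma}}|W^T(C)|\leq e^{-b(\gamma,A)}\sum_{\substack{C\in\caC(\caP):\\ C\cap A\neq\emptyset,\,C\nmid\gamma}}|W^T(C)|e^{b(C)}\leq e^{-b(\gamma,A)}\sum_{\substack{C\in\caC(\caP):\\ C\nmid\gamma}}|W^T(C)|e^{b(C)}.
\]
The last sum is bounded by $a(\gamma)$ by the second conclusion of Theorem~\ref{thm:kp}, which applies because $a,b$ satisfy \eqref{KP_criterion} by hypothesis. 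This gives the claim.

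The only point needing any care is the convention for $C\nmid\gamma$: it means $C\cup\{\gamma\}$ is a cluster, matching the statement of Theorem~\ref{thm:kp}. The corollary uses this notion in the same sense, so the index set in the corollary is a subset of the index set in that theorem. Absolute convergence, and hence the legitimacy of the rearrangement, is also supplied by Theorem~\ref{thm:kp}. I expect no real obstacle; the substantive work lies in verifying \eqref{KP_criterion} with a useful $b$ in later sections.
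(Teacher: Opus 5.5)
Your proposal is correct and is essentially the paper's own argument: on each cluster in the restricted sum use $e^{b(\gamma,A)}\leq e^{b(C)}$, drop the restriction $C\cap A\neq\emptyset$, and apply the second conclusion of Theorem~\ref{thm:kp}. Your explicit treatment of the empty index set (with $\inf\emptyset=+\infty$) is a small addition that the paper leaves implicit.
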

	\begin{proof}
		 Since $b$ is nonnegative, the desired bound follows from Theorem~\ref{thm:kp} as
		$$\sum_{\substack{C\in\caC(\caP):\\C\cap A\neq\emptyset,\, C\nmid\gamma}}|W^T(C)|e^{b(\gamma, A)}\leq\sum_{C\in \caC(\caP),\,C\nmid\gamma}|W^T(C)|e^{b(C)}\leq a(\gamma)\,.$$
	\end{proof}

We verify \eqref{KP_criterion} in Section~\ref{sec:Cluster_convergence} for the (un)decorated hexagonal AKLT model. The case of the AKLT model on the decorated square lattice will be the content of Section~\ref{sec:lieb}.

\section{Convergence of the Cluster Expansion for the Hexagonal Model}\label{sec:Cluster_convergence}
We once again turn to considering the AKLT model on the hexagonal lattice. Using the weight function \eqref{mweight} and polymer interaction \eqref{hc_interaction}, the first step in proving Theorem~\ref{thm:gs_indistinguishability} for this model is to show that the cluster expansion \eqref{cluster_expansion} converges for either $\caP_{N,K}$ or $\caP_{N,K}^{\rm bulk}$ with appropriate choices of $N>K\geq 0$. Since $\caP_{N,K}^{\rm bulk}\subseteq \caP_{N,K}$, the convergence for both polymer sets follows from verifying \eqref{KP_criterion} for $\caP_{N,K}$. Key to proving Theorem~\ref{thm:gs_indistinguishability} is that the functions used to check this criterion are independent of the chosen $N$ and $K$. To this end, define $a:\bN_{\geq 3}\to[0,\infty)$ by
\begin{equation}\label{a_function}
	a(l) = 
	\begin{cases}
		0.52, &  l=3\\
		0.56, & l=4\\
		0.66, & l=5\\
		0.70, & l=6\\
		0.15l, & l\geq 7\\
	\end{cases}
\end{equation} 
Above, the domain of $a$ reflects the fact that $|\gamma|\geq 3$ for any $\gamma\in\cP_{N,K}.$

\begin{thm}[Cluster Expansion Convergence for $\caP_{N,K}$]\label{thm:KP_condition}
	Let $m\in \bZ_{\geq 0}$. Fix $N$ and $K$ such that $N-K\geq 53$ and $K\geq 25$ or $K=0$, and consider $\cP_{N,K}$, $W_m$ and $\zeta=\delta-1$ as defined in \eqref{polymer_sets}, \eqref{mweight} and \eqref{hc_interaction}. Then, for any $\gamma\in\cP_{N,K}$ 
	\begin{equation}\label{eqn:cluster_condition}
		\sum_{\substack{\gamma'\in\cP_{N,K}}}|W_m(\gamma')|\cdot|\zeta(\gamma,\gamma')|\cdot e^{a_m(|\gamma'|)+\epsilon(m+1)|\gamma'|}\leq a_m(|\gamma|)\,
	\end{equation}    
where $a_m(l)=(m+1)a(l)$ and $\epsilon=0.0086$.
	
\end{thm}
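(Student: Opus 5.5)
First I reduce to $m=0$. Since $|W_m(\gamma')|\le 3^{-(m+1)|\gamma'|+1}$ and $a_m=(m+1)a$, each summand in \eqref{eqn:cluster_condition} is at most $3\,[3^{-|\gamma'|}e^{a(|\gamma'|)+\epsilon|\gamma'|}]^{m+1}$. For $m\ge 1$ I would bound $[x]^{m+1}\le x^{\,m+1}$ with $x=3^{-|\gamma'|}e^{a(|\gamma'|)+\epsilon|\gamma'|}$, which is well below $1$ for all $|\gamma'|\geq 3$. So it suffices to prove a length-resolved estimate of the form
\[
\sum_{\gamma'\in\caP_{N,K},\,\gamma'\nmid\gamma}3^{-(m+1)|\gamma'|+1}e^{(m+1)(a(|\gamma'|)+\epsilon|\gamma'|)}\le (m+1)a(|\gamma|).
\]
Because $|\zeta(\gamma,\gamma')|=1$ exactly when $\gamma'$ shares a vertex with $\gamma$, I bound the sum by $|\caV_\gamma|\le|\gamma|+1$ times the worst-case sum over polymers $\gamma'$ through a fixed vertex $x$. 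I then organize that sum by length: $\sum_{l\ge3} c_x(l)\,3^{-(m+1)l+1}e^{(m+1)(a(l)+\epsilon l)}$, where $c_x(l)$ is the number of polymers in $\caP_{N,K}$ of length $l$ containing $x$. The right side grows linearly in $m+1$, while the left decays geometrically in $m$. Hence the $m=0$ case, with a little slack, implies all $m$ after checking $3^{-ml}e^{m(a(l)+\epsilon l)}\le 1$ termwise, which holds since $a(l)+\epsilon l<l\ln 3$.

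Second, I count polymers through $x$. Loops through $x$ of length $l$ number at most $l\cdot\mu_l$, where $\mu_l$ counts self-avoiding polygons, and these are exactly enumerable for small $l$: $6,10,12,\dots$ only even lengths occur on $\bbH$. Walks in $\caW_{N,K}$ through $x$ of length $l$ are self-avoiding walks with both endpoints on $\partial\Lambda_{N,K}$. Here the geometry enters. The endpoint condition and the hypotheses $K\ge25$ (or $K=0$) and $N-K\ge53$ ensure the inner and outer boundaries are far apart, and that the inner boundary loop is long. As a result, a walk joining two boundary vertices either has length at least about $2$ (two adjacent boundary leaves across one hexagon), or it is long. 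I would bound the number of walks of length $l$ through $x$ by $(l+1)\cdot 3\cdot 2^{l-1}$, using the connective-constant bound $2^{l}$ for the hexagonal lattice. For small $l$ I would use exact computer enumeration, as Appendix~\ref{sec:computer_code} indicates. The short polymers get the tailored values $a(3),\dots,a(6)$, which are chosen so that each short case closes numerically.

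Third, I sum the tail. For $l\ge7$ the term is at most $C\,l^2\,(2e^{0.15+\epsilon}/3)^{l}$, using the exact ratio $\sqrt{2+\sqrt2}\approx1.848$ or the cruder $2$. Since $2e^{0.1586}/3\approx0.78<1$, the series converges, and with explicit enumeration up to some $l_0$ plus a geometric tail, the total per-vertex sum is at most $0.15$. Multiplying by $|\caV_\gamma|\le|\gamma|+1$ requires a per-vertex bound of about $0.15\,l/(l+1)$ for long $\gamma$, and the short-$\gamma$ constants $a(3)$ through $a(6)$ absorb the $+1$.

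The main obstacle is making the numerics close with these tight constants, especially the walk counts. Walks ending on the boundary can be short, of lengths $3$ to $6$, and numerous near corners of $\Lambda_{N,K}$. The factor $e^{a(l)}$ with $a(3)=0.52$ is not small. I expect to need a careful exact enumeration of short boundary walks, with the boundary geometry of $\Lambda_{N,K}$ (degree-one leaves, one per hexagon edge) limiting how many short walks pass through a vertex. I also expect to need an improved loop-vs-walk counting, such as a non-backtracking transfer bound, rather than $2^l$, to get the per-vertex sum below $a(l)/(l+1)$.
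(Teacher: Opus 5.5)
Your reduction to $m=0$ is fine, since $x^{m+1}\le x$ for $x<1$ and $a\le(m+1)a$. The gap is in the second step. Bounding the sum by $|\caV_\gamma|$ times the worst-case sum over polymers through a single vertex cannot close with the prescribed $a$ and $\epsilon$, however precisely you enumerate. With $w(l)=3^{-l+1}e^{a(l)+\epsilon l}$ one has $w(3)\approx0.192$ and $w(4)\approx0.067$.

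\emph{Corner vertices.} A vertex at an outer corner of $\Lambda_{N,K}$ lies on one of the six walks of length $3$ and also on a walk of length $4$. So the worst-case per-vertex sum is at least $w(3)+w(4)\approx0.26$. For a long walk you need it below $0.15\,l/(l+1)$. For $\gamma\in\caW^3$ your bound gives at least $4(w(3)+w(4))\approx1.04$, twice $a(3)=0.52$.

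\emph{Side vertices.} A leaf-bearing vertex $u$ in the middle of a side lies on two walks of length $4$ (leaf--$u$--$v$--$u'$--leaf, one on each side).

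The trouble is not the counts but the multiplicity. A short walk hugging $\partial\Lambda_{N,K}$ shares several vertices with a long $\gamma$ running alongside it, and a vertex union bound counts it once per shared vertex. Even if you sum over the actual vertices of such a $\gamma$ rather than a worst case, the length-$4$ walks alone contribute about $1.5\,l\,w(4)\approx0.67\,a(l)$. That is three times the true count of roughly $l/2$ such walks, and the paper's sharp accounting for long walks already uses $0.9997$ of the budget.

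Your reading of the hypotheses is also off. $K\ge25$ and $N-K\ge53$ do not make short walks scarce: walks of every length $3\le l'\le20$ run along the whole inner and outer boundaries. What the hypotheses give is that such walks start and end on the same boundary component, and that the twelve corners are far apart. Your tail bound $(l+1)\cdot3\cdot2^{l-1}$ ignores that both endpoints lie on the boundary. It already gives a per-vertex term of about $0.55$ at $l=21$, so you would need exact enumeration up to length roughly $50$.

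The missing idea is to charge each intersecting polymer to a single location rather than to every vertex of $\gamma$ it touches.
\begin{itemize}
\item On $\bbH$ two intersecting polymers must share an edge, because vertices have degree $3$ and boundary vertices degree $1$ in $\Lambda_{N,K}$. So one counts per edge of $\gamma$, as in Lemma~\ref{lem:loopbd}. For the tail, a walk through an edge is two walks to the boundary, which gives $N_w^i(l)\le(2l+95)2^{l-10}$.
\item Short walks of even length $4\le l'\le20$ cannot turn a corner, so each has a well-defined right endpoint. Lemma~\ref{m3} splits $\gamma$ into corridor segments and long excursions into the $\delta$-interior, and shows a polymer of length $l$ reaches at most about $l/2+O(l')$ admissible right endpoints. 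Hence $M_{\#w}(l,l')\le(l/2+11l'/4-12)R(l')$.
\item Short walks of odd length $3\le l'\le19$ must straddle one of the twelve corners, and each corner family has at most $Q(l')$ members. Distinct families are at distance at least $d_{\min}(l')\ge2K-30$, so a polymer of length $l$ meets at most $\kappa$ families with $\kappa/l\le1/7$.
\item Short $\gamma$ are handled by exact enumeration together with the sup-differences $S_{ww}$ and $S_{\ell w}$.
\end{itemize}
Without some such device that charges each short boundary walk to one location, the estimate cannot be closed with these constants.
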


	 The approach we use to prove Theorem~\ref{thm:KP_condition} is the same as the one from \cite{kennedy:1988} that verified \eqref{eqn:cluster_condition} held for $\cP_{N,0}$ with $\epsilon=0$. As such, we prove Theorem~\ref{thm:KP_condition} in  Section~\ref{sec:convergence_proof} for $K\geq 25$ as this can be adapted with minor (but tedious) changes to also hold for $K=0$ with $\epsilon = 0.0086$. We specifically work with $\cP_{N,K}$ in order to establish the dependence of \eqref{LTQO} on $|\gamma^{K,(m)}|$. Since $\cP_{N,K}\not\subseteq \cP_{N,0}$, the result from \cite{kennedy:1988} cannot be directly invoked to verify Theorem~\ref{thm:gs_indistinguishability}, and certain arguments used in that work need to be modified so that they hold for the more general case.

The strategy for proving Theorem~\ref{thm:KP_condition} is as follows. From \eqref{mweight}, it is clear that for any $\gamma\in \caP_{N,K}$,
\[
|W_m(\gamma)| \leq 3^{-(m+1)|\gamma|+1} \qquad \forall m\geq 0.
\]
Since $\zeta(\gamma,\gamma')=1$ if $\gamma\nmid\gamma'$ and $\zeta(\gamma,\gamma')=0$ otherwise, we restrict the sum in \eqref{eqn:cluster_condition} to $\gamma\nmid\gamma'$  and drop the $\zeta$ term. Then, it follows that for any $\gamma\in\cP_{N,K}$,
\begin{equation} \label{eqn:cluster_approach}
	\sum_{\substack{\gamma'\in\caP_{N,K}\\ \gamma'\nmid\gamma}}|W_m(\gamma')|\cdot e^{a_m(|\gamma'|)+\epsilon(m+1)|\gamma'|}\leq\sum_{l=3}^{\infty}w_m(l)|\cP_{N,K}^{\gamma,l}|
\end{equation}
where we introduce
\begin{equation}\label{little_w}
	w_m(l):=3\left(\frac{e^{a(l)+\epsilon l}}{3^{l}}\right)^{m+1} \quad\text{and}\quad \cP_{N,K}^{\gamma,l} := \{\gamma'\in \cP_{N,K}: \gamma\nmid\gamma', \; |\gamma'|=l\}\,.
\end{equation}
Since $w_m(l)/(m+1)$ is decreasing in $m$, the result for all $m$ follows from verifying 
\begin{equation}\label{m0_convergence}
\sum_{l=3}^{\infty}w_0(l)|\cP_{N,K}^{\gamma,l}| \leq a(|\gamma|).
\end{equation}
To simplify notation moving forward, we suppress the subscript above and write $w(l)=w_0(l)$.

One only needs to produce sufficiently tight upper bound on $|\cP_{N,K}^{\gamma,l}|$ to verify \eqref{eqn:cluster_condition} holds. This will be achieved by considering loops and walks separately, so for any $\gamma\in\cP_{N,K}$ and $l\geq 3$, let 
\begin{align}
	\mathcal{W}_{N,K}^{\gamma,l} & = \{\gamma'\in \caW_{N,K}: \gamma\nmid\gamma', \; |\gamma'|=l\}\label{walks_gamma_l}\\   
	\mathcal{L}_{N,K}^{\gamma,l} & = \{\gamma'\in \caL_{N,K}: \gamma\nmid\gamma', \; |\gamma'|=l\} \label{loops_gamma_l}
\end{align}
denote the set of all walks (resp. loops) of length $l$ from $\caP_{N,K}$ that intersect $\gamma$. For convenience later on, we also denote the set of all walks (resp. loops) of length $l$ by
\begin{equation}
	\mathcal{W}_{N,K}^{l} = \{\gamma\in \caW_{N,K}: |\gamma|=l\}, \quad \mathcal{L}_{N,K}^{l} = \{\gamma\in \caL_{N,K}: |\gamma|=l\}. \label{walk_loop_l}
\end{equation}

Given these definitions, one can write
\begin{equation}\label{split}
	\begin{aligned}
	\sum_{l=3}^{\infty}w(l)|\cP_{N,K}^{\gamma,l}|&
		=\sum_{\substack{3\leq l\leq 20 \\ l \; \text{even}}} w(l)|\caW_{N,K}^{\gamma,l}|+\sum_{\substack{3\leq l\leq 20 \\ l \; \text{odd}}} w(l)|\caW_{N,K}^{\gamma,l}|+
		\sum_{l=6,10} w(l)|\caL_{N,K}^{\gamma,l}|\\
		&+\sum_{12\leq l\leq 28}w(l)|\caL_{N,K}^{\gamma,l}|
		+\sum_{l\geq 30}w(l)|\caL_{N,K}^{\gamma,l}|+\sum_{l\geq 21}w(l)|\caW_{N,K}^{\gamma,l}|
	\end{aligned}
\end{equation}
where the choice of these six specific sums is motivated by grouping together sets of polymer whose cardinalities we will bound using similar arguments. Here, we use that loops must have even length, and there are no loops of length $l=8$ in the hexagonal lattice.

\subsection{Cardinality bounds on polymer sets}\label{sec:Cardinality_bounds} The goal of this section is to produce the desired cardinality bounds for the sets in \eqref{split} to prove \eqref{m0_convergence} for all $\gamma\in\caP_{N,K}$. While motivated by the approach in \cite{kennedy:1988}, which establishes these bounds for $\caP_{N,0}$, the nontrivial topology of $\Lambda_{N,K}$ and the additional polymers this creates requires that we again verify sufficient cardinality bounds on these polymer sets. 

For a fixed $\gamma\in\caP_{N,K}$, the main challenge for establishing \eqref{eqn:cluster_condition} comes from bounding the contributions in \eqref{split} from polymers $\gamma'\nmid\gamma$ of small length, specifically,
\begin{equation}
	\gamma' \in \bigcup_{3\leq l\leq 20} \cW_{N,K}^{\gamma,l}\cup \bigcup_{6\leq l\leq 28} \cL_{N,K}^{\gamma,l}\,.
\end{equation}
The assumptions on $N$ and $K$ in Theorem~\ref{thm:KP_condition} are made to simplify bounding $|\cW_{N,K}^{\gamma,l}|$ and $|\cL_{N,K}^{\gamma,l}|$ for such $l$ and otherwise play no role in the proof of the main result. We first, however, give generic bounds on the cardinalities of these sets that, while not tight enough to bound all six sums in \eqref{split}, will be sufficient to bound the final two sums for all cases of $\gamma$ considered in Section~\ref{sec:convergence_proof}. For brevity, we fix $N$ and $K$ such that $N-K\geq 53$ and $K\geq 25$, and suppress these labels from the sets of polymers, e.g., \eqref{little_w}-\eqref{walk_loop_l}. All bounds produced in this section are uniform in such a choice.

The generic upper bounds produced on $|\caW^{\gamma, l}|$ and $|\caL^{\gamma, l}|$ will depend on the type and length of the polymer $\gamma\in\cP$. To this end, for any $l,l'\in \bN_{\geq 3}$, define
\begin{alignat}{23}
	M_{\ell \ell}(l,l') & =\sup_{\gamma\in\cL^{l}}|\cL^{\gamma,l'}| &&\qquad &&  M_{\ell w}(l,l') =\sup_{\gamma\in\cL^{l}}|\cW^{\gamma,l'}|
	\label{Ml}\\
	M_{w \ell}(l,l') & =\sup_{\gamma\in\cW^{l}}|\cL^{\gamma,l'}|  
	&&\qquad&& M_{ww}(l,l') =\sup_{\gamma\in\cW^{l}}|\cW^{\gamma,l'}| \label{Mw}
\end{alignat}
In words, $M_{\ell w}(l,l')$ is the maximum number of walks of length $l'$ that intersects a fixed loop $\gamma$ of length $l$, and similarly for the other three functions.

The bounds on \eqref{Ml}-\eqref{Mw} in Lemma~\ref{lem:loopbd} are based on the observation that the geometry of the hexagonal lattice and the definition of $\Lambda_{N,K}$ guarantee that any pair of polymers $\gamma,\gamma'\in \cP$ must share an edge $e$ if $\gamma\nmid\gamma'$.  To state these bounds, first partition the set of edges of $\Lambda_{N,K}$ by
\begin{equation*}
	\caE^b = \{  e \in \caE_{\Lambda_{N,K}} : e \text{ is a boundary edge}\}, \quad \caE^i = \caE_{\Lambda_{N,K}}\setminus \caE^b \\
\end{equation*}
where $e=(v,w)\in \caE_{N,K}$ is a \emph{boundary edge} if $\{v,w\}\cap\partial\Lambda_{N,K}\neq \emptyset$, and an \emph{interior edge} otherwise. Then, the following three functions capture the maximum number of polymers of a given type and length that can contain a fixed edge:
\begin{align}
	N_w^i(l) & =\sup_{e\in \caE^i}|\{\gamma\in\mathcal{W}^l : e\in\gamma\}| \\
	N_w^b(l) & =\sup_{e\in \caE^b}|\{\gamma\in\mathcal{W}^l : e\in\gamma\}|\\
	N_\ell(l) & =\sup_{e\in\caE_{N,K}}|\{\gamma\in\cL^l:\:e\in\gamma\}| = \sup_{e\in\caE^i}|\{\gamma\in\cL^l:\:e\in\gamma\}|
\end{align}
The second equality for $N_\ell$ follows since a loop cannot contain a boundary edge (as all boundary vertices satisfy $\deg_{\Lambda_{N,K}}(v) = 1$). Above and the rest of the text, for an edge $e$ and vertex $v$ we will slightly abuse notation and write
\[
e,v\in G\iff e\in \caE_G, \; v\in \caV_G\,.
\]

\begin{lemma}[\cite{kennedy:1988}]\label{lem:loopbd} For any $l\geq 3$ the functions from \eqref{Ml}-\eqref{Mw} satisfy
	\begin{align}
		M_{\ell\ell}(l,l')& \leq lN_\ell(l')  \label{Mll_bound} \\
		M_{w\ell}(l,l')&\leq (l-2)N_\ell(l') \label{Mwl_bound}\\
		M_{\ell w}(l,l') & \leq lN_w^i(l') \label{Mlw_bound}\\
		M_{ww}(l,l') &\leq2N_w^b(l')+(l-2)N_w^i(l')  \label{Mww_bound}
	\end{align}
	Moreover, $N_\ell(l)\leq 2^{l-3}$ for $l\geq 6$,  $N_w^b(l)\leq 2^{l-5}$ for $l\geq 11$ and $N_w^i(l)\leq(2l+95)2^{l-10}$ for $l\geq 21$.
\end{lemma}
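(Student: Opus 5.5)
The four inequalities \eqref{Mll_bound}--\eqref{Mww_bound} follow from the observation stated just before the lemma: in $\Lambda_{N,K}$, two polymers with $\gamma\nmid\gamma'$ must share an edge. To justify it, note that interior vertices of $\Lambda_{N,K}$ have degree $3$ and boundary vertices have degree $1$. If $\gamma$ and $\gamma'$ share a vertex $v$, then either $v$ is a boundary vertex, in which case both polymers contain its unique edge, or $v$ is an interior vertex. In the latter case $\gamma$ uses at least one of the three edges at $v$ and $\gamma'$ uses at least one. Walks in $\caW_{N,K}$ have endpoints only on $\partial\Lambda_{N,K}$, so at an interior vertex each polymer has degree $2$. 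Two $2$-subsets of a $3$-set must intersect, so the polymers share an edge at $v$.

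Given this, $|\cL^{\gamma,l'}|\le \sum_{e\in\gamma}|\{\gamma'\in\cL^{l'}:e\in\gamma'\}|$, and similarly for walks. For a loop $\gamma$ of length $l$, all $l$ edges are interior. This gives $lN_\ell(l')$ and $lN_w^i(l')$, which are \eqref{Mll_bound} and \eqref{Mlw_bound}. For a walk $\gamma$ of length $l$, exactly two edges are boundary edges: the end edges, each incident to a degree-one endpoint. The remaining $l-2$ edges are interior. Loops never contain boundary edges, so only the $l-2$ interior edges contribute in \eqref{Mwl_bound}. For walks intersecting a walk, the two boundary edges contribute $2N_w^b(l')$ and the interior edges contribute $(l-2)N_w^i(l')$, which is \eqref{Mww_bound}. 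This part is routine.

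For the counting bounds, I would use the standard non-backtracking walk count on a degree-$3$ graph. For $N_\ell(l)\le 2^{l-3}$, fix an edge $e$ and traverse the loop starting along $e$ in one of its two orientations. Each subsequent step has at most $2$ choices, and the last steps are forced by the requirement to close up. Since each loop is generated at least twice (once per orientation), a crude count gives $2\cdot 2^{l-2}$ sequences, i.e.\ at most $2^{l-2}$ loops. The extra factor of $1/2$ should come from a hexagonal-lattice observation: the final two steps of a closing loop are forced. I would check this carefully, perhaps by the parity/sublattice argument used in \cite{kennedy:1988}.

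For $N_w^b(l)$, a walk through a boundary edge must start at that boundary endpoint. It then makes $l-1$ further steps of at most $2$ choices each and must end at a boundary vertex. The last two steps are constrained, since the final vertex before the endpoint must be adjacent to a boundary vertex and the lattice has only specific such positions. This should save the factor $2^{-4}$ once $l\ge 11$.

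The main obstacle is $N_w^i(l)\le(2l+95)2^{l-10}$ for $l\ge 21$. Here I would split the walk through an interior edge $e$ into its two arms, of lengths $j$ and $l-1-j$, each ending on $\partial\Lambda_{N,K}$. Summing over $j$ produces the linear factor in $l$. Each arm is a non-backtracking path ending at a degree-one boundary vertex. The assumptions $K\ge 25$ and $N-K\ge 53$ ensure that the inner and outer boundaries are far apart and locally look like the straight boundary of \cite{kennedy:1988}, so the boundary saving of roughly $2^{-5}$ per arm can be imported. Arms too short to reach the boundary contribute nothing. The constant $95$ absorbs the finitely many short-arm configurations near the boundary corners, which I would count explicitly, as the paper does, with computer enumeration (Appendix~\ref{sec:computer_code}).
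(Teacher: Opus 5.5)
Your derivation of \eqref{Mll_bound}--\eqref{Mww_bound} is correct and is the paper's argument; your degree-$3$/degree-$1$ justification that intersecting polymers must share an edge is a useful addition. Your loop count is also the paper's count. However, the final two steps are forced because the hexagonal lattice has no $4$-cycles, not because of parity. Once $e=(v,v')$ is traversed from $v$ to $v'$ and $l-3$ further edges are laid, the endpoint $w$ and $v$ have at most one common neighbour, which gives $N_\ell(l)\le 2^{l-3}$ directly.

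The genuine gap is in the bounds on $N_w^b$ and $N_w^i$: you never identify where the saving $2^{-4}$ comes from. Let $N(l)$ be the maximal number of walks of length $l$ that start at a vertex $v$, avoid a given edge at $v$, and end on $\partial\Lambda_{N,K}$. A constraint on only the last one or two steps cannot give $2^{-4}$. Since $N(2)=2$, the best such argument yields $N_w^b(l)\le N(l-1)\le 2^{l-3}N(2)=2^{l-2}$. The paper's input is the enumerated value $N(10)=64=2^{6}$ (Table~\ref{table:N}). It combines this with the concatenation bound $N(l)\le 2^{l-10}N(10)\le 2^{l-4}$ for $l\ge 10$: the first $l-10$ steps are free with at most $2$ choices each, and the last $10$ steps are counted by $N(10)$. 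Since the rest of a walk through a boundary edge is such an arm, this gives $N_w^b(l)\le N(l-1)\le 2^{l-5}$ for $l\ge 11$.

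Your two-arm decomposition for interior edges, $N_w^i(l)\le\sum_{l'=1}^{l-2}N(l')N(l-l'-1)$, is the right structure. However, the saving per long arm is $2^{-4}$, not $2^{-5}$, and the constant $95$ has nothing to do with corner configurations. For $l\ge 22$, the terms with an arm of length at most $10$ contribute $2\sum_{l'=1}^{10}N(l')2^{l-l'-5}=139\cdot 2^{l-10}$, using the tabulated values $N(1),\dots,N(10)$. The $l-22$ terms with both arms of length at least $11$ contribute at most $(l-22)2^{l-9}=(2l-44)2^{l-10}$. Together these sum to $(2l+95)2^{l-10}$; for $l=21$ one splits off the single term $N(10)^2$ instead. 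Without the arm function $N$, its enumerated small values and this submultiplicative step, your sketch does not yield the stated constants for $N_w^b$ and $N_w^i$.
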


 We note that, trivially, $N_\ell(l) \leq N_\ell^\Gamma(l)$ where $N_\ell^\Gamma(l)$ is the maximum number of loops of length $l$ in the hexagonal lattice that include a fixed edge, i.e.,
\[
N_\ell^\Gamma(l) = \sup_{e\in \cE_\Gamma}|\{\gamma\in \cL_\Gamma: |\gamma|=l\, \wedge\, e\in \gamma\}|\,.
\]
The values of $N_\ell^\Gamma(l)$ for $l\leq 28$ were calculated with the code in Appendix~\ref{subsec:N} and are given in Table~\ref{table:loops}. 

\begin{table}
	\begin{tabular}{ |p{2cm}||p{2cm}|}
		\hline
		\multicolumn{2}{|c|}{$N_\ell^\Gamma(l)$ for $l\leq 28$} \\
		\hline
		$l$ & $N_\ell^\Gamma(l)$\\
		\hline
		6&2\\
		8&0\\
		10&10\\
		12&8\\
		14&56\\
		16&96\\
		18&390\\
		20&920\\
		22&3168\\
		24&8592\\
		26&28002\\
		28&81368\\
		\hline
	\end{tabular}
	\caption{The maximum number of loops of length $l\leq 28$ that intersecting a single edge. }
	\label{table:loops}
\end{table}

\begin{proof}
	To begin, fix any $\gamma\in\cW^{l}$, and suppose that $\gamma'\in\cL^{\gamma,l'}$. Since $\gamma'$ is a loop, there must be an interior edge $e\in \gamma \cap \gamma'$. As $\gamma$ has exactly $l-2$ interior edges, 
	\[|\cL^{\gamma,l'}| \leq (l-2)N_\ell(l'),\] 
	which implies \eqref{Mwl_bound}. In the case that $\gamma'\in \cW^{\gamma, l'}$, it must be that $\gamma'$ contains either a boundary edge of $\gamma$ or an interior edge of $\gamma$, and so \eqref{Mww_bound} follows from
	\[
	|\cW^{\gamma, l'}| \leq 2N_w^b(l')+(l-2)N_w^i(l')\,.
	\]
	The analogous arguments produce the bounds on $M_{\ell\ell}(l,l')$ and $M_{\ell w}(l,l')$ after using that every edge of $\gamma\in\cL^{l}$ is an interior edge.
	
	For the bound on $N_\ell(l)$, fix an edge $e=(v,v')$ that must be contained in a loop of size $l$. Starting from $e$, lay $l-3$ additional edges starting from $v'$ to create a walk of length $l-2$ with endpoints ${v,w}$. There are at most two choices for laying each successive edge, and so there are at most $2^{l-3}$ such paths. The result follows since the geometry of the hexagonal lattice is such that there is a unique pair of edges that can connect $v$ and $w$ if the walk can be closed to a loop of length $l$. (If there were two or more such pairs of edges, then there would be a loop of length four in the hexagonal lattice.) 
	
	To bound $N_w^b(l)$ and $N_w^i(l)$, notice that any walk $\gamma\in\cW^l$ containing an edge $e=(v,v')\in\caE_{N,K}$ can be viewed as the concatenation of $e$ with two disjoint walks $\gamma_{v},\gamma_{v'}\in\cP_\Gamma$ that begin at the vertices $v$ and $v'$, respectively, and end at the boundary, $\partial\Lambda_{N,K}$. Here, we use the convention that one of these walks is the empty graph if $e$ is a boundary edge. Thus, $l=|\gamma_{v}|+|\gamma_{v'}|+1$. Now, let $n(v,e,l)$ be the number of walks $\gamma\in \cP_\Gamma$ of length $l$ that begin at $v$, do not contain $e$, and end at the boundary $\partial\Lambda_{N,K}$, and set
	\[N(l)=\sup_{\substack{v\in\caV_{N,K}\\ e=(v,v')\in\caE_{N,K}}}n(v,e,l).\]
	An upper bound on the value of $N(l)$ for $l\leq 10$ is given in Table~\ref{table:N}. These values were computed using the code found in Appendix~\ref{subsec:Pb}.

	\begin{table}
		\centering
		\begin{tabular}{ |p{2cm}||p{2cm}|}
			\hline
			\multicolumn{2}{|c|}{$\caN(l)$ for $l\leq10$} \\
			\hline
			$l$ & $\caN(l)$\\
			\hline
			1&1\\
			2&2\\
			3&2\\
			4&4\\
			5&6\\
			6&8\\
			7&16\\
			8&24\\
			9&40\\
			10&64\\
			\hline
		\end{tabular}
		\caption{Maximum number of walks $\caN(l)$ of length $l\leq 10$ from a fixed site to the boundary. This serves as an upper bound on $N(l)$.}
		\label{table:N}
	\end{table} 
	
	Considering the case that $e\in \caE^b$ and $e\in\caE^i$ separately, the above observations imply
	\begin{equation}\label{Nw_initial_bounds}
		N_w^b(l) \leq N(l-1), \quad N_w^i(l) \leq \sum_{l'=1}^{l-2}N(l')N(l-l'-1)
	\end{equation}
	where, in the case that $e=(v,v')\in \caE^i$, one uses that $\min\{|\gamma_{v}|, |\gamma_{v'}|\}\geq 1$. As $N(10)= 64$, the crude bound $N(l)\leq 2^{l}$ for $l\geq 11$ can be improved since any such walk is the concatenation of a walk of length $l-10$ starting at $v$ and avoiding $e$, and a walk of length ten ending at a boundary edge. Thus, for $l\geq 10$
	\begin{equation}\label{N_bound}
		N(l)\leq 2^{l-10}N(10) \leq 2^{l-4}\,.
	\end{equation}
	Inserting this into \eqref{Nw_initial_bounds} produces the desired upper-bound on $N_w^b$. For the bound on $N_1^{i}(l)$, note that $l-l'-1\geq 11$ when $l'\leq 10$ and $l\geq 22$, and so \eqref{N_bound} implies
	\begin{align}
		\sum_{l'=1}^{l-2}N(l')N(l-l'-1)&  =  2\sum_{l'=1}^{10} N(l')N(l-l'-1) + \sum_{l'=11}^{l-12} N(l')N(l-l'-1)\nonumber \\
		& \leq  2\sum_{l'=1}^{10} N(l')2^{l-l'-5} +(l-22)2^{l-9} \leq (2l+95)2^{l-10} \label{95}
	\end{align}
	where the last bound is obtained by inserting the values from Table~\ref{table:N}. In the case that $l=21$, the same bound follows from applying the above argument to the expansion
	\[
	\sum_{l'=1}^{19}N(l')N(20-l') = 2\sum_{l'=1}^{9} N(l')N(l-l'-1) + N(10)^2\,.
	\]
	
\end{proof}
The above results are sufficient for proving Theorem~\ref{thm:KP_condition} when the length of the fixed polymer, $\gamma$, is three. In the case that $|\gamma|>3$, the bounds from Lemma~\ref{lem:loopbd} can once again be used to bound the last four sums from \eqref{split}. The remainder of this section is focused on bounded the first two sums in \eqref{split}, for which the desired result will be partitioned by $|\gamma|$.

 As shown in \cite{kennedy:1988} and restated in Lemma~\ref{lem:small_walks}, in the case that $4\leq |\gamma|\leq 6$, the first two sums in \eqref{split} can be bounded using a difference of supremums. In this case, for any $4 \leq l' \leq 20$, let
\begin{align}
S_{ww}(l,l')  & = \sup_{\gamma\in \caW^{l}}\sum_{j=3}^{l'}|\caW^{\gamma,j}|-\sup_{\gamma\in \caW^{l}}\sum_{j=3}^{l'-1}|\caW^{\gamma,j}|, \qquad 4\leq l \leq 6 \label{S_ww_bdy}\\
S_{\ell w}(6,l')  & = \sup_{\gamma\in \caL^{6}}\sum_{j=3}^{l'}|\caW^{\gamma,j}|-\sup_{\gamma\in \caL^{6}}\sum_{j=3}^{l'-1}|\caW^{\gamma,j}|\label{S_le_bdy}\end{align}
and for $l'=3$ set
\begin{equation}\label{Sl3}
S_{ww}(l,3) = M_{ww}(l,3) \quad \text{for}\quad 4 \leq l \leq 6, \qquad S_{\ell w}(6,3) =M_{\ell w}(6,3)\,.
\end{equation}

\begin{table}
	\centering
	\begin{tabular}{ |p{2cm}|p{2.3cm}|p{2.3cm}|p{2.3cm}|p{2.3cm}|p{2.3cm}|}
		\hline
		& $(a,l)=(w,3)$& $(a,l)=(w,4)$ & $(a,l)=(w,5)$ & $(a,l)=(w,6)$ & $(a,l)=(\ell,6)$\\
		\hline
		$M_{aw}(l,3)$ &1&2&2&3&2\\
		$S_{aw}(l,4)$ &2&2&2&3&2\\
		$S_{aw}(l,5)$ &2&1&1&1&1\\
		$S_{aw}(l,6)$ &2&2&2&3&2\\
		$S_{aw}(l,7)$ &6&2&2&2&2\\
		$S_{aw}(l,8)$ &8&9&8&12&10\\
		$S_{aw}(l,9)$ &14&7&7&7&7\\
		$S_{aw}(l,10)$ &18&22&18&27&24\\
		$S_{aw}(l,11)$ &38&20&20&20&20\\
		$S_{aw}(l,12)$ &52&70&56&78&70\\
		$S_{aw}(l,13)$ &106&62&61&62&62\\
		$S_{aw}(l,14)$ &150&224&164&225&221\\
		$S_{aw}(l,15)$ &296&193&186&193&193\\
		$S_{aw}(l,16)$ &428&655&494&644&641\\
		$S_{aw}(l,17)$ &868&606&568&606&606\\
		$S_{aw}(l,18)$ &1284&2084&1516&1940&2066\\
		$S_{aw}(l,19)$ &2530&1930&1760&1930&1930\\
		$S_{aw}(l,20)$ &3818&6504&4692&5793&6578\\
		$M_{a\ell}(l,6)$ &1&2&\textbf{3}&3&7\\
		$M_{a\ell}(l,10)$ &3&7&\textbf{11}&10&30\\
		\hline
	\end{tabular}
	\caption{The values of the quantities in the first column for each the choice of $(a,l)$. The numbers in bold are those which are different than those of Table II of \cite{kennedy:1988} due to walks on the inner boundary. All but the last two rows are calculated via the code in Appendices~\ref{subsec:saws_col1-4}-\ref{subsec:saws_col5}, and the last two rows can be calculated easily by hand.}
	\label{table:saw}
\end{table}

\begin{lemma}[\cite{kennedy:1988}]\label{lem:small_walks} For $S_{ww}(l,l')$ and $S_{lw}(6,l')$ as above, one has
	\begin{align}
	\sup_{\gamma\in \caW^l}\sum_{l'=3}^{20}|\caW^{\gamma,l'}|w(l') & \leq \sum_{l'=3}^{20} S_{ww}(l,l')w(l'), \qquad 4\leq l\leq 6 \label{small_ww}\\
	\sup_{\gamma\in \caL^6}\sum_{l'=3}^{20}|\caW^{\gamma,l'}|w(l') & \leq \sum_{l'=3}^{20} S_{\ell w}(6,l')w(l')\,.\label{small_lw}
	\end{align}
\end{lemma}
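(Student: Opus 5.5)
This is a summation-by-parts argument: a supremum of a weighted sum is controlled by the increments of the partial sums of counts, once we know the weights are nonincreasing. The main work is therefore checking monotonicity of $w$ on $3\le l'\le 20$. I would do this first.

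\emph{Step 1: monotonicity of $w$.} From \eqref{little_w} with $m=0$, $w(l)=3e^{a(l)+\epsilon l}3^{-l}$. Using \eqref{a_function}, the ratio $w(l+1)/w(l)=e^{a(l+1)-a(l)+\epsilon}/3$ is at most $e^{0.15+0.0086}/3<1$ for every $l\ge 3$, since each increment $a(l+1)-a(l)$ is at most $0.15$ (the increments are $0.04,0.10,0.04,0.35\ldots$). The transition from $l=6$ to $l=7$ needs separate care: $a(7)-a(6)=1.05-0.70=0.35$, but $e^{0.3586}/3\approx 0.48<1$ is still fine. Hence $w(3)\ge w(4)\ge\cdots\ge w(20)\ge w(21)>0$, and I set $w(21):=0$ formally for the summation.

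\emph{Step 2: Abel summation for a fixed $\gamma$.} Fix $\gamma\in\caW^l$ (or $\gamma\in\caL^6$) and write
\[
c_j(\gamma)=|\caW^{\gamma,j}|,\qquad C_k(\gamma)=\sum_{j=3}^k c_j(\gamma),\qquad C_2=0 .
\]
Summation by parts gives
\[
\sum_{l'=3}^{20}c_{l'}w(l')=\sum_{k=3}^{20}C_k(\gamma)\bigl(w(k)-w(k+1)\bigr)+C_{20}(\gamma)\,w(21)\cdot 0,
\]
that is, $\sum_{k=3}^{19}C_k(\gamma)(w(k)-w(k+1))+C_{20}(\gamma)w(20)$. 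All coefficients are nonnegative by Step 1. So replacing each $C_k(\gamma)$ by $\overline{C}_k:=\sup_{\gamma\in\caW^l}C_k(\gamma)$ only increases the expression. Taking the supremum over $\gamma$ bounds the left side of \eqref{small_ww} by
\[
\sum_{k=3}^{19}\overline{C}_k\bigl(w(k)-w(k+1)\bigr)+\overline{C}_{20}\,w(20).
\]

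\emph{Step 3: reverse the summation.} Summing by parts back gives $\sum_{l'=3}^{20}(\overline{C}_{l'}-\overline{C}_{l'-1})w(l')$ with $\overline{C}_2=0$. By \eqref{S_ww_bdy} and \eqref{Sl3}, $\overline{C}_{l'}-\overline{C}_{l'-1}=S_{ww}(l,l')$ for $l'\ge 4$ and $\overline{C}_3=M_{ww}(l,3)=S_{ww}(l,3)$. This proves \eqref{small_ww}. The identical argument with $\caL^6$ in place of $\caW^l$, using \eqref{S_le_bdy}, yields \eqref{small_lw}.

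Finiteness of all suprema is immediate, since the counts are uniformly bounded by lattice geometry. The only genuine obstacle is the monotonicity of Step 1, which is a finite check.
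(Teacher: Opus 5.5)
Your proposal is correct and is essentially the paper's proof. Both arguments rewrite $\sum_{l'} S_{ww}(l,l')w(l')$ by summation by parts as $\sum_{k=3}^{19}(w(k)-w(k+1))\sup_\gamma C_k(\gamma)+w(20)\sup_\gamma C_{20}(\gamma)$, and then use that $w$ is positive and nonincreasing to dominate the corresponding expression for each fixed $\gamma$. The one addition is your explicit check of the monotonicity of $w$, which the paper simply asserts. Your phrase ``each increment is at most $0.15$'' is inaccurate because $a(7)-a(6)=0.35$, but you treat that case separately and $e^{0.3586}/3<1$ is correct, so nothing breaks.
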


 For $3\leq l'\leq 20$, there are only a finite number of distinct sets $\caW^{\gamma,l'}$ up to translations and rotations in $\Lambda_{N,K}$, and these distinct sets are independent of the choice of $K>20$ and $N\geq K+53$. As such, the values of $S_{ww}(l,l')$ and $S_{\ell w}(6,l')$ were calculated for any $l,l'$ as above using the code in Appendices~\ref{subsec:saws_col1-4}-\ref{subsec:saws_col5}, and the values are listed in Table~\ref{table:saw}, where for notational convenience, we also define $S_{ww}(3,l) := M_{ww}(3,l)\,.$
 
\begin{proof}
	The proofs for \eqref{small_ww} and \eqref{small_lw} are analogous, so we give the argument for \eqref{small_ww}. It follows from \eqref{S_ww_bdy} and \eqref{Sl3} that
	\begin{equation}
		\sum_{l'=3}^{20}S_{ww}(l,l')w(l') =  \sum_{l'=3}^{19}\left(\left(w(l')-w(l'+1)\right)\sup_{\gamma\in \caW^{l}}\sum_{j=3}^{l'}|\caW^{\gamma, j}|\right)+w(20)\sup_{\gamma\in \caW^{l}} \sum_{j=3}^{20}|\caW^{\gamma,j}| \nonumber
	\end{equation}
Since $w(l)$ is a positive monotone decreasing function, for any fixed $\gamma\in \caW^{l}$ the above implies
\begin{align}
	\sum_{l'=3}^{20}S_{ww}(l,l')w(l') & \geq \sum_{l'=3}^{19}\left(\left(w(l')-w(l'+1)\right)\sum_{j=3}^{l'}|\caW^{\gamma, j}|\right)+w(20)\sum_{j=3}^{20}|\caW^{\gamma,j}| \nonumber \\
	&	 = \sum_{l'=3}^{20} |\caW^{\gamma, j}|w(l')\ \nonumber
\end{align}
from which \eqref{small_ww} follows.
\end{proof}

\begin{figure}
	\begin{center}
		\scalebox{.8}{\begin{tikzpicture}
				\def\n{.55};
				
				\tikzset{
					wvertex/.style={circle, draw=black, fill=white, inner sep=1.2pt},
					bvertex/.style={circle, draw=black, fill=black, inner sep=1.2pt}
				}
				
				\newcommand{\hexcoord}[2]
				{[shift=(60:#1),shift=(120:#1),shift=(0:#2),shift=(-60:#2),shift=(0:#2),shift=(60:#2)]}
				\foreach \x in {1,...,9}
				\foreach \y in {2,...,8}{
					\draw[color=black!25]\hexcoord{\x*\n}{\y*\n}
					(0:\n)--(60:\n)--(120:\n)--(180:\n)--(-120:\n)--(-60:\n)--cycle;
					\draw[color=black!25,shift=(-60:\n), shift=(0:\n)]\hexcoord{\x*\n}{\y*\n}
					(0:\n)--(60:\n)--(120:\n)--(180:\n)--(-120:\n)--(-60:\n)--cycle;
				}
				
				\foreach \x in {2,2.5,3,3.5,4,4.5,5}
				{\draw[very thick, color=black]\hexcoord{2*\n+\x*\n}{\x*\n}
					(60:\n) node[bvertex]{} -- (120:\n) node[wvertex]{} -- (180:\n) node[bvertex]{};
					\draw[very thick, color=black]\hexcoord{3*\n+\x*\n}{\x*\n}
					(180:\n) node[bvertex]{} -- (240:\n) node[wvertex]{};
					\draw[very thick, color=black]\hexcoord{9.5*\n-\x*\n}{3.5*\n+\x*\n}
					(300:\n) node[bvertex]{} -- (0:\n) node[wvertex]{};
					\draw[very thick, color=black]\hexcoord{9*\n-\x*\n}{3*\n+\x*\n}
					(120:\n) node[wvertex]{} -- (60:\n) node[bvertex]{} -- (0:\n) node[wvertex]{};
					\draw[very thick, color=black]\hexcoord{9*\n-1*\n}{3*\n+2*\n}
					(0:\n) node[wvertex]{} -- (300:\n) node[bvertex]{};}
				\draw[very thick, color=black]\hexcoord{4.5*\n}{2.5*\n}	(60:\n) node[bvertex]{} -- (120:\n) node[wvertex]{} -- (180:\n) node[bvertex]{} node[above,xshift=-15*\n pt, yshift=7.5*\n pt]{\scalebox{.9}{$\partial\Lambda_N$}};
				\draw[very thick, color=black]\hexcoord{3.5*\n}{8.5*\n}(360:\n) node[wvertex]{} -- (60:\n) node[bvertex]{} -- (120:\n) node[wvertex]{};
				
				\foreach \x in {2.5,3,3.5,4,4.5,5}
				{\draw[very thick, color=black, shift=(240:2*\n), shift=(300:2*\n)]\hexcoord{\x*\n}{\x*\n}
					(60:\n) node[bvertex]{} -- (120:\n) node[wvertex]{} -- (180:\n) node[bvertex]{};
					\draw[very thick, color=black, shift=(180:3*\n), shift=(300:\n), shift=(240:\n)]\hexcoord{6.5*\n-\x*\n}{3.5*\n+\x*\n}				(120:\n) node[wvertex]{} -- (60:\n) node[bvertex]{} -- (0:\n) node[wvertex]{};}
				\draw[very thick, color=black,  shift=(240:2*\n), shift=(300:2*\n)]\hexcoord{3*\n}{3*\n}	(60:\n) node[bvertex]{} -- (120:\n) node[wvertex]{} -- (180:\n) node[bvertex]{} node[above,xshift=-3mm]{\scalebox{.9}{$\partial\mathring{\Lambda}_K$}};
				
				\draw[very thick, color=blue] \hexcoord{5*\n}{2*\n} (60:\n) node[bvertex]{}
				-- ++(300:\n) -- ++(0:\n) -- ++(300:\n) -- ++(0:\n) -- ++(300:\n) -- ++(0:\n)
				-- ++(60:\n) -- ++(0:\n) -- ++(300:\n) -- ++(240:\n) -- ++(180:\n) -- ++(240:\n) --++(180:\n) --++(240:\n) --++(300:\n)
				node[wvertex]{};
				
				
				\draw[very thick, color=orange] \hexcoord{2*\n}{6*\n} (60:\n) node[bvertex]{}
				-- ++(60:\n) -- ++(120:\n) -- ++(60:\n) -- ++(0:\n) -- ++(300:\n) -- ++(0:\n)
				-- ++(60:\n) -- ++(0:\n) -- ++(300:\n) -- ++(240:\n) -- ++(300:\n) -- ++(240:\n) --++(180:\n) --++(240:\n)
				node[bvertex]{};
				
				\draw[very thick, color=red] \hexcoord{6*\n}{3*\n} (60:\n) node[bvertex]{}
				-- ++(300:\n) -- ++(0:\n) -- ++(300:\n) -- ++(0:\n) -- ++(300:\n) -- ++(0:\n)
				-- ++(60:\n) -- ++(0:\n) -- ++(300:\n) -- ++(0:\n) -- ++(60:\n) -- ++(0:\n) --++(60:\n) --++(0:\n) --++(60:\n)
				node[wvertex]{};
				
		\end{tikzpicture}}
	\end{center}
	\caption{Three walks $\gamma\in\mathcal{W}_{N,K}$ with the boundary $\partial\Lambda_{N,K}$ colored to show the bipartition of $\Lambda$. The walk has even length if and only if the endpoints have the same coloring. This illustrates that the parity of $|\gamma|$ only depends on (1) if $\gamma$ crosses a corner of $\Lambda_{N,K}$ and (2) whether or not the $\gamma$ connects the inner and outer boundaries, $\partial\Lambda_N$ and $\partial\mathring{\Lambda}_K$.}
	\label{fig:case1and2_polymers}\end{figure}

We now turn bounding the cardinality of the polymer sets $\caW^{\gamma,l'}$ for the first sum in \eqref{split} (i.e., when $l'$ is even) in the case that the fixed polymer $\gamma\in\caP_{N,K}$ has length $|\gamma|>6$. The case $l'$ odd will be treated numerically and will be discussed further in the proof of Theorem~\ref{thm:KP_condition}. Although similar bounds were given in \cite{kennedy:1988}, the bounds needed here are somewhat different as we work with the ring-shaped domain $\Lambda_{N,K}$ whose non-trivial topology has to be properly taken into account. 

To state this result, note that as we are restricting to pairs $(N,K)$ with $N\geq K+53\geq 78$, the graph distance between any two vertices $x\in \partial\Lambda_N$ and $y\in \partial\mathring{\Lambda}_K$ necessarily satisfy $d(x,y) > 106.$ This can be easily seen from considering the geometry of $\Lambda_{N,K},$ see e.g., Figure~\ref{fig:case1and2_polymers}. Since 
\[\partial\Lambda_{N,K}=\partial\Lambda_N\cup \partial\mathring{\Lambda}_K\] 
and the endpoints of any walk $\gamma'\in\caW^{l'}$ must belong to $\partial\Lambda_{N,K}$, the restriction $l' \leq 20$ implies that such a walk must begin and end on the same boundary, i.e.
\begin{equation} \label{eq:too_small}
	\mathrm{ep}(\gamma') \subseteq \partial \mathring{\Lambda}_K, \quad \text{or} \quad \mathrm{ep}(\gamma') \subseteq \partial \Lambda_N\,.
\end{equation}
Moreover, since $\bbH$ is bipartite, one can see that if $l'$ is even, then $\gamma'$ cannot cross one of the corners of $\Lambda_{N,K}$, see Figure~\ref{fig:case1and2_polymers}. Thus, if we consider $\Lambda_{N,K}$ with a positive counter-clockwise winding angle convention, we can define a consistent labeling the endpoints of any such $\gamma'$ as the left endpoint, $v_L(\gamma')$, and right endpoint, $v_R(\gamma')$. Given this labeling, the following result bounds $M_{ww}(l,l')$ for $l> 6$ and even $4\leq l'\leq 20$.

\begin{table}
	\centering
	\begin{tabular}{ |p{2cm}||p{2cm}|}
		\hline
		\multicolumn{2}{|c|}{$R(l)$ for even $l$ with $4\leq l\leq20$} \\
		\hline
		$l$ & $R(l)$\\
		\hline
		4&1\\
		6&1\\
		8&4\\
		10&9\\
		12&26\\
		14&75\\
		16&215\\
		18&649\\
		20&1943\\
		\hline
	\end{tabular}
	\caption{Maximum number of walks of length $l$ with a fixed right hand endpoint,  $R(l)$.}
	\label{table:R}
\end{table}

\begin{lemma}\label{m3}
	Fix $l>6$ and $\#\in\{\ell,w\}$, and assume $4\leq l'\leq 20$ is even. For any $\Lambda_{N,K}$ with $K\geq 25$ and $N\geq K+53$, the following bound holds:
	\begin{equation}
		M_{\#w}(l,l') \leq \begin{cases}
			l/2+ 1, & l'=4 \\
			l/2 + 2, & l'=6 \\
			(l/2+11l'/4-12)R(l'),  & 8\leq l'\leq 20
		\end{cases}
	\end{equation}
	where
	\begin{equation}\label{defR}
		R(l') = \sup_{v\in \partial \Lambda_{N,K}} |\{\gamma^\prime \in \cW^{l'} : v_R(\gamma^\prime) = v\}|.
	\end{equation}
\end{lemma}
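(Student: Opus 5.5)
Fix $\gamma\in\caP$ of length $l>6$ (a loop or a walk) and an even $l'$ with $4\le l'\le 20$. The plan is to charge every $\gamma'\in\cW^{\gamma,l'}$ to a single boundary vertex and then count how many walks can be charged to each such vertex. By \eqref{eq:too_small}, $\gamma'$ has both endpoints on the same component of $\partial\Lambda_{N,K}$. Because $l'$ is even, $\gamma'$ does not turn a corner of $\Lambda_{N,K}$, so its right endpoint $v_R(\gamma')$ is well defined. This gives the bound
\[
|\cW^{\gamma,l'}|\le |B_{\gamma,l'}|\,R(l'),\qquad B_{\gamma,l'}=\{v_R(\gamma'):\gamma'\in\cW^{\gamma,l'}\}.
\]
The whole problem therefore reduces to bounding $|B_{\gamma,l'}|$, the number of admissible right endpoints, linearly in $l$ and $l'$. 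The values of $R(l')$ are taken from Table~\ref{table:R}.

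For $l'\ge 8$ I will bound $|B_{\gamma,l'}|$ geometrically. A walk $\gamma'$ of length $l'$ that starts and ends on the same boundary component stays within graph distance about $l'/2$ of that component. To meet $\gamma$, it must therefore reach a vertex of $\gamma$ lying within that depth of the boundary. Its right endpoint then lies in a window along the boundary whose width is roughly proportional to $l'$ and is centred near the projection of such a vertex of $\gamma$.

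Next I need the portion of $\gamma$ lying within distance $\sim l'/2$ of the boundary. For a walk $\gamma$, this portion consists of two segments near its endpoints together with any excursions back toward the boundary. I will bound the number of boundary vertices covered by the projection of $\gamma$ by about $l/2$: along a boundary with the zigzag structure of Figure~\ref{fig:case1and2_polymers}, advancing one boundary vertex requires two lattice steps. The window width contributes a further $\tfrac{11}{4}l'$, with a constant correction of $-12$, and this gives the claimed $(l/2+11l'/4-12)R(l')$. The constants $11/4$ and $-12$ have to be read off from the zigzag geometry by counting how far right endpoints can shift when a walk of length $l'$ dips to a given depth. I would verify them by checking the extremal shapes directly for small $l'$.

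For $l'=4$ and $l'=6$ the walks in $\cW^{l'}$ are explicit. Near a straight stretch of boundary they are essentially unique short "arches" joining nearby boundary vertices, since $R(4)=R(6)=1$. Such a walk can only touch $\gamma$ at vertices adjacent to the boundary. Counting the boundary-adjacent edges of $\gamma$, of which there are at most about $l/2$ for a polymer of length $l$ (accounting for the alternation), and adding the one or two extra arches that can hit the endpoint edges of $\gamma$, gives $l/2+1$ and $l/2+2$ respectively.

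The main obstacle is making the window-width estimate rigorous and uniform. It must hold for both loops and walks, and it must handle $\gamma$ near the inner boundary $\partial\mathring{\Lambda}_K$, where the boundary has corners. Corners could in principle enlarge the set of admissible right endpoints. Here I would use the hypotheses $K\ge25$ and $N-K\ge 53$: they keep the inner and outer boundaries far apart, and for even $l'$ the bipartite parity argument forbids walks around corners. With corners excluded, the worst case can be analysed on a single straight zigzag boundary. If the hand argument fails to give the exact constants, I would fall back on enumerating the extremal configurations by computer, as in Appendix~\ref{sec:computer_code}.
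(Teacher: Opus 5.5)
Your first step matches the paper. Charging each $\gamma'\in\cW^{\gamma,l'}$ to $v_R(\gamma')$ gives $|\cW^{\gamma,l'}|\le R(l')\,|X_{l'}(\gamma)|$, where the paper's $X_{l'}(\gamma)$ is your $B_{\gamma,l'}$. Your treatment of $l'=4,6$ is the same elementary count the paper alludes to.

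The gap is in bounding $|X_{l'}(\gamma)|$ by $l/2+11l'/4-12$. Walks of length $l'\le 20$ lie in one of the two corridors of depth $\delta$ (with $4\delta\le l'\le 4\delta+2$), along $\partial\Lambda_N$ or along $\partial\mathring{\Lambda}_K$. The part of $\gamma$ inside these corridors can consist of many far-apart pieces, for example when $\gamma$ is a walk joining the two boundary components, or a loop that repeatedly returns toward the boundary. Each piece carries its own window overhead. A single additive window term is therefore only justified if the overhead of every additional piece is paid for by the length $\gamma$ spends away from the boundary. Projecting all of $\gamma$ onto the boundary does not achieve this, because the claim that the projection covers at most about $l/2$ boundary vertices is false for deep parts of $\gamma$ that turn corners (see below). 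The paper supplies exactly this missing amortization:
\begin{itemize}
\item it cuts $\gamma$ at the components of $\gamma\cap B(\delta)$, with $B(\delta)=\Lambda_{N-\delta,K+\delta}$, of length at least $\ell_0=11l'/2-24$;
\item it proves $|X_{l'}(\lambda_j)|\le|\lambda_j|/2+\ell_0/2$ for each remaining segment $\lambda_j$;
\item it sums these bounds using $|\gamma|\ge\sum_j|\lambda_j|+(p-1)\ell_0$.
\end{itemize}
The hypothesis $N-K\ge 53\ge \ell_0/2+2\delta$ is what guarantees that each $\lambda_j$ meets only one corridor.

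Your reduction to a single straight zigzag boundary is also not valid. The parity argument prevents $\gamma'$ from turning a corner, but it does not prevent $\gamma$ from doing so. The concentric loops satisfy $|\gamma^{(M)}|=6(2M-1)$. Consider a segment of $\gamma$ running at depth $\delta$ in the outer corridor through $\kappa$ corners, hence through $\kappa+1$ trapezoids. It faces an arc of $\gamma^{(N)}$ that is longer by $2(\kappa+1)(\delta-1)$ edges. That yields roughly $(\kappa+1)(\delta-1)$ more admissible right endpoints than half its length suggests, on top of the straight-window contribution of order $l'$.

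This corner inflation is exactly where the constants come from. The paper's outer-corridor estimate is $\frac{|\lambda_j|}{2}+l'-\frac32+(\kappa+1)(\delta-\frac32)$. With $\kappa\le 6$ and $\delta\le l'/4$ this is at most $\frac{|\lambda_j|}{2}+\frac{11l'}{4}-12$.

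A straight-boundary analysis only produces a window of size about $l'$. That is what happens in the inner corridor, where the loops grow away from the boundary and the paper obtains $|\lambda_j|/2+l'-2$. Such an analysis cannot account for the $7l'/4$ contributed by outer corners. Enumerating extremal shapes by computer cannot replace the argument either, since $\gamma$ has unbounded length.

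Finally, you also need the separate case in which a segment wraps essentially all the way around the ring. There the paper uses the trivial bounds $6K-6$ and $6N-6$ on the number of admissible right endpoints.
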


\begin{proof}	
	The cases $l'=4$ and $6$ can be verified by a tedious but simple counting argument that starts from noting that there is a unique walk $\gamma'\in\caW^{l'}$ up to translations and reflections. The number of possible $\gamma'$ that intersect $\gamma$ can then be counted from considering the edges $\gamma$ can share with the translations and reflections of this $\gamma'$.

	Now, fix $8\leq l' \leq 20$ and $\gamma\in\caP^l$ with $l>6$. For any \emph{global} walk $\lambda \in \caW_\Gamma$ contained in $\Lambda_{N,K}$, define
	\be\label{Xgamma}
	X_{l'}(\lambda):= \left\{v\in \partial \Lambda_{N,K} : v= v_R(\gamma') \, \wedge \, \lambda  \nmid \gamma^\prime \; \text{for some}\; \gamma'\in \cW^{l'} \right\}\,.
	\ee
	Considering this and $R(l')$ from \eq{defR}, one can trivially bound
	$$
	M_{ww}(l,l')\leq R(l')\max_{\gamma\in\caW^l}|X_{l'}(\gamma)|.
	$$
	
	To bound $|X_{l'}(\gamma)|$, we decompose $\Lambda_{N,K}$ into three subregions depending on the unique $\delta\in \bN$ satisfying $4\delta\leq l'\leq 4\delta +2$. The first region, called the {\em $\delta$-interior} of $\Lambda_{N,K}$, is
	\be
	B(\delta)=\Lambda_{N-\delta,K+\delta},
	\ee
	and the two other subregions are the connected components of $\Lambda_{N,K}\setminus B(\delta)$ (considered as sets of edges), which we call the inner and outer corridors: \[C_i(\delta)=\mathring{\Lambda}_{K+\delta}\setminus\mathring{\Lambda}_K, \qquad C_o(\delta)=\Lambda_N\setminus \Lambda_{N-\delta}.\] 
	As the endpoints of any $\gamma'\in\caW^{l'}$ must lie on $\partial\Lambda_{N,K}$, $\delta$ is such that $\gamma'$ is contained in either $C_i(\delta)$ or $C_o(\delta)$, and thus the intersection $\gamma\cap\gamma'$ must occur in this corridor.

	\begin{figure}
		\begin{tikzpicture}
			\def\n{.5};
			\newcommand{\hexcoord}[2]
			{[shift=(60:#1),shift=(120:#1),shift=(0:#2),shift=(-60:#2),shift=(0:#2),shift=(60:#2)]}
			
			\foreach \k in {1,8}
			\foreach \x in {0,60,120,180,240,300}
			\foreach \y in {240}
			{\draw[shift=(\x+120:\k*\n), shift=(\x+60:\k*\n), very thick, color=black]\hexcoord{0}{0}
				(\x+\y:\k*\n)--(\x+\y+60:\k*\n);
			}
			\foreach \k in {3,6}
			\foreach \x in {0,60,120,180,240,300}
			\foreach \y in {240}
			{\draw[shift=(\x+120:\k*\n), shift=(\x+60:\k*\n), dashed, color=black]\hexcoord{0}{0}
				(\x+\y:\k*\n)--(\x+\y+60:\k*\n);
			}
			
			\draw\hexcoord{0}{0} (60:\n)--(120:\n) node[midway, below, yshift=40*\n pt,xshift=20*\n pt] {$C_i(\delta)$};
			\draw\hexcoord{0}{0} (60:\n)--(120:\n) node[midway, below, yshift=95*\n pt,xshift=65*\n pt] {$B(\delta)$};
			\draw\hexcoord{0}{0} (60:\n)--(120:\n) node[midway, below, yshift=160*\n pt,xshift=95*\n pt] {$C_o(\delta)$};
			\draw\hexcoord{0}{0} (60:\n)--(120:\n) node[midway, below, yshift=-5*\n pt,xshift=0*\n pt] {$\mathring{\Lambda}_K$};
			\draw\hexcoord{0}{0} (60:\n)--(120:\n) node[midway, below, yshift=205*\n pt,xshift=120*\n pt] {$\partial\Lambda_N$};
			\draw [line width=1.5pt, color=blue] (.95*\n,-.1*\n)  to[out=-60,in=90] coordinate[pos=.35] (A) (.7*\n,-2.65*\n) ;
			\draw [line width=1.5pt, color=blue] (.7*\n,-2.65*\n)  to[out=-90,in=-120] coordinate[pos=.35] (A) (-1.9*\n,-2*\n) ;
			\draw [line width=1.5pt, color=blue] (-1.9*\n,-2*\n)  to[out=60,in=-60] coordinate[pos=.35] (A) (-1.9*\n,2*\n) ;
			\draw [line width=1.5pt, color=red] (-1.9*\n,2*\n)  to[out=120,in=-90] coordinate[pos=.35] (A) (1.2*\n,5.2*\n) ;
			
			\draw [line width=1.5pt, color=red] (1.2*\n,5.2*\n)  to[out=120,in=-90] coordinate[pos=.35] (A) (1.2*\n,5.2*\n) ;
			\draw [line width=1.5pt, color=blue] (1.2*\n,5.2*\n)  to[out=90,in=120] coordinate[pos=.35] (A) (-3.7*\n,4*\n) ;
			\draw [line width=1.5pt, color=red] (-3.7*\n,4*\n)  to[out=-60,in=60] coordinate[pos=.35] (A) (-5*\n,-1.8*\n) ;
			\draw [line width=1.5pt, color=blue] (-5*\n,-1.8*\n)  to[out=-120,in=-90] coordinate[pos=.35] (A) (-2*\n,-5.2*\n) ;
			\draw [line width=1.5pt, color=red] (-2*\n,-5.2*\n)  to[out=90,in=240] coordinate[pos=.35] (A) (3.4*\n,-2.3*\n) ;
			\draw [line width=1.5pt, color=red] (3.4*\n,-2.3*\n)  to[out=60,in=240] coordinate[pos=.35] (A) (4.4*\n,2.8*\n) ;
			\draw [line width=1.5pt, color=blue] (4.4*\n,2.8*\n)  to[out=60,in=60] coordinate[pos=.35] (A) (6.4*\n,-.8*\n) ;
			\draw [line width=1.5pt, color=blue] (6.4*\n,-.8*\n)  to[out=240,in=0] coordinate[pos=.35] (A) (2.4*\n,-5.9*\n) ;
			\draw [line width=1.5pt, color=blue] (2.4*\n,-5.9*\n)  to[out=180,in=90] coordinate[pos=.35] (A) (-2.4*\n,-6.95*\n) ;
			\draw\hexcoord{0}{0} (60:\n)--(120:\n) node[midway, below, yshift=-105*\n pt,xshift=-45*\n pt,color=blue] {$\lambda_1$};
			\draw\hexcoord{0}{0} (60:\n)--(120:\n) node[midway, below,
			yshift=105*\n pt,xshift=-25*\n pt,color=red] {$\tilde{\lambda}_1$};
			\draw\hexcoord{0}{0} (60:\n)--(120:\n) node[midway, below, yshift=15*\n pt,xshift=-120*\n pt,color=red] {$\tilde{\lambda}_2$};
			\draw\hexcoord{0}{0} (60:\n)--(120:\n) node[midway, below,
			yshift=175*\n pt,xshift=-55*\n pt,color=blue] {${\lambda}_2$};
			\draw\hexcoord{0}{0} (60:\n)--(120:\n) node[midway, below, yshift=-175*\n pt,xshift=-105*\n pt,color=blue] {${\lambda}_3$};
			\draw\hexcoord{0}{0} (60:\n)--(120:\n) node[midway, below, yshift=-125*\n pt,xshift=65*\n pt,color=red] {$\tilde{\lambda}_3$};
			\draw\hexcoord{0}{0} (60:\n)--(120:\n) node[midway, below, yshift=-115*\n pt,xshift=128*\n pt,color=blue] {$\lambda_4$};
		\end{tikzpicture}
		\caption{
			An illustration of the regions $C_i(\delta)$, $C_o(\delta)$, and $B(\delta)$ and the decomposition of a path $\gamma$ into pieces $\gamma=\lambda_1\vee\tilde{\lambda}_1\vee\dots\tilde{\lambda}_3\vee\lambda_4$. Note that $\lambda_i$ has an excursion into $B(\delta)$ of with length less than $\ell_0$.}\label{fig:decomposition}
	\end{figure}

	To proceed, we define a cutoff length 
	\[\ell_0 =\frac{11l'}{2}-24\] 
	and decompose $\gamma$ in disjoint segments $\gamma = \lambda_1\vee \tilde \lambda_1\vee\ \lambda_2\vee\tilde\lambda_2 \vee\cdots\vee\tilde\lambda_{p-1}\vee\ \lambda_p$ as follows: enumerate by $\tilde{\lambda}_1, \ldots \tilde{\lambda}_{p-1}$ the connected components of $\gamma\cap B(\delta)$ whose length is at least $\ell_0$, and all connected components of $\gamma$ in between these by $\lambda_1, \ldots \lambda_{p}$, see Figure~\ref{fig:decomposition}. In the case that $\gamma$ is a loop, it may be that the number of $\tilde{\lambda}_j$ and $\lambda_k$ terms are the same, in which case we use the convention that $\lambda_p=\emptyset$. Regardless, the bounds proved below still hold. 
	
	The constraint $N-K\geq53\geq \ell_0/2+2\delta$ guarantees that each $\lambda_j$ intersects only one of the corridors, $C_i(\delta)$ or $C_o(\delta)$. Moreover, if $\gamma'\nmid \gamma$ it has to intersect $\lambda_j$ for some $j$. Thus, if we show
	\be\label{Xbound}
	|X_{l'}(\lambda_j)| \leq |\lambda_j|/2+\ell_0/2
	\ee
	for all $j$, then one has
	\begin{equation}\label{Xl_gamma}
		|X_{l'} (\gamma)|  \leq \sum_{j=1}^p |X_{l'}(\lambda_j)| \leq \frac{1}{2}\sum_{j=1}^p \vert\lambda_j\vert +\frac{p\ell_0}{2}
	\end{equation}
	from which the result follows since
	$$
	|\gamma|=\sum_{j=1}^p  |\lambda_j|  + \sum_{j=1}^{p-1}   |\tilde\lambda_j| \geq \sum_{j=1}^p  |\lambda_j| +(p-1) \ell_0\,.
	$$
	We estimate $|X_{l'}(\lambda_j)|$ by considering separately the cases that $\lambda_j$ intersects the inner and outer corridor. In each case, we partition the argument by the length of $\lambda_j$.
\begin{figure}
			\scalebox{.8}{\begin{tikzpicture}
						\def\n{.3};
						\newcommand{\hexcoord}[2]
						{[shift=(60:#1),shift=(120:#1),shift=(0:#2),shift=(-60:#2),shift=(0:#2),shift=(60:#2)]}
						\foreach \j in {0,1,2,3,4}
						\foreach \k in {-5,5}
						\foreach \x in {0,60,120,180,240,300}
						\foreach \y in {240,300}
						\draw[shift=(\x+120:\n), shift=(\x+60:\n),thick, color=gray]\hexcoord{\j*\n}{\k*\n}
						(\x+\y:\n)--(\x+\y+60:\n);
						\foreach \h in{0,0.5,1,1.5,2,2.5,3,3.5,4,4.5,5}
						\foreach \j in {3-\h}
						\foreach \k in {\h,-\h}
						\foreach \x in {0,60,120,180,240,300}
						\foreach \y in {240,300}
						\draw[shift=(\x+120:\n), shift=(\x+60:\n), thick,color=gray]\hexcoord{\j*\n}{\k*\n}
						(\x+\y:\n)--(\x+\y+60:\n);
						\foreach \j in {0,1,2,3}
									\foreach \k in {-4,4}
									\foreach \x in {0,60,120,180,240,300}
									\foreach \y in {240,300}
									\draw[shift=(\x+120:\n), shift=(\x+60:\n),thick, color=gray]\hexcoord{\j*\n}{\k*\n}
									(\x+\y:\n)--(\x+\y+60:\n);
									\foreach \h in{0,0.5,1,1.5,2,2.5,3,3.5,4,4.5,5,5}
									\foreach \j in {2.4-\h}
									\foreach \k in {\h,-\h}
									\foreach \x in {0,60,120,180,240,300}
									\foreach \y in {240,300}
									\draw[shift=(\x+120:\n), shift=(\x+60:\n),thick, color=gray]\hexcoord{\j*\n}{\k*\n}
									(\x+\y:\n)--(\x+\y+60:\n);

									\foreach \k in {-3,-2.5,-2,-1.5,-1,-0.5,0}
																														\foreach \j in {8*\n+\k}
																														\draw[very thick, color=gray, shift=(240:\n), shift=(300:\n)]\hexcoord{\j*\n}{\k*\n} (0:\n)--(300:\n)--(240:\n);
														\foreach \k in {-3,-2.5,-2,-1.5,-1,-0.5,0}
																																			\foreach \j in {8*\n+\k}
																																			\draw[very thick, color=gray, shift=(240:\n), shift=(300:\n)]\hexcoord{\j*\n}{-\k*\n} (180:\n)--(240:\n)--(300:\n);
												\foreach \k in {-2,-1,0,1,2}
																																						\foreach \j in {2.5*\n+\k}
																																						\draw[very thick, color=gray, shift=(240:\n), shift=(300:\n)]\hexcoord{\j*\n}{3.5*\n} (120:\n)--(180:\n)--(240:\n);
																																							\foreach \k in {-1,0,1,2}
																																																																			\foreach \j in {2.5*\n+\k}
																																																																			\draw[very thick, color=gray, shift=(240:\n), shift=(300:\n)]\hexcoord{\j*\n}{-3.5*\n} (300:\n)--(0:\n)--(60:\n);
									\draw[very thick, color=gray, shift=(240:\n), shift=(300:\n)]\hexcoord{(2.5*\n-2*\n)*\n}{-3.5*\n} (0:\n)--(60:\n);																													
									\foreach \k in {0}
									\foreach \j in {8,9,10,11}
									\draw[very thick, color=blue, shift=(240:\n), shift=(300:\n)]\hexcoord{\j*\n}{\k*\n} (120:\n)--(180:\n)--(240:\n) (300:\n)--(0:\n)--(60:\n);
									\foreach \k in {0}
																		\foreach \j in {12}
																		\draw[very thick, color=blue, shift=(240:\n), shift=(300:\n)]\hexcoord{\j*\n}{\k*\n} (240:\n)--(180:\n) (300:\n)--(0:\n);

									\foreach \k in {-3.5,-4,-4.5,-5}
																			\foreach \j in {\n-\k}
																			\draw[very thick, color=blue, shift=(240:\n), shift=(300:\n)]\hexcoord{\j*\n}{\k*\n} (0:\n)--(60:\n)--(120:\n) (180:\n)--(240:\n)--(300:\n);
																								
					\foreach \k in {-5.5}
																								\foreach \j in {\n-\k}
																								\draw[very thick, color=blue, shift=(240:\n), shift=(300:\n)]\hexcoord{\j*\n}{\k*\n} (240:\n)--(300:\n)(0:\n)--(60:\n);

									\foreach \k in {3.5,4,4.5,5}
													\foreach \j in {\n+\k}
													\draw[very thick, color=blue, shift=(240:\n), shift=(300:\n)]\hexcoord{\j*\n}{\k*\n} (60:\n)--(120:\n)--(180:\n) (240:\n)--(300:\n)--(0:\n);
													
						\foreach \k in {5.5}
																		\foreach \j in {\n+\k}
																		\draw[very thick, color=blue, shift=(240:\n), shift=(300:\n)]\hexcoord{\j*\n}{\k*\n}  (240:\n)--(300:\n)(120:\n)--(180:\n);								
			
								
								\foreach \k in {-5.5,-5,-4.5,-4,-3.5,-3,-2.5,-2,-1.5,-1,-0.5,0}
																								\foreach \j in {12*\n+\k}
																								\draw[ thick, color=gray, shift=(240:\n), shift=(300:\n)]\hexcoord{\j*\n}{\k*\n} (60:\n)--(120:\n)--(180:\n);
								\foreach \k in {-5.5,-5,-4.5,-4,-3.5,-3,-2.5,-2,-1.5,-1,-0.5,0}
																													\foreach \j in {12*\n+\k}
																													\draw[ thick, color=gray, shift=(240:\n), shift=(300:\n)]\hexcoord{\j*\n}{-\k*\n} (0:\n)--(60:\n)--(120:\n);
						\foreach \k in {-2,-1,0,1,2,3,4}
																																\foreach \j in {2.5*\n+\k}
																																\draw[thick, color=gray, shift=(240:\n), shift=(300:\n)]\hexcoord{\j*\n}{5.5*\n} (300:\n)--(0:\n)--(60:\n);
																																	\foreach \k in {-1,0,1,2,3,4}
																																																													\foreach \j in {2*\n+\k}
																																																													\draw[ thick, color=gray, shift=(240:\n), shift=(300:\n)]\hexcoord{\j*\n}{-6*\n} (300:\n)--(0:\n)--(60:\n);

																																																																																											\draw[shift=(120:\n), shift=(60:\n), fill=gray]\hexcoord{8.5*\n}{5*\n} (300:8*\n) node[red]{$C_{2,3}$};
																																																																														\draw[shift=(120:\n), shift=(60:\n), fill=gray]\hexcoord{9*\n}{-7.5*\n} (300:8*\n) node[red]{$C_{6,1}$};																																									\draw[shift=(120:\n), shift=(60:\n), fill=gray]\hexcoord{15.25*\n}{-1.25*\n} (300:8*\n) node[red]{$C_{1,2}$};
																																														
																			\draw[shift=(120:\n), shift=(60:\n), fill=gray]\hexcoord{12.5*\n}{-4.25*\n} (300:8*\n) node[black]{$T_{1}$};											
																			\draw[shift=(120:\n), shift=(60:\n), fill=gray]\hexcoord{12.5*\n}{2.25*\n} (300:8*\n) node[black]{$T_{2}$};																																												\draw[shift=(120:\n), shift=(60:\n), fill=gray]\hexcoord{3.5*\n}{-7.5*\n} (300:8*\n) node[black]{$T_{6}$};																													\draw[shift=(120:\n), shift=(60:\n), fill=gray]\hexcoord{3.5*\n}{5*\n} (300:8*\n) node[black]{$T_{3}$};																											
			\foreach \k in {-5,-4.5,-4,-3.5,-3,-2.5,-2,-1.5,-1}
																													\foreach \j in {13*\n+\k}
																													\draw[very thick, color=blue, shift=(240:\n), shift=(300:\n)]\hexcoord{\j*\n}{\k*\n} (0:\n)--(300:\n)--(240:\n);
		\foreach \k in {-5,-4.5,-4,-3.5,-3,-2.5,-2,-1.5,-1}
																																		\foreach \j in {13*\n+\k}
																																		\draw[very thick, color=blue, shift=(240:\n), shift=(300:\n)]\hexcoord{\j*\n}{-\k*\n} (180:\n)--(240:\n)--(300:\n);
							\foreach \k in {-0.5}																										
																																								\foreach \j in {13*\n+\k}
																																								\draw[very thick, color=blue, shift=(240:\n), shift=(300:\n)]\hexcoord{\j*\n}{-\k*\n} (240:\n)--(300:\n);																	\foreach \k in {-5.5}																										
																																																																																\foreach \j in {13*\n+\k}
																																																																																\draw[very thick, color=blue, shift=(240:\n), shift=(300:\n)]\hexcoord{\j*\n}{-\k*\n} (180:\n)--(240:\n);								
																																																																																			\foreach \k in {-.5}
																																																																																																													\foreach \j in {13*\n+\k}
																																																																																																													\draw[very thick, color=blue, shift=(240:\n), shift=(300:\n)]\hexcoord{\j*\n}{\k*\n} (300:\n)--(240:\n);		
																																																																																																																\foreach \k in {-5.5}
																																																																																																																																										\foreach \j in {13*\n+\k}
																																																																																																																																										\draw[very thick, color=blue, shift=(240:\n), shift=(300:\n)]\hexcoord{\j*\n}{\k*\n} (0:\n)--(300:\n);								
		\foreach \k in {-1.5,-.5,0.5,1.5,2.5}
																																					\foreach \j in {2.5*\n+\k}
																																					\draw[very thick, color=blue, shift=(240:\n), shift=(300:\n)]\hexcoord{\j*\n}{6*\n} (120:\n)--(180:\n)--(240:\n);
			\foreach \k in {-2.5}
																																						\foreach \j in {2.5*\n+\k}
																																						\draw[very thick, color=blue, shift=(240:\n), shift=(300:\n)]\hexcoord{\j*\n}{6*\n} (120:\n)--(180:\n);			\foreach \k in {3.5}
																																																																												\foreach \j in {2.5*\n+\k}
																																																																												\draw[very thick, color=blue, shift=(240:\n), shift=(300:\n)]\hexcoord{\j*\n}{6*\n} (180:\n)--(240:\n);																																			
				\foreach \k in {-1,0,1,2,3}
																																							\foreach \j in {2.5*\n+\k}
																																							\draw[very thick, color=blue, shift=(240:\n), shift=(300:\n)]\hexcoord{\j*\n}{-5.5*\n} (120:\n)--(180:\n)--(240:\n);				\foreach \k in {-2}
																																																																														\foreach \j in {2.5*\n+\k}
																																																																														\draw[very thick, color=blue, shift=(240:\n), shift=(300:\n)]\hexcoord{\j*\n}{-5.5*\n} (120:\n)--(180:\n);																																	
	%
	%
	%
		\foreach \k in {-3,-2.5,-2,-1.5,-1,-0.5}
																												\foreach \j in {8*\n+\k}
																												\draw[very thick, color=blue, shift=(240:\n), shift=(300:\n)]\hexcoord{\j*\n}{\k*\n} (0:\n)--(300:\n)--(240:\n);
												\foreach \k in {-3,-2.5,-2,-1.5,-1,-0.5}
																																	\foreach \j in {8*\n+\k}
																																	\draw[very thick, color=blue, shift=(240:\n), shift=(300:\n)]\hexcoord{\j*\n}{-\k*\n} (180:\n)--(240:\n)--(300:\n);
																												
										\foreach \k in {-2,-1,0,1}
																																				\foreach \j in {2.5*\n+\k}
																																				\draw[very thick, color=blue, shift=(240:\n), shift=(300:\n)]\hexcoord{\j*\n}{3.5*\n} (120:\n)--(180:\n)--(240:\n);
																																			
																																					\foreach \k in {-1,0,1}
																																																																	\foreach \j in {2.5*\n+\k}
																																																																	\draw[very thick, color=blue, shift=(240:\n), shift=(300:\n)]\hexcoord{\j*\n}{-3.5*\n} (300:\n)--(0:\n)--(60:\n);
							\draw[very thick, color=blue, shift=(240:\n), shift=(300:\n)]\hexcoord{(2.5*\n-2*\n)*\n}{-3.5*\n} (0:\n)--(60:\n);			
								\foreach \k in {0}
																												\foreach \j in {8,9,10,11,12}
																												\draw[very thick, color=red, shift=(240:\n), shift=(300:\n)]\hexcoord{\j*\n}{\k*\n} (240:\n)--(300:\n) (120:\n)--(60:\n);
																												
																												\foreach \k in {-3.5,-4,-4.5,-5,-5.5}
																																																\foreach \j in {\n-\k}
																																										\draw[very thick, color=red, shift=(240:\n), shift=(300:\n)]\hexcoord{\j*\n}{\k*\n} (120:\n)--(180:\n) (0:\n)--(300:\n);
																										\foreach \k in {3.5,4,4.5,5,5.5}
																																				\foreach \j in {\n+\k}
																														\draw[very thick, color=red, shift=(240:\n), shift=(300:\n)]\hexcoord{\j*\n}{\k*\n} (240:\n)--(180:\n) (0:\n)--(60:\n);				
			\end{tikzpicture}}
			\caption{The labeled corners $C_{i,i+1}\subset C_i(\delta)$ in red and the trapezoids $T_i\subset C_i(\delta)$ in gray with trapezoidal boundaries in blue. \label{fig:corners}}
		\end{figure}
		\par
	\textbf{Case $\lambda_j\cap C_i(\delta)\neq \emptyset$:} First note that $X_{l'}(\lambda_j)\subseteq \partial\mathring{\Lambda}_K$ in this corridor. Moreover, $C_i(\delta)$ is the union of six trapezoids $T_1, \ldots, T_6$ (viewed as a collection of edges) along with edges that connect two neighboring trapezoids, see Figure~\ref{fig:shadedregion}. We call the set of all edges connecting  two neighboring trapezoids, $T_i,$ $T_j$, a \emph{corner}, denoted $C_{i,j}$. A walk $\gamma'\in \caW^{l'}$ whose endpoints belong to two neighboring trapezoids necessarily has odd length. As we only consider even $8\leq l'\leq 20$, the endpoints of such a walk must belong to the same trapezoid. Given the winding convention of our volume, this implies there are six sites of  $ \partial\mathring{\Lambda}_K$ next to the corners that are not admissible right vertices. If $|\lambda_j|\geq 12K-2l'+10$, then since $|\partial\mathring{\Lambda}_K|=6K$, this implies
	\[
	|X_{l'}(\lambda_j)| \leq |\mathring{\Lambda}_K|-6 \leq \frac{|\lambda_j|}{2}+l'-11 < \frac{|\lambda_j|}{2}+\frac{\ell_0}{2}.
	\]

	Now, assume  $|\lambda_j|\leq 12K-2l'+9$. Let 
	\[\mathcal{T}_{\lambda_j}=T_{i_1}'\cup C_{i_1,i_{2}}\cup T_{i_2}'\ldots \cup C_{i_{m-1}, i_m} \cup T_{i_m}',\qquad 1\leq m \leq 7\] 
	be the smallest contiguous region comprised of trapezoids and corners that contains $\lambda_j$, where $T_{i_m}'\subseteq T_{j}$ has the same height as $T_{j}$ and its legs are translations of the legs of $T_{j}$, see Figure~\ref{fig:shadedregion}. Recall that $\gamma^{(M)}$ is the shortest loop containing $\mathring{\Lambda}_M$ for any $M\geq 1$. Since $\gamma^{(K+1)}$ is the smallest such loop contained in $C_i(\delta)$ and $\caT_{\lambda_j}$ is the smallest such volume containing $\lambda_j$, it must be that 
	\[|\lambda_j|\geq |\alpha_{K+1}|\quad \text{where}\quad\alpha_{K+1}=\gamma^{(K+1)}\cap \mathcal{T}_{\lambda_j}\in\caW_\Gamma\] 
	and the intersection is understood with respect to their edge sets. Let $\beta_K\in\caW_\Gamma$ be the arc of $\gamma^{(K)}$ enclosed by $\mathcal{T}_{\lambda_j}$ plus the the first $l'-2$ edges extending out from $\mathcal{T}_{\lambda_j}$, see Figure~\ref{fig:shadedregion}. As $| \gamma^{(M)}\cap \mathcal{T}_{\lambda_j}|$ is a increasing function of $M$, it follows that
	\[|\beta_K| \leq |\alpha_{K+1}|+2(l'-2)\leq|\lambda_j|+2(l'-2).\]
	Any walk $\gamma'\in\caW^{l'}$ that intersects $\lambda_j$ must intersect $\mathcal{T}_{\lambda_j}$, and so the endpoints of $\gamma'$ must belong to $\caV_{\beta_K}\cap\partial\mathring{\Lambda}_K$. The left most such vertex cannot be the right endpoint of any such $\gamma'$, and any pair of admissible right vertices from this set are separated by at least two edges of $\beta_K$. Thus,
	\[
	|X_{l'}(\lambda_j)| \leq \frac{|\beta_K|}{2} \leq \frac{\lambda_j}{2}+l'-2 < \frac{\lambda_j}{2}+ \frac{\ell_0}{2}.
	\]
	
	\textbf{Case $\lambda_j\cap C_o(\delta)\neq \emptyset$:}  Notice that $X_{l'}(\lambda_j)\subseteq \partial\Lambda_N$ in this corridor. As in the interior corridor case, $C_o(\delta)$ is the disjoint union of six trapezoids and the corners between them, and there are six vertices of $\partial\Lambda_N$ next to the corners that are not admissible right vertices. Thus, if $|\lambda_j|\geq 12N - 9l'/2 + 8$,
	\begin{equation}
		|X_{l'}(\lambda_j)| \leq |\partial\Lambda_N|-6  \leq \frac{|\lambda_j|}{2} + \frac{9l'}{4} - 10 < \frac{|\lambda_j|}{2}+\frac{\ell_0}{2}.
	\end{equation}
	where we use that $|\partial\Lambda_N|-6 = 6N-6.$
\begin{figure}
		\scalebox{.85}{\begin{tikzpicture}
			\def\n{.3};
			\newcommand{\hexcoord}[2]
			{[shift=(60:#1),shift=(120:#1),shift=(0:#2),shift=(-60:#2),shift=(0:#2),shift=(60:#2)]}
				\foreach \x in {1.5}
						\foreach \y in {-0.5*\n,-1*\n,-1.5*\n,0,0.5*\n,1*\n,1.5*\n,2*\n,2.5*\n,3*\n,3.5*\n}{
							\filldraw[fill=gray!25]\hexcoord{(\x+\y+\n)*\n}{(\y-2*\n)*\n}
							(0:\n)--(60:\n)--(120:\n)--(180:\n)--(-120:\n)--(-60:\n)--cycle;
							\filldraw[fill=gray!25,shift=(-60:\n), shift=(0:\n)]\hexcoord{(\x+\y+\n)*\n}{(\y-2*\n)*\n}
							(0:\n)--(60:\n)--(120:\n)--(180:\n)--(-120:\n)--(-60:\n)--cycle;
							\filldraw[fill=gray!25,shift=(-60:\n), shift=(0:\n)]\hexcoord{(\x+\y+2*\n)*\n}{(\y-3*\n)*\n}
														(0:\n)--(60:\n)--(120:\n)--(180:\n)--(-120:\n)--(-60:\n)--cycle;
							\filldraw[fill=gray!25,shift=(-60:\n), shift=(0:\n)]\hexcoord{(\x+\y+3*\n)*\n}{(\y-4*\n)*\n}
																					(0:\n)--(60:\n)--(120:\n)--(180:\n)--(-120:\n)--(-60:\n)--cycle;
							\filldraw[fill=gray!25,shift=(-60:\n), shift=(0:\n)]\hexcoord{(\x+\y+2.5*\n)*\n}{(\y-3.5*\n)*\n}
																					(0:\n)--(60:\n)--(120:\n)--(180:\n)--(-120:\n)--(-60:\n)--cycle;
						}
						\foreach \x in {1.5}
												\foreach \y in {-\n,-0.5*\n,0,0.5*\n,1*\n,1.5*\n,2*\n,2.5*\n,3*\n,3.5*\n,4*\n,4.5*\n}{
													\filldraw[fill=gray!25]\hexcoord{(\x-\y+\n)*\n}{(\y+\n)*\n}
													(0:\n)--(60:\n)--(120:\n)--(180:\n)--(-120:\n)--(-60:\n)--cycle;
\filldraw[fill=gray!25]\hexcoord{(\x-\y+2*\n)*\n}{(\y+\n)*\n}
													(0:\n)--(60:\n)--(120:\n)--(180:\n)--(-120:\n)--(-60:\n)--cycle;
\filldraw[fill=gray!25]\hexcoord{(\x-\y+3*\n)*\n}{(\y+\n)*\n}
													(0:\n)--(60:\n)--(120:\n)--(180:\n)--(-120:\n)--(-60:\n)--cycle;
\filldraw[fill=gray!25]\hexcoord{(\x-\y+4*\n)*\n}{(\y+\n)*\n}
													(0:\n)--(60:\n)--(120:\n)--(180:\n)--(-120:\n)--(-60:\n)--cycle;
\filldraw[fill=gray!25]\hexcoord{(\x-\y+5*\n)*\n}{(\y+\n)*\n}
													(0:\n)--(60:\n)--(120:\n)--(180:\n)--(-120:\n)--(-60:\n)--cycle;
												}
			\foreach \j in {0,1,2,3,4}
			\foreach \k in {-5,5}
			\foreach \x in {0,60,120,180,240,300}
			\foreach \y in {240,300}
			\draw[shift=(\x+120:\n), shift=(\x+60:\n),thick, color=gray]\hexcoord{\j*\n}{\k*\n}
			(\x+\y:\n)--(\x+\y+60:\n);
			\foreach \h in{0,0.5,1,1.5,2,2.5,3,3.5,4,4.5,5}
			\foreach \j in {3-\h}
			\foreach \k in {\h,-\h}
			\foreach \x in {0,60,120,180,240,300}
			\foreach \y in {240,300}
			\draw[shift=(\x+120:\n), shift=(\x+60:\n), thick,color=gray]\hexcoord{\j*\n}{\k*\n}
			(\x+\y:\n)--(\x+\y+60:\n);
			\foreach \j in {0,1,2,3}
						\foreach \k in {-4,4}
						\foreach \x in {0,60,120,180,240,300}
						\foreach \y in {240,300}
						\draw[shift=(\x+120:\n), shift=(\x+60:\n),thick, color=gray]\hexcoord{\j*\n}{\k*\n}
						(\x+\y:\n)--(\x+\y+60:\n);
						\foreach \h in{0,0.5,1,1.5,2,2.5,3,3.5,4,4.5,5,5}
						\foreach \j in {2.4-\h}
						\foreach \k in {\h,-\h}
						\foreach \x in {0,60,120,180,240,300}
						\foreach \y in {240,300}
						\draw[shift=(\x+120:\n), shift=(\x+60:\n),thick, color=gray]\hexcoord{\j*\n}{\k*\n}
						(\x+\y:\n)--(\x+\y+60:\n);
												
%
						\foreach \k in {-3,-2.5,-2,-1.5,-1,-0.5}
																											\foreach \j in {8*\n+\k}
																											\draw[very thick, color=Green, shift=(240:\n), shift=(300:\n)]\hexcoord{\j*\n}{\k*\n} (0:\n)--(300:\n)--(240:\n);
											\foreach \k in {-3,-2.5,-2,-1.5,-1,-0.5}
																																\foreach \j in {8*\n+\k}
																																\draw[very thick, color=Green, shift=(240:\n), shift=(300:\n)]\hexcoord{\j*\n}{-\k*\n} (180:\n)--(240:\n)--(300:\n);
		\foreach \k in {0}
																																		\foreach \j in {8*\n+\k}
																																		\draw[very thick, color=Green, shift=(240:\n), shift=(300:\n)]\hexcoord{\j*\n}{-\k*\n} (240:\n)--(300:\n);																														
									\foreach \k in {-2,-1,0,1}
																																			\foreach \j in {2.5*\n+\k}
																																			\draw[very thick, color=Green, shift=(240:\n), shift=(300:\n)]\hexcoord{\j*\n}{3.5*\n} (120:\n)--(180:\n)--(240:\n);
																																			\foreach \k in {2}
																																																																						\foreach \j in {2.5*\n+\k}
																																																																						\draw[very thick, color=Green, shift=(240:\n), shift=(300:\n)]\hexcoord{\j*\n}{3.5*\n} (180:\n)--(240:\n);
																																				\foreach \k in {-1,0,1}
																																																																\foreach \j in {2.5*\n+\k}
																																																																\draw[very thick, color=Green, shift=(240:\n), shift=(300:\n)]\hexcoord{\j*\n}{-3.5*\n} (300:\n)--(0:\n)--(60:\n);
						\draw[very thick, color=Green, shift=(240:\n), shift=(300:\n)]\hexcoord{(2.5*\n-2*\n)*\n}{-3.5*\n} (0:\n)--(60:\n);
						\foreach \k in {2}
																																																																						\foreach \j in {2.5*\n+\k}
																																																																						\draw[very thick, color=Green, shift=(240:\n), shift=(300:\n)]\hexcoord{\j*\n}{-3.5*\n} (300:\n)--(0:\n);
												\draw[very thick, color=Green, shift=(240:\n), shift=(300:\n)]\hexcoord{(2.5*\n-2*\n)*\n}{-3.5*\n} (0:\n)--(60:\n);																																								

%

%

%

					
					\foreach \k in {-5.5,-5,-4.5,-4,-3.5,-3,-2.5,-2,-1.5,-1,-0.5,0}
																					\foreach \j in {12*\n+\k}
																					\draw[ thick, color=gray, shift=(240:\n), shift=(300:\n)]\hexcoord{\j*\n}{\k*\n} (60:\n)--(120:\n)--(180:\n);
					\foreach \k in {-5.5,-5,-4.5,-4,-3.5,-3,-2.5,-2,-1.5,-1,-0.5,0}
																										\foreach \j in {12*\n+\k}
																										\draw[ thick, color=gray, shift=(240:\n), shift=(300:\n)]\hexcoord{\j*\n}{-\k*\n} (0:\n)--(60:\n)--(120:\n);
			\foreach \k in {-2,-1,0,1,2,3,4}
																													\foreach \j in {2.5*\n+\k}
																													\draw[thick, color=gray, shift=(240:\n), shift=(300:\n)]\hexcoord{\j*\n}{5.5*\n} (300:\n)--(0:\n)--(60:\n);
																														\foreach \k in {-1,0,1,2,3,4}
																																																										\foreach \j in {2*\n+\k}
																																																										\draw[ thick, color=gray, shift=(240:\n), shift=(300:\n)]\hexcoord{\j*\n}{-6*\n} (300:\n)--(0:\n)--(60:\n);
																																																										
																																																\draw[shift=(120:\n), shift=(60:\n), fill=gray]\hexcoord{7*\n}{-3*\n} (300:8*\n) node[Green]{$\beta_{K}$};														
\draw[shift=(120:\n), shift=(60:\n), fill=gray]\hexcoord{4.5*\n}{-4.5*\n} (300:8*\n) node[Green]{\small $\left.
        \begin{array}{cc}
                \\ \\ \\ \\ \\
    	\end{array}
    \right\}$};
    \draw[shift=(120:\n), shift=(60:\n), fill=gray]\hexcoord{4.5*\n}{-3.5*\n} (300:8*\n) node[Green]{\small $l'-2$};

							\foreach \k in {-3,-2.5,-2,-1.5,-1,-.5,0}
																							\foreach \j in {8*\n+\k}
																							\draw[very thick, color=blue, shift=(240:\n), shift=(300:\n)]\hexcoord{\j*\n}{\k*\n} (60:\n)--(120:\n)--(180:\n);																																
					\foreach \k in {-3.5,-3,-2.5,-2,-1.5,-1,-0.5,0}
																										\foreach \j in {8*\n+\k}
																										\draw[very thick, color=blue, shift=(240:\n), shift=(300:\n)]\hexcoord{\j*\n}{-\k*\n} (0:\n)--(60:\n)--(120:\n);
					\foreach \k in {-3.5}
																															\foreach \j in {8*\n+\k}
																															\draw[very thick, color=blue, shift=(240:\n), shift=(300:\n)]\hexcoord{\j*\n}{-\k*\n} (300:\n)--(0:\n);	
								\foreach \k in {-3.5,-3,-2.5,-2,-1.5,-1}
																								\foreach \j in {9*\n+\k}
																								\draw[very thick, color=red, shift=(240:\n), shift=(300:\n)]\hexcoord{\j*\n}{\k*\n} (60:\n)--(120:\n)--(180:\n);	
								\foreach \k in {-4}
																															\foreach \j in {9*\n+\k}
																															\draw[very thick, color=red, shift=(240:\n), shift=(300:\n)]\hexcoord{\j*\n}{\k*\n} (60:\n)--(120:\n);																																																								\foreach \k in {5.5,5,4.5}
																																																\foreach \j in {8*\n-\k}
																																																\draw[very thick, color=red, shift=(240:\n), shift=(300:\n)]\hexcoord{\j*\n}{\k*\n} (180:\n)--(240:\n)--(300:\n);																								
						\foreach \k in {-4,-3.5,-3,-2.5,-2,-1.5}
																											\foreach \j in {9*\n+\k}
																											\draw[very thick, color=red, shift=(240:\n), shift=(300:\n)]\hexcoord{\j*\n}{-\k*\n} (0:\n)--(60:\n)--(120:\n);
																			  \draw[shift=(120:\n), shift=(60:\n), fill=gray]\hexcoord{7*\n}{0*\n} (300:8*\n) node[blue]{\small $\alpha_{K+1}$};																																		
																																															\foreach \k in {1.5}
																																																																	\foreach \j in {8.5,9.5,10.5,11.5,12.5,13.5}
																																																																	\draw[very thick, color=red, shift=(240:\n), shift=(300:\n)]\hexcoord{\j*\n}{\k*\n} (120:\n)--(180:\n)--(240:\n);	
																																															\foreach \k in {-1}
																																																																																																																\foreach \j in {9,10,11,12}
																																																																																																																\draw[very thick, color=red, shift=(240:\n), shift=(300:\n)]\hexcoord{\j*\n}{\k*\n} (300:\n)--(0:\n)--(60:\n);	
																																																																																										\foreach \k in {-0.5,0,0.5,1}
																																																																																																																		\foreach \j in {13*\n+\k}
																																																																																																																		\draw[very thick, color=red, shift=(240:\n), shift=(300:\n)]\hexcoord{\j*\n}{\k*\n} (60:\n)--(120:\n)--(180:\n);		
						\foreach \k in {-4.5,-5,-5.5}
																						\foreach \j in {\n-\k}
																						\draw[very thick, color=red, shift=(240:\n), shift=(300:\n)]\hexcoord{\j*\n}{\k*\n} (0:\n)--(60:\n)--(120:\n);																																																																																																															\foreach \k in {1}
																																																																																																																																																																																																																																			\foreach \j in {13*\n+\k}
																																																																																																																																																																																																																																			\draw[very thick, color=red, shift=(240:\n), shift=(300:\n)]\hexcoord{\j*\n}{\k*\n} (60:\n)--(0:\n);	
					\foreach \k in {4.5}
																																																																																																																																																																																																																																								\foreach \j in {3.5,4.5}
																																																																																																																																																																																																																																								\draw[very thick, color=red, shift=(240:\n), shift=(300:\n)]\hexcoord{\j*\n}{\k*\n} (240:\n)--(180:\n)--(120:\n);																																				\foreach \k in {6}\foreach \j in {2}
		\draw[very thick, color=red, shift=(240:\n), shift=(300:\n)]\hexcoord{\j*\n}{\k*\n} (240:\n)--(180:\n);																																																																																																																																																																																									\draw[shift=(120:\n), shift=(60:\n), fill=gray]\hexcoord{15*\n}{-1*\n} (300:8*\n) node[red]{\small $\lambda_{j}$};												
															
		\end{tikzpicture}}
		\caption{The region $T_{\lambda_j}$ in shaded gray and walks $\alpha_{K+1}$ and $\beta_K$ for a walk $\lambda_j$ that intersects $C_i(\delta)$.}\label{fig:shadedregion}
	\end{figure}
	
	Suppose that $|\lambda_j|\leq 12N - 9l'/2 + 7$. Each boundary vertex $v\in\partial\Lambda_N$ shares an edge with a unique vertex of $\gamma^{(N)}$, and any walk that has $v$ as an endpoint must contain this edge. Hence, the set of vertices from $\gamma^{(N)}$ that shares an edge with a boundary site can be used to bound $|X_{l'}(\lambda_j)|.$
	
	We define $\caT_{\lambda_j}$ as in the interior case, and note that $\gamma^{(N-\delta+1)}$ is the smallest fundamental loop contained in $C_o(\delta)$. Let
	\[
	\alpha_{M}=\gamma^{(M)}\cap \caT_{\lambda_j}, \qquad M \in\{N, N-\delta+1\}\,.
	\]
	and take $\beta_{N}$ to be the walk that extends $\alpha_N$ by $l'-2$ edges along $\gamma^{(N)}$ on both sides. Similar to the interior case, $|\alpha_{N-\delta+1}|\leq |\lambda_j|$. Moreover, any walk $\gamma'\in \caW^{l'}$ that intersects $\lambda_j$ must intersect $\caT_{\lambda_j}$, and hence has endpoints indexed by the vertices of $\beta_N$ that share an edge with a vertex $v\in\partial\Lambda_N$. In contrast to the interior case, though, \[|\beta_{N}|=|\alpha_N|+2(l'-2)>|\alpha_{N-\delta+1}|+2(l'-2),\] 
	and one needs to take more care to obtained the desired bound. 
	
	To this end, let $\kappa\in\bZ_{\geq 0}$ denote the number of corners contained in $\caT_{\lambda_j}$. Given the definitions of $\Lambda_{N,K}$ and $\caT_{\lambda_j}$, a simple geometric argument gives
	\[
	|\alpha_N| = |\alpha_{N-\delta+1}| + 2(\kappa+1)(\delta-1).
	\]
	Considering the set of boundary vertices indexed by $\beta_N$, the left most one is not an admissible right vertex, as are one of the vertices next to each of the $\kappa$ corners enclosed by $\alpha_N$. Moreover, the admissible right vertices are separated by at least two edges of $\beta_N$. Therefore,
	\[
	|X_{l'}(\lambda_j)| \leq \frac{\beta_N-\kappa}{2} \leq \frac{|\lambda_j|}{2} + l'-\frac{3}{2} + (\kappa+1)\left(\delta-\frac{3}{2}\right) \leq  \frac{|\lambda_j|}{2}+\frac{\ell_0}{2}\,.
	\]
	where we use that $\kappa\leq 6$ and $\delta\leq \frac{l'}{4}$.  This completes the proof.
	
\end{proof}

	\subsection{Proof of Theorem~\ref{thm:KP_condition}. Convergence of the cluster expansion}\label{sec:convergence_proof}
	\par
	Convergence for the case $K=0$ was proved in \cite[Appendix]{kennedy:1988}. The proof for $K>20$ follows the analog of \cite{kennedy:1988} using $\epsilon=.0086$ the updated cardinality bounds from Section~\ref{sec:Cardinality_bounds}. For ease of comparison, the proof is structured in the same was as in \cite{kennedy:1988}, and we verify the criterion \eqref{m0_convergence} case by case for $|\gamma|=3, 4,5,6$ and $|\gamma|>6$, distinguishing when $\gamma$ is a walk or a loop. We recall that $\cW^l$ and $\cL^l$ denote, respectively, the sets of walks and loops of length $l$ in $\caP_{N,K}$ and that $w(l) = 3^{-l+1}e^{a(l)+\epsilon l}.$
	\vskip12pt
	
	\textbf{Case $\gamma\in\caW^3$:} There are precisely six walks of length three in $\Lambda_{N,K}$, one at each of the exterior corners of $\Lambda_{N,K}$, and the argument is the same for each of them.  
	For any such $\gamma$, the number of loops $\gamma'\in \caL^{l}$ of length $l<10$ that intersect $\gamma$, i.e. $M_{w\ell}(3,l)$, and walks $\gamma'\in \caW^{l}$ of length $l\leq20$ that intersect $\gamma$, i.e. $M_{ww}(3,l)$, can be explicitly counted using a computer algorithm (see Appendices \ref{subsec:saws_col1-4} and \ref{subsec:saws_col5}). The resulting values are given in the first column of Table~\ref{table:saw}, where again we note that there are no loops of length $l<6$ or $l=8$ in the hexagonal lattice. 
	Considering \eqref{split} and substituting the values from this table, applying Lemma~\ref{lem:loopbd}, and utilizing Table~\ref{table:loops} yields
	\begin{align*}
			\sup_{\gamma\in\caW^3}\frac{1}{a(3)}\sum_{l=3}^{\infty}w(l)|\cP_{N,K}^{\gamma,l}| \leq &  \frac{1}{a(3)}\left(\sum_{3\leq l\leq 20} w(l)M_{ww}(3,l)+\sum_{l=6,10} w(l)M_{w\ell}(3,l) \right. \\
		&\phantom{\frac{1}{a(3)}(}+ \sum_{l\geq 21} w(l)\left(2^{l-4}+(2l+95)2^{l-10}\right) \\
		&\phantom{\frac{1}{a(3)}(}\left.+\sum_{6\leq l \leq 14} w(2l)N_\ell^\Gamma(2l) + \sum_{l\geq 15}w(2l)2^{2l-3}\right) \\
		\leq & \:0.9249
	\end{align*}
This establishes \eqref{m0_convergence} for $\gamma\in \caW^3$. \\
\\
\textbf{Case $\gamma\in \cW^4,\cW^5,\cW^6, \caL^6$:} The desired bounds in all of these cases follow from the same argument. In the case $\gamma\in\caW^4$, applying Lemma~\ref{lem:loopbd}-\ref{lem:small_walks} and using the values in Table~\ref{table:saw}, one finds
\begin{align*}
	\sup_{\gamma\in\caW^4}\frac{1}{a(4)}\sum_{l=3}^{\infty}w(l)|\cP_{N,K}^{\gamma,l}| \leq &  \frac{1}{a(4)}\left(\sum_{3\leq l\leq 20} w(l)S_{ww}(4,l)+\sum_{l=6,10} w(l)M_{w\ell}(4,l) \right. \\
	&\phantom{\frac{1}{a(4)}(}+ \sum_{l\geq 21} w(l)\left(2^{l-4}+2(2l+95)2^{l-10}\right) \\
	&\phantom{\frac{1}{a(4)}(}\left.+\sum_{6\leq l \leq 14} 2w(2l)N_\ell^\Gamma(2l) + \sum_{l\geq 15}w(2l)2^{2l-2}\right) \\
	\leq & \:0.9210
\end{align*}
The bounds for $\gamma\in\cW^5,\cW^6, \caL^6$ use the same argument up to appropriate modifications of the sums. This results in the following values:
\begin{align}
\sup_{\gamma\in\caW^5}\frac{1}{a(5)}\sum_{l=3}^{\infty}w(l)|\cP_{N,K}^{\gamma,l}| \leq &  {\:0.9416}\\ \sup_{\gamma\in\caW^6}\frac{1}{a(6)}\sum_{l=3}^{\infty}w(l)|\cP_{N,K}^{\gamma,l}|  \leq &  {\:0.9127}\\
\sup_{\gamma\in\caL^6}\frac{1}{a(6)}\sum_{l=3}^{\infty}w(l)|\cP_{N,K}^{\gamma,l}|  \leq &  {\:0.8778}
\end{align}

\begin{table}[t]
	\centering
	\begin{tabular}{ |p{3cm}||p{3cm}|p{3cm}|}
		\hline
		\multicolumn{2}{|c|}{$Q(l)$ for $l\leq 19$} \\
		\hline
		$l$ & $Q(l)$\\
		\hline
		3&1\\
		5&2\\
		7&7\\
		9&20\\
		11&64\\
		13&202\\
		15&647\\
		17&2094\\
		19&6803\\
		\hline 
	\end{tabular}
	\medskip
	
	\caption{An upper bound on $Q(l)$, the number of walks with odd length $l\leq 19$ whose endpoints are on opposite sides of a fixed inner or outer corner. }
	\label{table:Q}
\end{table}

\textbf{Case $\gamma\in \cW^l$ with $l>6$:} We again apply Lemma~\ref{lem:loopbd} to bound 
\begin{align}
	|M_{w\ell}(l,l')| & \leq l2^{l'-3} \qquad \forall \, l'\geq 6 \label{wl_all}\\
	|M_{ww}(l,l')| & \leq 2^{l'-4}+(l-2)(2l'+95)2^{l'-10} \leq l(2l'+95)2^{l'-10}\qquad \forall\,  l'\geq 21\,. \label{ww_big}
\end{align}
For even $4\leq l'\leq 20$, we bound $|M_{ww}(l,l')|$ using Lemma~\ref{m3} and Table \ref{table:R}. 

It is left to bound $|M_{ww}(l,l')|$ for odd $3\leq l'\leq 19$. To this end, as discussed in Lemma~\ref{m3}, recall that the endpoints $\mathrm{ep}(\gamma')=\{v,w\}$ of any walk $\gamma'$ with length $l'\leq 19$ must both belong to the same boundary component, $v,w\in\partial\Lambda_N$ or $v,w\in\partial\mathring{\Lambda}_K$. Moreover, if $l'$ is odd, these endpoints must belong to different sides of a corner, see Figure~\ref{fig:case1and2_polymers}. Hence, enumerating the six corners, this partitions $\caW^{l'}$ into twelve sets:
\begin{equation}\label{wl_decomp}
\caW^{l'} = \bigcup_{\#\in\{i,o\}}\bigcup_{j=1}^6 \caW_{j}^{l',\#}
\end{equation}
where $\gamma'\in \caW_j^{l',i}$ if $\gamma'$ crosses the $j$-th corner and its endpoint belong to the inner boundary, $\partial\mathring{\Lambda}_K$, and analogously for $\gamma'\in \caW_j^{l',o}$ given the endpoints are on the outer boundary, $\partial\Lambda_N$. Given a fixed choice of boundary component, i.e. $\#\in\{i,o\}$, the sets $\caW_j^{l',\#}$ are the same for all $j$ up to rotations. The number of walks of length $l'$ around a fixed outer or inner corner can be enumerated using the code in Appendix \ref{subsec:Q}. Letting 
\[Q(l')=\max\left\{\left|\caW_1^{l',o}\right|, \, \left|\caW_1^{l',i}\right|\right\}\] 
denote the larger of these two numbers for each $3\leq l' \leq 19$ odd, one obtains the values in Table~\ref{table:Q}. 

Now, let 
\[
\kappa^{w}(l,l') = \sup_{\gamma\in \caW^l}\left|\left\{(j,\#): \gamma'\nmid \gamma \; \text{for some} \; \gamma'\in \caW_j^{l',\#}\right\}\right|,\,
\]
be the maximal number of parts from \eqref{wl_decomp} that contains a walk $\gamma'$ that intersects a fixed polymer $\gamma\in \caW^l$. Then, one  has that for any odd $3 \leq l' \leq 19$
\[
|M_{ww}(l,l')| \leq \kappa^{w}(l,l') Q(l').
\]

The desired bound on \eqref{m0_convergence} will follow from showing that $\kappa^{w}(l,l')/l \leq 1/7$ for all $l>6$. This is trivial if $\kappa^{w}(l,l')=1$. If $\kappa^{w}(l,l') >1$, then it must be that
\begin{equation}\label{l_bound}
l \geq \left(\kappa^{w}(l,l')-1\right)d_{\rm min}(l')
\end{equation}
where $d_{\rm min}(l')$ denote the minimal graph distance between two walks of length $l'$ that belong to different parts of the partition from \eqref{wl_decomp}, i.e.
\begin{align}
d_{\rm min}(l')& =\min\left\{d(\gamma_1,\gamma_2):\gamma_i\in \caW_{j_i}^{l',\#_i},\; (j_1,\#_1)\neq (j_2,\#_2)\right\}\label{dmin}\\
& = \min\left\{2K-2(l'-4), \, 2(N-K)-l'+1\right\}\nonumber \\
&\geq 2K_\Gamma-30\,.\nonumber
\end{align}
Above, the second equality follows from considering all the possible pairs $(j_1,\#_1)$ and $(j_2,\#_2)$ and walks in $\caW^{l'}$, and the inequality uses that $N -K\geq N_\Gamma$ (by the assumption in Theorem \ref{thm:gs_indistinguishability}) and $K\geq K_\Gamma$. Since $l\geq d_{\rm min}(l')$ when $\kappa^{w}(l,l')>1$, it follows from \eqref{l_bound} and $K_\Gamma = 25$ that
\begin{equation}\label{kappa_l_bound}
\frac{\kappa^{w}(l,l')}{l} \leq \frac{2}{d_{\rm min}(l')}  \leq \frac{1}{7}
\end{equation}
as desired.

Thus using the bounds from \eqref{wl_all}-\eqref{ww_big}, \eqref{kappa_l_bound}, Lemma~\ref{m3} and the values in Tables~\ref{table:R}-\ref{table:Q}, we obtain
\begin{equation}
	\begin{aligned}
		\sup_{\gamma\in\caW^l}\frac{1}{a(\gamma)}\sum_{l'=3}^{\infty}w(l')|\cP_{N,K}^{\gamma,l'}| &\leq \frac{1}{0.15l}\left(\sum_{\substack{3\leq l'\leq 20: \\
				l' \,\mathrm{ even}}}M_{ww}(l,l')w(l')+ \sum_{\substack{3\leq l'\leq 20: \\
				l' \,\mathrm{ odd}}}\kappa^{w}(l,l')Q(l')w(l')
		\right.\\&\left.+l \sum_{l'= 3}^{\infty} 2^{2l'-3}w(2l')+ l\sum_{l'= 21}^{\infty}(2l'+95)2^{l'-10}w(l')\right)\leq0.9997
	\end{aligned}
\end{equation}

\begin{table}[t]
	\centering
	\begin{tabular}{|l||l|l|l|l|l|l|l|}
		\hline
		$\gamma'\backslash \gamma$ & $\mathcal{W}^3$ & $\mathcal{W}^4$ & $\mathcal{W}^5$ & $\mathcal{W}^6$ & $\mathcal{L}^{6}$ & $\mathcal{W}^{>6}$ & $\mathcal{L}^{>6}$\\
		\hline\hline
		$\mathcal{W}^3$ & 0.3688 & 0.3425 & 0.2906 & 0.2740 & 0.2740 & 0.1826 & 0.1279\\ \hline
		$\mathcal{W}^4$ & 0.2581 & 0.2397 & 0.3050 & 0.2876 & 0.1917 & 0.2876 & 0.2684\\ \hline
		$\mathcal{W}^5$ & 0.0959 & 0.0891 & 0.0756 & 0.0712 & 0.0712 & 0.0475 & 0.0332\\ \hline
		$\mathcal{W}^6$ & 0.0336 & 0.0312 & 0.0397 & 0.0374 & 0.0249 & 0.0457 & 0.0407\\ \hline
		$\mathcal{W}^7$ & 0.0480 & 0.0520 & 0.0442 & 0.0416 & 0.0416 & 0.0278 & 0.0194\\ \hline
		$\mathcal{W}^8$ & 0.0250 & 0.0261 & 0.0296 & 0.0302 & 0.0232 & 0.0836 & 0.0737\\ \hline
		$\mathcal{W}^{9}$ & 0.0171 & 0.0204 & 0.0183 & 0.0181 & 0.0181 & 0.0121 & 0.0085\\ \hline
		$\mathcal{W}^{10}$ & 0.0086 & 0.0097 & 0.0102 & 0.0106 & 0.0085 & 0.0404 & 0.0253\\ \hline
		$\mathcal{W}^{21>l>10}$ & 0.0189 & 0.0239 & 0.0220 & 0.0243 & 0.0219 & 0.0581 & 0.0274\\ \hline
		$\mathcal{W}^{>20}$ & 0.0300 & 0.0472 & 0.0565 & 0.0687 & 0.0927 & 0.0721 & 0.0721\\ \hline
		$\mathcal{L}^{6}$ & 0.0168& 0.0312 & 0.0397 & 0.0374 & 0.0873 & 0.1163 & 0.1163\\ \hline
		$\mathcal{L}^{10}$ & 0.0014 & 0.0031 & 0.0041 & 0.0035 & 0.0106 & 0.0165 & 0.0165\\ \hline
		$\mathcal{L}^{>10}$ & 0.0027 & 0.0049 & 0.0063 & 0.0079 & 0.0119 & 0.0092 & 0.0092\\ \hline\hline
		Total & 0.9249 & 0.9210 & 0.9416 & 0.9127 & 0.8778 & 0.9997 & 0.8389\\ \hline
	\end{tabular}
	\medskip
	
	\caption{Bounds used to verify \ref{eqn:cluster_condition} for the $m$-decorated AKLT model. The sum of the entries in each column must be less than $1$ for the Koteck\'{y}-Preiss condition to be satisfied. Each column is indexed by the length and type of the fixed polymer $\gamma$, and each row corresponds to the bound on the sum for the $\gamma'\nmid \gamma$ of a particular type and length. }
	\label{table:totals}
\end{table}

\textbf{Case $\gamma\in \caL^l$ with $l>6$:} This proceeds analogously to the previous case.  Here again, it will be important that $K$ is sufficiently large, specifically we will use $K\geq K_\Gamma=25$, as required in the statement of Theorem \ref{thm:gs_indistinguishability}. 

Now, let 
\[
\kappa^{\ell}(l,l') = \sup_{\gamma\in \caL^l}\left|\left\{(j,\#): \gamma'\nmid \gamma \; \text{for some} \; \gamma'\in \caW_j^{l',\#}\right\}\right|.
\]
Recall that since there are no loops of length eight, that one has that the minimal loop length in this case is $l\geq 10$. Hence, the result follows from showing $\frac{\kappa^{\ell}(l,l')}{l}\leq\frac{1}{10}$ for all $l\geq 10$. As before, this is trivial if $\kappa^{\ell}(l,l')=1$, and if $\kappa^{\ell}(l,l')>1$ then
$$
l \geq \left(\kappa^{\ell}(l,l')-1\right)d_{\rm min}(l')
$$
with $d_{\rm min}$ as in \eqref{dmin}. Since $l\geq d_{\rm min}(l')\geq 2K_\Gamma -30$, and $K_\Gamma = 25$ one finds that
\begin{equation}\label{kappa_l_bound_2}
	\frac{\kappa^{\ell}(l,l')}{l} \leq \frac{2}{d_{\rm min}(l')}  \leq \frac{1}{10}
\end{equation}
as desired.
By Lemma~\ref{lem:loopbd} and the bounds in Lemma \ref{m3} we have
\begin{equation}
	\begin{aligned}
		\sup_{\gamma\in \caL^{l}}\frac{1}{a(\gamma)}\sum_{l'=3}^{\infty}w(l')|\cP_{N,K}^{\gamma,l'}| &\leq\frac{1}{0.15l}\left( \sum_{\substack{3\leq l'\leq 20: \\
				l' \,\mathrm{ even}}}M_{w \ell}(l,l')w(l')+  \sum_{\substack{3\leq l'\leq 20: \\
				l' \,\mathrm{ odd}}}\kappa^{\ell}(l,l')Q(l')w(l')\right.
		\\&\left.+l \sum_{l'= 3}^{\infty} 2^{2l'-3}w(2l')+ l\sum_{l'= 21}^{\infty}(2l'+95)2^{l'-10}w(l')\right)\\&\leq0.8389
	\end{aligned}
\end{equation}
where in the final inequality we have used \eqref{kappa_l_bound_2} whenever $K\geq 25$.
Hence, for any $\gamma\in \caP_{N,K}$, we have that
\begin{equation*} 
	\sum_{\substack{\gamma'\in\cP_{N,K}\\\gamma\nmid\gamma'}}w(|\gamma'|) <a(|\gamma|)
\end{equation*}
In summary, this proves Theorem \ref{thm:KP_condition} with $\epsilon=0.0086$.

\hfill\qedsymbol\\

As a general note, one can compare our Tables \ref{table:loops}, \ref{table:N}, \ref{table:saw}, \ref{table:R}, \ref{table:totals}, \ref{table:Q} with the Tables I, III, II, VI, IV, V respectively in \cite{kennedy:1988}. Our Table \ref{table:saw} corresponds to their Table II where the only discrepancy is the last two entries in column $3$, which is due to the fact that our polymer set $\caP_{N,K}$ includes walks that begin and end on the inner boundary. Tables \ref{table:loops}, \ref{table:N}, \ref{table:R}, \ref{table:Q} agree exactly with Tables I, II, VI, V respectively of \cite{kennedy:1988}. These values can be checked using the code provided in Appendix \ref{subsec:Pb},\ref{subsec:N},\ref{subsec:R},\ref{subsec:Q}. Table \ref{table:totals} differs from Table IV of \cite{kennedy:1988} in the following four ways: (1) we set $\epsilon=0.0086$ which changes all values a small amount; (2) the row of values for $\caW^{>20}$ is calculated with a slightly improved bound \eqref{95} for all columns; (3) column $6$ and $7$ use different bounds given in Lemma \ref{m3} (the worst-case scenario argument from \cite{kennedy:1988} is inaccurate) for rows corresponding to $\caW^8$, $\caW^{10}$, $\caW^{21>l>10}$, $\caW^{>20}$; (4) and lastly our column $5$ contains values that do not correspond to those Table IV of \cite{kennedy:1988} when $\epsilon_h=0$. We could not recover the values of column $5$ of Table IV of \cite{kennedy:1988} with the formulae and numbers provided; however the result is unaffected since all column sums are less than $1$.
	
\section{Proof of Theorem~\ref{thm:gs_indistinguishability} for the Hexagonal Case}\label{sec:proof}
We now give the proof of the main result.

\begin{proof}[Proof of Theorem~\ref{thm:gs_indistinguishability}]
	Fix any $N,K$ such that $N\geq K+53\geq 78,$ and $m\geq 0$. Since $\caP_{N,K}^{\rm bulk}\subseteq \caP_{N,K}$, it follows by Theorem~\ref{thm:KP_condition} that Theorem~\ref{thm:kp} holds for both sets of polymers. Recalling \eqref{mweight} this implies
	\begin{equation}\label{bulk_log}
		\log(\Phi_{N,K}^{{\rm bulk},(m)})=\log(Z(\caP_{N,K}^{{\rm bulk}},W_m,\delta))
	\end{equation}
	has a convergent cluster expansion given by \eqref{cluster_expansion}. Since the weight function $W_m$ and the Ursell function are both real-valued, one sees $\Phi_{N,K}^{{\rm bulk},(m)}>0$. Hence, Lemma~\ref{lem:op_norm} holds. Combining this with \eqref{two_bounds}, one finds 
	\begin{align}
		\left|\braket{\Psi(f)}{A\Psi(f)} - \omega^{(m)}(A)\right|
		\leq \|A\|&\left[\sup_{\Omega^{\partial\Lambda_N}}H\left(\frac{\Phi^{\mathrm{bulk},(m)}_{N,K}}{Z^{\mathrm{bulk},(m)}_N}, \frac{\Phi_{N,K}^{(m)}}{Z_N^{(m)}}\right)\right. \label{H_bound}\\
		&+\left.\limsup_{M}H\left(\frac{\Phi^{\mathrm{bulk},(m)}_{N,K}}{Z^{\mathrm{bulk},(m)}_N}, \frac{\Phi^{\mathrm{bulk},(m)}_{M,K}}{Z^{\mathrm{bulk}}_M,(m)}\right)\right] \nonumber
	\end{align}
	where $\omega^{(m)}$ is the unique infinite volume ground state for the AKLT model on the $m$-decorated hexagonal lattice, $\Psi(f)\in \caG_{\Lambda_N^{(m)}}$ is any normalized ground state, and $A\in \caA_{\Lambda_{K-1}^{(m)}}$. Here, recall that
	\[H(f,g) = D_\infty(f||g)e^{D_\infty(f||g)}\quad \text{where}\quad D_\infty(f||g)= \|\log(f)-\log(g)\|_{L^\infty(d\rho_{\mathring{\Lambda}_K})}\] 
	is the classical Renyi divergence. 
	
	We proceed by using the cluster expansion for $\Phi_{N,K}^{{\rm bulk},(m)}$ and $\Phi_{N,K}^{(m)}$ with \eqref{divergence_bulk}-\eqref{divergence_boundary} to appropriately bound $D_{\infty}$ for the two cases needed for \eqref{H_bound}. The strategy is to first simplify each term on the right hand side in \eqref{H_bound} by canceling the common terms in the cluster expansions. We note that \eqref{bulk_log} also holds for $\Phi_{N,K}^{(m)}$ with $\caP_{N,K}$. Hence, it is sufficient to consider the set of undecorated set of polymers with the decorated with function $W_m$.
	
	Since $Z_N^{\#,(m)} = \Phi_{N,0}^{\#,(m)}$ for $\#\in\{\cdot,{\rm bulk}\}$, and also Theorem~\ref{thm:KP_condition} holds for $K=0$, by Lemma~\ref{lem:poly_rep},
	\begin{align}
		\log\left(\frac{\Phi_{N,K}^{\#,(m)}}{Z_N^{\#,(m)}}\right) & =\log\left(2^{(m+1)|\caE_{\mathring{\Lambda}_{K}}|}\right)+\sum_{C\in \caC(\caP_{N,K}^\#)}W_m^T(C)-\sum_{C'\in \caC(\caP_{N,0}^\#)}W_m^T(C')\,.\label{log_quotient}
	\end{align}
	Let $M>N$ and set
	\begin{equation}\label{cluster_diffs}
	\Delta_{M,N,K}^{\rm bulk} = \caD_{M,N,K}^{\rm bulk}\triangle \caD_{M,N,0}^{\rm bulk}, \quad \Delta_{N,K}=\caD_{N,K}\triangle \caD_{N,0}
	\end{equation}
where
\[
\caD_{M,N,K}^{\rm bulk}=\caC(\caP_{M,K}^{\rm bulk})\setminus \caC(\caP_{N,K}^{\rm bulk}), \quad \caD_{N,K} = \caC(\caP_{N,K})\setminus \caC(\caP_{N,K}^{\rm bulk})\,.
\]
Since $\caP_{N,K}^{\rm bulk}\subseteq\caP_{M,K}^{\rm bulk}$ for all $K\geq 0$, inserting \eqref{log_quotient} into \eqref{divergence_bulk} gives
	\begin{align}
		D_\infty\left(\frac{\Phi_{M,K}^{{\rm bulk},(m)}}{Z_M^{{\rm bulk},(m)}} \left\| \frac{\Phi_{N,K}^{{\rm bulk},(m)}}{Z_N^{{\rm bulk},(m)}}\right)\right. & = \sup_{\Omega^{\partial\mathring{\Lambda}_K}}\left|\sum_{C\in\caD_{M,N,K}^{\rm bulk}}W_m^T(C) - \sum_{C\in\caD_{M,N,0}^{\rm bulk}}W_m^T(C)\right| \nonumber\\
		& \leq \sup_{\Omega^{\partial\mathring{\Lambda}_K}}\sum_{C\in\Delta_{M,N,K}^{\rm bulk}}\left|W_m^T(C)\right|\,. \label{set_diff_bound_bulk}
	\end{align}
	Similarly, since  $\caP_{N,K}^{\rm bulk}\subseteq\caP_{N,K}$, by \eqref{divergence_boundary},
	\begin{equation}\label{set_diff_bound_bulk_bdy}
		\sup_{\Omega^{\partial\Lambda_N}}D_\infty\left(\frac{\Phi_{N,K}^{(m)}}{Z_N^{(m)}} \left\| \frac{\Phi_{N,K}^{{\rm bulk},(m)}}{Z_N^{{\rm bulk},(m)}}\right)\right. 
		\leq \sup_{\Omega^{\partial\Lambda_{N,K}}}\sum_{C\in\Delta_{N,K}}\left|W_m^T(C)\right| \,.
	\end{equation} 
	The result will be a consequence of first determining the above symmetric differences and then appropriately applying Corollary~\ref{lem:kplemma}.
	
	Recall the definitions of the polymer sets from \eqref{polymer_sets} and consider the symmetric difference from \eqref{cluster_diffs}. Any cluster $C\in \caD_{M,N,K}^{\rm bulk}$ contains either a loop from $\caL_{M,K}$ or walk from $\caW_{M,K}^{\rm bulk}$ that intersects $\Lambda_M\setminus\Lambda_N$, as otherwise the cluster belongs to $\caC(\caP_{N,K}^{\rm bulk})$. Similarly, any cluster $C'\in \caD_{M,N,0}^{\rm bulk}$ contains a loop from $\caL_{M,0}$ that intersects $\Lambda_M\setminus\Lambda_N$. Moreover, if every polymer in $C'$ is supported on $\Lambda_{M,K}$, then $C'$ also belongs to $\caD_{M,N,K}^{\rm bulk}$. From this, one can then easily deduce that
	\begin{equation}\label{sym_diff_bulk}
		\Delta_{M,N,K}^{\rm bulk} = \left\{C\in \caC(\caP_{M,K}^{\rm bulk})\cup \caC(\caP_{M,0}^{\rm bulk}): \exists \, \gamma,\gamma'\in C \; \text{s.t.} \; \gamma\nmid \mathring{\Lambda}_K, \; \gamma'\nmid(\Lambda_{M}\setminus\Lambda_N)\right\}\,.
	\end{equation}
	Note that the polymers $\gamma,\gamma'$ need not be distinct, and as any $C\in \Delta_{M,N,K}^{\rm bulk}$ is a cluster, meaning $\cup_{\gamma\in C}\gamma$ is a connected graph, the set conditions imply that there is some $\gamma\in C$ that intersects $\partial\mathring{\Lambda}_N$.
	
	To calculate the other symmetric difference in \eqref{cluster_diffs}, note that any cluster $C\in \caD_{N,K}$ contains a walk $\gamma$ that belongs to $\Lambda_{N,K}$ and has at least one endpoint on $\partial\Lambda_N$. Similarly, any cluster $C'\in \caD_{N,0}$ contains a walk that belongs to $\Lambda_N=\Lambda_{N,K}$ and has both endpoints on $\partial\Lambda_N$. If all polymers from $C'$ are supported on $\Lambda_{N,K}$, then $C'\in \caD_{N,K}$. Thus, the symmetric difference is the set of clusters that intersect both $\partial\Lambda_N$ and $\mathring{\Lambda}_K$:
	\begin{equation}\label{sym_diff_bulk_bdy}
		 \Delta_{N,K} = \{ C\in \caC(\caP_{N,K})\cup\caC(\caP_{N,0}): \exists\, \gamma,\gamma' \in C\; \text{s.t.}\; \gamma\nmid \partial\Lambda_N,\; \gamma'\nmid \mathring{\Lambda}_K\}\,.
	\end{equation}
	
	We now turn to considering Lemma~\ref{lem:kplemma}, which will be applied four times, once for each set of polymers used in the definitions of \eqref{sym_diff_bulk}-\eqref{sym_diff_bulk_bdy}. To this end, for $K'\in\{K,0\}$, define
	\begin{equation}\label{A_sets}
		A_{M,N,K'}^{\rm bulk} = \{\gamma\in \caP_{M,K'}^{\rm bulk}: \gamma\cap \partial\Lambda_N \neq \emptyset\}, \quad 
		A_{N,K'}  = \{\gamma\in \caP_{N,K'}: \gamma \cap  \partial\Lambda_N \neq \emptyset\},
	\end{equation}
	and recall that $\gamma^{(K+1)}$ is the shortest loop that encloses $\mathring{\Lambda}_{K+1}$ (this is also the shortest loop that contains every vertex $x\in \partial\mathring{\Lambda}_{K+1}$). Note that $\gamma^{(K+1)}$ is common to all polymer sets being considered:
	\[
	\gamma^{(K+1)}\in \caP_{M,K}^{\rm bulk}\cap \caP_{M,0}^{\rm bulk}\cap \caP_{N,K}\cap \caP_{N,0}\,.
	\]
	Let 
	\[b_m(C) = \sum_{\gamma\in C}\epsilon_h(m+1)|\gamma|\] 
	for any cluster $C\subseteq\caP_{N,K}$ where $\epsilon=0.0086$ is as in Theorem~\ref{thm:KP_condition}, and $A$ any of the four sets from \eqref{A_sets}. If $C$ contains an element of $A$ and intersects $\gamma^{(K+1)}$, then as $G_C=\cup_{\gamma\in C}\gamma$ is connected 
	\[\sum_{\gamma\in C}|\gamma|> d(\partial\Lambda_N, \gamma^{(K+1)})= 2(N-K)-1.\]
	As a consequence,
	\begin{align}
	b_m(\gamma^{(K+1)}, A) & :=\inf\{ b_m(C): C\in \caC(\caP), \, C\nmid \gamma^{(K+1)}, \, C\cap A \neq \emptyset\} \label{b_hex} \\
	&\geq 2\epsilon(m+1)(N-K)\,.\nonumber
	\end{align}
	Here, we use $C\nmid \gamma$ means there exists $\gamma'\in C$ such that $\gamma'\nmid \gamma$.
	
	Therefore, applying Lemma~\ref{lem:kplemma} twice shows, 
	\begin{align*}
		\sum_{C\in \caD_{N,K}\triangle \caD_{N,0}}|W^T(C)| 
		& \leq 
		\sum_{\substack{C\in \caC(\caP_{N,K}): \\ C\cap A_{N,K}\neq \emptyset, \\ C\nmid \gamma^{(K+1)}}}|W^T(C)|+
		\sum_{\substack{C\in \caC(\caP_{N,0}): \\ C\cap A_{N,0}\neq \emptyset, \\ C\nmid \gamma^{(K+1)}}}|W^T(C)| \nonumber\\
		& \leq 2a_m(\gamma^{(K+1)})e^{-2\epsilon(m+1)(N-K)} \label{F_part_one}
	\end{align*}
	where the final expression is independent of the boundary variables $(\Omega_x:x\in\partial\Lambda_{N,K})$. Since \[a_m(\gamma^{(K+1)})=.15(m+1)|\gamma^{(K+1)}|=.9(m+1)(2K+1),\] 
	it immediately follows from \eqref{set_diff_bound_bulk_bdy} that
	\begin{equation}\label{Cauchy_bound}
	\sup_{\Omega^{\partial\Lambda_N}}D_\infty\left(\frac{\Phi_{N,K}^{(m)}}{Z_N^{(m)}} \left\| \frac{\Phi_{N,K}^{{\rm bulk},(m)}}{Z_N^{{\rm bulk},(m)}}\right)\right.  \leq 1.8(m+1)(2K+1)e^{-2\epsilon(m+1)(N-K)} =: F_m(N,K)
	\end{equation}
	Analogously, again applying Corollary~\ref{lem:kplemma} twice and arguing as above shows
	\begin{align*}
		D_\infty\left(\frac{\Phi_{M,K}^{{\rm bulk},(m)}}{Z_M^{{\rm bulk},(m)}} \left\| \frac{\Phi_{N,K}^{{\rm bulk},(m)}}{Z_N^{{\rm bulk},(m)}}\right)\right. & \leq F_m(N,K)	 \label{F_part_two}
	\end{align*}
	Therefore, by \eqref{H_bound} and the definition of $H(f,g)$, one obtains the desired result
	\[
	\left|\braket{\Psi(f)}{A\Psi(f)} - \omega^{(m)}(A)\right| \leq 2F_m(N,K)e^{F_m(N,K)}\,.
	\]
	The result follows from noting that $\eta_\Gamma=2\epsilon$.
\end{proof}

\section{Decorated Square Lattice Models}\label{sec:lieb}
We turn to proving Theorem~\ref{thm:gs_indistinguishability} for the case of the decorated square lattice.  Thus, $\Gamma^{(m)}$ now denotes the $m$-decorated square lattice for $m\geq 0$. The (undecorated) square lattice will also be denoted simply by $\Gamma$. Recall that we consider the sequence of finite volumes $\Lambda_N\subseteq \Gamma$ defined by
\[
\Lambda_N = \bigcup_{\|x\|_\infty \leq n-1} b_x^1(1), \qquad b_x^1(1) = \{y\in \bZ^2: \|y-x\|_1\leq 1\}
\]
and that its interior is $\mathring{\Lambda}_N = \{x\in\bZ^2: \|x\|_\infty\leq n-1\}.$ For $1\leq K<N$ we also define
\[
\Lambda_{N,K} = \bigcup_{K \leq \|x\|_\infty\leq N-1}b_x^1(1)\,.
\] 
The vertex and edge sets of this volume satisfies 
\[\caV_{\Lambda_{N,K}}\cap \caV_{\mathring{\Lambda}_K}=\partial\Lambda_K, \qquad \caE_{\Lambda_{N,K}} = \caE_{\Lambda_N}\setminus \caE_{\mathring{\Lambda}_K}.\] 
For any $m\geq 0$, we denote by $\Lambda_N^{(m)},\mathring{\Lambda}_N^{(m)}, \Lambda_{N,K}^{(m)}\subseteq \Gamma^{(m)}$ the finite graphs obtained by decorated each edge of its undecorated counterpart with $m$ additional vertices, see Figure~\ref{fig:lambda2} and Figure~\ref{fig:lambda_nk_lieb}. 

Analogous to the hexagonal model, Theorem~\ref{thm:gs_indistinguishability} for this case will be a consequence of appropriately bounding the right hand side of \eqref{two_bounds}, where
\begin{equation}\label{square_states}
	\omega_N^{(m)}(A;\Omega^{\partial\Lambda_N})=Z_N^{(m)}(A;\Omega^{\partial\Lambda_N})/Z_N^{(m)}(\Omega^{\partial\Lambda_N}),\qquad \omega_N^{\rm bulk,(m)}(A)=Z_N^{\rm bulk,(m)}(A)/Z_N^{\rm bulk,(m)}
\end{equation}
are the bulk-boundary map and bulk state, respectively, associated to $\Lambda_N^{(m)}$, and the observable $A\in \caA_{\mathring{\Lambda}_K^{(m)}}$, $K<N$; see \eqref{bb_map} and \eqref{bulk_state} as well as \eqref{Z_to_Phi} and \eqref{Phi_bbmap}-\eqref{Phi_bulk}. 

As before, this will be a consequence of rewriting the integrals defining $\Phi_{N,K}^{(m)}(A;\Omega^{\partial\Lambda_N})$ and $\Phi_{N,K}^{\rm bulk,(m)}(A)$ in terms of a polymer representation with a convergent cluster expansion. To motivate this polymer expansion, we first consider the integral defining the bulk-boundary map.

The integration formulas \eqref{basic_ints} are also valid in the case of the decorated square lattice. Thus, for any fixed $K<N$ and any $m\geq 0$, arguing as in \eqref{invariant_consequence}-\eqref{zero_graphs} and using $|\caE_{\Lambda_{N,K}}^{(m)}|=(m+1)|\caE_{\Lambda_{N,K}}|$ gives
\begin{equation}
	\label{square_Zn}
\Phi_{N,K}^{(m)}(A;\Omega^{\partial\Lambda_N}) =  2^{-(m+1)|\caE_{\Lambda_{N,K}}|}\sum_{E\subseteq \caE_{\Lambda_{N,K}}^{(m)}}\int d\Omega^{\Lambda_{N,K}^{(m)}\setminus\partial\Lambda_{N,K}}\prod_{(x,y)\in E}(-\Omega_x\cdot\Omega_y)
\end{equation}
where the integral is only nonzero for edge sets $E$  whose the minimal graph $G_E=(\caV_E,E)$ only has vertices of even degree in the interior of $\Lambda_{N,K}^{(m)}$, that is
\begin{equation}\label{degree_requirement}
	\deg_{G_E}(v) \in \{2,4\} \qquad \forall v\in \caV_{E}\setminus \partial\Lambda_{N,K}\,.
\end{equation}
The main difference between this and the hexagonal case is that the graphs $G_E$ can now have vertices of degree four. As such, for any finite subgraph $G\subseteq \bZ^2$, we denote by
\[\caV^{4}_G=\{v\in\caV_G:\deg_G(v)=4\}\] 
the number of degree 4 vertices in $G$, and also write $\caV_E^4 = \caV^4_{G_E}$ where $G_E$ is the minimal graph associated to a finite set of edges, $E$.

\begin{figure}
	\begin{subfigure}{0.3\textwidth}
		\begin{tikzpicture}
			\def\n{.35};
			\draw [line width=1.5pt, color=blue] (5*\n,0*\n)  to[out=90,in=-90] coordinate[pos=.35] (A) (5*\n,4*\n) ;
			\draw [line width=1.5pt, color=blue] (0*\n,5*\n)  to[out=0,in=180] coordinate[pos=.35] (A) (4*\n,5*\n) ;
			
			\draw [line width=1.5pt, color=blue] (4*\n,5*\n)  to[out=0,in=-90] coordinate[pos=.35] (A) (5*\n,6*\n) ;
			
			\draw [line width=1.5pt, color=blue] (5*\n,6*\n)  to[out=90,in=-90] coordinate[pos=.35] (A) (5*\n,10*\n) ;
			\draw [line width=1.5pt, color=blue] (6*\n,5*\n)  to[out=0,in=180] coordinate[pos=.35] (A) (10*\n,5*\n) ;
			
			\draw [line width=1.5pt, color=blue] (5*\n,4*\n)  to[out=90,in=180] coordinate[pos=.35] (A) (6*\n,5*\n) ;
			
			\draw (0,0) node[midway, below, yshift=200*\n pt,xshift=280*\n pt,color=blue] {$x_3$};
			\draw (0,0) node[midway, below, yshift=130*\n pt,xshift=10*\n pt,color=blue] {$x_1$};
			\draw (0,0) node[midway, below, yshift=40*\n pt,xshift=180*\n pt,color=blue] {$x_4$};
			\draw (0,0) node[midway, below, yshift=300*\n pt,xshift=110*\n pt,color=blue] {$x_2$};
			\draw (0,0) node[midway, below,yshift=140*\n pt,xshift=160*\n, color=blue] {$v$};
		\end{tikzpicture}
	\end{subfigure}\hspace{10pt}
	\begin{subfigure}{0.3\textwidth}
		\begin{tikzpicture}
			\def\n{.35};
			\draw [line width=1.5pt, color=blue] (5*\n,0*\n)  to[out=90,in=-90] coordinate[pos=.35] (A) (5*\n,4*\n) ;
			\draw [line width=1.5pt, color=blue] (0*\n,5*\n)  to[out=0,in=180] coordinate[pos=.35] (A) (4.7*\n,5*\n) ;
			
			\draw [line width=1.5pt, color=blue] (5*\n,6*\n)  to[out=90,in=-90] coordinate[pos=.35] (A) (5*\n,10*\n) ;
			\draw [line width=1.5pt, color=blue] (5.3*\n,5*\n)  to[out=0,in=180] coordinate[pos=.35] (A) (10*\n,5*\n) ;
			\draw [line width=1.5pt, color=blue] (5*\n,4*\n)  to[out=90,in=90] coordinate[pos=.35] (A) (5*\n,6*\n) ;
			\draw (0,0) node[midway, below, yshift=200*\n pt,xshift=280*\n pt,color=blue] {$x_3$};
			\draw (0,0) node[midway, below, yshift=130*\n pt,xshift=10*\n pt,color=blue] {$x_1$};
			\draw (0,0) node[midway, below, yshift=40*\n pt,xshift=180*\n pt,color=blue] {$x_4$};
			\draw (0,0) node[midway, below, yshift=300*\n pt,xshift=110*\n pt,color=blue] {$x_2$};
			\draw (0,0) node[midway, below,yshift=140*\n pt,xshift=160*\n, color=blue] {$v$};
		\end{tikzpicture}
	\end{subfigure}\hspace{10pt}
	\begin{subfigure}{0.3\textwidth}
		\begin{tikzpicture}
			\def\n{.35};
			\draw [line width=1.5pt, color=blue] (5*\n,0*\n)  to[out=90,in=-90] coordinate[pos=.35] (A) (5*\n,4*\n) ;
			\draw [line width=1.5pt, color=blue] (0*\n,5*\n)  to[out=0,in=180] coordinate[pos=.35] (A) (4*\n,5*\n) ;
			\draw [line width=1.5pt, color=blue] (4*\n,5*\n)  to[out=0,in=90] coordinate[pos=.35] (A) (5*\n,4*\n) ;
			\draw [line width=1.5pt, color=blue] (5*\n,6*\n)  to[out=90,in=-90] coordinate[pos=.35] (A) (5*\n,10*\n) ;
			\draw [line width=1.5pt, color=blue] (6*\n,5*\n)  to[out=0,in=180] coordinate[pos=.35] (A) (10*\n,5*\n) ;
			\draw [line width=1.5pt, color=blue] (5*\n,6*\n)  to[out=-90,in=180] coordinate[pos=.35] (A) (6*\n,5*\n) ;
			\draw (0,0) node[midway, below, yshift=200*\n pt,xshift=280*\n pt,color=blue] {$x_3$};
			\draw (0,0) node[midway, below, yshift=130*\n pt,xshift=10*\n pt,color=blue] {$x_1$};
			\draw (0,0) node[midway, below, yshift=40*\n pt,xshift=180*\n pt,color=blue] {$x_4$};
			\draw (0,0) node[midway, below, yshift=300*\n pt,xshift=110*\n pt,color=blue] {$x_1$};
			\draw (0,0) node[midway, below,yshift=140*\n pt,xshift=160*\n, color=blue] {$v$};
		\end{tikzpicture}
	\end{subfigure}\hspace{.5pt}
	\caption{Illustration for each of the three summands in \eqref{basic_ints2}. The vertices $x_i$, $x_j$ connected by a pair of edges in each image corresponds to a term $(\Omega_{x_i}\cdot\Omega_{x_j})$ in the associated summand.}\label{fig:deformation}
\end{figure} 

For a fixed $E$ satisfying \eqref{degree_requirement}, the desired form of \eqref{square_Zn} will result from simplifying
\begin{equation}\label{two_ints}
\int d\Omega^{\Lambda_{N,K}^{(m)}\setminus\partial\Lambda_{N,K}}\prod_{(x,y)\in E}(-\Omega_x\cdot\Omega_y)=\int d\Omega^{\caV_E\setminus\partial\Lambda_{N,K}}\prod_{(x,y)\in E}(-\Omega_x\cdot\Omega_y)
\end{equation}
and evaluating the integral on the right-hand-side above. If $G_E$ contains a vertex, $v$, of degree 4, one needs the following formula in addition to \eqref{basic_ints} to complete this task:
\begin{equation}\label{basic_ints2}
	\int d\Omega_v \prod_{i=1}^{4}(\Omega_{x_i}\cdot\Omega_v)= \frac{1}{15}\sum_{j=2}^4(\Omega_{x_1}\cdot\Omega_{x_{\tau_{j}(2)}})(\Omega_{x_{\tau_j(3)}}\cdot\Omega_{x_{\tau_j(4)}})\,.
\end{equation}
Above, $(x_i,v)\in E$, $i=1,2,3,4$, are the four edges containing $v$, and $\tau_j$ is the permutation on $\{1,2,3,4\}$ that swaps $2$ and $j$. Thus, $\tau_2=\id$. Moreover, we note that the formula in \eqref{basic_ints2} does not depend on $E$ being a subset of edges in $\bZ^2$, just that $v$ is a degree four vertex.

Inserting this back into \eqref{two_ints}, one finds
\begin{equation}\label{E_integral}
	\int d\Omega^{\caV_E\setminus\partial\Lambda_{N,K}}\prod_{(x,y)\in E}(-\Omega_x\cdot\Omega_y) = \frac{1}{15}\sum_{j=2}^4\int d\Omega^{\caV_E\setminus(\partial\Lambda_{N,K}\cup\{v\})}\prod_{(x,y)\in E_j}(-\Omega_x\cdot\Omega_y)
\end{equation}
where $E_j$ is the set of edges between the vertices $\caV_{E}\setminus\{v\}$ defined by 
\[E_j=E\cup\left\{(x_1,x_{\tau_j(2)}), (x_{\tau_j(3)},x_{\tau_j(4)})\right\}\setminus\left\{(v,x_i):i=1,2,3,4\right\}.\]
We note that the two edges added to $E$ above are note edges in $\bZ^2$, but each can be encoded as a pair of edges in $\bZ^2$ as illustrated in Figure~\ref{fig:deformation}. The sum in \eqref{basic_ints2} and visualization in Figure~\ref{fig:deformation} motivate a choice of polymers where the integral on the left hand side of \eqref{E_integral} evaluates to a sum of $3^{|\caV^4(E)|}$ terms, each of which depends on a (distinct) collection of polymers whose graph union is $G_E$, see Figure~\ref{fig:edgehard}. In this case, a polymer will be able to intersect itself and other polymers at vertices but not edges. As a consequence, each term in the sum will be a product of the associated polymer weights and a soft core polymer interaction. This differs from the hexagonal model case, where the integral corresponding to such an $E$ corresponded to a single collection of polymers determined by the connected components of $G_E$ and the polymer interaction is hard core.

\begin{figure}
	\begin{subfigure}{0.2\textwidth}
		\begin{tikzpicture}
			\def\n{.25};
			\draw [line width=1.5pt, color=black] (5*\n,0*\n)  to[out=90,in=-90] coordinate[pos=.35] (A) (5*\n,4*\n) ;
			\draw [line width=1.5pt, color=black] (0*\n,5*\n)  to[out=0,in=180] coordinate[pos=.35] (A) (6*\n,5*\n) ;
			
			\draw [line width=1.5pt, color=black] (5*\n,6*\n)  to[out=90,in=-90] coordinate[pos=.35] (A) (5*\n,9*\n) ;
			\draw [line width=1.5pt, color=black] (6*\n,5*\n)  to[out=0,in=180] coordinate[pos=.35] (A) (9*\n,5*\n) ;
			\draw [line width=1.5pt, color=black] (9*\n,5*\n)  to[out=0,in=-90] coordinate[pos=.35] (A) (10*\n,6*\n) ;
			\draw [line width=1.5pt, color=black] (5*\n,9*\n)  to[out=90,in=180] coordinate[pos=.35] (A) (6*\n,10*\n) ;
			
			\draw [line width=1.5pt, color=black] (5*\n,4*\n)  to[out=90,in=90] coordinate[pos=.35] (A) (5*\n,6*\n) ;
			
			\draw [line width=1.5pt, color=black] (10*\n,6*\n)  to[out=90,in=-90] coordinate[pos=.35] (A) (10*\n,9*\n) ;
			\draw [line width=1.5pt, color=black] (6*\n,10*\n)  to[out=0,in=180] coordinate[pos=.35] (A) (9*\n,10*\n) ;
			\draw [line width=1.5pt, color=black] (9*\n,10*\n)  to[out=0,in=90] coordinate[pos=.35] (A) (10*\n,9*\n) ;
			\draw (0,0) node[midway, below, yshift=100*\n pt,xshift=60*\n pt,color=black] {$G_E$};
		\end{tikzpicture}
	\end{subfigure}
	\begin{subfigure}{0.2\textwidth}
		\begin{tikzpicture}
			\def\n{.25};
			\draw [line width=1.5pt, color=blue] (5*\n,0*\n)  to[out=90,in=-90] coordinate[pos=.35] (A) (5*\n,4*\n) ;
			\draw [line width=1.5pt, color=blue] (0*\n,5*\n)  to[out=0,in=180] coordinate[pos=.35] (A) (4*\n,5*\n) ;
			\draw [line width=1.5pt, color=blue] (4*\n,5*\n)  to[out=0,in=90] coordinate[pos=.35] (A) (5*\n,4*\n) ;
			\draw [line width=1.5pt, color=blue] (5*\n,6*\n)  to[out=90,in=-90] coordinate[pos=.35] (A) (5*\n,9*\n) ;
			\draw [line width=1.5pt, color=blue] (6*\n,5*\n)  to[out=0,in=180] coordinate[pos=.35] (A) (9*\n,5*\n) ;
			\draw [line width=1.5pt, color=blue] (9*\n,5*\n)  to[out=0,in=-90] coordinate[pos=.35] (A) (10*\n,6*\n) ;
			\draw [line width=1.5pt, color=blue] (5*\n,9*\n)  to[out=90,in=180] coordinate[pos=.35] (A) (6*\n,10*\n) ;
			\draw [line width=1.5pt, color=blue] (5*\n,6*\n)  to[out=-90,in=180] coordinate[pos=.35] (A) (6*\n,5*\n) ;
			\draw [line width=1.5pt, color=blue] (10*\n,6*\n)  to[out=90,in=-90] coordinate[pos=.35] (A) (10*\n,9*\n) ;
			\draw [line width=1.5pt, color=blue] (6*\n,10*\n)  to[out=0,in=180] coordinate[pos=.35] (A) (9*\n,10*\n) ;
			\draw [line width=1.5pt, color=blue] (9*\n,10*\n)  to[out=0,in=90] coordinate[pos=.35] (A) (10*\n,9*\n) ;
			\draw (0,0) node[midway, below, yshift=240*\n pt,xshift=200*\n pt,color=blue] {$\gamma_1$};
			\draw (0,0) node[midway, below, yshift=100*\n pt,xshift=60*\n pt,color=blue] {$\gamma_2$};
		\end{tikzpicture}
	\end{subfigure}\hspace{.5pt}
	\begin{subfigure}{0.2\textwidth}
		\begin{tikzpicture}
			\def\n{.25};
			\draw [line width=1.5pt, color=blue] (5*\n,0*\n)  to[out=90,in=-90] coordinate[pos=.35] (A) (5*\n,4*\n) ;
			\draw [line width=1.5pt, color=blue] (0*\n,5*\n)  to[out=0,in=180] coordinate[pos=.35] (A) (4*\n,5*\n) ;
			
			\draw [line width=1.5pt, color=blue] (4*\n,5*\n)  to[out=0,in=-90] coordinate[pos=.35] (A) (5*\n,6*\n) ;
			
			\draw [line width=1.5pt, color=blue] (5*\n,6*\n)  to[out=90,in=-90] coordinate[pos=.35] (A) (5*\n,9*\n) ;
			\draw [line width=1.5pt, color=blue] (6*\n,5*\n)  to[out=0,in=180] coordinate[pos=.35] (A) (9*\n,5*\n) ;
			\draw [line width=1.5pt, color=blue] (9*\n,5*\n)  to[out=0,in=-90] coordinate[pos=.35] (A) (10*\n,6*\n) ;
			\draw [line width=1.5pt, color=blue] (5*\n,9*\n)  to[out=90,in=180] coordinate[pos=.35] (A) (6*\n,10*\n) ;
			
			\draw [line width=1.5pt, color=blue] (5*\n,4*\n)  to[out=90,in=180] coordinate[pos=.35] (A) (6*\n,5*\n) ;
			
			\draw [line width=1.5pt, color=blue] (10*\n,6*\n)  to[out=90,in=-90] coordinate[pos=.35] (A) (10*\n,9*\n) ;
			\draw [line width=1.5pt, color=blue] (6*\n,10*\n)  to[out=0,in=180] coordinate[pos=.35] (A) (9*\n,10*\n) ;
			\draw [line width=1.5pt, color=blue] (9*\n,10*\n)  to[out=0,in=90] coordinate[pos=.35] (A) (10*\n,9*\n) ;
			\draw (0,0) node[midway, below, yshift=100*\n pt,xshift=60*\n pt,color=blue] {$\gamma_3$};
		\end{tikzpicture}
	\end{subfigure}
	\begin{subfigure}{0.2\textwidth}
		\begin{tikzpicture}
			\def\n{.25};
			\draw [line width=1.5pt, color=blue] (5*\n,0*\n)  to[out=90,in=-90] coordinate[pos=.35] (A) (5*\n,4*\n) ;
			\draw [line width=1.5pt, color=blue] (0*\n,5*\n)  to[out=0,in=180] coordinate[pos=.35] (A) (4.7*\n,5*\n) ;
			
			\draw [line width=1.5pt, color=blue] (5*\n,6*\n)  to[out=90,in=-90] coordinate[pos=.35] (A) (5*\n,9*\n) ;
			\draw [line width=1.5pt, color=blue] (5.3*\n,5*\n)  to[out=0,in=180] coordinate[pos=.35] (A) (9*\n,5*\n) ;
			\draw [line width=1.5pt, color=blue] (9*\n,5*\n)  to[out=0,in=-90] coordinate[pos=.35] (A) (10*\n,6*\n) ;
			\draw [line width=1.5pt, color=blue] (5*\n,9*\n)  to[out=90,in=180] coordinate[pos=.35] (A) (6*\n,10*\n) ;
			
			\draw [line width=1.5pt, color=blue] (5*\n,4*\n)  to[out=90,in=90] coordinate[pos=.35] (A) (5*\n,6*\n) ;
			
			\draw [line width=1.5pt, color=blue] (10*\n,6*\n)  to[out=90,in=-90] coordinate[pos=.35] (A) (10*\n,9*\n) ;
			\draw [line width=1.5pt, color=blue] (6*\n,10*\n)  to[out=0,in=180] coordinate[pos=.35] (A) (9*\n,10*\n) ;
			\draw [line width=1.5pt, color=blue] (9*\n,10*\n)  to[out=0,in=90] coordinate[pos=.35] (A) (10*\n,9*\n) ;
			\draw (0,0) node[midway, below, yshift=100*\n pt,xshift=60*\n pt,color=blue] {$\gamma_4$};
		\end{tikzpicture}
	\end{subfigure}

	\caption{The three collections of trails in $\caS(E)$. Note that all three sets visit the degree four vertex $v$ twice, but every edge is visited exactly once. If one were to first integrate \eqref{E_integral} over $d\Omega_v$, the set $\{\gamma_1,\gamma_2\}$ is the collection of polymers associated from the term $(\Omega_{x_1}\cdot \Omega_{x_4})(\Omega_{x_2}\cdot \Omega_{x_3})$ in \eqref{basic_ints2}, $\{\gamma_3\}$ is the set associated to $(\Omega_{x_1}\cdot \Omega_{x_2})(\Omega_{x_3}\cdot \Omega_{x_4})$, and $\{\gamma_4\}$ is the set associated to $(\Omega_{x_1}\cdot \Omega_{x_3})(\Omega_{x_2}\cdot \Omega_{x_4})$. Note that $\gamma_3$ and $\gamma_4$ correspond to distinct trails in $\bZ^2$.
	}\label{fig:edgehard}
\end{figure}

The desired set of polymers for representing the bulk-boundary map and bulk state will be a subset of the finite trails of the square lattice. A finite trail $\gamma = (\caV_\gamma,\caE_\gamma)$ in $\Gamma^{(m)}$ is a tuple of finite vertex and edge sequences, 
\[\caV_\gamma = (v_1, v_2, \ldots, v_{n+1}), \quad \caE_\gamma = (e_1, e_2, \ldots, e_n)\quad \text{where}\quad e_i = (v_i, v_{i+1}) \quad\forall\, i=1,2,\ldots, n,\] 
and each edge, $e_i,$ is unique. We will call a trail a loop if $v_1=v_{n+1}$, and a walk otherwise. We note that in the standard graph theory definitions, all trails are walks. However, we use the terminology loop and walk to distinguish between the two types of trails. Two walks will be considered equal if they are the same up to reversing both the vertex and edge sequences. Two loops will be consider the same if they are equal up to performing the same cyclic permutation $\sigma\in S_n$ to the vertex and edge sequences, and then (possibly) reversing both sequences. To apply the permutation to the vertex set, we use that $v_{n+1}=v_1$. The set of all unique walks and loops in the square lattice will be denoted $\caW_{\Gamma^{(m)}}$ and $\caL_{\Gamma^{(m)}}$, respectively, and the set of all unique trails by 
\[\caT_{\Gamma^{(m)}}= \caL_{\Gamma^{(m)}}\cup\caW_{\Gamma^{(m)}}.\]
Note that the only difficulty in distinguishing two trails visually comes from if they have the same graph and if that graph has a degree four vertex. However, the three paths in Figure~\ref{fig:deformation} can be used to distinguish them in this case. For example, give $\gamma\in\caE_\gamma$, the first image in Figure~\ref{fig:deformation} corresponds to $\caE_\gamma$ containing the four edges
\[
e_i =(x_1,v), \quad e_{i+1}=(v,x_2), \quad e_j = (x_3, v), \quad e_{j+1}=(v,x_4)
\]
for some $i$ and $j$. 

The first and last vertices of a walk $\gamma\in\caW_{\Gamma^{(m)}}$ will be called the endpoints, denoted $\mathrm{ep}(\gamma)$, and the length of any trail $\gamma \in \caT_{\Gamma^{(m)}}$ will be the number of edges, denoted $|\gamma|$. As before, one always has $\mathrm{ep}(\gamma)\subseteq \Gamma$ regardless of the value $m$. For a walk $\gamma$, we again let $\partial_\gamma(\Omega) = -\Omega_{v_1}\cdot\Omega_{v_{n+1}}$ where $\mathrm{ep}(\gamma)=\{v_1,v_{n+1}\}$, and define $\partial_\gamma(\Omega)=-1$ in the case that $\gamma$ is a loop. The weight function for this model is then given by
\begin{equation}\label{square_weight}
	W(\gamma)  = \left(-\frac{1}{3}\right)^{|\gamma|-1}\left(\frac{3}{5}\right)^{|\caV_\gamma^4|}\partial_\gamma(\Omega), \qquad \forall \gamma \in \caT^{(m)}, \; \forall m\geq 0. 
\end{equation}
As will be evident in the proof of Theorem~\ref{lem:square_polymer} and is suggested by \eqref{basic_ints2}, the factor $(3/5)^{|\caV_4(\gamma)|}$ arises from  first integrating over all degree four vertices, and then over all degree two vertices, which will yield an overall weight $(-1/3)^{|\gamma|-2|\caV^4_\gamma|-1}\cdot(1/15)^{|\caV^4_\gamma|}$ where we slightly abuse notation and use $\caV_\gamma^4$ to denote the number of degree four vertices in the graph $G_\gamma =\{S_v(\gamma), S_e(\gamma)\}$ where $S_v(\gamma)$ and $S_e(\gamma)$ is the set of all vertices and edges, respectively, in the sequences $\caV_\gamma$ and $\caE_\gamma$. We will often abuse notation and simply write $\gamma$ instead of $G_\gamma$. 

Given the above, we define three sets of polymers by:
\begin{align}
	\caL_{N,K}^{(m)} & = \left\{\gamma\in \caL_{\Gamma}: \gamma \subseteq \Lambda_{N,K}^{(m)} \,\wedge\, |S_v(\gamma)\cap\partial\Lambda_K|\leq1\right\} \\
	\caW_{N,K}^{(m)} & = \left\{\gamma\in \caW_{\Gamma}: \gamma \subseteq \Lambda_{N,K}^{(m)}\wedge S_v(\gamma)\cap \partial\Lambda_{N,K}=\mathrm{ep}(\gamma)\right\}\\
	\caW_{N,K}^{{\rm bulk},(m)} & = \left\{\gamma\in \caW_{N,K}^{(m)}: \mathrm{ep}(\gamma)\subseteq \partial\Lambda_{K}\right\}
\end{align}
In words, $\caL_{N,K}$ is the set of all loops in $\Lambda_{N,K}$ that can include at most one boundary vertex, $\caW_{N,K}$ is the set of all walks in $\Lambda_{N,K}$ that only intersect the boundary $\partial\Lambda_{N,K}$ at its endpoints, and $\caW_{N,K}^{{\rm bulk},(m)} \subseteq \caW_{N,K}^{(m)}$ is the set of all such walks that only intersect the interior boundary, $\partial\mathring{\Lambda}_{K}.$ We note that all boundary vertices $x\in\partial\Lambda_{N,K}$ have degree one \emph{except} for the four corners of the interior boundary, which have degree two, see Figure~\ref{fig:lambda_nk_lieb}. The restriction $S_v(\gamma)\cap \partial\Lambda_{N,K}=\mathrm{ep}(\gamma)$ eliminates the possibility of having a walk $\gamma\subseteq\Lambda_{N,K}^{(m)}$ that has one of these degree two boundary sites as a non-endpoint, as this is equivalent to the concatenation of two walks that meet at the corner. Similarly, a loop $\gamma\subseteq\Lambda_{N,K}^{(m)}$ that intersects two or more of these interior boundary corners could be constructed as a concatenation of walks. This leads to the restriction that loops can only visit a single corner on the interior boundary.

Given the definitions above, the set of polymers for the bulk-boundary map and bulk state are, respectively,
\begin{equation}\label{square_polymers}
	\caP_{N,K}^{(m)} = \caL_{N,K}^{(m)} \cup \caW_{N,K}^{(m)}, \qquad \caP_{N,K}^{{\rm bulk},(m)} = \caL_{N,K}^{(m)}\cup \caW_{N,K}^{{\rm bulk},(m)}.
\end{equation}
For $K=0$, we use the convention that $\partial \Lambda_K= \emptyset$. Thus, $\partial\Lambda_{N,K} = \partial\Lambda_N$ and $\caW_{N,0}^{{\rm bulk},(m)}=\emptyset$.
To define the polymer interaction $\delta:\caP_{N,K}\times\caP_{N,K}\to [0,1]$, for any trail, $\gamma\in\caP_{N,K}$, let $\mathring{S}_v(\gamma) = S_v(\gamma)\setminus\partial\Lambda_{N,K}$ denote the interior sites of $\gamma$. Then:
\begin{equation}\label{square_interaction}
	\delta(\gamma,\gamma')=\begin{cases}
		0, & S_e(\gamma)\cap S_e(\gamma')\neq\emptyset \\
		\left(\frac{3}{5}\right)^{|\mathring{S}_v(\gamma)\cap \mathring{S}_v(\gamma')|}   & \mathring{S}_v(\gamma)\cap \mathring{S}_v(\gamma')\neq\emptyset\:\wedge S_e(\gamma)\cap S_e(\gamma')=\emptyset\\
		1 &\mathring{S}_v(\gamma)\cap \mathring{S}_v(\gamma')=\emptyset
\end{cases}\end{equation}
We note that if two trails $\gamma',\gamma$ share a vertex $v$ and have no edges in common, then both $\gamma$ and $\gamma'$ visit $v$ exactly once. Said differently, $v$ is not a degree four vertex of $\gamma$ or $\gamma'$. We now state the desired polymer representation for $Z_{N}^{(m)}(A,\Omega^{\partial\Lambda_N})$ and $Z_N^{{\rm bulk},(m)}(A)$ for the $m$-decorated square lattice.

As in the hexagonal case, there is a bijection between the sets of polymers 
\begin{equation}\label{poly_bijection}
	\caP_{N,K}^{(0)}\ni\gamma\mapsto \gamma^{(m)}\in\caP_{N,K}^{(m)}
\end{equation}
given by decorating the edges of $\gamma$ with $m$ additional vertices, and appropriately redefining the trail. The same relationship also holds for the bulk set of polymers. As a consequence, $|\gamma^{(m)}|=(m+1)|\gamma|$ and since these two walks also have the same set of degree four vertices, it follows that
\begin{equation}\label{mweight_square}
W_m(\gamma):=\frac{W(\gamma)}{3^{m|\gamma|}}=W(\gamma^{(m)}) \qquad \forall \gamma\in \caP_{N,K}
\end{equation}

\begin{lemma}\label{lem:square_polymer} Fix $0\leq K<N$ and a decoration parameter $m\geq 0$. Then, for any $A\in\caA_{\mathring{\Lambda}_K^{(m)}}$
	\begin{align}
		\Phi_{N,K}^{(m)}({\bf\Omega}) & :=2^{-(m+1)|\caE_{\Lambda_{N,K}}|}\sum_{\{\gamma_1,...,\gamma_n\}\subseteq\cP_{N,K}}\prod_{i=1}^nW_m(\gamma_i)\prod_{1\leq i<j\leq n}\delta(\gamma_i,\gamma_j) \\
		\Phi^{\mathrm{bulk},(m)}_{N,K}({\bf\Omega}) & :=2^{-(m+1)|\caE_{\Lambda_{N,K}}|}\sum_{\{\gamma_1,...,\gamma_n\}\subseteq\cP^{\mathrm{bulk}}_{N,K}}\prod_{i=1}^nW_m(\gamma_i) \prod_{1\leq i<j\leq n}\delta(\gamma_i,\gamma_j)
	\end{align}
		where we use the weight function from \eqref{square_weight} and polymer interaction from \eqref{square_interaction}\,.
\end{lemma}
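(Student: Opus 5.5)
We follow the proof of Lemma~\ref{lem:poly_rep}. First reduce to $m=0$: decorating vertices have degree two, so integrating them out with \eqref{basic_ints} turns each decorated path into a factor $(-1/3)^m$ times the dot product of its undecorated endpoints. This is exactly the bijection \eqref{poly_bijection} together with \eqref{mweight_square}, and the prefactor $2^{-(m+1)|\caE_{\Lambda_{N,K}}|}$ is just $|\caE_{\Lambda_{N,K}^{(m)}}|=(m+1)|\caE_{\Lambda_{N,K}}|$.

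For $m=0$, we start from \eqref{square_Zn} and keep only the edge sets $E$ obeying \eqref{degree_requirement}; all others vanish by the parity argument \eqref{invariant_consequence}--\eqref{zero_graphs}. For such an $E$, we integrate out the degree-four vertices one at a time using \eqref{basic_ints2}. Each vertex $v\in\caV_E^4$ contributes a factor $1/15$ and a choice among three pairings of its four incident edges. A full choice of pairings, one at every $v\in\caV^4_E$, splits $G_E$ uniquely into a collection of edge-disjoint trails. These are loops together with walks whose endpoints lie in $\partial\Lambda_{N,K}$, since boundary sites of degree one are endpoints. The trails may pass through each other at the resolved vertices, and a single trail may visit such a vertex twice.

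The remaining degree-two vertices are then integrated along each trail with \eqref{basic_ints}, which yields $(-1/3)^{|\gamma|-1-2(\text{visits at resolved vertices})}\partial_\gamma(\Omega)$. Next we count the $1/15=(1/3)^2(3/5)$ factors. If $v$ is a self-crossing of one trail $\gamma$, we charge the factor to $\gamma$, which produces the $(3/5)^{|\caV^4_\gamma|}$ in \eqref{square_weight}. If $v$ is a crossing of two distinct trails $\gamma,\gamma'$, the $(1/3)^2$ is absorbed into the two trail weights (one $1/3$ each) and the $3/5$ becomes a factor of $\delta(\gamma,\gamma')$ in \eqref{square_interaction}. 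The sign works out because a resolved vertex removes four factors $-\Omega\cdot\Omega$, an even number of minus signs.

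Summing over $E$ and over all pairings should then give exactly the sum over sets $\{\gamma_1,\dots,\gamma_n\}\subseteq\caP_{N,K}$ weighted by $\prod W(\gamma_i)\prod\delta(\gamma_i,\gamma_j)$. The reverse direction needs two checks. Edge-sharing pairs are excluded, since $\delta=0$ for them. A family of pairwise edge-disjoint trails determines $E$ as the union of their edges and determines the pairing at each degree-four vertex; this is the role of trails being distinguished by their transitions (Figure~\ref{fig:deformation}).

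\textbf{Main obstacle.} The hard step is making this correspondence a bijection near the boundary, where the constraints in $\caL_{N,K}$ and $\caW_{N,K}$ come in. At the four degree-two corners of $\partial\Lambda_K$ there is no integration in $\Phi_{N,K}$, so a path passing through a corner must be cut there into two walks ending at the corner. This is why walks meet $\partial\Lambda_{N,K}$ only at their endpoints and loops contain at most one corner. Two trails meeting only at such a corner (a boundary vertex) carry no $3/5$ factor, which is why $\mathring S_v$ is used in $\delta$. A loop through one corner stays a single polymer: with its two corner edges it is a closed chain, and its weight with $\partial_\gamma=-1$ and $\Omega_c\cdot\Omega_c=1$ agrees with \eqref{basic_ints}. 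In the bulk case we additionally integrate over $\partial\Lambda_N$. Degree-one outer boundary sites kill every walk ending there, leaving exactly $\caW^{\rm bulk}_{N,K}$. These corner and boundary cases have to be checked carefully to rule out double counting.
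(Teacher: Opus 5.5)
Most of your proposal follows the paper: the parity reduction to \eqref{degree_requirement}, one pairing per degree-four vertex matched with the elements of $\caS(E)$, the split $\frac1{15}=(\frac13)^2\cdot\frac35$ (one $\frac13$ per visit, with the $\frac35$ charged to $(3/5)^{|\caV^4_\gamma|}$ or to $\delta$), cutting at the inner corners, and the bulk case. The gap is that you run the reduction in the wrong direction. You reduce to $m=0$ and then integrate the degree-four vertices one at a time. At $m=0$, however, degree-four vertices can be adjacent, and a closed trail can consist only of degree-four vertices.

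Concretely, let $E$ be the union of the boundaries of the rectangles $[0,3]\times[1,2]$ and $[1,2]\times[0,3]$, placed well inside $\Lambda_{N,K}$. These are edge-disjoint. Every vertex has degree two except the corners $a,b,c,d$ of the central plaquette, which have degree four. One of the $3^4$ pairings makes the plaquette a trail on its own. In that branch, after you integrate $a,b,c$, the two plaquette factors at $d$ have already been contracted to $\Omega_d\cdot\Omega_d=1$, so $d$ carries only two factors. Then \eqref{basic_ints2} cannot be applied, and \eqref{basic_ints} gives the single term $\frac13\,\Omega_{d'}\cdot\Omega_{d''}$ rather than ``$\frac1{15}$ and a choice of three pairings''. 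The plaquette trail also has no degree-two vertex left to integrate. The total is correct, but matching it with your sum over $\caS(E)$ requires writing $\frac13=\frac1{15}(3+1+1)$, where $3=\sum_{a=1}^3\delta_{aa}$ is the contribution of the isolated closed plaquette. Nothing in your procedure produces this factor.

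The paper points out exactly this degeneracy: for $m=0$, removing the degree-four vertices can leave empty trails $\tilde\gamma_i$. That is why it proves the identity first for $m\geq1$. There every neighbour of a degree-four vertex is a decorated site of degree two, so resolved vertices are never adjacent, every spliced trail $\tilde\gamma_i$ is nonempty, and \eqref{basic_ints} applies along it. The case $m=0$ is then recovered from the bijection $E\mapsto E^{(m)}$ and \eqref{E_m_to_E_0}.

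You can repair your proof in either of two ways. One is to reverse the reduction in this way. The other is to integrate all interior vertices at once with the isotropic moments $\int d\Omega\,\Omega^a\Omega^b=\frac13\delta_{ab}$ and $\int d\Omega\,\Omega^a\Omega^b\Omega^c\Omega^d=\frac1{15}(\delta_{ab}\delta_{cd}+\delta_{ac}\delta_{bd}+\delta_{ad}\delta_{bc})$. Then each choice of pairings contracts to $\Omega_{v_1}\cdot\Omega_{v_{n+1}}$ per walk and to $3$ per closed trail, uniformly in $m$.

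Separately, your exponent $|\gamma|-1-2(\text{visits})$ is miscounted. The number of factors $\frac13$ coming from degree-two integrations along a trail is $|\gamma|-1-(\text{visits at resolved vertices})$, for walks and for loops alike. Only with that count does absorbing one $\frac13$ per visit give back $(-1/3)^{|\gamma|-1}$. In the degenerate case above this count equals $-1$, which is precisely the factor $3$ that sequential integration fails to produce.
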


\begin{proof} We focus on establishing the result for $Z_N^{(m)}(A;\Omega^{\partial\Lambda_N})$ since the proof for $Z_N^{{\rm bulk},(m)}(A)$ follows the analogous argument as given for the hexagonal model in Lemma~\ref{lem:poly_rep}. Given \eqref{square_Zn} and \eqref{two_ints}, the proof follows from evaluating the integral \eqref{E_integral} for an arbitrary $E\subseteq \caE_{\Lambda_{N,K}^{(m)}}$ satisfying \eqref{degree_requirement}.
	
	First consider the case $m>0$, and fix $E\subseteq \caE_{\Lambda_{N,K}^{(m)}}$ such that \eqref{degree_requirement} holds. Let $\caS(E)$ be the set of all collections $\{\gamma_1,\ldots,\gamma_n\}\subseteq \caP_{N,K}^{(m)}$ defined by
	\[
	\{\gamma_1,\ldots,\gamma_n\}\in\caS(E) \iff G_E = \bigcup_{i=1}^nG_{\gamma_i}\quad\text{and}\quad \prod_{i<j}\delta(\gamma_i,\gamma_j)\neq 0\,.
	\]
	The definition of $\caP_{N,K}^{(m)}$ is chosen so that $\caS(E)$ is the set of $3^{|\caV_4(E)|}$ collections of trails described by replacing all degree four vertices in $G_E$ with one of the three graphs in Figure~\ref{fig:deformation}, see also Figure~\ref{fig:edgehard}. For any $\{\gamma_1,\ldots,\gamma_n\}\subseteq \caS(E)$, let $\{\tilde{\gamma}_1,\ldots,\tilde{\gamma}_n\}$ denote the set of trails defined by removing all degree four vertices from $G_E$. I.e., if $v\in \caV_E^4$ and $v_j=v$ is the $j$-th vertex in $\caV_{\gamma_i}$, then
	\[
	\caV_{\tilde{\gamma}_i} = (\ldots, v_{j-1},v_{j+1},\ldots), \quad \caE_{\tilde{\gamma}_i}=(\ldots, (v_{j-1},v_{j+1}),\ldots).
	\]
	Notice that since $m\geq 1$, both $v_{j\pm 1}$ are decorated sites, and thus satisfy:
	\[
	\deg_{G_E}(v_{j\pm 1}) = \deg_{\gamma_i}(v_{j\pm 1}) = \deg_{\tilde{\gamma}_i}(v_{j\pm 1})=2.
	\]
	We note that the resulting trail $\tilde{\gamma}_i$ is not a trail in $\Gamma^{(m)}$ as $(v_j,v_{j+1})\notin \Gamma^{(m)}$. However, this is a trail in the complete graph on the vertices of $\Gamma^{(m)}$, and these new trails satisfy the following properties. First, $|\tilde{\gamma}_i|= |\gamma_i|-k$ where $k$ is the number of vertices removed from the sequence $\caV_{\gamma_i}$. As every $v\in \caV_4(E)$ appears twice in the collection $\caV_{\gamma_1},\ldots,\caV_{\gamma_n}$, it follows that
	\begin{equation}\label{degree_identity}
		\sum_{i=1}^n|\tilde{\gamma}_i| = \sum_{i=1}^n|\gamma_i|-2|\caV_E^4|\,.
	\end{equation}
	Second, the set $\{\tilde{\gamma}_1,\ldots\tilde{\gamma}_n\}$ is hard core, meaning that the vertex and edge sets of any two distinct $\tilde{\gamma}_i$, $\tilde{\gamma}_j$ are disjoint.
	
	Thus, considering \eqref{basic_ints} and the discussion that immediately follows, one finds that by first integrating \eqref{E_integral} over all degree four vertices $v\in\caV_E^4$,
	\[
	\int d\Omega^{\caV_E\setminus\partial\Lambda_{N,K}}\prod_{(x,y)\in E}(-\Omega_x\cdot\Omega_y) = \sum_{\{\gamma_1,\ldots, \gamma_n\}\in \caS(E)}\left(\frac{1}{15}\right)^{|\caV_E^4|}\int d\Omega^{\caV_E\setminus (\caV_E^4\cup\partial\Lambda_{N,K})}\prod_{i=1}^n\prod_{(x,y)\in \caE_{\tilde{\gamma_i}}}(-\Omega_x\cdot\Omega_y)
	\]
	Since $\deg_{\tilde{\gamma}_i}(v)\in\{0,2\}$ for all $v\in \caV_E\setminus (\caV_E^4\cup\partial\Lambda_{N,K})$, and $\{\tilde{\gamma}_i:i=1,\ldots, n\}$ is hard core, \eqref{basic_ints} implies
	\begin{equation}\label{tilde_weight}
		\int d\Omega^{\caV_E\setminus\partial\Lambda_{N,K}}\prod_{(x,y)\in E}(-\Omega_x\cdot\Omega_y)  = \sum_{\{\gamma_1,\ldots, \gamma_n\}\in \caS(E)}\left(\frac{1}{15}\right)^{|\caV_E^4|}\prod_{i=1}^n\left(-\frac{1}{3}\right)^{|\tilde{\gamma}_i|-1}\partial_{\gamma_i}(\Omega)
	\end{equation}
	where we use that $\caV_{\gamma_i}\cap\partial\Lambda_{N,K}=\caV_{\tilde{\gamma}_i}\cap\partial\Lambda_{N,K}$, which yields $\partial_{\gamma_i}(\Omega)$. In the case that $\gamma_i$ is a loop that contains single boundary site $v\in\partial\mathring{\Lambda}_K$, we also use
	\[
	-(\Omega_v\cdot\Omega_v)=-1=\partial_{\gamma_i}(\Omega)\,.
	\]
	
	Every $v\in\caV_E^4$ either appears in a single $\gamma_i$ twice, corresponding to a degree four vertex in $G_{\gamma_i}$, or once in two separate $\gamma_i$, $\gamma_j$, corresponding to the polymers intersecting at a vertex. Hence, inserting \eqref{degree_identity}, the definitions of the weight function \eqref{square_weight} and polymer interaction \eqref{square_interaction}, one finds
	\begin{equation}\label{E_int_m}
		\int d\Omega^{\caV_E\setminus\partial\Lambda_{N,K}}\prod_{(x,y)\in E}(-\Omega_x\cdot\Omega_y)  = \sum_{\{\gamma_1,\ldots, \gamma_n\}\in \caS(E)} \prod_{i=1}^n W(\gamma_i)\prod_{i<j}\delta(\gamma_i,\gamma_j)\,.
	\end{equation}
	Now, given any set $\{\gamma_1,\ldots, \gamma_n\}\subseteq \caP_{N,K}^{(m)}$ is any collection of polymers such that $\prod_{i<j}\delta(\gamma_i,\gamma_j)\neq 0$, one can easily check that the edge set $E$ for the graph $\cup_{i}G_{\gamma_i}$ satisfies \eqref{degree_requirement}. Hence, inserting \eqref{E_int_m} into \eqref{two_ints} and recalling \eqref{square_Zn} 
	\[
	\Phi_{N,K}^{(m)}({\bf\Omega})=2^{-|\caE_{\Lambda_{N,K}^{(m)}}|}\sum_{\{\gamma_1,...,\gamma_n\}\subseteq\cP_{N,K}^{(m)}}\prod_{i=1}^nW(\gamma_i)\prod_{1\leq i<j\leq n}\delta(\gamma_i,\gamma_j)\,.
	\]
	The result follows from noting that $|\caE_{\Lambda_{N,K}^{(m)}}|=(m+1)|\caE_{\Lambda_{N,K}}|$, invoking the bijection \eqref{poly_bijection} and inserting \eqref{mweight_square}.
	
	When $m=0$, it is possible that removing all vertices $v\in\caV_E^4$ can result in empty trails $\tilde{\gamma}_i=(\emptyset,\emptyset),$ and so the formula in \eqref{tilde_weight} no longer makes sense. However, letting 
	\[\caE_{{N,K}}^{(m),e}=\left\{E\subseteq\caE_{\Lambda_{N,K}^{(m)}}: \eqref{degree_requirement} \; \text{holds}\right\}\] 
	the operation of decorating any edge $e\in\bZ^2$ with $m\geq 1$ particles induces bijections
	\begin{align}
		\caE^{(0),e}_{N,K}\ni E&\mapsto E^{(m)}\in\caE^{(m),e}_{N,K}
	\end{align}
	where $|E^{(m)}|=(m+1)|E|.$ Here we use that the degree requirement in \eqref{degree_requirement} implies that if one decorated site $v$ along an undecorated edge $e\in\Gamma$ belongs to $\caV_{E^{(m)}}$, then all decorated sites on $e$ must belong to $\caV_{E^{(m)}}$. Hence, integrating the left hand side of \eqref{E_integral} first over the decorated sites, one finds,
	\begin{equation}\label{E_m_to_E_0}
		\int d\Omega^{\caV_{E^{(m)}}\setminus\partial\Lambda_{N,K}}\prod_{(x,y)\in E^{(m)}}(-\Omega_x\cdot\Omega_y) = \left(-\frac{1}{3}\right)^{m|E|} \int d\Omega^{\caV_{E}\setminus\partial\Lambda_{N,K}}\prod_{(x,y)\in E}(-\Omega_x\cdot\Omega_y) 
	\end{equation}
	The result follows from combining \eqref{E_int_m} and \eqref{E_m_to_E_0} with \eqref{mweight_square}.
\end{proof}
					We now turn to verifying that the polymer representation for the decorated square AKLT model satisfies the cluster expansion convergence criterion from Theorem~\ref{thm:KP_condition} in Section~\ref{sec:Cluster_convergence}. Here, we recall that \[\zeta(\gamma,\gamma')=1-\delta(\gamma,\gamma')\] 
					where $\delta$ is as in \eqref{square_interaction}. The result is only valid for $m\geq 1$ as the case $m=0$ does not satisfy the criterion below.
					
					\begin{thm}\label{thm:KP_condition_lieb}
						Fix $m\geq 1$, and let $\cP_{N,K}^{(m)}$, $W_m$ be defined as in \eqref{square_polymers}, \eqref{square_weight}. For any $N,K$ with $N>\max(K+4,8)$, and $\gamma\in\cP_{N,K}$:
					\begin{equation}\label{eqn:cluster_condition_lieb}
						\sum_{\gamma'\in\cP_{N,K}^{(m)}}|W_m(\gamma')|\cdot|\zeta(\gamma,\gamma')|\cdot e^{(a+\epsilon)m|\gamma'|}\leq am|\gamma|\
					\end{equation}  
						where $a=.085$ and $\epsilon=.046$. The same result also holds after replacing $\caP_{N,K}^{(m)}$ with $\caP_{N,K}^{{\rm bulk},(m)}$.
					\end{thm}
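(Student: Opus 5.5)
We verify the Koteck\'y--Preiss--Ueltschi criterion \eqref{KP_criterion} directly. The plan is to bound the left-hand side of \eqref{eqn:cluster_condition_lieb} by a sum over lengths $l'$ of the number of trails of length $l'$ that interact with $\gamma$, times a uniform bound on the weight. Unlike the hexagonal case we avoid detailed tabulations: $m\geq 1$ supplies a factor $3^{-m|\gamma'|}$ through \eqref{mweight_square}, which should beat the connective growth of trails in $\bZ^2$ with a crude counting.

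\textbf{Weights and interaction.} By \eqref{square_weight} and \eqref{mweight_square},
\[
|W_m(\gamma')|\leq 3\cdot 3^{-(m+1)|\gamma'|},
\]
since $|\partial_{\gamma'}(\Omega)|\leq 1$ and $(3/5)^{|\caV^4|}\leq 1$. By \eqref{square_interaction}, $|\zeta(\gamma,\gamma')|\leq 1$, and $\zeta(\gamma,\gamma')=0$ unless $\gamma,\gamma'$ share an interior vertex or an edge. Every vertex of an undecorated trail lies on one of its edges, so an interacting $\gamma'$ must pass through one of the at most $|\gamma|+1\leq 2|\gamma|$ vertices of $\gamma$ (with lengths measured in undecorated edges). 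The sum is therefore bounded by
\[
2|\gamma|\sup_{v}\sum_{l'\geq 1}T(v,l')\,3\cdot 3^{-(m+1)l'}e^{(a+\epsilon)ml'},
\]
where $T(v,l')$ counts trails in $\caP_{N,K}$ of length $l'$ through $v$.

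\textbf{Counting trails.} A trail through $v$ of length $l'$ is obtained by choosing a position of $v$ in the sequence (at most $l'+1$ choices) and growing the trail in both directions. Each step has at most $3$ choices, since edges cannot be reused and an edge cannot immediately backtrack. This gives
\[
T(v,l')\leq c\,(l'+1)\,3^{l'}.
\]
One can do better by using that loops close up and that walks must end on $\partial\Lambda_{N,K}$. The hypotheses $N>\max(K+4,8)$ guarantee that the ring $\Lambda_{N,K}$ has width at least $4$, so short walks are only those near the boundary and corners, and minimal loop and walk lengths are controlled. With the crude count, the sum becomes
\[
C|\gamma|\sum_{l'}(l'+1)\,3^{-ml'}e^{(a+\epsilon)ml'}.
\]
At $m=1$ the ratio is $e^{a+\epsilon}/3\approx 0.38$, so the series converges geometrically. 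Moreover, the summand decreases in $m$ faster than linearly, so it suffices to check $m=1$ against the right-hand side $a m|\gamma|$.

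\textbf{Main obstacle.} The main obstacle is the constant: $a=0.085$ is small, so the crude bound $C\sum_l(l+1)(0.38)^l$ is far too large. I would first refine the count of short polymers, since these dominate.
\begin{itemize}
\item Walks in $\caP_{N,K}$ have length at least $1$ only when adjacent to the boundary. Interior loops have length at least $4$ and contribute $3^{-8}$-type factors at $m=1$.
\item Replace $|\gamma|+1$ by a count of edges of $\gamma$, since interacting polymers share a vertex lying on an edge of $\gamma$.
\item Compute exactly, via the enumeration code approach of the appendix, the numbers of trails of length up to about $10$ meeting a fixed edge of $\Lambda_{N,K}$, including near inner corners.
\end{itemize}
These exact counts are then combined with the geometric tail $3^{l'}(l'+1)$ for larger $l'$, and the constants $a=0.085$ and $\epsilon=0.046$ are checked numerically. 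The bulk version follows immediately, because $\caP_{N,K}^{{\rm bulk},(m)}\subseteq\caP_{N,K}^{(m)}$ and every summand is nonnegative.
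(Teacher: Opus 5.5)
Your skeleton matches the paper's proof: the uniform bound $|W_m(\gamma')|\le 3^{1-(m+1)|\gamma'|}$, reduction to $m=1$ by monotonicity, a per-site count of interacting polymers times $|\gamma|$, exact enumeration of short polymers, a tail of order $(n+1)3^{n}$, and a numerical check. The difficulty is that at $a=0.085$ there is essentially no slack. With the paper's constants ($C_2=\tfrac12$, $C_3=4$, $C_4=9$, $C_5=13$, $C_6=42$, $C_7=88$, and $4(n+1)3^{n-1}$ for $n\ge 8$) the total is $\approx 0.0842$. This is in the normalization the paper's proof actually uses, with $m+1$ in place of $m$ on both sides, which is what feeds $F_m$ later. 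In your literal normalization the total is $\approx0.0843$. So everything hinges on the bookkeeping, and yours gives away factors that cannot be afforded.

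The first loss is in how you count sites of $\gamma$. Bounding by the $|\gamma|+1\le 2|\gamma|$ vertices of $\gamma$ doubles the sum. Your repair via edges of $\gamma$ is a valid union bound but still too lossy. The edge joining the outer-corner site $(N-1,N-1)$ to $(N-1,N-2)$ meets one polymer of length $2$, five of length $3$ and nine of length $4$. In your normalization at $m=1$ these alone contribute $0.0481+0.0305+0.0069>0.085$. The correct count is by interior sites:
\begin{itemize}
\item $\zeta(\gamma,\gamma')\neq 0$ forces a common edge or a common site off $\partial\Lambda_{N,K}$;
\item every edge of $\Lambda_{N,K}$ has an endpoint off $\partial\Lambda_{N,K}$;
\item walk endpoints lie on $\partial\Lambda_{N,K}$.
\end{itemize}
So at most $|\gamma|$ sites of $\gamma$ matter, and the number of interacting $\gamma'$ of length $n$ is at most $M_n|\gamma|$, where $M_n$ is the maximal number of polymers of length $n$ through one site.

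The second, decisive ingredient is a global rather than local treatment of the leading term $n=2$, which your list of refinements does not single out. The only polymers of length $2$ are the four walks straddling the outer corners. A polymer meeting two of them must contain two corner sites, and so has length at least $2(N-1)$. Hence $\gamma$ meets at most $1+\lfloor |\gamma|/(2(N-1))\rfloor\le |\gamma|/2$ of them, i.e.\ $C_2=\tfrac12$. Any per-site or per-edge count only gives $C_2=1$. With the paper's other constants this raises the total to $\approx 0.100$ (resp.\ $\approx 0.108$ in your normalization). In your normalization it leaves only $\approx 0.0036$ for all $n\ge 8$, whereas the crude tail from $n=8$ is $\approx 0.027$. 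You would therefore need exact counts well beyond $n=7$ just to compensate, with almost no margin. With interior-site counting, $C_2=\tfrac12$, the enumerated $C_3,\dots,C_7$ and the crude tail from $n=8$, the sum closes at $\approx 0.0842<0.085$; that is the paper's proof. Your reduction to $m=1$ and your handling of the bulk polymer set are fine as they stand.
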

					\begin{proof}
							To ease notation, let $\caP_{N,K}=\caP_{N,K}^{(0)}$ and $W=W_0$ denote the polymer set and weight function associated with the undecorated square lattice. Recall the bijection in \eqref{poly_bijection} and weight function relation in \eqref{square_weight}. Since  $\mathring{S}_v(\gamma^{(m)})=\mathring{S}_v(\gamma^{(0)})$ for any $m\geq 0$ (which implies that $\zeta$ is invariant under the decoration $m$) the result follows from showing that for any $\gamma\in \caP_{N,K}$,
							\begin{equation}\label{equiv_cluster_square}
								\sum_{\gamma'\in\cP_{N,K}}\frac{|W(\gamma')|}{3^{m|\gamma'|}}\cdot|\zeta(\gamma,\gamma')|\cdot e^{(a+\epsilon)(m+1)|\gamma'|}\leq a(m+1)|\gamma|\,.
							\end{equation} 
							As in the hexagonal model case, given a fixed  $\gamma\in \caP_{N,K}$, we break up the sum in \eqref{equiv_cluster_square} based on $|\gamma'|$. The shortest such polymer has length $|\gamma'|=2$. As $|W(\gamma')|\leq \left(\frac{1}{3}\right)^{|\gamma'|-1}$ for all $\gamma'\in\caP_{N,K}$, \eqref{equiv_cluster_square} follows from determining constants $C_n\geq 0$ so that
							\begin{align}
								\sum_{n=2}^\infty C_nw_m(n)&\leq a \quad\text{where}\label{convergence_check}\\ \sum_{\substack{\gamma'\in\caP_{N,K}\\|\gamma'|=n}}\frac{|\zeta(\gamma,\gamma')|}{|\gamma|} \leq C_n, \quad&\text{and}\quad w_m(n) = \frac{e^{(a+\epsilon)(m+1)n}}{(m+1)3^{(m+1)n-1}}\,.
							\end{align}

							As before, we write $\gamma\nmid\gamma'$ if $G_\gamma\cap G_{\gamma'}\neq \emptyset$. As $|\zeta(\gamma,\gamma')|\leq 1$ for all $\gamma'\in \caP_{N,K}$, and $\zeta(\gamma,\gamma')\neq 0$ only if $\gamma\nmid\gamma'$, it follows that
							\[
							\sum_{\substack{\gamma'\in\caP_{N,K}\\|\gamma'|=n}}\frac{|\zeta(\gamma,\gamma')|}{|\gamma|} \leq\frac{|\{\gamma'\in\caP_{N,K}:|\gamma'|=n\wedge \gamma\nmid\gamma'\}|}{|\gamma|} ,
							\]
							and we determine the values of $C_n$ by appropriately bounding the cardinality of the set above.
							
							Consider first $n=2$. There are only four walks of length two, each of which straddles one of the corners on the outer boundary of $\Lambda_{N,K}$. As the closest pair of these walks are $2(N-1)$ edges away from one another, one finds
							\begin{equation}
								|\{\gamma'\in\caP_{N,K}:|\gamma'|=2\wedge \gamma\nmid\gamma'\}|\leq 1+\left\lfloor\frac{|\gamma|}{2(N-1)}\right\rfloor \leq \frac{|\gamma|}{2} 
							\end{equation}
							where the final bound above use $|\gamma|\geq 2$. Hence, we set $C_2=\frac{1}{2}$.
							
							For all remaining $n$, we use that
							\begin{equation}
								\left|\left\{\gamma'\in\caP_{N,K}:|\gamma'|=n\wedge \gamma\nmid\gamma'\right\}\right|\leq M_n|\gamma|\,.
							\end{equation}  
							where $M_n$ is the largest number of polymers of length $n$ that pass trough a fixed vertex:
							\begin{equation}
								M_n=\max_{v\in\caV_{\Lambda_{N,K}}}\left|\left\{\gamma'\in\caP_{N,K}:\;v\in\gamma', |\gamma'|=n\right\}\right|
							\end{equation}
							In the case $3\leq n\leq 7$ we bound $M_n\leq C_n:=W_n+L_n$ using the code provided in Appendix \ref{subsec:lieb} by calculating 
							\begin{equation}
							W_n:=\max_{v\in\caV_{\Lambda_{N,K}}}\left|\left\{\gamma'\in\caW_{N,K}:\;v\in\gamma', |\gamma'|=n\right\}\right|
														\end{equation}
							\begin{equation}
														L_n:=\max_{v\in\caV_{\Lambda_{N,K}}}\left|\left\{\gamma'\in\caL_{N,K}:\;v\in\gamma', |\gamma'|=n\right\}\right|
							\end{equation}
							 This produces
							$$C_3=4,\: C_4=9,\: C_5=13,\:,C_6=42,\: C_7=88.$$ 
							For $n\geq 8$, we use the bound
							\begin{equation}
								M_n\leq C_n:= 4(n+1)\cdot3^{n-1},
							\end{equation}
							which follows from the following argument. Fix $1\leq i \leq n+1$ and suppose that $v$ is the $i$-th vertex for a trail $\gamma$ of length $n$. Then, $\gamma$ can be thought of as the concatenation of the trail $(v_1,\ldots, v_i$) preceding $v$ and the trail $(v_i, \ldots, v_{n+1})$ succeeding $v$. The result follows from bounding the maximum number of preceding and succeeding trails using that (1) since no edge may be repeated in $\gamma$, there are four choices for the first edge emanating from $v$, and at most three choices for every other edge in the trail, and (2) there are $n+1$ choices for $i$. 
							
							Substituting $a=.085$, $\epsilon=.046$, and the values of $C_n$ into \eqref{convergence_check}, and using that $w_m(n)$ is decreasing in $m$, one finds:
							\begin{equation}
								\sum_{n=2}^\infty C_nw_m(n) \leq \sum_{n=2}^7 C_nw_1(n) + \sum_{n\geq 8}2(n+1)\left(\frac{e^{2(a+\epsilon)}}{3}\right)^n \leq 0.08425\,.
							\end{equation} 
							and so \eqref{convergence_check} is satisfied.
						\end{proof}
						We can now complete the proof of Theorem~\ref{thm:gs_indistinguishability} given that the decorated square lattice satisfies the cluster expansion convergence criterion from Theorem~\ref{thm:kp}. 
						
						\subsection{Proof of Theorem~\ref{thm:gs_indistinguishability} for the Decorated Square Lattice Models}
						\begin{proof}
							The statement and proof Lemma~\ref{lem:op_norm} hold identically in the decorated square model case as they do in the (un)decorated hexagonal model case. Fix $m\geq 1$, $K\geq 0$ and $N>K-4$. Since the cluster expansion converges for such $N$ and $K$ by Theorem~\ref{thm:KP_condition_lieb} and the weight function $W_m(\gamma)$ is a real-valued continuous function of the boundary variables $\Omega_x$, $x\in\partial\Lambda_{N,K}$, it follows from \eqref{cluster_expansion} that $\Phi_{N,K}^{{\rm bulk}, (m)}>0$. So Lemma~\ref{lem:op_norm} applies and we can proceed similarly to the proof of Case I in Section~\ref{sec:proof}.
							
							One main difference in the decorated square model case is that the uniqueness of the infinite-volume frustration-free ground state is not, a priori, established. However, bounding \eqref{bulk_cauchy_bound} as in Section~\ref{sec:proof} gives that $\omega_N^{{\rm bulk},(m)}(A)$ is Cauchy for each $A\in\caA_{\Gamma^{(m)}}^{\rm loc}$, see \eqref{Cauchy_bound}. As a result, there is an infinite volume, frustration free ground state $\omega^{(m)}:\caA_{\Gamma^{(m)}}\to\bC$ such that
							\[
							\omega^{(m)}(A) = \lim_{N\to\infty}\omega_N^{{\rm bulk},(m)}(A) \qquad \forall \, A\in \caA_{\Gamma^{(m)}}^{\rm loc}\,.
							\]
							We show that \eqref{eq:indistinguishability} is satisfied with this ground state as outlined in \eqref{two_bounds}. As every infinite volume, frustration-free ground state is the weak-$*$ limit of a sequence of finite volume ground states, it follows immediately from \eqref{eq:indistinguishability} that $\omega^{(m)}$ is the unique infinite volume frustration-free ground state.
							
							It is left to bound the right hand side of \eqref{bulk_cauchy_bound} and \eqref{bb_bulk_bound}, for which it is helpful to recall \eqref{Renyi_divergence}. The analogous formula in \eqref{log_quotient} again holds after replacing each quantity with the appropriate decorated square model counterpart, as do \eqref{set_diff_bound_bulk}-\eqref{set_diff_bound_bulk_bdy}. Hence, it is left to determine 
							\[
							\caD_{M,N,K}^{\rm bulk}\triangle \caD_{M,N,0}^{\rm bulk} \qquad \text{and} \qquad \caD_{N,K}\triangle\caD_{N,0}.\]
							
							Recalling the definition of a cluster in \eqref{cluster_def} and considering the polymer interacting in \eqref{square_interaction}, we see that a collection $C=\{\gamma_1,\ldots,\gamma_n\}\subseteq \caP_{N,K}^{(m)}$ is a cluster if the graph
							\[
							G_C=\bigcup_{i=1}^n G_{\gamma_i}
							\]
							is connected after removing any boundary vertices $\partial\Lambda_{N,K}$ as well as any edges connected to a boundary vertex. This is more restrictive than in the hexagonal model case, where one only needed that $G_C$ was connected. Nonetheless, the formulas in \eqref{set_diff_bound_bulk}-\eqref{set_diff_bound_bulk_bdy} still hold in this case. The rest of the proof runs analogously to the hexagonal case. Taking the same set definitions for the sets $A$ in \eqref{A_sets} and function $b$ in \eqref{b_hex}, using the expressions for $a$ and $\epsilon$ from Theorem~\ref{thm:KP_condition_lieb} and noting that
							\[
							|\gamma^{(K+1)}|=8K,\quad {\rm dist}(\partial\Lambda_N,\gamma^{(K+1)})=N-K\,.
							\]
						We find that for any $m\geq 1$
						\begin{equation}
							\max\left\{\sup_{\Omega^{\partial\Lambda_N}}D_\infty\left(\frac{\Phi_{N,K}^{(m)}}{Z_N^{(m)}}\right|\left|\frac{\Phi_{N,K}^{{\rm bulk},(m)}}{Z_N^{{\rm bulk},(m)}}\right),\;D_\infty\left(\frac{\Phi_{M,K}^{{\rm bulk},(m)}}{Z_M^{{\rm bulk},(m)}}\right|\left|\frac{\Phi_{N,K}^{{\rm bulk},(m)}}{Z_N^{{\rm bulk},(m)}}\right)\right\}\leq F_m(N,K)
						\end{equation}
						where
						\[
						F_m(N,K) = 1.36(m+1)Ke^{-\epsilon(m+1)(N-K)}
						\]
					The result is an immediate consequence of \eqref{two_bounds} combined with Lemma~\ref{lem:op_norm}, and noting that $\eta_\Gamma=\epsilon$ in this case.

						\end{proof}
\section*{Acknowledgements}

The authors acknowledge Angelo Lucia for helpful comments on the topic of this paper. 
This work was supported in part by the US National Science Foundation under grants DMS-2108390 and DMS-2510824. 

\appendix

\section{Exponential Decay of Correlations}\label{sec:correlation_decay}
As already discussed in the paper by Koteck\'{y} and Preiss \cite{kotecky:1986}, in a suitable context, convergent cluster expansions can be used to prove exponential decay of two- and multi-point correlation functions. To prove exponential decay of correlations in the ground states of the AKLT models on the hexagonal and Lieb lattices with the cluster expansion we are using in this paper, it turns out that the convergence criterion needs to be slightly adjusted. To prove LTQO for an observable $A$ supported in $\Lambda_K$, the basic geometry of finite volumes to study was given by the sets $\Lambda_{N,K}$, for sufficiently large $N$. To study correlation functions we will need to consider finite volumes with holes to fit the supports of each of the observables involved. This modifies the combinatorics required to check the convergence criterion for the cluster expansion. Since all of this is slightly non-standard, we provide a proof of exponential decay here.

We explicitly analyse the two-point correlations of observables that are translates of $A$ and $B$, supported in balls of radius $M/2$, for $M$ even, in the dual lattice, meaning there exist points $x_A,x_B$ in the dual lattice and balls $$\Lambda_A:=\Lambda_{M/2,x_A},\quad \Lambda_B:=\Lambda_{M/2,x_B}$$ such that $\supp(A)\subset\Lambda_A$ and $\supp(B)\subset\Lambda_B$. We use the lattice metric to define the distance between these balls:
$$
d(\Lambda_A,\Lambda_B):=\inf\{d(a,b)\::\:a\in\caV_{\Lambda_A},\;b\in\caV_{\Lambda_B}\}.
$$
Multi-point correlation functions can be estimated by iterating the two-point estimate.

It will be useful to have a notation for the smallest loops enclosing $\Lambda_A$ and $\Lambda_B$, namely $\gamma^A:=\partial\mathring{\Lambda}_A,$ and $\gamma^B:=\partial\mathring{\Lambda}_B$. 

\begin{thm} Let $A,B$ be observables supported in balls of radius $M/2\geq25$, $\Lambda_A$ and $\Lambda_B$, respectively, with $d(\Lambda_A,\Lambda_B)\geq 50$, and let $\omega$ be the unique infinite-volume ground-state for the AKLT model on either the hexagonal lattice or the Lieb lattice. Then, there exist a constant $\epsilon >0$, independent of $M$, for which we have
	$$
	|\omega(AB)-\omega(A)\omega(B)|\leq \Vert A\Vert \Vert B\Vert  M^{2.9} e^{-\epsilon d(\Lambda_A,\Lambda_B)}.
	$$
	For the $\epsilon$ we can take the same values we used in the proof of the LTQO property: $\epsilon_h=0.0086$ for the AKLT model on the hexagonal lattice and $\epsilon_l=0.046$ for the model on the Lieb lattice. With each further decoration $\epsilon$ can be increased from these values.
\end{thm}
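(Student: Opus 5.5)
We will prove this by reusing the machinery of Lemma~\ref{lem:op_norm} together with the cluster expansion, now on a finite volume with two holes. Fix a large region $\Lambda$ (a translate of some $\Lambda_L$ with $L$ large, or a limit of such) containing $\Lambda_A\cup\Lambda_B$ well in its interior. Let $\Lambda_{A,B}$ be the volume obtained by removing $\mathring{\Lambda}_A$ and $\mathring{\Lambda}_B$, i.e.\ the minimal graph with edge set $\caE_\Lambda\setminus(\caE_{\mathring{\Lambda}_A}\cup\caE_{\mathring{\Lambda}_B})$. Exactly as in \eqref{Z_to_Phi}, write
\[
\omega^{\rm bulk}_\Lambda(AB)=\frac{1}{Z^{\rm bulk}_\Lambda}\int d\rho_{\mathring{\Lambda}_A}d\rho_{\mathring{\Lambda}_B}\,A({\bf\Omega})B({\bf\Omega})\,\Phi^{\rm bulk}_{A,B}({\bf\Omega}),
\]
with $\Phi^{\rm bulk}_{A,B}$ a function of the variables on $\gamma^A\cup\gamma^B$. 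The same computation as Lemma~\ref{lem:poly_rep} (resp.\ Lemma~\ref{lem:square_polymer}) gives a polymer representation with polymers being loops in $\Lambda_{A,B}$ and walks with both endpoints on $\gamma^A\cup\gamma^B$. Analogously, $\omega^{\rm bulk}_\Lambda(A)\omega^{\rm bulk}_\Lambda(B)$ is represented with the product density $\Phi_A\Phi_B/Z$, where $\Phi_A$ (resp.\ $\Phi_B$) uses the volume with only hole $A$ (resp.\ $B$). The key comparison is
\[
\log\frac{\Phi^{\rm bulk}_{A,B}\,Z^{\rm bulk}_\Lambda}{\Phi^{\rm bulk}_A\,\Phi^{\rm bulk}_B}=\sum_{C}\pm W^T(C),
\]
where, after cancellation of common clusters as in \eqref{sym_diff_bulk}, the surviving clusters $C$ all touch both $\gamma^A$ and $\gamma^B$. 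This is because a cluster touching only one hole appears identically in the corresponding pair of expansions, and clusters touching neither cancel among all four.

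We then bound this sum using Corollary~\ref{lem:kplemma}, with $\gamma=\gamma^A$, $A$-set equal to the polymers meeting $\gamma^B$, and $b(C)=\epsilon\sum_{\gamma\in C}|\gamma|$. Connectedness forces $\sum|\gamma|\ge d(\Lambda_A,\Lambda_B)$, so each of the (four) families contributes at most $a(\gamma^A)e^{-\epsilon d(\Lambda_A,\Lambda_B)}$, with $a(\gamma^A)=O(M)$. This controls $D_\infty$ of the two densities. The argument of Lemma~\ref{lem:op_norm} then works verbatim: embed the boundary-site spaces on $\gamma^A,\gamma^B$ into $L^2$, take square roots, and use $\|A\otimes B\|=\|A\|\|B\|$. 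This yields
\[
|\omega_\Lambda^{\rm bulk}(AB)-\omega_\Lambda^{\rm bulk}(A)\omega_\Lambda^{\rm bulk}(B)|\le\|A\|\|B\|\,D e^{D}
\]
with $D=O(M)e^{-\epsilon d}$. Letting $\Lambda\uparrow\Gamma$ and using Theorem~\ref{thm:gs_indistinguishability} (the bulk states converge to $\omega$) gives the claim. The crude polynomial prefactor $M^{2.9}$ should absorb $4a(\gamma^A)$ together with the factor $e^{D}$ and constants, since $d\ge 50$.

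The main obstacle is verifying the KPU criterion (Theorem~\ref{thm:kp}) with the same $a,\epsilon$ on the two-holed domain. Walks may now run between the two inner boundaries or between an inner and an outer boundary, and the topology changes the counts from Lemma~\ref{m3} and Table~\ref{table:saw}. First I would push the outer boundary to infinity (only loops plus walks ending on $\gamma^A\cup\gamma^B$, as in the bulk set). Then I would reuse Lemma~\ref{lem:loopbd} for long polymers, and redo the short-walk counts of Lemma~\ref{m3} locally near each hole. The hypotheses $M/2\ge 25$ and $d\ge50$ mirror $K\ge K_\Gamma$ and $N-K\ge N_\Gamma$, so short walks ($l'\le20$) see only one hole and the corridor estimates transfer with $\kappa$-bounds as in \eqref{kappa_l_bound}. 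For the Lieb lattice, the crude per-vertex bound $M_n$ in Theorem~\ref{thm:KP_condition_lieb} is geometry-independent, so that case should go through directly.
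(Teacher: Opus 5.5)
Your outline is the paper's own proof. It uses the same $\|A\|\|B\|\,H(\cdot,\cdot)$ bound from the argument of Lemma~\ref{lem:op_norm} and the same four-family cancellation, leaving only clusters that reach both holes. It then applies Corollary~\ref{lem:kplemma} with $b=\epsilon|\gamma|$ and $a=O(M)$, re-checks the KPU criterion after adding the walks that run between the holes, and passes to the limit via Theorem~\ref{thm:gs_indistinguishability}. Two precision points:
\begin{itemize}
\item The product density is $\Phi^{\rm bulk}_A\Phi^{\rm bulk}_B/Z^2$, not $/Z$; your log-ratio already uses the correct form.
\item In the corollary, $\gamma$ must lie in all four polymer sets. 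So take the loop just outside $\mathring{\Lambda}_A$ (the analogue of $\gamma^{(K+1)}$), not the loop carrying the boundary variables, which uses edges of the hole.
\end{itemize}

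The step that fails as you justify it is the last one. You have $D\le CMe^{-\epsilon d}$ with an absolute constant $C$ of order $4$ (from $4a(\gamma^A)$). On the hexagonal lattice at $d=50$ one only has $e^{-\epsilon d}\approx 0.65$. So $D$ is of order $M$, and $De^{D}$ is exponentially large in $M$. The hypothesis $d\ge 50$ therefore does not let $M^{2.9}$ absorb $e^{D}$; the paper's final displayed inequality has the same defect.

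The repair is a case split:
\begin{itemize}
\item If $M^{2.9}e^{-\epsilon d}\ge 2$, the claim follows from the trivial bound $|\omega(AB)-\omega(A)\omega(B)|\le 2\|A\|\|B\|$.
\item Otherwise $e^{-\epsilon d}<2M^{-2.9}$. Then $D<2CM^{-1.9}$ is tiny for $M\ge 50$, so $e^{D}\le 2$ and $De^{D}\le 2CMe^{-\epsilon d}\le M^{2.9}e^{-\epsilon d}$.
\end{itemize}

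The same observation also eases your ``main obstacle''. The nontrivial regime forces $d>\epsilon^{-1}(2.9\ln M-\ln 2)$, which for $M\ge 50$ means $d$ above roughly $1200$ on the hexagonal lattice and $230$ on the Lieb lattice. This matters because the corridor decomposition in Lemma~\ref{m3} needs the boundary components to be separated by more than $\ell_0$ plus the two corridor widths. The hypothesis $N-K\ge 53$ there corresponds to a graph distance above $106$, so $d\ge 50$ does not literally ``mirror'' $N-K\ge N_\Gamma$. In the nontrivial regime the needed separation is automatic.
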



\begin{proof}
	First we note that $\gamma^A,\gamma^B$ are both in $\caP^{\mathrm{bulk}}_N$ for large enough $N$. 
	Recalling the formulae for ground state expectations, we see the expression 
	\begin{align}
		|\omega^{\mathrm{bulk}}_N(A)\omega^{\mathrm{bulk}}_N(B)-\omega^{\mathrm{bulk}}_N(AB)|&=\left|\int d\Omega^{\Lambda_A}\rho_{\Lambda_A}
		\frac{\Phi^{\mathrm{bulk}}_{N,\Lambda_A}}{Z_N}A(\Omega)
		\cdot\int d\Omega^{\Lambda_B}\rho_{\Lambda_B}
		\frac{\Phi^{\mathrm{bulk}}_{N,\Lambda_B}}{Z_N}B(\Omega)\right.\\&\left.-\int d\Omega^{\Lambda_{A, B}}\rho_{\Lambda_{A, B}}
		\frac{\Phi^{\mathrm{bulk}}_{N,\Lambda_{A, B}}}{Z_N}(AB)(\Omega)\right|
	\end{align}
	As defined the symbols factorize when $A$ and $B$ have disjoint support so that 
	$$
	(AB)(\Omega) = A(\Omega)B(\Omega),
	\qquad\rho^{\Lambda_{A, B}}=\rho^{\Lambda_A}\rho^{\Lambda_B},
	\qquad d\Omega^{\Lambda_{A, B}}=d\Omega^{\Lambda_A}d\Omega^{\Lambda_B}
	$$ 
	so that
	\begin{align}
		|\omega^{\mathrm{bulk}}_N(A)\omega^{\mathrm{bulk}}_N(B)-\omega^{\mathrm{bulk}}_N(AB)|&=\left|\int d\Omega^{\Lambda_{A, B}}\rho_{\Lambda_{A, B}}\left[
		\frac{\Phi^{\mathrm{bulk}}_{N,\Lambda_A}}{Z_N}\frac{\Phi^{\mathrm{bulk}}_{N,\Lambda_B}}{Z_N}-\frac{\Phi^{\mathrm{bulk}}_{N,\Lambda_{A, B}}}{Z_N}\right](AB)(\Omega)
		\right| \\&\leq
		||AB||\cdot\left\|
		\frac{\Phi^{\mathrm{bulk}}_{N,\Lambda_A}}{Z_N}\frac{\Phi^{\mathrm{bulk}}_{N,\Lambda_B}}{Z_N}-\frac{\Phi^{\mathrm{bulk}}_{N,\Lambda_{A, B}}}{Z_N}\right\|\\&\leq \Vert A\Vert \Vert B\Vert  H\left(\frac{\Phi^{\mathrm{bulk}}_{N,\Lambda_A}}{Z_N}\frac{\Phi^{\mathrm{bulk}}_{N,\Lambda_B}}{Z_N},\frac{\Phi^{\mathrm{bulk}}_{N,\Lambda_{A, B}}}{Z_N}\right)
	\end{align}
	where as before $H(f,g)=D_{\infty}(f,g)e^{D_{\infty}(f,g)}$ with \begin{equation}D_{\infty}(f,g)=\left\|\log(f)-\log(g)\right\|_{L_{\infty}(\Lambda_{A, B})}\end{equation}

	We will use the same cluster expansion as in the proof of LTQO but there is one caveat. Namely, that the polymer sets are not identical to those used in the proofs of LTQO. In particular, for the correlation functions we need also need to count walks that connect the supports of the two observables. Therefore we need to verify separately that the convergence criterion is still satisfied. It turns out that the convergence criterion is satisfied for the expansions we need with the same parameters $a(l)$ and $\epsilon$.
	
	The expression for the partition function that we need for the correlation functions includes the set of paths that cross between volumes:
	$$
	\caW_{N}^{A, B}:=\{\gamma\in\caP_{\Lambda_N}\::\:\mathrm{ep}(\gamma)\cap\partial\Lambda_A\neq\emptyset,\:\mathrm{ep}(\gamma)\cap\partial\Lambda_B\neq\emptyset\}
	$$ 
	On the other hand,  loops and walks that cross into the interior of either $\Lambda_A$ or $\Lambda_B$ are not present. Our strategy is to show that the Koteck\'{y}-Preiss criterion is satisfied when we include the set of paths that begin and end on either volume $\caW_N^{A, B}$ into $\Phi_{N,\Lambda_A}$ under the assumption that $\Lambda_A$ and $\Lambda_B$ are sufficiently far apart. 
	
	For the hexagonal lattice, referring back to discussion and notations in Section \ref{sec:Cluster_convergence}, we split this up into the case that either $\gamma'$ or $\gamma$ are in $\caW^{A, B}$:
	\begin{enumerate}
		\item If $\gamma'\in\caW^{A, B}$ then assuming that $d(\Lambda_A,\Lambda_B)\geq 50$ then defining 
		$\tilde{\caW}_{l'}^{A,B}=\caW^A_{l}\cup\caW^{A, B}_{l}$
		we have that $$\left|\sum_{\substack{\gamma'\in\tilde{\caW}_{l'}^{A,B}}}w(l')\right|\leq l\cdot 2^{l'}\cdot\frac{1}{3^{l'-1}}$$ as before and so adding these walks does not change the numbers in the first $4$ columns of Table \ref{table:totals} provided $d(\Lambda_A,\Lambda_B)>6$.
		\item
		If $\gamma\in\tilde{\caP}_{A, B}$ then assuming $d(\Lambda_A,\Lambda_B)\geq 7$ implies that $\tilde{\caW}\subset\caW_{>6}$, so we restrict to this case and check that each of the bounds used in Lemmas 5.4 work.
		We need to check that the bounds on $M_{w,w},\;M_{\ell,w},\;M_{w,\ell}$ and $M_{\ell,\ell}$ still work when $l>6$.
		\begin{enumerate}
			\item The bounds on $M_{ww}(l,l')$ are the most intensive. We see that the arguments for $l'$ odd are intact since the distance condition $d(\Lambda_A,\Lambda_B)\geq 50$ will allow us to guarantee that $$\frac{1}{|\gamma'|}\sum_{\substack{\gamma'\nmid \gamma \\3\leq l'\leq 20\mathrm{\:odd}}}w(|\gamma'|)\leq\frac{1}{7}\sum_{\substack{3\leq l'\leq 20\\l'\mathrm{\:odd}}}Q(l')w(l')$$ 
			Similarly the estimate $$|M_{ww}(l,l')|\leq l(2l'+95)2^{l'-10}$$ still holds as long as $d(\Lambda_A,\Lambda_B)\geq 22$. Thus we only need to verify the bounds $M_{w,w}(l,l')$ for $l'$ even and $4\leq l'\leq 20$, which is the content of Lemma 5.4, which holds when $N-K\geq K\geq 25$ and $d(\Lambda_A,\Lambda_B)\geq 50$.
			\item The bounds on $M_{\ell, w}$ follow from the same logic as above.
			\item The upper bounds on $M_{w,\ell}$  remain unchanged as the presence of multiple interior volumes does not increase the number of loops.
			\item The upper bounds on $M_{\ell,\ell}$ remain unchanged as the presence of multiple interior volumes does not increase the number of loops.
		\end{enumerate}
	\end{enumerate}
	
	We conclude that the Koteck\'{y}-Preiss is satisfied with the same parameters as in Section \ref{sec:Cluster_convergence}. 
	
	Similar arguments can be repeated for the Lieb lattice. In fact, adding the polymer set $\caW_N^{A,B}$ does not change the upper bounds used in Theorem \ref{thm:KP_condition_lieb} as long as $d(\Lambda_A, \Lambda_B) \geq 8)$.
	
	In both cases, the exponential decay of correlations now follows by the following standard reasoning.
	
	For $\Lambda\subset\Lambda_N$ let $\caP^{\mathrm{bulk}}_{N,\Lambda}=\{\gamma\in\caP_{N}\::\:\mathrm{ep}(\gamma)\subset\partial\Lambda\}$ and recall from the definitions and the convergence of the cluster expansion of 
	\begin{equation}\log\left(\frac{\Phi_{N,\Lambda_A}^{\rm bulk}}{Z_N^{\rm bulk}}\right) =\log\left(2^{|\caE_{\Lambda_{A}}|}\right)+\sum_{C\in \caC(\caP_{N,\Lambda_A}^{\rm bulk})}W^T(C)-\sum_{C'\in \caC(\caP_{N,\emptyset}^{\rm bulk})}W^T(C')
	\end{equation} from which we get 
	\begin{align}
		D_{\infty}\left(\frac{\Phi^{\mathrm{bulk}}_{N,\Lambda_A}}{Z_N}\frac{\Phi^{\mathrm{bulk}}_{N,\Lambda_B}}{Z_N},\frac{\Phi^{\mathrm{bulk}}_{N,\Lambda_{A, B}}}{Z_N}\right)&=\left\|\log\left(2^{|\caE_{\Lambda_{A}}|}\right)+\sum_{C\in \caC(\caP_{N,\Lambda_A}^{\rm bulk})}W^T(C)-\sum_{C'\in \caC(\caP_{N,\emptyset}^{\rm bulk})}W^T(C')\right.\\&\left.+\log\left(2^{|\caE_{\Lambda_{B}}|}\right)+\sum_{C\in \caC(\caP_{N,\Lambda_B}^{\rm bulk})}W^T(C)-\sum_{C'\in \caC(\caP_{N,\emptyset}^{\rm bulk})}W^T(C')\right.\\&-\left.\log\left(2^{|\caE_{\Lambda_{A, B}}|}\right)-\!\!\!\!\sum_{C\in \caC(\caP_{N,\Lambda_{A, B}}^{\rm bulk})}\!\!\!\!W^T(C)+\sum_{C'\in \caC(\caP_{N,\emptyset}^{\rm bulk})}W^T(C')\right\|_{L_{\infty}(\Lambda_{A, B})}\\&=\left\|\sum_{\substack{C\in\caC(\caP^{\mathrm{bulk}}_{N,\Lambda_{A}})\\C\:\nmid\:\gamma^A,\:C\:\nmid\:\gamma^B}}W^T(C)+\sum_{\substack{C\in\caC(\caP^{\mathrm{bulk}}_{N,\Lambda_{B}})\\C\:\nmid\:\gamma^A, \:C\:\nmid\:\gamma^B}}W^T(C)-\sum_{\substack{C\in\caC(\caP^{\mathrm{bulk}}_{N,\Lambda_{A, B}})\\C\:\nmid\:\gamma^A,\:C\:\nmid\:\gamma^B}}W^T(C)\right.\\&\left.-\sum_{\substack{C\in\caC(\caP^{\mathrm{bulk}}_{N,\emptyset})\\C\:\nmid\:\gamma^A,\:C\:\nmid\:\gamma^B}}W^T(C)\right\|_{L_{\infty}(\Lambda_{A, B})}\\&\leq4 a(\gamma^A\cup\gamma^B)e^{-\epsilon\cdot d(\gamma^A,\gamma^B)}
	\end{align}
	by application of a Lemma \ref{lem:kplemma}.
	
	Now keeping $A,B$ be fixed, we may take the limit $N\to\infty$ and thus we have 
	\begin{align}
		|\omega(AB)- \omega(A)\omega(B)|&\leq0.9\Vert A\Vert \Vert B\Vert  M e^{-\epsilon d(\Lambda_A,\Lambda_B)}e^{0.9 M e^{-\epsilon d(\Lambda_A,\Lambda_B)}}\\&
		\leq  \Vert A\Vert \Vert B\Vert M^{2.9} e^{-\epsilon d(\Lambda_A,\Lambda_B)}.
	\end{align}
\end{proof}

 \section{Code for Generating Paths in the Hexagonal Lattice}\label{sec:computer_code}
 This section provides the Python code for generating the hexagonal lattice paths which are enumerated and used in Tables 1-5.
 \subsection{Preliminary functions}\label{subsec:prelim}
 	This block of code contains functions which are called later to generate the numbers in Tables 1-5.
 	The variable \texttt{TOL} is the tolerance for floating point comparisons, which we use to compare if two paths are the same.
 	The function \texttt{is\_close(p1, p2, tol=TOL)} decides whether two coordinates \texttt{p1} and \texttt{p2} are within a tolerance for error of each other.
 	The function \texttt{in\_path\_with\_tol(p, path, tol=TOL)} decides whether a point \texttt{p} is in a path \texttt{path} with a given tolerance for error.
 	The function \texttt{starts2(n)} generates a line of \texttt{n} successive points of hexagonal lattice boundary.
 	The function \texttt{ends2(n,o)} generates a line of \texttt{n} successive points on the inner boundary if $\texttt{o}=1$ and on the outer boundary if $\texttt{o}=0$. The function \texttt{alls(n,o)} returns the points in \texttt{end2(n,o)} and \texttt{starts2(n)} in a single array.
 	The function \texttt{hits(a,b)} returns whether two paths \texttt{a} and \texttt{b} intersect. The functions \texttt{firsts(paths)} and \texttt{lasts(paths)} return the first and last coordinates respectively for each path in a list of paths labelled \texttt{paths}.
 	The function \texttt{generate\_saws(length, start\_paths, end\_paths, intersect\_path)} generates the walks on the boundary of length \texttt{length}, beginning at \texttt{start\_paths}, ending at \texttt{end\_paths}, and intersecting \texttt{intersect\_path}.
 	The function \texttt{generate\_loops(length, start\_paths, end\_paths, intersect\_path)} generates the loops on the boundary of length \texttt{length}, beginning at \texttt{start\_paths},  ending at \texttt{end\_paths}, and intersecting \texttt{intersect\_path}.
 \begin{lstlisting}[language=Python,breaklines=True]
import matplotlib.pyplot as plt
import numpy as np
import time
TOL = 1e-6
def is_close(p1, p2, tol=TOL):
    return abs(p1[0] - p2[0]) < tol and abs(p1[1] - p2[1]) < tol

def in_path_with_tol(p, path, tol=TOL): 
    return any(is_close(p, q, tol) for q in path)
def starts2(n):
    s=[]
    for i in range(0,n+1):
        s.append([[-0.5-i*1.5,3**0.5/2*(1-i)],[-1-i*1.5,3**0.5/2*(2-i)]])
    return s
def ends2(n, o):
    e=[]
    for i in range(0,n+1):
        if o==1:
            e.append([[0.5+i*1.5,3**0.5/2*(1-i)],[1+i*1.5,3**0.5/2*(2-i)]])
        elif o==0:
            val=2*(i+1.5)*3**0.5/2
            e.append([[0.5, val],[-0.5, val]])
    return e
def alls(length,o):
    return(starts2(length)+ends2(length,o))
def hits(a, b):
    return any(in_path_with_tol(p, b) for p in a)
def firsts(paths): 
    return [p[0] for p in paths]
def lasts(paths): 
    return [p[-1] for p in paths]
def generate_saws(length, start_paths, end_paths, intersect_path): 
    results = [] 
    end_firsts = firsts(end_paths)
    start_firsts = firsts(start_paths)
    all_end_start = start_firsts + end_firsts
    def recurse(path):
        if len(path) == length + 1:
            if in_path_with_tol(path[-1], end_firsts):
                if hits(path, intersect_path):
                    if not hits(path[1:-1], all_end_start):
                        if not any(all(is_close(p1, p2) for p1, p2 in zip(path[::-1], r)) for r in results):
                            results.append(path.copy())
            return
        last=path[-1]
        second_last=path[-2]
        dx = last[0]-second_last[0]
        dy = last[1]-second_last[1]
        left=[last[0]+dx/2-(3**0.5/2)*dy,last[1]+(3**0.5/2)*dx+dy/2]
        right=[last[0]+dx/2+(3**0.5/2)*dy,last[1]-(3**0.5/2)*dx+dy/2]
        for next_point in [left, right]:
            if not in_path_with_tol(next_point, path):
                if in_path_with_tol(last, end_firsts) and len(path)!=length:
                    continue
                path.append(next_point)
                recurse(path)
                path.pop()
    for sp in start_paths:
        recurse(sp[:])
    return results
def generate_loops(length, start_paths, end_paths, intersect_path):
    results = []
    end_firsts = firsts(end_paths)
    start_firsts = firsts(start_paths)
    all_end_start = start_firsts + end_firsts
    def recurse(path):
        if len(path) == length + 1:
            if in_path_with_tol(path[-1], end_firsts):
                if hits(path, intersect_path):
                    if not hits(path[1:-1], all_end_start):
                        if not any(all(is_close(p1, p2) for p1, p2 in zip(path[::-1], r)) for r in results):
                            results.append(path.copy())
            return
        last=path[-1]
        second_last=path[-2]
        dx = last[0]-second_last[0]
        dy = last[1]-second_last[1]
        left=[last[0]+dx/2-(3**0.5/2)*dy,last[1]+(3**0.5/2)*dx+dy/2]
        right=[last[0]+dx/2+(3**0.5/2)*dy,last[1]-(3**0.5/2)*dx+dy/2]
        for next_point in [left, right]:
            if not in_path_with_tol(next_point, path[1:]):
                if in_path_with_tol(last, end_firsts) and len(path)!=length:
                    continue
                path.append(next_point)
                recurse(path)
                path.pop()
    for sp in start_paths:
        recurse(sp[:])
    return results
\end{lstlisting}
\subsection{The values for Table \ref{table:loops}}\label{subsec:N}
The function \texttt{N(length)} counts all loops passing through a fixed edge of even length up to \texttt{length} using simple calls to the \texttt{generate\_loops} function above where the starting and ending points are the same edge. The values of this function for $\texttt{length}\leq 28$ appear in Table \ref{table:loops}.
\begin{lstlisting}[language=Python,breaklines=True]
def N(length):
    a=[]
    for i in range(6,length+2,2):
        a.append(len(generate_loops(i,starts2(0),starts2(0),starts2(0)[0])))
    return(a)
\end{lstlisting}
\subsection{The values for Table \ref{table:N}}\label{subsec:Pb}
The function \texttt{layered\_loop\_translates(mm,l)}  generates closest \texttt{mm} loops from the corner and \texttt{l} from the boundary.
The functions \texttt{get\_points(list\_of\_paths)}  and  \texttt{get\_edges(list\_of\_paths)} return a list of all vertices, respectively edges, in a list of paths called \texttt{list\_of\_paths}.
The function \texttt{Pb(length,o)} counts the upper bound on the number of walks of length up to \texttt{length} on the inner boundary if $\texttt{o}=1$ and outer boundary if $\texttt{o}=0$. The function \texttt{P(length)} returns the maximum of the two outputs. The values of this function for $\texttt{length}\leq 10$ appear in Table \ref{table:N}.
The function \texttt{plot2(p)} plots the paths.
\begin{lstlisting}[language=Python,breaklines=True]
def layered_loop_translates(mm,l):
    z=loop_translates(mm)
    for i in loop_translates(mm):
        for j in range(1,l+1):
            z.append(shift(i,0,3**.5*j))
    return(z)
def get_edges(list_of_paths):
    edges=[]
    for path in list_of_paths:
        for i in range(len(path)-1):
            edges.append([path[i],path[(i+1)%len(path)]])
    return(edges)
def get_points(list_of_paths):
    points=[]
    for path in list_of_paths:
        for i in range(len(path)-1):
            points.append(path[i])
    return(points)
def Pb(length,o):
    results=[]
    for i in range(1,length+1):
        running_max=[]
        for point in get_points(get_edges(layered_loop_translates(length,length))):
            a=0
            for edge in get_edges(layered_loop_translates(length,length)):
                if in_path_with_tol(point,edge) and is_close(edge[0],point):
                    a+=(len(saw_with_avoidance(i,[edge],alls(length,o),firsts(alls(length,o)),firsts(alls(length,o)))))
            running_max.append(a) 
        results.append(max(running_max))
    return(results)
def P(length):
    return(np.maximum(Pb(length,0),Pb(length,1)))
def plot2(p):
    plt.plot(*np.array(p).T)
\end{lstlisting}
\subsection{The values of columns 1-4 of Table \ref{table:saw}}\label{subsec:saws_col1-4}
The function \texttt{sww(length\_gamma,upper\_bound)} generates the values of $S_{ww}(l,l')$ where $l=\texttt{length\_gamma}$ and $l'\leq\texttt{upper\_bound}$. These numbers for $\texttt{length\_gamma}=3,4,5,6$ and $\texttt{upper\_bound}\leq 20$ appear in columns 1-4 of Table \ref{table:saw}.
\begin{lstlisting}[language=Python,breaklines=True]
def sww(length_gamma,upper_bound): 
    a=[[] for _ in range(upper_bound-2)] 
    gammas_outer=generate_saws(length_gamma,alls(upper_bound,0),alls(upper_bound,0),lasts(alls(upper_bound,1)))
    gammas_inner=generate_saws(length_gamma,alls(upper_bound,1),alls(upper_bound,1),lasts(alls(upper_bound,0))) 
    for self_avoiding_walk_inner in gammas_inner:
        for length_gamma_prime in range(3,upper_bound+1):
            a[length_gamma_prime-3].append(len(generate_saws(length_gamma_prime,alls(upper_bound+2,1),alls(u+2,1),self_avoiding_walk_inner))) 
    for self_avoiding_walk_outer in gammas_outer:
        for length_gamma_prime in range(3,upper_bound+1):
            a[length_gamma_prime-3].append(len(generate_saws(length_gamma_prime,alls(upper_bound+2,0),alls(upper_bound+2,0),self_avoiding_walk_outer))) 
    swws=[0]
    sumlist=np.cumsum(a, axis=0)
    for i in range(len(sumlist)):
        swws.append(max(sumlist[i])) 
    return(np.diff(swws))
\end{lstlisting}
\subsection{The values for column $5$ of Table \ref{table:saw}}\label{subsec:saws_col5}
The function \texttt{shift(dx,dy)}  shifts a path in the $x$ direction by \texttt{dx} and $y$ direction by \texttt{dy}.
The function \texttt{loop\_translates(m)} generates \texttt{m} translates of the loop which sits at the corner of the inner boundary along a line parallel to the edge.
The function \texttt{sl6w(upper\_bound)} calculates all values of $S_{lw}(l,l')$ for $l=6$ and $l'=\texttt{upper\_bound}$. These values for $\texttt{upper\_bound}\leq 20$ appear in column $5$ of Table \ref{table:saw}.
 \begin{lstlisting}[language=Python,breaklines=True]
def shift(paths,dx,dy):
    result=[]
    for path in paths:
        result.append([path[0]+dx,path[1]+dy])
    return(result)
def loop_translates(m):
    toploop=[[-0.5, 1.5*(3**.5)], [-1.0, 2*(3**.5)], [-0.5, 2.5*(3**.5)], 
 [0.5,2.5*(3**.5)], [1, 2*(3**.5)],
 [0.5, 1.5*(3**.5)], [-0.5, 1.5*(3**.5)]] 
    w=[]
    for ii in range(0,m):
        a=toploop
        w.append(shift(a,-ii*3/2,ii*(-3**.5/2)))
    return(w)
def sl6w(upper_bound):
    a=[[] for _ in range(upper_bound-2)] 
    for loop1 in loop_translates(upper_bound):
        for length_gamma_prime in range(3,upper_bound+1): 
            a[length_gamma_prime-3].append(len(generate_saws(length_gamma_prime,alls(upper_bound+2,0),alls(upper_bound+2,0),loop1))) 
    for loop2 in loop_translates(upper_bound):
        for length_gamma_prime in range(3,upper_bound+1):
            a[length_gamma_prime-3].append(len(generate_saws(length_gamma_prime,alls(upper_bound+2,1),alls(upper_bound+2,1),loop2))) 
    slw=[0]
    sumlist=np.cumsum(a, axis=0)
    for i in range(len(sumlist)):
        slw.append(max(sumlist[i]))
    return(np.diff(slw)) 
\end{lstlisting}

\subsection{The values for Table \ref{table:R}}\label{subsec:R}
The function \texttt{saw\_with\_avoidance} is similar to the function \texttt{generate\_saws}, that generates the walks on the boundary of length \texttt{n}, beginning at \texttt{start\_paths}, ending at \texttt{end\_paths}, intersecting \texttt{intersect\_path}, and avoiding \texttt{avoid\_path}. It does this in the same manner as the function \texttt{generate\_saws} above.
The function \texttt{R(length)} returns number of lines of length \texttt{length} hitting a fixed bond in the boundary and returning to the boundary to the left of the fixed bond. These numbers for $\texttt{length}\leq20$ are compiled in Table \ref{table:R}.
\begin{lstlisting}[language=Python,breaklines=True]
def saw_with_avoidance(length, start_paths, end_paths, intersect_path, avoid_path):
    results = []
    end_firsts = firsts(end_paths)
    start_firsts = firsts(start_paths)
    def recurse(path):
        if len(path) == length + 1:
            if in_path_with_tol(path[-1], end_firsts):
                if hits(path, intersect_path):
                    if not hits(path[1:-1], avoid_path):
                        if not any(all(is_close(p1, p2) for p1, p2 in zip(path[::-1], r)) for r in results):
                            results.append(path.copy())
            return
        last=path[-1]
        second_last=path[-2]
        dx = last[0]-second_last[0]
        dy = last[1]-second_last[1]
        left=[last[0]+dx/2-(3**0.5/2)*dy,last[1]+(3**0.5/2)*dx+dy/2]
        right=[last[0]+dx/2+(3**0.5/2)*dy,last[1]-(3**0.5/2)*dx+dy/2]
        for next_point in [left, right]:
            if not in_path_with_tol(next_point, path):
                if in_path_with_tol(last, end_firsts) and len(path)!=length:
                    continue
                path.append(next_point)
                recurse(path)
                path.pop()
    for sp in start_paths:
        recurse(sp[:])
    return results
def R(length):
    a=[]
    z=len(starts2(length))//2
    for i in range(4,length+2,2):
        a.append(len(saw_with_avoidance(i,[starts2(length)[z]],starts2(length)[:z],firsts(starts2(length)),firsts(starts2(length)[z:]))))
    return(a)
\end{lstlisting}
\subsection{The values for Table \ref{table:Q}}\label{subsec:Q}
The function \texttt{Qb(length,o)} counts the number of walks of odd length up to \texttt{length} on the inner boundary if $\texttt{o}=0$ and outer boundary if $\texttt{o}=1$. The function \texttt{Q(length)} returns the maximum of the two outputs. The values of this function for $\texttt{length}\leq19$ appear in Table \ref{table:Q}.
\begin{lstlisting}[language=Python,breaklines=True]
def Qb(length,o): 
    a=[0]*int((length-3)//2+1)
    for i in range(3,length+2,2):
        a[i//2-1]+=len(generate_saws(i,starts2(length),ends2(length,o),firsts(ends2(length,o))))
    return(a)
def Q(length):
    return(np.maximum(Qb(length,1),Qb(length,0)))
\end{lstlisting}

\section{Code for Generating Edge Self-Avoiding Walks on the Square Lattice}\label{subsec:lieb}
This section provides Python code for the numbers used in Section \ref{sec:lieb}.
The variable \texttt{TOL} is the tolerance for floating point comparisons, which we use to compare if two paths are the same.
The function \texttt{is\_close(p1, p2, tol=TOL)} decides whether two coordinates \texttt{p1} and \texttt{p2} are within a tolerance for error of each other.
The function \texttt{in\_path\_with\_tol(p, path, tol=TOL)} decides whether a point \texttt{p} is in a path \texttt{path} with a given tolerance for error.
 The function \texttt{starts2(n)} generates a line of \texttt{n} successive points of square lattice boundary.
 The function \texttt{ends2(n,o)} generates a line of \texttt{n} successive points on the inner boundary if $\texttt{o}=1$ and on the outer boundary if $\texttt{o}=0$. The function \texttt{alls(n,o)} returns the points in \texttt{end2(n,o)} and \texttt{starts2(n)} in a single array.
 The function \texttt{hits(a,b)} returns whether two paths \texttt{a} and \texttt{b} intersect. The function \texttt{firsts(paths)} returns the first coordinates for each path in a list of paths labelled \texttt{paths}.
 The function \texttt{esaw\_with\_avoidance} is similar to the function \texttt{saw\_with\_avoidance} above but on the square lattice; it generates the edge self-avoiding walks on the boundary of length \texttt{n}, beginning at \texttt{start\_paths}, ending at \texttt{end\_paths}, intersecting \texttt{intersect\_path}, and avoiding \texttt{avoid\_path}. It does this in the same manner as the function \texttt{generate\_saws} above.
 The function \texttt{vertices(n)} generates a list of all the vertices within distance \texttt{n} of the origin.
 The function \texttt{maxes(i)} finds the maximum number of paths beginning and ending on either the inner or outer boundary of length \texttt{i} that intersect a single vertex. 
\begin{lstlisting}[language=Python,breaklines=True]
import matplotlib.pyplot as plt
import numpy as np
TOL = 1e-6
def is_close(p1, p2, tol=TOL):
	return abs(p1[0] - p2[0]) < tol and abs(p1[1] - p2[1]) < tol
def in_path_with_tol(p, path, tol=TOL): 
	return any(is_close(p, q, tol) for q in path)
def starts2(n):
	s=[]
	for i in range(1,n+1):
		s.append([[0,i],[1,i]])
	return s
def ends2(n, o):
    e=[]
    for i in range(1,n+1):
        if o==1:
            e.append([[1-i,1],[1-i,0]])
        elif o==0:
            e.append([[i,0],[i, 1]])
    return e
def alls(length,o):
	return(starts2(length)+ends2(length,o)
def firsts(paths):
	return [p[0] for p in paths]
def esaw_with_avoidance(length, start_paths, end_paths, intersect_path, avoid_path):
    results=[]
    end_firsts=firsts(end_paths)
    start_firsts=firsts(start_paths)
    def recurse(path,visited_edges):
        if len(path)==length + 1:
            if in_path_with_tol(path[-1],end_firsts):
                if hits(path,intersect_path):
                    if not hits(path[1:-1],avoid_path):
                        if not any(all(is_close(p1, p2) for p1,p2 in zip(path[::-1],r)) for r in results):
                            results.append(path.copy())
            return
        last=path[-1]
        second_last=path[-2]
        dx=last[0]-second_last[0]
        dy=last[1]-second_last[1]
        left=[last[0]-dy,last[1]+dx]
        right=[last[0]+dy,last[1]-dx]
        middle=[last[0]+dx,last[1]+dy]

        for next_point in [left, right, middle]:
            edge=frozenset((tuple(last), tuple(next_point)))
            if edge in visited_edges:
                continue
            if in_path_with_tol(last, end_firsts) and len(path)!=length:
                continue
            visited_edges.add(edge)
            path.append(next_point)
            recurse(path, visited_edges)
            path.pop()
            visited_edges.remove(edge)
    for sp in start_paths:
        initial_path=sp[:]
        initial_edges=set()
        for i in range(1,len(initial_path)):
            edge=frozenset((tuple(initial_path[i - 1]),tuple(initial_path[i])))
            initial_edges.add(edge)
        recurse(initial_path,initial_edges)
    return results
def vertices(n):
    e=[]
    for i in range(n):
        for j in range(n):
            e.append([[i,j]])
    return(e)
def maxes(i):
    a=[]
    for e in vertices(n):
        a.append(len(saw_with_avoidance(i,starts2(2*i),alls(2*i,0),e,firsts(alls(2*i,0))))+len(saw_with_avoidance(i,ends2(2*i,0),ends2(2*i,0),e,firsts(alls(2*i,0)))))
        a.append(len(saw_with_avoidance(i,starts2(2*i),alls(2*i,1),e,firsts(alls(2*i,1))))+len(saw_with_avoidance(i,ends2(2*i,1),ends2(2*i,1),e,firsts(alls(2*i,1)))))
    return(max(a))
\end{lstlisting}
	
\section*{Statements and Declarations}	

\subsection*{Data availability statement}

All data associated with this manuscript are included in the manuscript as well as the code used to produce the data (Appendix \ref{sec:computer_code}).

\subsection*{Conflict of Interest statement}

The authors have no relevant financial or non-financial interests to disclose.


\end{document}